\def\qed{\relax
   \ifmmode
    ~\hfill\Box
   \else
    \unskip\nobreak ~\hfill$\square$%
   \fi \par}
\newcommand{\sep}{$\cdot$ }
\theoremstyle{definition} \newtheorem{cor}{Corollary}
\theoremstyle{definition} \newtheorem{conv}{Convention}
\newcommand\defeq{\stackrel{\mathclap{\normalfont\mbox{\tiny def}}}{\equiv}}
\newcommand{\BIG}{\bBigg@{2}}
\newcommand{\BIGG}{\bBigg@{3}}
\newcommand{\vast}{\bBigg@{4}}
\newcommand{\Vast}{\bBigg@{5}}
\definecolor{axcolor}{rgb}{.3,0,.3}
\newcommand{\bexists}[2]{(\exists #1 \in #2 )} 
\newcommand{\bforall}[2]{(\forall #1 \in #2 )} 
\newcommand{\Bexists}[2]{\big(\exists #1 \in #2 \big)} 
\newcommand{\Bforall}[2]{\big(\forall #1 \in #2 \big)}
\newcommand{\AND}{\land}
\newcommand{\de}{\stackrel{\text{\tiny def}}{=}}
\newcommand{\defiff}{\ \stackrel{\text{\tiny{def}}}{\Longleftrightarrow}\ }
\newcommand{\fblock}[1]{\left[
\begin{array}{c}
#1
\end{array}
\right]}
\newcommand{\ax}[1]{{\ensuremath{\mathsf{#1}}}}
\newcommand{\sy}[1]{{\ensuremath{\mathsf{#1}}}}
\newcommand{\AxLine}{\ax{AxLine}\xspace}
\newcommand{\AxEField}{\ax{AxEField}\xspace}
\theoremstyle{definition} \newtheorem{thm}{Theorem}
\theoremstyle{definition} \newtheorem{lem}{Lemma}
\theoremstyle{definition} 
\newcommand{\IOb}{\ensuremath{\mathit{IOb}}} % set of inertial observers
\newcommand{\B}{\ensuremath{\mathit{B}}} % set of bodies
\newcommand{\Ph}{\ensuremath{\mathit{Ph}}} % set of photons
\newcommand{\Q}{\ensuremath{\mathit{Q}}} % set of quantities
\newcommand{\W}{\ensuremath{\mathit{W}}} % world-view relation
\newcommand{\ev}{\ensuremath{\mathit{ev}}} % event
\newcommand{\wl}{\ensuremath{\mathit{wl}}} %wordline
\newcommand{\w}{\ensuremath{\mathit{w}}} % worldview transformation
\newcommand{\Triv}{\mathit{Triv}}
\newcommand{\speed}{\mathit{speed}}
\newcommand{\Ether}{\ensuremath{\mathit{Ether}}\xspace}
\newcommand{\vx}{\bar x}
\newcommand{\vy}{\bar y}
\newcommand{\vz}{\bar z}
\begin{document}
	\title{{\scriptsize -- final manuscript --}\\
	COMPARING CLASSICAL AND RELATIVISTIC KINEMATICS IN FIRST-ORDER LOGIC}
%% REMOVE AUTHOR FOR ANONYMOUS SUBMISSION
	\author{Koen Lefever \and Gergely Sz{\' e}kely}
	\date{}
	\maketitle

\begin{abstract}
	The aim of this paper is to present a new logic-based understanding of the connection between classical kinematics and relativistic kinematics. 

We show that the axioms of special relativity can be interpreted in the language of classical kinematics. This means that there is a logical translation function from the language of special relativity to the language of classical kinematics which translates the axioms of special relativity into consequences of classical kinematics. 

We will also show that if we distinguish a class of observers (representing observers stationary with respect to the ``Ether'') in special relativity and exclude the non-slower-than light observers from classical kinematics by an extra axiom, then the two theories become definitionally equivalent (i.e., they become equivalent theories in the sense as the theory of lattices as algebraic structures is the same as the theory of lattices as partially ordered sets). 

Furthermore, we show that classical kinematics is definitionally equivalent to classical kinematics with only slower-than-light inertial observers, and hence by transitivity of definitional equivalence that special relativity theory extended with ``Ether'' is definitionally equivalent to classical kinematics.

So within an axiomatic framework of mathematical logic, we explicitly show that the transition from classical kinematics to relativistic kinematics is the knowledge acquisition that there is no ``Ether'', accompanied by a redefinition of the concepts of time and space.
	
	\begin{keywords}
		First-Order Logic \sep Classical Kinematics \sep Special Relativity \sep Logical Interpretation \sep Definitional Equivalence \sep Axiomatization
	\end{keywords}
\end{abstract}

\section{Introduction}

The aim of this paper is to provide a new, deeper and more systematic understanding of the connection of classical kinematics and special relativity beyond the usual ``they agree in the limit for slow speeds''. To compare theories we use techniques, such as logical interpretation and definitional equivalence, from definability theory. Those are usually used to show that theories are equivalent, here we use them to pinpoint the exact differences between both theories by showing how the theories need to be changed to make them equivalent. 

To achieve that, both theories have been axiomatized within \textit{many-sorted first-order logic with equality}, in the spirit of the algebraic logic approach of the Andr{\' e}ka--N{\' e}meti school,\footnote{The epistemological significance of the Andr\'eka--N\'emeti school's research project in general and the the kind of research done in the current paper in particular is being discussed in \citep{Friend15}.} e.g., in \citep{BigBook}, \citep{AMNSamples}, \citep{logst}, \citep{Synthese}, \citep{wnst}, \citep{Judit} and \citep{SzPhd}. 
Our axiom system for special relativity is one of the many slightly different variants of \sy{SpecRel}. The main differences from stock \sy{SpecRel} are firstly that all other versions of \sy{SpecRel} use the lightspeed $c=1$ (which has the advantage of simpler formulas and calculations), while we have chosen to make our results more general by not assuming any units in which to measure the speed of light; and secondly we have chosen to fill the models with all the potential inertial observers by including axioms \ax{AxThExp} and \ax{AxTriv}, which already exists in \citep[p.135]{BigBook} and \citep[p.81]{Judit}, but which is not included in the majority of the axiom systems that can be found in the literature. 

There also already exists \sy{NewtK} as a set of axioms for classical kinematics in \citep[p.426]{BigBook}, but that has an infinite speed of light, which is well-suited to model ``early'' classical kinematics before the discovery of the speed of light in 1676 by O.~R{\o}mer, while we target ``late'' classical kinematics, more specifically in the nineteenth century at the time of J.~C.~Maxwell and the search for the \textit{luminiferous ether}. 

An advantage of using first-order logic is that it enforces us to reveal all the tacit assumptions and formulate explicit formulas with clear and unambiguous meanings. Another one is that it would make it easier to validate our proofs by machine verification, see \citep{Sen}, \citep{ProofVerification} and \citep{StannettNemeti}. For the precise definition of the syntax and semantics of first-order logic, see e.g., \citep[\S 1.3]{CK}, \citep[\S 2.1, \S 2.2]{End}. %, or \citep[pp.39--46]{HMT}.

In its spirit relativity theory has always been axiomatic since its birth, as in 1905 A.~Einstein
introduced special relativity by two informal postulates in \citep{Einstein}.  This original informal axiomatization was soon followed by formal ones, starting with \citep{Robb}, many others, for example in \citep{Ax}, \citep{Benda}, \citep{Goldblatt}, \citep{Guts}, \citep{Mundy}, \citep{Reichenbach}, \citep{Schutz}, \citep{Szekeres} and \citep{Winnie}, several of which are still being investigated. For example, the historical axiom system of J.~Ax which uses simple primitive concepts but a lot of axioms to axiomatize special relativity has been proven in \citep{Comparing} to be definitionally equivalent to a variant of the Andr{\' e}ka--N{\' e}meti axioms which use only four axioms but more complex primitive notions.

Our use of techniques such as \textit{logical interpretation} and \textit{definitional equivalence} can be situated among a wider interest and study of these concepts currently going on, for example in \citep{Glymour}, \citep{Morita}, \citep{Laurenz} and \citep{weatherall}. Definitional equivalence has also been called \emph{logical synonymity} or \emph{synonymy}, for example in \citep{deBouvere}, \citep{string} and \citep{Friedman}. The first known use of the method of definitional equivalence is, according to \citep{corcoran}, in \citep{Montague}.

Our approach of using Poincar{\' e}--Einstein synchronisation in classical mechanics was inspired by the ``sound model for relativity'' of \citep{Ax}. We were also inspired by \citep{Szabo} claiming that Einstein's main contribution was redefining the basic concepts of time and space in special relativity.

Let us now formally introduce the concepts \textit{translation}, \textit{interpretation} and  \textit{definitional equivalence} and present our main results:

A \textit{translation} $Tr$ is a function between formulas of many-sorted languages having the same sorts which
\begin{itemize}
\item translates any $n$-ary relation\footnote{In the definition we concentrate only on the translation of relations because functions and constants can be reduced to relations, see e.g., \citep[p.97 \S 10]{bell}.} $R$ into a formula having $n$ free variables of the corresponding sorts: $Tr[R(x_1 \ldots x_n)] \equiv \varphi(x_1 \ldots x_n)$,
\item preserves the equality for every sort, i.e. $Tr(v_i = v_j) \equiv v_i = v_j$,
\item preserves the quantifiers for every sort, i.e. $Tr[(\forall v_i)(\varphi)] \equiv (\forall v_i)[Tr(\varphi)]$ and $Tr[(\exists v_i)(\varphi)] \equiv (\exists v_i)[Tr(\varphi)]$,
\item preserves complex formulas composed by logical connectives, i.e. $Tr(\neg\varphi) \equiv \neg Tr(\varphi)$, $Tr(\varphi \AND \psi) \equiv Tr(\varphi) \AND Tr(\psi)$, etc.
\end{itemize}

By a the translation of a set of formulas \sy{Th}, we mean the set of the translations of all formulas in the set \sy{Th}:
\begin{equation*}
Tr(\sy{Th})\de\{ Tr(\varphi) : \varphi \in \sy{Th} \}.
\end{equation*}

An \textit{interpretation} of theory \sy{Th_1} in theory \sy{Th_2} is a translation $Tr$ which translates all axioms (and hence all theorems) of \sy{Th_1} into theorems of \sy{Th_2}:
\[ (\forall \varphi)[\sy{Th_1} \vdash \varphi \Rightarrow \sy{Th_2} \vdash Tr(\varphi)]. \]

There are several definitions for \textit{definitional equivalence}, see e.g., \citep[p.39-40, \S 4.2]{definability}, \citep[p.469-470]{Glymour} \citep[p.42]{Judit} \citep[pp.60-61]{Ho93},  and \citep[p.42]{TG98}, which are all equivalent if the languages of the theories have disjoint vocabularies. Our definition below is a syntactic version of the semantic definition in \citep[p.56, \S 0.1.6]{HMT85}:

An interpretation $Tr$ of \sy{Th_1} in \sy{Th_2} is a \textit{definitional equivalence} if there is another interpretation $Tr'$ such that the following holds for every formula $\varphi$ and  $\psi$ of the corresponding languages:
\begin{itemize}
\item $\sy{Th_1}\vdash Tr'\big(Tr(\varphi)\big)\leftrightarrow \varphi$
\item $\sy{Th_2}\vdash Tr\big(Tr'(\psi)\big)\leftrightarrow \psi$
\end{itemize}

We denote the definitional equivalence of \sy{Th_1} and \sy{Th_2} by $\sy{Th_1} \equiv_{\Delta} \sy{Th_2}$.

\begin{thm}\label{thm-eqrel} Definitional equivalence is an equivalence relation, i.e. it is reflexive, symmetric and transitive.
\end{thm}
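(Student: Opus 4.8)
The plan is to establish the three defining properties separately, with reflexivity and symmetry following almost directly from the definition and transitivity carrying essentially all of the content. For reflexivity I would take, for any theory $\sy{Th}$, both interpretations to be the identity translation $Id$ (sending each atomic $R(x_1\ldots x_n)$ to itself, extended so as to preserve equality, the connectives and the quantifiers); since $Id$ maps every axiom of $\sy{Th}$ to itself it is an interpretation of $\sy{Th}$ in $\sy{Th}$, and because $Id(Id(\varphi))$ is literally $\varphi$ the required biconditional $\sy{Th}\vdash Id(Id(\varphi))\leftrightarrow\varphi$ is a tautology. For symmetry I would note that the definition is symmetric in the pair $(Tr,Tr')$: the partner $Tr'$ is itself an interpretation of $\sy{Th_2}$ in $\sy{Th_1}$, so if $Tr,Tr'$ witness $\sy{Th_1}\equiv_{\Delta}\sy{Th_2}$ then $Tr',Tr$ witness $\sy{Th_2}\equiv_{\Delta}\sy{Th_1}$ by reading the same two conditions with their roles interchanged.

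The hard part is transitivity. Given witnesses $Tr_{12},Tr_{21}$ for $\sy{Th_1}\equiv_{\Delta}\sy{Th_2}$ and $Tr_{23},Tr_{32}$ for $\sy{Th_2}\equiv_{\Delta}\sy{Th_3}$, the natural candidates are the composites $Tr_{13}\de Tr_{23}\circ Tr_{12}$ and $Tr_{31}\de Tr_{21}\circ Tr_{32}$. I would first check that a composite of translations is again a translation --- each clause (preservation of equality, connectives and quantifiers, and sending an $n$-ary relation to a formula with $n$ free variables) is inherited through the composition by induction on formula complexity --- and that a composite of interpretations is again an interpretation, since $\sy{Th_1}\vdash\varphi$ gives $\sy{Th_2}\vdash Tr_{12}(\varphi)$ and hence $\sy{Th_3}\vdash Tr_{23}(Tr_{12}(\varphi))$. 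The crucial auxiliary fact I would isolate is that an interpretation respects provable biconditionals: because $Tr_{21}$ preserves the connective $\leftrightarrow$ and sends theorems of $\sy{Th_2}$ to theorems of $\sy{Th_1}$, from $\sy{Th_2}\vdash\alpha\leftrightarrow\beta$ one obtains $\sy{Th_1}\vdash Tr_{21}(\alpha)\leftrightarrow Tr_{21}(\beta)$.

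With this lemma the two remaining conditions chain together cleanly. Instantiating $\sy{Th_2}\vdash Tr_{32}(Tr_{23}(\psi))\leftrightarrow\psi$ at $\psi\de Tr_{12}(\varphi)$ and pushing $Tr_{21}$ through the result gives $\sy{Th_1}\vdash Tr_{21}(Tr_{32}(Tr_{23}(Tr_{12}(\varphi))))\leftrightarrow Tr_{21}(Tr_{12}(\varphi))$; composing this with $\sy{Th_1}\vdash Tr_{21}(Tr_{12}(\varphi))\leftrightarrow\varphi$ via transitivity of the provable biconditional inside $\sy{Th_1}$ yields precisely $\sy{Th_1}\vdash Tr_{31}(Tr_{13}(\varphi))\leftrightarrow\varphi$, and the companion condition $\sy{Th_3}\vdash Tr_{13}(Tr_{31}(\chi))\leftrightarrow\chi$ follows by the mirror-image argument. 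I expect the only genuinely delicate points to be bookkeeping: making precise that the homomorphic translation operation composes well on formulas (so that $Tr_{23}\circ Tr_{12}$ is indeed the homomorphic translation induced on atomic formulas) and keeping the four-fold nested translations in the right order.
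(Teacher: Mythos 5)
Your proof is correct and follows the standard argument: identity translations for reflexivity, the symmetry of the definition in the pair $(Tr, Tr')$ for symmetry, and composition of the witnessing interpretations for transitivity, with the key observation that an interpretation carries provable biconditionals of one theory to provable biconditionals of the other. The paper itself does not spell out a proof but defers to the cited dissertation, and your argument is essentially the one given there, so there is nothing to add.
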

For a proof of this theorem, see e.g., \citep[p.7]{diss}.

In this paper, we introduce axiom systems \sy{ClassicalKin_{Full}} for classical kinematics, \sy{SpecRel_{Full}} for special relativity and their variants based on the framework and axiom system of \citep{BigBook}, \citep{AMNSamples}, \citep{logst}, \citep{Synthese}, \citep{wnst} and \citep{Judit}.  Then we construct logical interpretations between these theories translating the axioms of one system into theorems of the other. In more detail, we show the following connections:

Special relativity can be interpreted in classical kinematics, i.e., there is a translation $Tr$ that translates the axioms of special relativity into theorems of classical kinematics:
\begin{itemize}
\item $\sy{ClassicalKin_{Full}} \vdash Tr(\sy{SpecRel_{Full}}).$\hfill [see Theorem \ref{thm-tr} on p.\pageref{thm-tr}]
\end{itemize}
Special relativity extended with a concept of ether, \sy{SpecRel^e_{Full}}, and classical kinematics restricted to slower-than-light  observers, \sy{ClassicalKin^{STL}_{Full}}, can be interpreted in each other: 
\begin{itemize}
\item $\sy{ClassicalKin^{STL}_{Full}} \vdash Tr_+ (\sy{SpecRel^e_{Full}}),$ \hfill [see Theorem \ref{thm-tr+} on p.\pageref{thm-tr+}]
\item $\sy{SpecRel^e_{Full}} \vdash Tr'_+ (\sy{ClassicalKin^{STL}_{Full}}).$ \hfill [see Theorem \ref{thm-tr'+} on p.\pageref{thm-tr'+}]
\end{itemize}
Moreover, these axiom systems are definitionally equivalent ones:
\begin{itemize}
\item $\sy{ClassicalKin^{STL}_{Full}} \equiv_{\Delta} \sy{SpecRel^e_{Full}}.$ \hfill  [see Theorem \ref{thm-defeq} on p.\pageref{thm-defeq}]
\end{itemize}
Furthermore, we establish the definitional equivalence between $\sy{ClassicalKin^{STL}_{Full}}$ and $\sy{ClassicalKin_{Full}}$:
\begin{itemize}
%\item $\sy{ClassicalKin^{STL}_{Full}} \vdash Tr_*(\sy{ClassicalKin_{Full}})$}
%\item $\sy{ClassicalKin_{Full}} \vdash Tr'_*(\sy{ClassicalKin^{STL}_{Full}})$
\item $\sy{ClassicalKin_{Full}} \equiv_{\Delta} \sy{ClassicalKin^{STL}_{Full}},$ \hfill  [see Theorem \ref{thm-defeqFTL} on p.\pageref{thm-defeqFTL}]
\end{itemize}
from which follows, by transitivity of definitional equivalence, that classical kinematics is definitionally equivalent to special relativity extended with ether:
\begin{itemize}
\item $\sy{ClassicalKin_{Full}} \equiv_{\Delta} \sy{SpecRel^e_{Full}}$, \hfill  [see Corollary \ref{cor-strong} on p.\pageref{cor-strong}]
\end{itemize}
which is the main result of this paper.

\begin{figure}[ht]
  \begin{center}
    \scalebox{.8}{\usetikzlibrary{calc}
\begin{tikzpicture}[scale=0.6, every node/.style={scale=0.8}]

\pgfmathsetmacro\r{0.07}
\pgfmathsetmacro\rSRe{2}
\pgfmathsetmacro\rSR{1}
\pgfmathsetmacro\rNK{3}
\pgfmathsetmacro\x{10}
\pgfmathsetmacro\y{-6}

\tikzstyle{SR}=[blue,ultra thick]
\tikzstyle{NK}=[red,ultra thick]
\tikzstyle{SRe}=[green!80!black,ultra thick]

\coordinate (O) at (0,0);
\coordinate (S) at (0,\y);
\coordinate (N) at (\x-1.5,0);
\coordinate (E) at (0.5,0.7*\rSRe);
\coordinate (Eth) at (\x-2.4,0.7*\rSRe);
\coordinate (FTL) at (\x-3,0.79*\rNK);
\coordinate (N2) at (\x+7,0);
\coordinate (E2) at (0.5+7,0.7*\rSRe);
\coordinate (Eth2) at (\x-0.6+7,0.7*\rSRe);
\coordinate (FTL2) at (\x-1.4+7,0.79*\rNK);

% SpecRel
\draw[very thick, SR] (S) circle (\rSR);
\node[below] at ([shift={(1,-\rSR)}]S){$\mathsf{SpecRel_{Full}}$};

% SpecRel+Ether
\draw[SR] (O) circle (\rSR);
\draw[SRe] (O) circle (\rSRe);
\node[above] at (1,\rSRe){$\mathsf{SpecRel^e_{Full}}$};
\fill (E) circle (\r) node[left]  {$E$};

% NewtonianKin 
\fill (Eth) circle (\r) node[right]{Ether};
\draw[NK] (N) circle (\rNK);
\draw[SR] (N) circle (\rSR);

\node[below] at ([shift={(0,-\rNK)}]N){$\mathsf{ClassicalKin_{Full}}$};
\fill (FTL) circle (\r) node[right, xshift=-1.3]{FTL-IOb};
\draw[SRe] (N) circle (\rSRe);

% Arrows  
\draw[thick, ->,>=latex, shorten <=5,shorten >=5] (E) to (Eth);
\draw[very thick,->,>=latex, shorten <=0.9*\rNK cm,shorten >=0.9*\rNK cm] (S) to node[above left, shift={(0.1,-0.1)}] {$Tr$} (N);
\draw[very thick,->,>=latex, shorten <=1.2*\rSRe cm,shorten >=1.2*\rSRe cm] (-1,0) to node[above, yshift=-1] {$Tr_+$} node[below] {$\equiv_{\Delta}$}(\x-1,0);
\draw[very thick,->,>=latex, shorten <=1.5*\rSR cm,shorten >=1.5*\rSR cm] (S) to node[left] {$Id$} (O);

\draw[SR,thin] ([xshift=\rSR cm]S) to ([xshift=\rSR cm]O);
\draw[SR,thin] ([xshift=-\rSR cm]S) to ([xshift=-\rSR cm]O);
\draw[SR,thin] ([yshift=\rSR cm]N) to ([yshift=\rSR cm]O);
\draw[SR,thin] ([yshift=-\rSR cm]N) to ([yshift=-\rSR cm]O);
\draw[SR,thin] ([shift={(-0.45*\rSR,0.9*\rSR)}]N) to ([shift={(-0.45*\rSR,0.9*\rSR)}]S);
\draw[SR,thin] ([shift={(0.45*\rSR,-0.9*\rSR)}]N) to ([shift={(0.45*\rSR,-0.9*\rSR)}]S);

\draw[SRe,thin] ([yshift=\rSRe cm]N) to ([yshift=\rSRe cm]O);
\draw[SRe,thin] ([yshift=-\rSRe cm]N) to ([yshift=-\rSRe cm]O);

\node at (9,-7) {$Tr_{+}|_{\mathsf{SpecRel_{Full}}} = Tr $};
\node at (9,-6) {$Tr_{+} \big(E(x) \big) = Ether(x) $};

\draw[very thick,<-,>=latex, shorten <=1.5*\rSRe cm,shorten >=1.5*\rSRe cm] (\x-12,-0.8) to node[below, yshift=-0] {$Tr_+'$}(\x,-0.8);

% NewtonianKin 2
\fill (Eth2) circle (\r) node[right]{Ether};
\draw[SRe] (N2) circle (\rNK);
\draw[SRe,red,dashed] (N2) circle (\rNK);
\node[below] at ([shift={(0,-\rNK)}]N2){$\mathsf{ClassicalKin_{Full}}$};
\fill (FTL2) circle (\r) node[right, xshift=-1.3]{FTL-IOb};
\draw[SRe,thin] ([yshift=\rSRe+3 cm]N2) to ([yshift=\rSRe cm]N);
\draw[SRe,thin] ([yshift=-\rSRe-3 cm]N2) to ([yshift=-\rSRe cm]N);
\draw[very thick,->,>=latex, shorten <=1.05*\rSRe cm,shorten >=1.05*\rSRe cm] (N) to node[above, yshift=-1] {$Tr_*$} node[below] {$\equiv_{\Delta}$}(N2);
\draw[very thick,<-,>=latex, shorten <=1.05*\rSRe cm,shorten >=1.05*\rSRe cm] (\x-1.7,-0.8) to node[below, yshift=-0] {$Tr_*'$}(\x+7,-0.8);
%\draw[NK,thin] ([yshift=\rNK cm]N) to ([yshift=\rNK cm]N2);
%\draw[NK,thin] ([yshift=-\rNK cm]N) to ([yshift=-\rNK cm]N2);
\draw[thick, ->,>=latex, shorten <=30,shorten >=5] (Eth) to (Eth2);
\node[below] at ([shift={(2.9,\rNK-0.7)}]N){$\mathsf{ClassicalKin_{Full}^{STL}}$};

\end{tikzpicture}}
    \caption{\label{fig-trans} Translations: $Tr$ translates from special relativity to classical kinematics. $Tr_+$ and $Tr'_+$ translate between special relativity extended with a primitive ether $E$ and classical kinematics without faster-than-light observers, which are definitionally equivalent theories, see Theorem~\ref{thm-defeq}. $Tr_*$ and $Tr'_*$ translate between classical kinematics without faster-than-light observers and classical kinematics, which are definitionally equivalent theories, see Theorem~\ref{thm-defeqFTL}.}
  \end{center}
\end{figure}
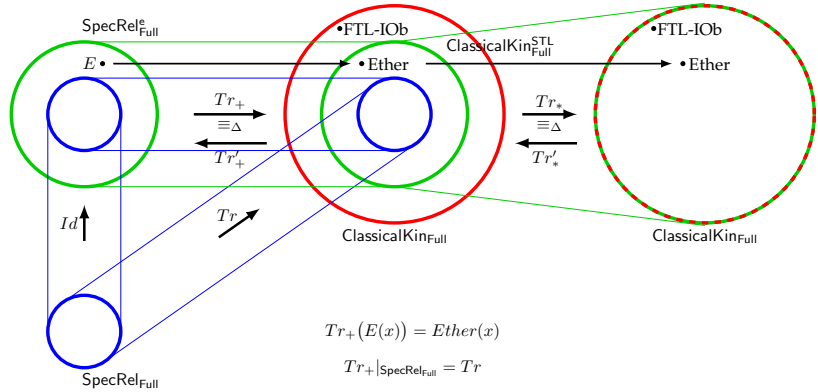

\section{The language of our theories}
\label{sec-lang}

We will work in the axiomatic framework of \citep{Synthese}.
Therefore, there will be two sorts of basic objects: \emph{bodies} $\B$ (thing that can move) and \emph{quantities} $\Q$ (numbers used by observers to describe motion via coordinate systems). We will distinguish two kinds of bodies: \emph{inertial observers} and \emph{light signals} by one-place relation symbols $ IOb$ and $\Ph$ of sort $\B$. We will use the usual algebraic operations and ordering ($+$,  $\cdot$ and $\le$) on sort $\Q$. Finally, we will formulate coordinatization by using a $6$-place \emph{worldview relation} $\W$ of sort $\B^2\times\Q^4$. 

\noindent
That is, we will use the following two-sorted first-order logic with equality:
\begin{equation*}
\{\, \B,\Q\,;  IOb, \Ph,+,\cdot,\le,\W\,\}.
\end{equation*}

Relations $ IOb(k)$ and $\Ph(p)$ are translated as ``\textit{$k$ is an inertial observer},'' and ``\textit{$p$ is a light signal},'' respectively. $\W(k,b,x_0,x_1,x_2,x_3)$ is translated as ``\textit{body $k$ coordinatizes body $b$ at space-time location $\langle x_0,x_1,x_2,x_3\rangle$},'' (i.e., at spatial location $\langle x_1,x_2,x_3\rangle$ and instant $x_0$).  

Since we have two sorts (quantities and bodies), we have also two kinds of variables, two kinds of terms, two equation signs and two kinds of quantifiers (one for each corresponding sort). Quantity variables are usually denoted by $x$, $y$, $z$, $t$, $v$, $c$ (and their indexed versions), body variables are usually denoted by $b$, $k$, $h$, $e$, $p$ (and their indexed versions).  Since we have no function symbols of sort $\B$, \emph{body terms} are just the body variables. \emph{Quantity terms} are what can be built from quantity variables using the two functions symbols $+$ and $\cdot$ of sort $\Q$. We denote quantity terms by $\alpha$, $\beta$, $\gamma$ (and their indexed versions). For convenience, we use the same sign ($=$) for both sorts because from the context it is always clear whether we mean equation between quantities or bodies.

The so called \emph{atomic formulas} of our language are $\W(k,b,\alpha_0,\alpha_1,\alpha_2,\alpha_3)$, $ IOb(k)$, $\Ph(p)$, $\alpha=\beta$,  $\alpha\le \beta$  and $k=b$ where $k$, $p$, $b$,  $\alpha$, $\beta$, $\alpha_0$, $\alpha_1$, $\alpha_2$, $\alpha_3$ are arbitrary terms of the corresponding sorts. 

The \emph{formulas} are built up from these atomic formulas by using the logical connectives \textit{not} ($\lnot$), \textit{and} ($\AND$), \textit{or} ($\lor$), \textit{implies} ($\to$), \textit{if-and-only-if} ($\leftrightarrow$) and the quantifiers \textit{exists} ($\exists$) and \textit{for all} ($\forall$). 
In long expressions, we will denote the logical \textit{and} by writing formulas below each other between rectangular brackets:
\[\fblock{\varphi \\ \psi} \text{ is a notation for } \varphi \AND \psi.\]
To distinguish formulas \textit{in} our language from formulas \textit{about} our language, we use in the meta-language the symbols $\Leftrightarrow$ (as illustrated in the definition of bounded quantifiers below) and $\equiv$ (while translating formulas between languages) for the logical equivalence. We use the symbol $\vdash$ for syntactic consequence.

We use the notation $\Q^n$ for the set of $n$-tuples of $\Q$. If $\vx\in \Q^n$, we assume that $\vx=\langle x_1,\ldots,x_n\rangle$, i.e., $x_i$ denotes the $i$-th component of the $n$-tuple $\vx$. We also write $\W(k,b,\vx)$ in place of $\W(k,b,x_0,x_1,x_2,x_3)$, etc. 

We will treat unary relations as sets. If $R$ is a unary relation, then we use bounded quantifiers in the following way:
%\begin{align*}
\[
\bforall {u}{R}[\varphi] \defiff \forall u [R(u) \to \varphi]\  \text{ and }\ 
\bexists {u}{R}[\varphi] \defiff \exists u [R(u) \AND \varphi]. 
\]
%\end{align*}
We will also use bounded quantifiers to make it explicit which the sort of the variable is, such as $\exists x \in Q$ and $\forall b \in B$, to make our formulas easier to comprehend.

Worldlines and events can be easily expressed by the worldview relation $W$ as follows. The \emph{worldline} of body $b$ according to observer $k$ is the set of coordinate points where $k$ have coordinatized $b$:
\begin{equation*}
\vx \in \wl_k(b)\defiff \W(k,b,\vx).
\end{equation*}

The \emph{event} occurring for observer $k$ at coordinate point $\vx$ is the set of bodies $k$ observes at $\vx$:
\begin{equation*}
b \in \ev_k(\vx)\defiff \W(k,b,\vx).
\end{equation*}

We will use a couple of shorthand notations to discuss spatial distance,  speed, etc.\footnote{Since in our language we only have addition and multiplication, we need some basic assumptions on the properties of these operators on numbers ensuring the definability of subtraction, division, and square roots. These properties will follow from the Euclidian field axiom (\AxEField, below on page \pageref{PageAxEField}). Also, the definition of speed is based on the axiom saying that inertial observers move along straight lines relative to each other (\AxLine, below on page \pageref{PageAxLine}).} 

\noindent
\emph{Spatial distance} of $\vx,\vy\in\Q^4$:
\begin{equation*}
   space(\vx,\vy)\de \sqrt{(x_1-y_1)^2+(x_2-y_2)^2+(x_3-y_3)^2}.
\end{equation*}
\emph{Time difference} of coordinate points $\vx,\vy\in\Q^4$:
\begin{equation*}
 time(\vx,\vy)\de \vert x_0-y_0 \vert. 
\end{equation*}
The \emph{speed} of body $b$ according to observer $k$ is defined as:
\begin{multline*}
\speed_{k}(b) = v \defiff 
\Bexists{\vx,\vy}{\wl_k(b)}( \vx \neq \vy ) \AND \\ 
\Bforall{\vx,\vy}{\wl_k(b)} 
\left[ space(\vx,\vy) = v \cdot time(\vx,\vy) \right].
\end{multline*}
The \emph{velocity} of body $b$ according to observer $k$ is defined as:
\begin{multline*}
\bar{v}_{k}(b) = \bar{v} \defiff  
\Bexists{\vx,\vy}{\wl_k(b)}( \vx \neq \vy ) \AND \\
\Bforall{\vx,\vy}{\wl_k(b)} 
\left[ (y_1-x_1,y_2-x_2,y_3-x_3) = \bar{v} \cdot (y_0 - x_0)\right].
\end{multline*}

Relations $speed$ and $\bar v$ are partial functions from $B\times B$ respectively to $Q$ and $Q^3$  which are defined if $\wl_k(b)$ is a subset of the non-horizontal line which contains at least two points.

Let us define the \emph{worldview transformation}\footnote{While the worldview transformation $w$ is here only defined as a binary relation, our axioms will turn it into a transformation for inertial observers, see Theorems \ref{thm-gal} and \ref{thm-poi} below.} between observers $k$ and $k'$ as the following binary relation on $\Q^4$:
\begin{equation*}
\w_{kk'}(\vx,\vy)\defiff \ev_k(\bar x)=\ev_{k'}(\vy).
\end{equation*}

\begin{conv}\label{partial-conv}
We use partial functions in our formulas as special relations, which means that when we write $f(x)$ in a formula we also assume that $f(x)$ is defined, i.e. $x$ is in the domain of $f$. This is only a notational convention that makes our formulas more readable and can be systematically eliminated, see \citep[p.61, Convention 2.3.10]{BigBook} for further discussion.
\end{conv}

\noindent
The \textit{models} of this language are of the form
\begin{equation*}
\mathfrak{M} = \langle \B_\mathfrak{M}, \Q_\mathfrak{M}; \IOb_\mathfrak{M},\Ph_\mathfrak{M},+_\mathfrak{M},\cdot_\mathfrak{M},\le_\mathfrak{M},\W_\mathfrak{M}\rangle,
\end{equation*}
where $\B_\mathfrak{M}$ and $\Q_\mathfrak{M}$ are nonempty sets, 
$\IOb_\mathfrak{M}$ and $\Ph_\mathfrak{M}$ are unary relations on
$\B_\mathfrak{M}$, $+_\mathfrak{M}$ and $\cdot_\mathfrak{M}$ are binary functions
and $\le_\mathfrak{M}$ is a binary relation on $\Q_\mathfrak{M}$, and
$\W_\mathfrak{M}$ is a relation on $\B_\mathfrak{M}\times \B_\mathfrak{M}\times
\Q^{d}_\mathfrak{M}$.
The subscript $_\mathfrak{M}$ for the sets and relations indicates that those are set-theoretical objects rather than the symbols of a formal language.

\section{Axioms}
\subsection{Axioms for the common part}

For the structure $\langle \Q,+,\cdot,\le \rangle$ of quantities, we assume some basic algebraic properties of addition, multiplication and ordering true for real numbers. 
\begin{description}
\item[\underline{\AxEField}]
 $\langle \Q,+,\cdot,\le \rangle$ is a Euclidean field. That is, $\langle\Q,+,\cdot\rangle$ is a field in the sense of algebra; 
$\le$ is a linear ordering on $\Q$ such that  
%\begin{itemize}
%\item[i)] 
$x \le y\to x + z \le y + z$, and
%\item[ii)] 
$(0 \le x \AND 0 \le y) \to 0 \le xy$; and
%\end{itemize}
every positive number has a square root.
\end{description}

Some notable examples of Euclidean fields are the real numbers, the real algebraic numbers, the hyperreal numbers and the real constructable numbers\footnote{It is an open question if the rational numbers would be sufficient, which would allow to replace \ax{AxEField} by the weaker axiom that the quantities only have to be an ordered field, and hence have a stronger result since we would be assuming less. See \citep{MSzRac} for a possible approach.}.

The rest of our axioms will speak about how inertial observers coordinatize the events. Naturally, we assume that they coordinatize the same set of events.   

\begin{description}
\item[\underline{\ax{AxEv}}] \label{PageAxEField}
All inertial observers coordinatize the same events:
\begin{equation*}
\bforall {k,h}{ IOb} 
\Bforall {\vx}{\Q^4}
\Bexists {\vy}{\Q^4} 
[\ev_k(\vx)=\ev_{h}(\vy)].
\end{equation*}
\end{description}

We assume that inertial observers move along straight lines with respect to each other.
\begin{description}
\item[\underline{\AxLine}] The worldline of an inertial observer is a straight line according to inertial observers: \label{PageAxLine}
\begin{multline*}
\bforall {k,h}{ IOb}
\Bforall {\vx,\vy,\vz}{\wl_k(h)}\\
\bexists {a} {\Q}  \big[\vz-\vx=a(\vy-\vx) \lor
\vy-\vz=a(\vz-\vx)\big].
\end{multline*}
\end{description}

As usual we speak about the motion of reference frames by using their time-axes. Therefore, we assume the following. 

\begin{description}
\item[\underline{\ax{AxSelf}}] Any inertial observer is stationary in his own coordinate system:
\begin{equation*}
\bforall {k}  {IOb}
\bforall {t,x,y,z}{\Q}
\big[\W(k,k,t,x,y,z) \leftrightarrow x=y=z=0\big].
\end{equation*}
\end{description}

The following axiom is a symmetry axiom saying that observers (can) use the
same units to measure spatial distances.

\begin{description}
\item[\underline{\ax{AxSymD}}]
Any two inertial observers agree as to the spatial distance between two events if these two events are simultaneous for both of them:
\begin{multline*}
\bforall {k,k'}{ IOb}
\Bforall {\vx,\vy,\vx',\vy'}{\Q^4}\\
\left(\fblock{  time(\vx,\vy)= time(\vx',\vy')=0\\ \ev_k(\vx)=\ev_{k'}(\vx') \\
\ev_k(\vy)=\ev_{k'}(\vy')}\to  space(\vx,\vy)= space(\vx',\vy')\right).
\end{multline*}
\end{description}

When we choose an inertial observer to represent an inertial frame\footnote{We use the word ``frame'' here in its intuitive meaning, as in \citep[p.40]{Rind}. For a formal definition of a \textit{frame} we need the concept of \textit{trivial transformation}, see page \pageref{frame}.} of reference, then the origin of that observer can be chosen anywhere, as well as the orthonormal basis they use to coordinatize space. To introduce an axiom capturing this idea, let $\Triv$ be the set of \emph{trivial transformations}, by which we simply mean transformations that are isometries on space and translations along the time axis. For more details, see \citep[p.81]{Judit}.
\begin{description}
\item[\underline{\ax{AxTriv}}]
Any trivial transformation of an inertial coordinate system is also an inertial coordinate system:
\begin{equation*}
\bforall{T}{\Triv} 
\bforall {k}  {IOb} 
\bexists {k'}  {IOb}
[\w_{kk'}=T].
\footnote{$\bforall{T}{\Triv}$ may appear to the reader to be in second-order logic. However, since a trivial transformation is nothing but an isometry on space ($4 \times 4$ parameters) and a translation along the time axis ($4$ parameters), this is just an abbreviation for $\bforall{q_1, q_2, \ldots q_{20}}{Q}$ and together with $\w_{kk'}=T$ a system of equations with 20 parameters in first-order logic.}
\end{equation*}
\end{description}

Axioms \ax{AxTriv}, \ax{AxThExp_+} (on page \pageref{axthexp+}), and \ax{AxThExp} (on page \pageref{axthexp}) make spacetime full of inertial observers.\footnote{These  inertial observers are only potential and not actual observers, in the same way as the light signals required in every coordinate points by axioms \ax{AxPh_c} and \ax{AxEther} below are only potential light signals. For a discussion on how actual and potential bodies can be distinguished using modal logic, see \citep{Attila}.} We will use the subscript $\mathsf{Full}$ to denote that these axioms are part of our axiom systems.

A set of all observers which are at rest relative to each other, which is a set of all observers which are related to each other by a trivial transformation, we call a \textit{frame}. \label{frame}
Because of \ax{AxTriv}, a frame contains an infinite number of elements. From now on, we may informally abbreviate ``the speed/velocity/movement relative to all observers which are elements of a frame'' to ``the speed/velocity/movement relative to a frame''.

\noindent 
Let us define an \textit{observer} as any body which can coordinatize other bodies:
\[Ob(k) \defiff \bexists{b}{B}\Bexists{\vx}{Q^4}\big(W(k,b,\vx)\big).\]

Since we will be translating back and forth, we need a guarantee that all observers have a translation\footnote{We do not need this axiom for the interpretation, but we do need it for the definitional equivalence. In \citep{diss}, we postpone the introduction of this axiom until the chapter on definitional equivalence, resulting in a slightly stronger theorem \ref{thm-tr}.}. 
\begin{description}
\item[\underline{\ax{AxNoAcc}}]
All observers are inertial observers:
\[\bforall{k}{B}[W(Ob(k)\rightarrow IOb(k)].\]
\end{description}

The axioms above will be part of all the axiom systems that we are going to use in this paper. Let us call their collection \sy{Kin_{Full}}:
\[\sy{\mathsf{Kin_{Full}}}\de \{\AxEField, \ax{AxEv}, \ax{AxSelf}, \ax{AxSymD}, \AxLine ,\ax{AxTriv}, \ax{AxNoAcc} \}.\]

\subsection{Axioms for classical kinematics}
\label{ax-nk}

A key assumption of classical kinematics is that the time difference between two events is observer independent.
\begin{description}
\item[\underline{\ax{AxAbsTime}}] The time difference between any two events is the same for all inertial observers:
\begin{multline*}
\bforall {k,k'}{ IOb}
\Bforall {\vx,\vy,\vx',\vy'}{\Q^4} \\
\left(\fblock{\ev_k(\vx)=\ev_{k'}(\vx')\\\ev_k(\vy)=\ev_{k'}(\vy')}\to
 time(\vx,\vy)= time(\vx',\vy') \right).
\end{multline*}
\end{description}

We also assume that inertial observers can move with arbitrary (finite) speed in any direction everywhere.

\begin{description} \label{axthexp+}
\item[\underline{\ax{AxThExp_+}}]
Inertial observers can move along any non-horizontal straight line\footnote{The first part of this axiom (before the conjumction) is not necessary since we will asume the axiom \ax{AxEther} which guarantees that we have at least one inertial observer, see page \pageref{PageAxEther}.}:
\begin{multline*}
\bexists {h}{B} \big[ IOb(h)\big] \AND\\ 
\bforall {k}{ IOb} 
\Bforall {\vx,\vy}{\Q^4} 
\left(x_0\neq y_0 \to \Bexists {k'}{ IOb}\big[\vx,\vy\in \wl_k(k')\big]\right).
\end{multline*}
\end{description}

The motion of light signals in classical kinematics is captured by assuming that there is at least one inertial observer according to which the speed of light is the same in every direction everywhere.  Inertial observers with this property will be called \emph{ether observers} and the unary relation \Ether appointing them is defined as follows:  
\begin{multline*}
  \Ether(e) \defiff 
 IOb(e)\AND \bexists {c} {\Q} \Big[c>0\AND \Bforall {\vx,\vy}{\Q^4} \\ \Big(\bexists {p} {\Ph} \big[\vx,\vy\in\wl_{e}(p) \big] \leftrightarrow  space(\vx,\vy)=c\cdot time(\vx,\vy)\Big)\Big].
\end{multline*}

\begin{description}
\item[\underline{\ax{AxEther}}]\label{PageAxEther}
  There exists at least one ether observer:
  \begin{equation*}
    \bexists {e}{B} \big[\Ether(e)\big].
  \end{equation*}
\end{description}
Let us introduce the following axiom system for classical kinematics:
\[\sy{\mathsf{ClassicalKin}_{Full}}\de \sy{\mathsf{Kin_{Full}}} \cup \{ \ax{AbsTime}, \ax{AxThExp_+}, \ax{AxEther} \}.\]

The map $G:\Q^4\to\Q^4$ is called a \emph{Galilean transformation} iff it is an affine bijection having the following properties:
\begin{equation*}
|time(\vx,\vy)| = |time(\vx',\vy')|,
\text{ and }
\end{equation*}
\begin{equation*}
x_0=y_0\to x'_0=y'_0\AND space(\vx,\vy)=space(\vx',\vy')
\end{equation*}
for all $\vx,\vy,\vx',\vy'\in\Q^4$ for which  $G(\vx)=\vx'$ and $G(\vy)=\vy'$.

In \citep[p.20]{diss} we prove the \textit{justification theorem}, to establish that the above is indeed an axiomatization of classical kinematics:

\begin{thm}\label{thm-gal} Assume \sy{ClassicalKin_{Full}}. Then
$\w_{mk}$ is a Galilean Transformation for all inertial observers $m$ and $k$. 
\end{thm}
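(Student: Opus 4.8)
The plan is to check, clause by clause, that the relation $\w_{mk}$ satisfies the definition of a Galilean transformation: that it is a well-defined bijection, that it is affine, that it preserves time differences, and that it preserves simultaneity together with spatial distances between simultaneous events. The time, simultaneity and distance clauses will fall out almost directly from $\ax{AxAbsTime}$ and $\ax{AxSymD}$; the genuine work is in showing that $\w_{mk}$ is a bijection and, above all, that it is affine.

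First I would show $\w_{mk}$ is a bijection of $\Q^4$. Totality and surjectivity are immediate from $\ax{AxEv}$: for every $\vx$ there is a $\vy$ with $\ev_m(\vx)=\ev_k(\vy)$, and applying the same axiom with the two observers interchanged supplies a preimage for every $\vy$. The substantive point is single-valuedness, i.e.\ injectivity of $\ev_k$. Given $\vy\neq\vy'$, I would separate them by a body: through $\vy$ there are infinitely many non-horizontal lines but at most one passes through $\vy'$, so I can fix a non-horizontal line through $\vy$ avoiding $\vy'$; by $\ax{AxThExp_+}$ (choosing two of its points with distinct time coordinates) there is an inertial observer whose worldline, being collinear by $\AxLine$, lies on that line, hence contains $\vy$ but not $\vy'$. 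This observer lies in $\ev_k(\vy)\setminus\ev_k(\vy')$, so $\ev_k$ is injective; the same for $\ev_m$ gives a single-valued inverse, and $\w_{mk}$ is a bijection.

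Next, the metric clauses are read off the axioms. $\ax{AxAbsTime}$ states exactly that corresponding events have equal time separation, so writing $\vx'=\w_{mk}(\vx)$ and $\vy'=\w_{mk}(\vy)$ we get $\timed(\vx,\vy)=\timed(\vx',\vy')$; specialising to the value $0$ yields the equivalence $x_0=y_0\leftrightarrow x_0'=y_0'$, which is precisely the preservation of simultaneity. Once simultaneity is known to be preserved, the hypotheses of $\ax{AxSymD}$ are met for any pair of simultaneous events together with their images, giving $\dist(\vx,\vy)=\dist(\vx',\vy')$ in the simultaneous case. At this point every clause except affineness is settled.

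The main obstacle is affineness, which I would obtain by showing $\w_{mk}$ is a collineation and then invoking the fundamental theorem of affine geometry. For non-horizontal lines: by $\ax{AxSelf}$ each observer occupies the full time axis in its own frame, so by $\ax{AxEv}$ and the time-preservation just proved, every observer's worldline in any other frame is a \emph{full} non-horizontal line, and by $\ax{AxThExp_+}$ every non-horizontal line is realised as such a worldline; since $\w_{mk}$ carries $\wl_m(e)$ onto $\wl_k(e)$, it sends these lines to lines. For horizontal (simultaneous) lines, collinearity is detected by additivity of spatial distance, which is preserved by $\ax{AxSymD}$. Thus $\w_{mk}$ preserves collinearity in all directions, so it is a collineation of $\Q^4$, and the fundamental theorem makes it semi-affine, i.e.\ affine up to an automorphism of $\Q$. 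The delicate final point is to eliminate that field automorphism: the already-established preservation of distances and time forces $\w_{mk}$ to respect betweenness, and the only order-preserving automorphism of a Euclidean field is the identity, so $\w_{mk}$ is genuinely affine. Together with the three metric clauses this shows $\w_{mk}$ is a Galilean transformation.
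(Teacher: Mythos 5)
The paper does not actually prove Theorem~\ref{thm-gal} inline --- it defers to the dissertation it cites --- so your proposal has to be judged on its own terms. Its architecture (well-definedness and bijectivity from \ax{AxEv} plus separating coordinate points by observer worldlines obtained from \ax{AxThExp_+} and \AxLine; the metric clauses read off \ax{AxAbsTime} and \ax{AxSymD}; affinity via collinearity-preservation and the fundamental theorem of affine geometry) is the standard route for justification theorems in this framework, and everything up to the last step is sound, modulo one point you should make explicit: that an observer's worldline in another observer's frame is the \emph{full} line rather than a subset of one requires combining injectivity of $\w$ with exact time-difference preservation (the images of $(t,0,0,0)$ and $(-t,0,0,0)$ must between them realise both time coordinates $c_0+t$ and $c_0-t$), and both your collineation argument and your realisation of arbitrary non-horizontal lines via \ax{AxThExp_+} lean on this fullness.

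The genuine flaw is in how you kill the field automorphism at the end. It is \emph{not} true that the only order-preserving automorphism of a Euclidean field is the identity. In a Euclidean field the non-negative elements are exactly the squares, so \emph{every} field automorphism is automatically order-preserving; yet Euclidean fields with non-trivial automorphisms exist --- for instance the Euclidean closure of $\Q(t)$ with $t$ infinitely large admits an automorphism extending $t\mapsto t+1$, and hyperreal fields, which the paper explicitly lists as admissible quantity structures, typically have large automorphism groups. So appealing to order- or betweenness-preservation cannot eliminate the semi-linear part, and as stated this step fails precisely in the generality (\AxEField only) that the theorem claims. The repair is easy and uses the \emph{exact} metric information you already have: writing the semiaffine map as $\vx\mapsto A\sigma(\vx)+\bar{b}$ with $\sigma$ a field automorphism applied coordinatewise, preservation of $\timed$ applied to $(t,0,0,0)$ and $(s,0,0,0)$ gives $|t-s|=|A_{00}|\,\sigma(|t-s|)$ for all $t,s$; taking $|t-s|=1$ forces $|A_{00}|=1$, hence $\sigma(u)=u$ for all $u\ge 0$ and so $\sigma=\mathrm{id}$. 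With that substitution your proof closes.
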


Theorem~\ref{thm-gal} shows that \sy{ClassicalKin_{Full}} captures classical kinematics since it implies that the worldview transformations between inertial observers are the same as in the standard non-axiomatic approaches. There is a similar theorem as Theorem~\ref{thm-gal} for \sy{NewtK}, a version of classical kinematics with $c=\infty$, in \citep[p.439, Proposition 4.1.12 Item 3]{BigBook}.

\begin{cor}\label{thm-lightspeed}
Assuming $\sy{\mathsf{ClassicalKin}_{Full}}$, all ether observers are stationary with respect to each other, and hence they agree on the speed of light.
\end{cor}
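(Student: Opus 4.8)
The plan is to take two arbitrary ether observers $e$ and $e'$, with associated light speeds $c$ and $c'$ from the definition of \Ether, and exploit Theorem~\ref{thm-gal}, which tells us that the worldview transformation $G\de\w_{ee'}$ is a Galilean transformation. First I would extract the explicit affine form of $G$. Since by the definition of Galilean transformation $G$ is affine and preserves $\mathit{time}$ for \emph{all} pairs of points, its time component cannot depend on the spatial coordinates and must be $x_0\mapsto \epsilon x_0 + t_0$ with $\epsilon=\pm 1$; since it preserves $\mathit{space}$ on every simultaneity slice, its restriction to a fixed time is a spatial isometry. Hence $G(x_0,\vec x)=(\epsilon x_0+t_0,\; R\vec x+\vec w\,x_0+\vec a)$ with $R$ an orthogonal $3\times 3$ matrix over \Q{} and $\vec w,\vec a\in\Q^3$. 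The image of the worldline of $e$ (its own time axis, by \ax{AxSelf}) under $G$ is a line with spatial velocity $\epsilon\vec w$, so the whole claim ``$e$ and $e'$ are stationary with respect to each other'' is exactly the assertion $\vec w=\vec 0$.

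Next I would transfer light cones between the two worldviews. Take any $\vx,\vy\in\Q^4$ with $\mathit{space}(\vx,\vy)=c\cdot\mathit{time}(\vx,\vy)$. By the definition of \Ether{} for $e$ there is a photon $p$ with $\vx,\vy\in\wl_e(p)$. Because $\w_{ee'}$ identifies the $e$- and $e'$-coordinates of one and the same event, the \emph{same} body $p$ satisfies $G(\vx),G(\vy)\in\wl_{e'}(p)$, so the definition of \Ether{} for $e'$ forces $\mathit{space}(G\vx,G\vy)=c'\cdot\mathit{time}(G\vx,G\vy)$. Writing the displacement as $\vy-\vx=(d_0,\,c\,d_0\,\hat n)$ with $\hat n$ a unit vector and $d_0>0$, and applying the linear part of $G$, this reduces to $|c\,R\hat n+\vec w|=c'$. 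Crucially, the biconditional in the definition of \Ether{} populates the \emph{entire} isotropic cone of $e$, so this identity holds for \emph{every} unit vector $\hat n$.

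The conclusion is then elementary algebra. Squaring gives $c^2+2c\,(R\hat n)\cdot\vec w+|\vec w|^2=c'^2$, a quantity that must be independent of $\hat n$. Choosing $\hat n=R^{-1}\hat e_i$ and $\hat n=-R^{-1}\hat e_i$ for each coordinate direction $\hat e_i$ yields $+w_i$ and $-w_i$ for the middle term, and equality of the two values forces $w_i=0$; hence $\vec w=\vec 0$, i.e.\ $e$ and $e'$ are mutually stationary. Substituting $\vec w=\vec 0$ back leaves $c'^2=c^2$, and since $c,c'>0$ we get $c'=c$, so the two ether observers agree on the speed of light.

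I expect the only genuinely nonroutine step to be the light-cone transfer in the second paragraph: one must argue carefully, using the event-identity built into $\w_{ee'}$ together with the ``$\leftrightarrow$'' in the definition of \Ether, that a single photon witnesses the cone condition in \emph{both} worldviews, so that $G$ carries $e$'s full isotropic cone onto $e'$'s. Once that is secured, the passage to $|c\,R\hat n+\vec w|=c'$ and the vanishing of $\vec w$ are straightforward. The only point to watch when working over an arbitrary Euclidean field rather than $\mathbb{R}$ is that the test directions $R^{-1}\hat e_i$ have coordinates in \Q, which holds because $R$ is \Q-valued, so no appeal to completeness is needed.
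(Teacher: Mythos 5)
Your proof is correct and follows the route the paper intends: Corollary~\ref{thm-lightspeed} is derived from Theorem~\ref{thm-gal} by combining the explicit affine form of a Galilean transformation with the isotropy of light built into the definition of \Ether{} for both observers, which is exactly your argument (the paper itself omits the details). Your explicit handling of the light-cone transfer via the biconditional and the $\pm R^{-1}\hat e_i$ test directions over an arbitrary Euclidean field is a sound filling-in of those omitted steps.
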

%\begin{proof}
%Assuming \sy{\mathsf{ClassicalKin}_{Full}}, let $e_1$ and let $e_2$ be ether observers, and $p_1$ and $p_2$ be light signals.
%Light cones are the set of worldlines of light signals through a given point. By the definition of ether observers, light cones are symmetric cones for all ether observers. By item 1 in Theorem \ref{thm-gal} below, the worldview transformation between $e_1$ and $e_2$ is a Galilean transformation. Observers $e_1$ and $e_2$ see each other's light cones right because worldlines of light signals are mapped on worldlines of light signals. This is only possible if the Galilean transformations between observers $e_1$ and $e_2$ are trivial, meaning that they are stationary with respect to each other.  Hence $\speed_{e_1}(p_1) = \speed_{e_2}(p_2)$. Since $e_1$ and $e_2$ are chosen arbitrarily, this holds for all ether observers.
%\end{proof}

Since $Ether$ is an unary relation, we can also treat it as a set. So, by Corollary~\ref{thm-lightspeed}, $Ether$ as a set is the \textit{ether frame} and its elements (usually denoted by $e_1$, $e_2$, $e_3$, \ldots or $e$, $e'$, $e''$, \ldots) are the \textit{ether observers}. The way in which we distinguish \textit{frames} from \textit{observers} is inspired by W. Rindler in \citep[p.40]{Rind}. 

By Corollary~\ref{thm-lightspeed}, we can speak about the \emph{ether-observer-independent speed of light}, denoted by $\mathfrak{c_e}$ which is the unique quantity satisfying the following formula:
\[\bforall{e}{Ether}\bforall{p}{Ph}[speed_e(p)=\mathfrak{c_e}].\]

\begin{cor}\label{cor-s} Assuming \sy{\mathsf{ClassicalKin}_{Full}}, the speed of any inertial observer is the same according to all ether observers:
\begin{equation*}
\bforall{e, e'}{\Ether}\bforall{k}{ IOb}[\speed_{e}(k) = \speed_{e'}(k)].
\end{equation*}
\end{cor}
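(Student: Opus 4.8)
The plan is to reduce the claim to a single structural fact: the worldview transformation between two ether observers is not merely Galilean but \emph{trivial}, and trivial transformations preserve both spatial distance and time difference, hence speed. First I would fix ether observers $e,e'\in\Ether$ and an inertial observer $k\in\IOb$. By Theorem~\ref{thm-gal} the worldview transformation $\w_{ee'}$ is a Galilean transformation, in particular an affine bijection. By Corollary~\ref{thm-lightspeed} the observers $e$ and $e'$ are stationary with respect to each other, so they belong to the same frame; by the definition of a frame this means exactly that $\w_{ee'}$ is a trivial transformation $T\in\Triv$.

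Next I would unfold the definition of $\Triv$. A trivial transformation is an isometry on the space coordinates composed with a translation along the time axis, so it acts as $x_0\mapsto x_0+a_0$ on the time coordinate and as $\bar a$-translated orthogonal map $\bar x\mapsto R\bar x+\bar a$ on the space coordinates, with no mixing of time into space. Consequently, whenever $T(\vx)=\vx'$ and $T(\vy)=\vy'$ one gets $\timed(\vx',\vy')=\timed(\vx,\vy)$ and $\dist(\vx',\vy')=\dist(\vx,\vy)$; that is, $T$ preserves the time difference and the spatial distance of \emph{arbitrary} pairs of coordinate points, not only of simultaneous ones. Then I would transport the worldline: since $\w_{ee'}(\vx,\vy)$ holds precisely when $\ev_e(\vx)=\ev_{e'}(\vy)$, the map $T=\w_{ee'}$ sends $\wl_e(k)$ bijectively onto $\wl_{e'}(k)$, and it carries non-horizontal lines to non-horizontal lines (as it preserves the nonvanishing of the time difference), so $\speed_e(k)$ is defined if and only if $\speed_{e'}(k)$ is. Feeding a pair of distinct worldline points and its image into the defining equation $\dist=v\cdot\timed$ of speed, and using that $T$ preserves both $\dist$ and $\timed$, yields $\speed_{e'}(k)=\speed_e(k)$.

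The step I expect to require the most care is establishing that $\w_{ee'}$ is genuinely trivial rather than merely Galilean, which is exactly where Corollary~\ref{thm-lightspeed} is indispensable: a Galilean transformation is trivial precisely when the relative velocity vanishes, and the vanishing of the relative velocity between ether observers is the content of their being stationary with respect to one another. A secondary point to watch is the partial-function Convention~\ref{partial-conv}, under which the asserted equality must be read as ``both sides are defined and equal''; the bijectivity of $T$ together with its preservation of non-horizontality is what guarantees the two sides are simultaneously defined.
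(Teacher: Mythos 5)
Your argument is correct and follows exactly the route the paper intends: \sy{ClassicalKin_{Full}} states this as an unproved corollary immediately after Theorem~\ref{thm-gal} and Corollary~\ref{thm-lightspeed}, and your derivation --- $\w_{ee'}$ is Galilean, hence by mutual rest a trivial transformation, and trivial transformations preserve $\dist$ and $\timed$ for \emph{arbitrary} (not just simultaneous) pairs, hence preserve the defining equation of $\speed$ along the transported worldline --- is precisely the argument those placements presuppose. Your attention to why mere Galileanness would not suffice (boosts change spatial distances of non-simultaneous points) and to the simultaneous definedness of both sides under Convention~\ref{partial-conv} is exactly the right care to take.
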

\begin{cor}\label{cor-v} Assuming \sy{\mathsf{ClassicalKin}_{Full}}, all ether observers have the same velocity according to any inertial observer:
\begin{equation*}
\bforall{e, e'}{\Ether}\bforall{k}{ IOb}[\bar{v}_k(e) = \bar{v}_k(e')].
\end{equation*}
\end{cor}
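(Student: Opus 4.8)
The plan is to upgrade the equality of speeds (Corollary~\ref{cor-s}) to an equality of full velocity vectors by exploiting that ether observers are mutually at rest together with the affinity of worldview transformations. By Corollary~\ref{thm-lightspeed}, any two ether observers $e$ and $e'$ are stationary with respect to each other, which by the definition of a frame means that the worldview transformation $\w_{ee'}$ is a trivial transformation; hence so is its inverse $\w_{e'e}$. The conceptual core of the argument is the observation that a trivial transformation (a spatial isometry composed with a translation along the time axis) sends the time axis $\{(t,0,0,0):t\in\Q\}$ to a line parallel to it, namely a line of constant spatial coordinates, because the temporal part is a pure translation and the spatial part fixes the (zero) spatial velocity.

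First I would express both worldlines in $k$'s coordinate system via worldview transformations. Using \ax{AxSelf}, the worldline of $e$ in its own frame is the time axis, so $\wl_k(e)=\w_{ek}\big[\{(t,0,0,0):t\in\Q\}\big]$ and likewise $\wl_k(e')=\w_{e'k}\big[\{(t,0,0,0):t\in\Q\}\big]$. Writing $\w_{e'k}=\w_{ek}\circ\w_{e'e}$ and invoking the previous paragraph, $\w_{e'e}\big[\{(t,0,0,0)\}\big]$ is a line parallel to the time axis, so $\wl_k(e')$ is the image under $\w_{ek}$ of a line parallel to the one whose image is $\wl_k(e)$.

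Next, by Theorem~\ref{thm-gal}, $\w_{ek}$ is a Galilean transformation and therefore an affine bijection. Since affine maps carry parallel lines to parallel lines, $\wl_k(e)$ and $\wl_k(e')$ are parallel lines in $k$'s coordinate system. Moreover $\w_{ek}$ preserves time differences, so both images are non-horizontal (the time axis contains points at distinct times), which guarantees that the partial functions $\bar{v}_k(e)$ and $\bar{v}_k(e')$ are both defined. Finally, because the velocity of a body is exactly the direction vector (spatial displacement per unit coordinate time) of its non-horizontal worldline, and parallel non-horizontal lines share the same direction, one reads off $\bar{v}_k(e)=\bar{v}_k(e')$.

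I expect the only delicate bookkeeping to be the composition and inversion of worldview transformations and the derivation, directly from the definition of $\bar v$, that two parallel non-horizontal lines have equal velocity; all of the genuine content sits in the single step that a trivial transformation preserves the direction of the time axis, which is precisely what turns Corollary~\ref{cor-s}'s equality of \emph{speeds} into equality of full \emph{velocity} vectors.
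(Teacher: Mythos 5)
Your proof is correct, and it fleshes out exactly the derivation the paper intends but leaves implicit: the corollary is stated without proof, positioned as a consequence of Theorem~\ref{thm-gal} (worldview transformations between inertial observers are affine Galilean maps) and Corollary~\ref{thm-lightspeed} (ether observers are mutually at rest), which are precisely the two ingredients you combine. The only remark worth making is that your detour through trivial transformations is slightly more than needed --- Corollary~\ref{thm-lightspeed} already gives that $\wl_e(e')$ is a vertical line, so writing $\wl_k(e')=\w_{ek}\big[\wl_e(e')\big]$ and $\wl_k(e)=\w_{ek}\big[\wl_e(e)\big]$ and invoking affinity of $\w_{ek}$ yields parallelism directly --- but this does not affect correctness.
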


\subsection{Axioms for special relativity}
\label{sec-sr}

A possible key assumption of special relativity is that the speed of light signals is observer independent. 

\begin{description}
\item[\underline{\ax{AxPh_c}}] For any inertial observer, the speed of light is the same everywhere and in every direction. Furthermore, it is possible to send out a light signal in any direction everywhere:
\begin{multline*}
\bexists {c} {\Q} 
\Big[ c>0 \AND 
  \bforall {k}  {IOb}
  \Bforall {\vx,\vy} {\Q^4}\\
  \big(\bexists {p}{\Ph} 
  \big[\vx,\vy\in \wl_k(p)\big] \leftrightarrow  space(\vx,\vy)= c\cdot time(\vx,\vy)\big)\Big] .
\end{multline*}
\end{description}

By \ax{AxPh_c}, we have an observer-independent speed of light. From now on, we will denote this speed of light as $\mathfrak{c}$. From $\ax{AxPh_c}$, it follows that observers (as considered by the theory) use units of measurement which have the same numerical value for the speed of light. The value of the constant speed of light depends on the choice of units (for example $\mathfrak{c} = 299 792 458$ when using meters and seconds or $\mathfrak{c} = 1$ when using light-years and years as units). We prove below, in Lemma \ref{lemma-trc} and Corollary \ref{cor-Tr_+'(c_e)}, that the relativistic speed of light $\mathfrak{c}$ and the ether-observer-independent speed of light $\mathfrak{c_e}$ translate into each other. Note that $c$ in \ax{AxPh} and \ax{AxEther} is a variable, while $\mathfrak{c}$ and $\mathfrak{c_e}$ are model dependent constants. 

\sy{Kin_{Full}} and \ax{AxPh_c} imply that no inertial observer can move faster than
light if $d \ge 3$, see e.g., \citep{Synthese}. Therefore, we will use the following version of \ax{AxThExp_+}.

\begin{description} \label{axthexp}
\item[\underline{\ax{AxThExp}}] Inertial observers can move along any  straight line of any speed less than the speed of light:
\begin{multline*}
\bexists {h}{B}  \big[ IOb(h)\big]\AND 
\bforall {k}  {IOb} 
\Bforall {\vx,\vy}{\Q^4} \\
\big( space(\vx,\vy)<\mathfrak{c}\cdot time(\vx,\vy) \to \bexists {k'}  {IOb}  \big[\vx,\vy\in \wl_k(k')\big]\big).
\end{multline*}
\end{description}
Let us introduce the following axiom system for special relativity:
\[\sy{\mathsf{SpecRel}_{Full}}\de\ \sy{\mathsf{Kin_{Full}}} \cup \{\ax{AxPh_c},\ax{AxThExp} \}.\]

The map $P:\Q^4\to\Q^4$ is called a \emph{Poincar\'e transformation} corresponding to light speed $c$ iff it is an affine bijection having the following property
\begin{equation*}
 c^2\cdot time(\vx,\vy)^2- space(\vx,\vy)^2= c^2\cdot time(\vx',\vy')^2- space(\vx',\vy')^2
\end{equation*}
for all $\vx,\vy,\vx',\vy'\in\Q^4$ for which $P(\vx)=\vx'$ and
$P(\vy)=\vy'$.

In \citep[p.24]{diss} based on theorem \citep[Thm 2.1, p.639]{Synthese} for stock \sy{SpecRel}, we prove\footnote{An alternative proof for Theorem~\ref{thm-poi} would be by using the Alexandrov--Zeeman Theorem, which states that any causal automorphism of spacetime is a Lorentz transformation up to a dilation, a translation and a field-automorphism-induced collineation, for the case when the field is the field of real numbers. It was independently discovered by A.~D.~Aleksandrov in 1949, L.-K.~Hua in the 1950s and E.~C.~Zeeman in 1964, see \citep[p.179]{Goldblatt}. See \citep{Zeeman} for E.~C.~Zeeman's proof of the theorem, \citep{vro} and \citep{VKK} for algebraic generalizations, and \citep{Pambuccian} for a proof using definability theory.} the following justification theorem  which establishes that \sy{SpecRel_{Full}} is indeed an axiomatization of special relativity:

\begin{thm}\label{thm-poi} Assume \sy{SpecRel_{Full}}. Then 
%\begin{enumerate}
%\item 
$\w_{mk}$ is a Poincar{\'e} transformation corresponding to $\mathfrak{c}$ for all inertial observers $m$ and $k$. 
%\end{enumerate}
\end{thm}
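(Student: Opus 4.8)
The plan is to verify directly the two defining clauses of a Poincar\'e transformation corresponding to $\mathfrak{c}$: that $\w_{mk}$ is an affine bijection of $\Q^4$, and that it preserves the Minkowski interval $\mathfrak{c}^2\cdot time(\vx,\vy)^2-space(\vx,\vy)^2$. I would carry this out in four stages, largely following the argument for stock \sy{SpecRel} in \citep{Synthese}: (i) $\w_{mk}$ is a bijection; (ii) it maps the light cone onto the light cone; (iii) it is affine; and (iv) the associated similarity factor is exactly $1$, so the interval is preserved on the nose rather than merely up to scale.

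\textbf{Bijectivity and light-cone preservation.} Since $\w_{mk}(\vx,\vy)\Leftrightarrow \ev_m(\vx)=\ev_k(\vy)$, totality and surjectivity of $\w_{mk}$ are immediate from \ax{AxEv}, which says every event coordinatized by one inertial observer is coordinatized by the other. Single-valuedness and injectivity reduce to the claim that distinct coordinate points host distinct events; here \ax{AxThExp} and \ax{AxTriv} (populating every coordinate point with enough inertial observers) together with \ax{AxPh_c} (enough photons) separate the points, so $\w_{mk}\colon\Q^4\to\Q^4$ is a well-defined bijection with $\w_{mk}^{-1}=\w_{km}$. For the light cone, \ax{AxPh_c} supplies a single $\mathfrak{c}>0$ such that, for \emph{every} inertial observer, two coordinate points lie on a common photon worldline iff $space=\mathfrak{c}\cdot time$ between them. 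As $m$ and $k$ coordinatize the same photons at corresponding events, $\w_{mk}$ carries the null set $\{(\vx,\vy):space(\vx,\vy)=\mathfrak{c}\cdot time(\vx,\vy)\}$ onto itself; equivalently, $\w_{mk}$ preserves lightlike separation.

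\textbf{Affineness.} This is the crux. By \AxLine\ the worldlines of inertial observers are straight lines, and by \ax{AxPh_c} through every point there are photon (null) lines in every null direction; since $\w_{mk}$ sends observer worldlines to observer worldlines and photon lines to photon lines, it preserves the incidence structure of lines and is therefore a collineation. The fundamental theorem of affine geometry then makes $\w_{mk}$ affine up to a field automorphism $\sigma$ of $\Q$, i.e.\ semi-affine. One rules out nontrivial $\sigma$ by observing that \ax{AxEField} makes positivity definable (a number is $\ge 0$ iff it is a square), so $\sigma$ preserves the ordering; combined with the metric constraints this forces $\sigma$ to be the identity. Over $\R$ this entire step is the Alexandrov--Zeeman theorem, and over an arbitrary Euclidean field it is its algebraic generalization in \citep{vro} and \citep{VKK}. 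I expect precisely this stage---upgrading ``preserves lightlike separation'' to ``is affine,'' while eliminating the field automorphism over a general Euclidean field---to be the main obstacle; the remaining stages are essentially bookkeeping.

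\textbf{Fixing the scale.} Once $\w_{mk}$ is affine and preserves the null cone of the quadratic form $q(\vx,\vy)=\mathfrak{c}^2\cdot time(\vx,\vy)^2-space(\vx,\vy)^2$, elementary linear algebra forces $\w_{mk}$ to preserve $q$ up to a positive scalar $\lambda$, i.e.\ to be a similarity. To show $\lambda=1$, I would choose a difference vector $\vv$ with zero time-component whose image under the linear part also has zero time-component (such $\vv$ exist, since the spatial hyperplanes of $m$ and $k$ meet in at least a plane). Then for a pair of events differing by $\vv$, both the pair and its image are simultaneous for the respective observers, so \ax{AxSymD} gives $space(\vx,\vy)=space(\vx',\vy')$, forcing $\lambda=1$. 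Hence $q$ is preserved exactly, which is precisely the defining condition of a Poincar\'e transformation corresponding to $\mathfrak{c}$, completing the proof.
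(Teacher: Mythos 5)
The paper does not prove Theorem~\ref{thm-poi} in the text at all: it defers to \citep[p.24]{diss}, which adapts the proof of \citep[Thm 2.1, p.639]{Synthese} for stock \sy{SpecRel} (where $c=1$) to the present setting with \ax{AxThExp} and \ax{AxTriv} added and the light speed left as a model-dependent constant $\mathfrak{c}$. Your route---bijection, null-cone preservation, collineation, fundamental theorem of affine geometry, then scale-fixing via \ax{AxSymD}---is essentially the Alexandrov--Zeeman strategy that the paper's own footnote explicitly lists as an \emph{alternative} proof, pointing to \citep{vro} and \citep{VKK} for the algebraic generalizations needed over a Euclidean field. So the architecture is legitimate and your endgame (using \ax{AxSymD} on a spacelike difference vector fixed by both simultaneity hyperplanes to force the similarity factor to $1$) is exactly the right way to pin down the scale.

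Two soft spots deserve to be named, both at the step you yourself flag as the crux. First, \AxLine and \ax{AxThExp} only populate the slower-than-light lines with observer worldlines, and \ax{AxPh_c} only the null lines; spacelike lines carry no bodies, so ``$\w_{mk}$ preserves the incidence structure of lines'' is established only for a proper subfamily of lines, and the fundamental theorem of affine geometry cannot be invoked until collinearity on arbitrary triples is recovered (e.g.\ by expressing spacelike lines through intersections of planes spanned by timelike ones). Second, your claim that definability of positivity makes the field automorphism $\sigma$ order-preserving and hence trivial is false over general Euclidean fields: the paper explicitly admits the hyperreals and other non-rigid Euclidean fields, which have nontrivial order-preserving automorphisms. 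What actually kills $\sigma$ is the exact (not merely up-to-automorphism) metric constraint coming from \ax{AxSymD}: if distances transform as $d\mapsto\lambda\sigma(d)$ and \ax{AxSymD} forces $\lambda\sigma(d)=d$ for all attainable $d$, then $\sigma(1)=1$ gives $\lambda=1$ and $\sigma=\mathrm{id}$ on all positive elements. Your parenthetical ``combined with the metric constraints'' gestures at this, but as written the order-preservation argument does no work and should be replaced by the \ax{AxSymD} argument outright.
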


Theorem~\ref{thm-poi} shows that \sy{SpecRel_{Full}} captures the kinematics of special relativity since it implies that the worldview transformations between inertial observers are the same as in the standard non-axiomatic approaches.
Note that the Poincar{\'e} transformations in Theorem~\ref{thm-poi} are model-dependent. When we talk about Poincar{\'e} transformations below, we mean Poincar{\'e} transformations corresponding to the speed of light of the investigated model.

\begin{cor}\label{cor-max} Assuming \sy{\mathsf{SpecRel_{Full}}}, the speed of any inertial observer relative to any other inertial observer is slower than the speed of light:
\begin{equation*}
\bforall{h,k}{IOb}[\speed_{h}(k) < \mathfrak{c}].
\end{equation*}
\end{cor}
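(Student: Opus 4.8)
The plan is to reduce the claim to the already-established fact (Theorem~\ref{thm-poi}) that every worldview transformation between inertial observers is a Poincar\'e transformation corresponding to $\mathfrak{c}$, combined with \ax{AxSelf}. Fix inertial observers $h$ and $k$. First I would use \ax{AxSelf} to identify the worldline of $k$ in its own coordinate system: $\wl_k(k)$ is exactly the time axis $\{\langle t,0,0,0\rangle : t\in\Q\}$, which is a non-horizontal line through the origin.

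Next I would transport this worldline into $h$'s coordinate system via the worldview transformation. Unfolding the definitions of $\ev$, $\wl$ and $\w_{kh}$, one checks that $\vy\in\wl_h(k)$ holds precisely when $\ev_h(\vy)=\ev_k(\vx)$ for some $\vx\in\wl_k(k)$, so that $\wl_h(k)=\w_{kh}[\wl_k(k)]$. Since Theorem~\ref{thm-poi} guarantees that $\w_{kh}$ is a Poincar\'e transformation, and in particular an affine bijection, the set $\wl_h(k)$ is the image of a line under an affine bijection, hence again a line containing at least two distinct points. Consequently $\speed_h(k)$ will be defined, once I also verify (as below) that this image line is non-horizontal.

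The key step is the speed estimate. I would pick two distinct points $\vx,\vx'\in\wl_k(k)$ on the time axis; then $\timed(\vx,\vx')\neq 0$ while $\dist(\vx,\vx')=0$, so the Minkowski quantity $\mathfrak{c}^2\cdot\timed(\vx,\vx')^2-\dist(\vx,\vx')^2=\mathfrak{c}^2\cdot\timed(\vx,\vx')^2$ is strictly positive (using $\mathfrak{c}>0$ from \ax{AxPh_c}). Writing $\vy=\w_{kh}(\vx)$ and $\vy'=\w_{kh}(\vx')$, the defining property of a Poincar\'e transformation gives $\mathfrak{c}^2\cdot\timed(\vy,\vy')^2-\dist(\vy,\vy')^2=\mathfrak{c}^2\cdot\timed(\vx,\vx')^2>0$. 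From this I read off two things at once: first $\timed(\vy,\vy')\neq 0$, so the image line is non-horizontal (justifying definedness of the speed), and second $\dist(\vy,\vy')^2<\mathfrak{c}^2\cdot\timed(\vy,\vy')^2$. Taking positive square roots (available by \ax{AxEField}) yields $\dist(\vy,\vy')<\mathfrak{c}\cdot\timed(\vy,\vy')$. Since $\wl_h(k)$ is a line this holds for an arbitrary pair of its points with the same common ratio, so by the definition of $\speed$ that ratio equals $\speed_h(k)$, and dividing by $\timed(\vy,\vy')>0$ gives $\speed_h(k)<\mathfrak{c}$.

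I expect the only genuinely delicate point to be bookkeeping rather than mathematics: establishing that $\wl_h(k)$ is precisely the image of the time axis under $\w_{kh}$, so that the speed is defined and the worldline non-horizontal, requires carefully chaining the definitions of $\ev$, $\wl$ and $\w$ and invoking the bijectivity from Theorem~\ref{thm-poi}; the strict inequality itself then drops out of Poincar\'e-invariance with essentially no computation. The degenerate case $h=k$ is harmless, since then $\w_{kh}$ is the identity and $\speed_k(k)=0<\mathfrak{c}$.
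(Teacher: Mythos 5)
Your proof is correct and is essentially the argument the paper intends: Corollary~\ref{cor-max} is stated as an immediate consequence of Theorem~\ref{thm-poi}, and your derivation --- map the time axis $\wl_k(k)$ (given by \ax{AxSelf}) through the Poincar\'e transformation $\w_{kh}$ and read off $\dist<\mathfrak{c}\cdot\timed$ from the positivity of the preserved quantity $\mathfrak{c}^2\cdot\timed^2-\dist^2$ --- is exactly the fleshed-out version of that step. The bookkeeping you flag (that $\wl_h(k)$ is the $\w_{kh}$-image of the time axis, hence a non-horizontal line on which $\speed_h(k)$ is defined) is handled correctly.
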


\section[Using Poincar\' e--Einstein synchronisation to construct relativistic coordinate systems]{Using Poincar\' e--Einstein synchronisation to construct relativistic coordinate systems for classical observers}
\label{sec:rad}

In this section, we are going to give a systematic translation of the formulas of \sy{SpecRel_{Full}} to the language of \sy{ClassicalKin_{Full}} such that the translation of every consequence  of \sy{SpecRel_{Full}} will follow from \sy{ClassicalKin_{Full}}, see Theorem~\ref{thm-tr}.

The basic idea is that if classical observers use light signals and Poincar\' e--Einstein synchronisation, then the coordinate systems of the slower-than-light observers after a natural time adjustment will satisfy the axioms of special relativity. 
Hence we will use light signals to determine simultaneity and measure distance in classical physics in the same way as in relativity theory. 
For every classical observer $k$, we will redefine the coordinates of events using Poincar\' e--Einstein synchronization. For convenience, we will work in the ether frame because there the speed of light is the same in every direction. 

Let us consider the ether coordinate system $e$  which agrees with $k$ in every aspect: it intersects the worldline of observer $k$ at the origin according to both $k$ and $e$; it agrees with $k$ on the direction of time and the directions of space axes; it agrees with $k$ in the units of time and space.

First we consider the case where observer $k$ is moving in the $x$ direction according to $e$. We will discuss the general case when discussing Figure~\ref{fig-radar4D} below.
So let us first understand what happens in the $tx$-plane when classical observer $k$ uses Poincar\' e--Einstein synchronization to determine the coordinates of events. 

Let $(t_1,x_1)$ be an arbitrary point of the $tx$-plane in the coordinate system of $e$. Let $v$ be the speed of $k$ with respect to $e$. Then the worldline of $k$ according to $e$ is defined by equation $x=vt$, as illustrated in Figure ~\ref{fig-radar2D}.  Therefore, the worldline of the light signal sent by $k$ in the positive $x$ direction at instant $t_0$ satisfies  equation $t-t_0=\frac{x-vt_0}{\mathfrak{c_e}}$. Similarly, the light signal received by $k$ in the negative $x$ direction at instant $t_2$ satisfies equation $t-t_2=\frac{x-vt_2}{-\mathfrak{c_e}}$. Using that $(t_1, x_1)$ are on these lines, we get 
    \begin{equation}\label{eq02}
      \begin{array}{rl}
      \mathfrak{c_e}(t_1-t_0)&=x_1-vt_0, \\ \mathfrak{-c_e}(t_1-t_2)&=x_1-vt_2.
      \end{array}.
    \end{equation}
Solving \eqref{eq02} for $t_0$ and $t_2$, we get 
    \begin{equation}\label{t0,t2} 
           t_0=\frac{\mathfrak{c_e}t_1-x_1}{\mathfrak{c_e}-v} ,\quad t_2=\frac{\mathfrak{c_e}t_1+x_1}{\mathfrak{c_e}+v}.
    \end{equation}

 Let us call the coordinates of the event at $(t_1,x_1)$ in the ether coordinates and $(t'_1,x'_1)$ in the coordinates of the moving observer. Since the light signal has the same speed in both directions, $t'_1$ is in the middle between $t_0$ and $t_2$, which leads to $t'_1 = \frac{t_0+t_2}{2}$ and $x'_1=\frac{\mathfrak{c_e}(t_2-t_0)}{2}$.\\
Substituting \eqref{t0,t2} in the above equations , we find that  
  \begin{equation*}
           t_1'=\frac{\mathfrak{c^2_e}t_1-vx_1}{\mathfrak{c^2_e}-v^2} ,\quad x_1'=\frac{\mathfrak{c^2_e}x_1-\mathfrak{c^2_e}vt_1}{\mathfrak{c^2_e}-v^2}.
   \end{equation*}

Therefore, if $k$ uses radar to coordinatize spacetime,  $(t_1,x_1)$ is mapped to $\frac{1}{\mathfrak{c}_\mathfrak{e}^2-v^2}\left(\mathfrak{c}_\mathfrak{e}^2 t_1-v x_1, \mathfrak{c}_\mathfrak{e}^2 x_1-\mathfrak{c}_\mathfrak{e}^2 v t_1\right)$ when switching from the coordinate system of $e$ to that of $k$.

Now, let us consider the effect of using radar to coordinatize spacetime in the directions orthogonal to the movement of $k$, as illustrated in Figure ~\ref{fig-correction}. Consider a light signal moving with speed $\mathfrak{c_e}$ in an orthogonal direction, say the $y$ direction, which is reflected by a mirror after traveling one time unit. 
Due to the movement of the observer through the ether frame, there will be an apparent deformation of distances orthogonal to the movement: if the light signal takes one time unit relative to the ether, it will appear to only travel a distance which would, because of Pythagoras' theorem, be covered in $\sqrt{1-v^2/\mathfrak{c}_\mathfrak{e}^2}$ time units to the observer moving along in the $x$ direction. The same holds for the $z$ direction. So the point with coordinates $(0,0,1,1)$ in the ether coordinates will have coordinates 
\begin{equation*}
\left(0,0,\frac{1}{\sqrt{1-v^2/\mathfrak{c}_\mathfrak{e}^2}},\frac{1}{\sqrt{1-v^2/\mathfrak{c}_\mathfrak{e}^2}}\right)
\end{equation*}
 relative to the moving observer. This explain the values on the diagonal line in the lower right corner of the Poincar\' e--Einstein synchronisation matrix $E_v$. 

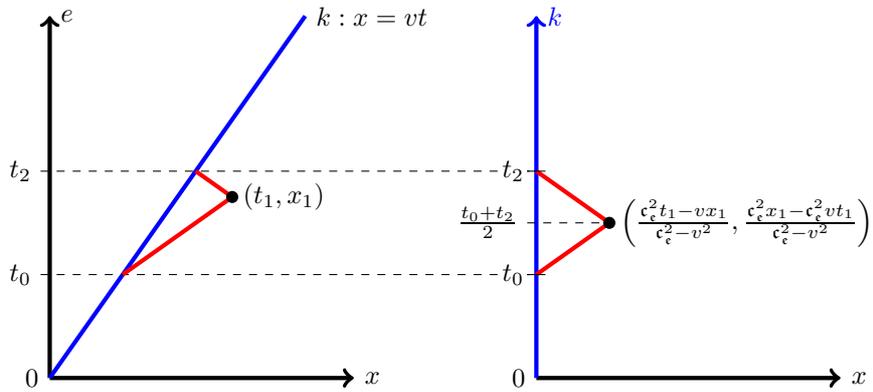
\begin{figure}
  \begin{center}
    \begin{tikzpicture}[scale=0.4]

\pgfmathsetmacro{\c}{1.4}
\pgfmathsetmacro{\v}{.5*\c}
\pgfmathsetmacro{\x}{6}
\pgfmathsetmacro{\t}{6}
\pgfmathsetmacro{\cpv}{\c+\v}
\pgfmathsetmacro{\cmv}{\c-\v}
\pgfmathsetmacro{\to}{\c*\t/\cmv-\x/\cmv}
\pgfmathsetmacro{\tt}{\c*\t/\cpv+\x/\cpv}

\coordinate (X) at  (\x,\t);
\coordinate (R) at  (\v*\tt,\tt);
\coordinate (S) at  (\v*\to,\to);
\draw [->, ultra thick] (0,0) node[left] {$0$} -- (10,0)  node[right] {$x$} ;
\draw [->, ultra thick] (0,0) -- (0,12) node[right] {$e$};
\draw [blue, ultra thick] (0,0) -- (\v*12,12)  node[right,black] {$k: x=vt$};
\draw [dashed]  (-0.3,\tt)  node[left] {$t_2$} to  (16,\tt)  ;
\draw [dashed] (-0.3,\to) node[left] {$t_0$} -- (16,\to) ;
%\draw [red, ultra thick] (0,\t-\x/\c) -- (X) node[right,black] {$(t_1,x_1)$};
\draw [red, ultra thick] (\to*\v,\to) -- (X) node[right,black] {$(t_1,x_1)$};
%\draw [red, ultra thick] (X) -- (0,\t+\x/\c);
\draw [red, ultra thick] (X) -- (\tt*\v,\tt);
\fill (X) circle (0.2);

\begin{scope}[shift={(16,0)}]
\coordinate (Z) at (\c*\tt/2-\c*\to/2,\tt/2+\to/2);
\draw [->, ultra thick] (0,0) node[left] {$0$} -- (10,0)  node[right] {$x$};
\draw [->, blue, ultra thick] (0,0) -- (0,12)  node[right] {$k$};
\draw [red, ultra thick] (0,\to) -- (Z) node[right,black] {$\left(\frac{\mathfrak{c}_\mathfrak{e}^2t_1-vx_1}{\mathfrak{c}_\mathfrak{e}^2-v^2},\frac{\mathfrak{c}_\mathfrak{e}^2x_1-\mathfrak{c}_\mathfrak{e}^2vt_1}{\mathfrak{c}_\mathfrak{e}^2-v^2}\right)$};
\draw [red, ultra thick] (Z) -- (0,\tt);
\fill (Z) circle (0.2);
\draw [dashed] (-0.3,\tt/2+\to/2) node[left] {$\frac{t_0+t_2}{2}$}-- (Z);
\draw (-0.3,\to) node[left,fill=white,inner sep=1] {$t_0$} -- (0.3,\to) ;
\draw (-0.3,\tt) node[left,fill=white,inner sep=1] {$t_2$} -- (0.3,\tt) ;
\end{scope}
\end{tikzpicture}
    \caption{\label{fig-radar2D}Einstein--Poincar{\' e} synchronisation in two-dimensional classical kinematics: on the left in the coordinate system of an ether observer and on the right in the coordinate system of the moving observer. We only assume that the moving observer goes through the origin of the ether observer.}
  \end{center}
\end{figure}

Consequently, the following matrix describes the Poincar\' e--Einstein synchronisation in classical physics:
\begin{equation*}
E_v \de \left[ \begin{array}{cccc}
\frac{1}{1-v^2/\mathfrak{c}_\mathfrak{e}^2} & \frac{-v/\mathfrak{c}_\mathfrak{e}^2}{1-v^2/\mathfrak{c}_\mathfrak{e}^2} & 0 & 0\\
\frac{-v}{1-v^2/\mathfrak{c}_\mathfrak{e}^2} & \frac{1}{1-v^2/\mathfrak{c}_\mathfrak{e}^2} & 0 & 0\\
0 & 0 & \frac{1}{\sqrt{1-v^2/\mathfrak{c}_\mathfrak{e}^2}} & 0\\
0 & 0 & 0 & \frac{1}{\sqrt{1-v^2/\mathfrak{c}_\mathfrak{e}^2}}
\end{array} \right].
\end{equation*}

This transformation generates some asymmetry: an observer in a moving spaceship would dtermine their spaceship to be smaller in the directions orthogonal to its movement, see Figure ~\ref{fig-correction}. We can eliminate this asymmetry by multiplying with a scale factor $S_v \de \sqrt{1-v^2/\mathfrak{c}_\mathfrak{e}^2}$ which slows the clock of $k$ down. The combined transformation $S_v \circ E_v$ is the following Lorentz transformation:\[
L_v \de  S_v \circ E_v = \begin{bmatrix}
\frac{1}{\sqrt{1-v^2/\mathfrak{c}_\mathfrak{e}^2}} & \frac{-v/\mathfrak{c}_\mathfrak{e}^2}{\sqrt{1-v^2/\mathfrak{c}_\mathfrak{e}^2}} & 0 & 0\\
\frac{-v}{\sqrt{1-v^2/\mathfrak{c}_\mathfrak{e}^2}} & \frac{1}{\sqrt{1-v^2/\mathfrak{c}_\mathfrak{e}^2}} & 0 & 0\\
0 & 0 & 1 & 0\\
0 & 0 & 0 & 1
\end{bmatrix}.
\]

\begin{figure}
  \begin{center}
    \begin{tikzpicture}[scale=1.4]

% ship at the beggining 
\draw[white, fill=lightgray!42,ultra thick](0,0) rectangle (3,1);
\draw[white, fill=lightgray!42,ultra thick](3,1)--(3.5,0.5) -- (3,0)-- cycle;

% ship at the end
\draw[dashed, thick] (1,0) to (4,0);
\draw[dashed, thick] (1,1) to (4,1);
\draw[dashed, thick] (4,1)--(4.5,0.5) -- (4,0)-- cycle;

% calculation 
\draw[red, ultra thick](1,0)-- node[left] {$1$} (0,1) ;
\draw[ultra thick](1,0)-- node[right, xshift = -3] {$\sqrt{1-v^2/\mathfrak{c}_\mathfrak{e}^2}$} (1,1);
\draw[green!50!black, ultra thick](1,1)-- node[above] {$v/\mathfrak{c_e}$} (0,1);
\draw[->,green!50!black, ultra thick] (3.5,0.5) -- node[above left] {$\bar v/\mathfrak{c_e}$} (4.5,0.5);

\end{tikzpicture}
    \caption{\label{fig-correction}Correction in the directions orthogonal to movement: The diagonal line is the path of a light signal which travels 1 space unit to the opposite side of a spaceship moving with speed $v/\mathfrak{c_e}$ with respect to the Ether frame. By Pythagoras' theorem the width of the spaceship is not 1 but just $\sqrt{1-v^2/\mathfrak{c}_\mathfrak{e}^2}$ space units in the Ether frame.}
  \end{center}
\end{figure}
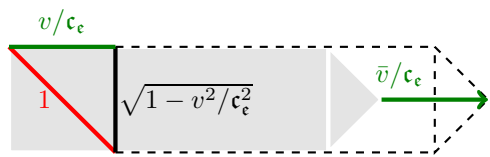

For all $\bar v=(v_x,v_y,v_z)\in\Q^3$ satisfying $v=|\bar v|<\mathfrak{c_e}$, we construct a bijection $Rad_{\bar{v}}$ (for ``\textit{radarization}'') between Minkowski spacetime and Newtonian absolute spacetime. 

We start by using the unique  spatial rotation $R_{\bar{v}}$ that rotates $(1,v_x,v_y,v_z)$ to $(1,-v,0 ,0)$ if the ether frame is not parallel with the $tx$-plane: 

\begin{itemize}
\item if $v_y \neq 0$ or $v_z \neq 0$ then
\[R_{\bar{v}} \de   \frac{1}{|\bar{v}|}\begin{bmatrix}
       1 & 0 & 0 & 0\\
       0 & -v_x & -v_y & -v_z\\
       0 & v_y & -v_x-\frac{v_z^2(|\bar{v}|-v_x)}{v_y^2+v_z^2} &\frac{v_yv_z(|\bar{v}|-v_x)}{v_y^2+v_z^2} \\
       0 & v_z & \frac{v_yv_z(|\bar{v}|-v_x)}{v_y^2+v_z^2} & -v_x-\frac{v_y^2(|\bar{v}|-v_x)}{v_y^2+v_z^2} 
     \end{bmatrix},
     \]
\item if $v_x \leq 0$ and $v_y = v_z = 0$ then $R_{\bar{v}}$ is the identity map,
\item if $v_x > 0$ and $v_y = v_z = 0$ then 
\[R_{\bar{v}} \de \begin{bmatrix}
       1 & 0 & 0 & 0\\
       0 & -1 & 0 & 0\\
       0 & 0 & -1 & 0 \\
       0 & 0 & 0 & -1 
     \end{bmatrix}.
     \] 
\end{itemize}

Rotation $R_{\bar{v}}$ is only dependent on the velocity $\bar{v}$. 

Then we take the Galilean boost\footnote{A Galilean boost is a time dependent translation: take a spacelike vector $\bar v$ and translate by $t_0\bar v$ in the horizontal hyperplane $t=t_0$, i.e. $(t,x,y,z) \mapsto (t,x+tv_x,y+tv_y,z+tv_z)$.} $G_v$ that maps the line $x=-vt$ to the time-axis, i.e., 
\[G_v \de
    \begin{bmatrix}
      1 & 0 & 0 & 0\\
      v & 1& 0 & 0\\
      0 & 0& 1 &0 \\
      0 & 0 & 0& 1
    \end{bmatrix}. \] 

Next we have the Lorentz transformation $S_v \circ E_v$. Finally, we use the reverse rotation $R^{-1}_{\bar{v}}$ to put the directions back to their original positions. So $Rad_{\bar{v}}$ is, as illustrated in Figure ~\ref{fig-radar4D}, the following composition:

\begin{equation*}
 Rad_{\bar{v}} \de R^{-1}_{\bar{v}} \circ S_v \circ E_v \circ G_{v} \circ R_{\bar{v}}.
\end{equation*}

$Rad_{\bar{v}}$ is a unique well-defined linear bijection for all $|\bar v|<\mathfrak{c_e}$. 

\label{core-matrix}
The \emph{core map} is the transformations in between the rotations:
\begin{equation*}
  C_v \de  S_v \circ E_v\circ G_v = \begin{bmatrix}
    \sqrt{1-v^2/\mathfrak{c}_\mathfrak{e}^2} & \frac{-v/\mathfrak{c}_\mathfrak{e}^2}{\sqrt{1-v^2/\mathfrak{c}_\mathfrak{e}^2}} & 0 & 0\\
    0 & \frac{1}{\sqrt{1-v^2/\mathfrak{c}_\mathfrak{e}^2}} & 0 & 0\\
    0 & 0 & 1 & 0\\
    0 & 0 & 0 & 1
  \end{bmatrix}.
\end{equation*}

It is worth noting that the core map is only dependent on the speed $v$.

\begin{figure}
  \begin{center}
    \begin{tikzpicture}[scale=0.47]
\pgfmathsetmacro\size{4}
\pgfmathsetmacro\v{0.5}
\pgfmathsetmacro\r{0.13}
\pgfmathsetmacro\scaling{1.1/(1-\v*\v)}

\tikzstyle{T0}=[cm={1,0,-\v,1,(-8,0)}]
\tikzstyle{T5}=[cm={1,0,-\v,1,(2,0)}]
\tikzstyle{T1}=[cm={1,0,0,1,(9,0)}]
\tikzstyle{T2}=[cm={1,0,0,1,(9,-7)}]
\tikzstyle{T3}=[cm={0.8,0,0,0.8,(0,-7)}]
\tikzstyle{T4}=[cm={1,0,0,1,(-8,-7)}]

\newcommand{\cone}[3]{\draw[thick, red,shift={(#2,#3)}] (0,#1) ellipse (#1 and 0.1*#1) (-#1,#1)--(0,0)--(#1,#1);}

\newcommand{\conemirror}[3]{\draw[thick, red,shift={(#2,#3)}] (3,#1) ellipse (#1 and 0.1*#1) (-#1+3,#1)--(0,0)--(#1+3,#1);} %quick  hack - does not really mirror, will not work with other values

\pgfmathsetmacro\Xx{0.5*\size}
\pgfmathsetmacro\Xy{-.3*\size}
% \coordinate (Y) at (-.7*\size,-.3*\size);
\pgfmathsetmacro\Yx{-.7*\size}
\pgfmathsetmacro\Yy{-.3*\size}

\begin{scope}[T0]
\draw[ultra thick, blue] (0,0) to (\v*\size,\size) node[right]{$k$};
\draw (4,\size) node[right] {$e$} -- (0,0)  ;
\conemirror{0.75*\size}{0}{0}
\draw[ultra thick, blue] (0.5*\v*\size,0.5*\size) to (\v*\size,\size);
\end{scope}

\draw[shift={(-7,0)}] (\Xx-1,\Xy) -- (-1,0) -- (\Yx-1,\Yy )node[below]{$x$};

\draw[->,ultra thick,>=latex] (-4,1)  to [out=30,in=150] node[above]{$R_{\bar{v}}$} (-2,1);

\begin{scope}[T5]
\draw[ultra thick, blue] (0,0) to (\v*\size,\size)  node[right]{$k$};
\draw (0,\size) node[right] {$e$} -- (0,0) -- (\size,0)node[below]{$x$} (0,0) -- (-.5*\size,-.3*\size);
\cone{0.75*\size}{0}{0} 
%\draw[red,thick,rotate=45] (0,0) rectangle (1.06*\size,\v*\size);
%\draw[red,thick] (-2.5,2.9) to (0,0) to (2.5,2.9);
%draw[red,thick] (-2.2,2) to (0,4) to (2.2,2);
%\draw[red,thick,shift={(0,2.9)}] (0,0) ellipse (2.5 and .2);
\draw[ultra thick, blue] (0.5*\v*\size,0.5*\size) to (\v*\size,\size);

\end{scope}

\draw[->,ultra thick,>=latex] (4,1)  to [out=30,in=150] node[above]{$G_v$} (6,1);

\begin{scope}[T1]
\draw[ultra thick, blue] (0,0) to (\v*\size,\size)  node[left]{$k$};
\draw (0,\size) node[right] {$e$} -- (0,0) -- (\size,0)node[below]{$x$} (0,0) -- (-.5*\size,-.3*\size);
\draw[thick, red] (\size*0.75,\size*0.75) to (0.5*\size,\size) ;
%\draw[red,thick,rotate=45] (0,0) rectangle (0.7*\size,0.7*\size);
%\draw[red,thick,shift={(1,2)}] (0,0) ellipse (2.2 and .1);
%\draw[red,thick,shift={(1,2)},rotate=27,] (0,0) ellipse (2.2 and .1);
\cone{0.75*\size}{0}{0} 
%\cone{-0.75*\size}{2}{4}
\draw[ultra thick, blue] (0.5*\v*\size,0.5*\size) to (\v*\size,\size);

\end{scope}

\draw[->,ultra thick,>=latex] (10,-1)  to  node[right]{$E_v$} (10,-3);

\begin{scope}[T2]
\draw[ultra thick, blue] (0,0) to (0,\size) node[left]{$k$};
\cone{0.5*\size}{0}{0} 
%\cone{-0.5*\size}{0}{4}
\draw (-\v*\size,\size) node[right] {$e$} -- (0,0) -- (\size,0)node[below]{$x$} (0,0) -- (-.5*\size,-.3*\size);
\draw[thick, red] (\size/2,\size/2) to (0,\size) ;
\draw[ultra thick, blue] (0,0.5*\size) to (0,\size);
\end{scope}

\draw[<-,ultra thick,>=latex] (4,-5)  to [out=30,in=150] node[above]{$S_v$} (6,-5);

\begin{scope}[T3]
\draw[ultra thick, blue] (0,0) to (0,\size*1.25) node[right]{$k$};
\cone{0.5*\size}{0}{0} 
%\cone{-0.5*\size}{0}{4}
\draw (-\v*\size*1.25,\size*1.25) node[right] {$e$} -- (0,0) -- (\size,0)node[below]{$x$} (0,0) -- (-.5*\size,-.3*\size);
\draw[ultra thick, blue] (0,0.5*\size) to (0,\size);
\end{scope}

\draw[->,ultra thick,>=latex] (-3,-5)  to [out=120,in=30] node[above]{$R^{-1}_{\bar{v}}$} (-5,-5);

\begin{scope}[T4]
\draw[ultra thick, blue] (0,0) to (0,\size) node[right]{$k$};
\cone{0.4*\size}{0}{0} 
%\draw (0,0.4*\size) to (10,0.4*\size);
% \cone{-0.5*\size}{0}{4}
\draw (-\v*\size+4,\size) node[right] {$e$} -- (0,0) 
 (\Xx,\Xy) -- (0,0) -- (\Yx,\Yy)node[below]{$x$};
% -- (0.5*\size,-.3*\size) (0,0) -- (-.7*\size,-.3*\size);
\draw[ultra thick, blue] (0,0.4*\size) to (0,\size);
\end{scope}

\draw[<-,ultra thick,>=latex]  (-9,-3) -- node[right]  {$Rad_{\bar{v}}$} (-9,-1);

\draw[->,ultra thick,>=latex] (0,-0.8)  to  node[right]{$C_v$} (0,-2.8);
\draw[->,ultra thick,>=latex] (5,-2)  to  node[above]{$L_v$} (3,-3);

\end{tikzpicture}
    \caption{\label{fig-radar4D}The components of transformation $Rad_{\bar{v}}$: reading from the top left corner to the bottom left corner, first  we put ether observer $e$ in the $kx$-plane by rotation $R_{\bar{v}}$, we put $e$ to the time axis by Galilean transformation $G_{v}$, then by using Einstein-Poincar{\'e } transformation $E_v$ we put $k$ to the time axis, then by scaling $S_v$ we correct the asymmetry in the directions orthogonal to the movement, and finally we use the inverse rotation $R^{-1}_{\bar{v}}$ to put the direction of $e$ back  into place. The Lorentz transformation $L_v$ and the core map $C_v$ are  also being displayed. The triangles formed by $k$, the outgoing lightbeam on the right of the lightcone and the incoming light beam, on the right hand size of Figure ~\ref{fig-radar4D}, are the same triangles as in Figure ~\ref{fig-radar2D}.}
  \end{center}
\end{figure}

We say that cone $\Lambda$ is a \emph{light cone moving with velocity $\bar v=(v_x,v_y,v_z)$} if $\Lambda$ is the translation, by a $Q^4$ vector, of the following cone  \[\left\{ (t,x,y,z)  \in Q^4: (x-v_x t)^2+(y-v_y t)^2+(z-v_z t)^2=(\mathfrak{c_e}t)^2 \right\}.\] We call light cones moving with velocity $(0,0,0)$ \emph{right light cones}.

\begin{lem}\label{thm-lemma} 
Assume \sy{\mathsf{ClassicalKin}_{Full}}. Let $\bar {v}=(v_x, v_y, v_z) \in \Q^3$ such that $|\bar {v}|<\mathfrak{c_e}$. Then $Rad_{\bar{v}}$ is a linear bijection that has the following properties:

\begin{enumerate}
\item\label{rad-0} If $\bar{v}=(0,0,0)$, then $Rad_{\bar{v}}$ is the identity map.
\item\label{rad-time} \label{tx} $Rad_{\bar{v}}$ maps the time axis to the time axis, i.e., 
\begin{equation*}
\Bforall {\bar y,\bar x}{Q^4} \Big( Rad_{\bar v}(\bar x) = \bar y \\ \to \big[x_1=x_2=x_3=0  \leftrightarrow y_1=y_2=y_3=0 \big]  \Big). 
\end{equation*}
\item\label{rad-time-scale} $Rad_{\bar{v}}$ scales the time axis down by factor $\sqrt{1-|\bar{v}|^2}$.
\item\label{rad-cone} $Rad_{\bar{v}}$ transforms light cones moving with velocity $\bar {v}$ into right light cones.
\item\label{rad-perp} $Rad_{\bar {v}}$ is the identity on vectors orthogonal to the plane containing the time axis and the direction of motion of the ether frame $(0, v_x, v_y, v_z)$, i.e., 
\[\Bforall{\bar x}{Q^4}\big([ x_1 v_x + x_2 v_y + x_3 v_z = 0 \AND x_0 = 0 ] \to Rad_{\bar v}(\bar x)=\bar x\big).\]
\item\label{rad-vel} The line through the origin moving with velocity $\bar{v}$ is mapped to itself and lines parallel to this line are mapped to parallel ones by $Rad_{\bar {v}}$.
\end{enumerate}
\end{lem}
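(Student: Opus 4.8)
The plan is to exploit the factorization $Rad_{\bar v}=R^{-1}_{\bar v}\circ C_v\circ R_{\bar v}$, where $C_v=S_v\circ E_v\circ G_v$ is the core map, and to reduce each of the six items to tracing a single vector or a single cone through the three stages $R_{\bar v}$, $C_v$, $R^{-1}_{\bar v}$. First I would record two structural facts that do all the work. (i) $R_{\bar v}$ is a spatial rotation: it fixes the time axis, preserves the Euclidean inner product on the space coordinates, and by construction carries $(v_x,v_y,v_z)$ to $(-|\bar v|,0,0)$, equivalently $(1,v_x,v_y,v_z)\mapsto(1,-|\bar v|,0,0)$. (ii) Reading off the matrix of $C_v$, the map acts nontrivially only in the $tx$-plane and fixes every vector of the form $(0,0,y,z)$ pointwise; moreover $\det C_v=\sqrt{1-|\bar v|^2/\mathfrak{c_e}^2}\cdot\tfrac{1}{\sqrt{1-|\bar v|^2/\mathfrak{c_e}^2}}=1$, so $C_v$ is invertible. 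Since each of $R_{\bar v}$, $C_v$, $R^{-1}_{\bar v}$ is then a linear bijection (the square roots existing because $|\bar v|<\mathfrak{c_e}$ and \AxEField{} provides square roots of positive quantities), $Rad_{\bar v}$ is immediately a linear bijection.

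Items (\ref{rad-0}), (\ref{rad-time}), (\ref{rad-time-scale}), (\ref{rad-vel}) and (\ref{rad-perp}) then follow by evaluating on the relevant vectors. For (\ref{rad-0}), $\bar v=(0,0,0)$ forces $R_{\bar v}=\mathrm{id}$, $G_0=\mathrm{id}$, $E_0=\mathrm{id}$ and $S_0=1$, so the composition is the identity. For the time axis, $R_{\bar v}$ fixes $(1,0,0,0)$, then $C_v(1,0,0,0)=\big(\sqrt{1-|\bar v|^2/\mathfrak{c_e}^2},0,0,0\big)$ stays on the time axis, and $R^{-1}_{\bar v}$ fixes it again; this proves the forward direction of (\ref{rad-time}), and since $Rad_{\bar v}$ is a bijection of the one-dimensional time axis onto itself the reverse implication and hence the ``$\leftrightarrow$'' follow, while the $(1,1)$-entry appearing here is exactly the scaling factor of item (\ref{rad-time-scale}). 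For (\ref{rad-vel}), a short computation gives $C_v(1,-|\bar v|,0,0)=\tfrac{1}{\sqrt{1-|\bar v|^2/\mathfrak{c_e}^2}}(1,-|\bar v|,0,0)$, so $Rad_{\bar v}(1,v_x,v_y,v_z)$ is a positive multiple of $(1,v_x,v_y,v_z)$; the line through the origin with velocity $\bar v$ is therefore mapped onto itself, and linearity sends parallel lines to parallel lines. For (\ref{rad-perp}), if $\bar x=(0,x_1,x_2,x_3)$ is spatial and orthogonal to $(v_x,v_y,v_z)$, then $R_{\bar v}\bar x$ is spatial and orthogonal to $(-|\bar v|,0,0)$, hence of the form $(0,0,y,z)$; by (ii) $C_v$ fixes it, and $R^{-1}_{\bar v}$ returns it to $\bar x$, so $Rad_{\bar v}(\bar x)=\bar x$.

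The geometric heart is item (\ref{rad-cone}). By linearity it suffices to treat cones through the origin, translations being handled by affineness. I would trace the cone moving with velocity $\bar v$, given by $|(x,y,z)-t(v_x,v_y,v_z)|^2=(\mathfrak{c_e}t)^2$, through the three stages: (a) $R_{\bar v}$, a spatial rotation sending $(v_x,v_y,v_z)$ to $(-|\bar v|,0,0)$, turns it into the cone moving with velocity $(-|\bar v|,0,0)$; (b) the Galilean boost $G_v:(t,x,y,z)\mapsto(t,x+|\bar v|t,y,z)$ substitutes $X=x+|\bar v|t$ and collapses $(x+|\bar v|t)^2+y^2+z^2=(\mathfrak{c_e}t)^2$ into the right light cone $X^2+y^2+z^2=(\mathfrak{c_e}t)^2$; (c) $L_v=S_v\circ E_v$ is a Lorentz transformation corresponding to $\mathfrak{c_e}$ --- a direct check using $\gamma^2(1-|\bar v|^2/\mathfrak{c_e}^2)=1$ with $\gamma=1/\sqrt{1-|\bar v|^2/\mathfrak{c_e}^2}$ shows it preserves $\mathfrak{c_e}^2t^2-x^2-y^2-z^2$ --- so it fixes the right light cone setwise; and finally $R^{-1}_{\bar v}$, a spatial rotation, preserves the rotationally symmetric right cone. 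Composing, the cone moving with $\bar v$ is sent to a right light cone.

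I expect the main obstacle to be not the conceptual chain above but the verification of the asserted properties of $R_{\bar v}$ in the generic case $v_y\neq 0$ or $v_z\neq 0$: one must check from the explicit matrix that its spatial $3\times3$ block is orthogonal and that it indeed carries $(v_x,v_y,v_z)$ to $(-|\bar v|,0,0)$, which is a tedious algebraic computation requiring the separate treatment of the degenerate cases $v_y=v_z=0$. The remaining computations --- the Lorentz identity for $L_v$ and the scalar factors in items (\ref{rad-time-scale}) and (\ref{rad-vel}) --- are short and mechanical once the factorization and the two structural facts (i) and (ii) are in place.
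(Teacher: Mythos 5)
Your proposal is correct and follows essentially the same route as the paper's own proof: decompose $Rad_{\bar v}$ as $R^{-1}_{\bar v}\circ S_v\circ E_v\circ G_v\circ R_{\bar v}$ and trace each of the six properties through the stages, using that $R_{\bar v}$ is a spatial rotation sending $(1,\bar v)$ to $(1,-|\bar v|,0,0)$, that the core map acts only in the $tx$-plane, and that $G_v\circ R_{\bar v}$ rights the tilted cone while the remaining Lorentz and rotation factors preserve right cones. Your version is, if anything, slightly more explicit than the paper's (the determinant of $C_v$, the Minkowski-form check for $L_v$, and the argument for the reverse implication in the time-axis item), but there is no substantive difference in method.
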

\begin{proof}
To define $Rad_{\bar{v}}$, we need \AxEField to allow us to use subtractions, divisions and square roots; \AxLine because worldlines of observers must be straight lines, which also enables us to calculate the speed $v$; and \ax{AxEther} because we need the ether frame of reference.
$Rad_{\bar{v}}$ is well-defined because it is composed of well-defined components. Speed $v$ is well-defined because of \AxLine and \AxEField.

$Rad_{\bar{v}}$ is a bijection since it is composed of bijections: Galilean transformation $G$ is a bijection, rotations $R$ and $R^{-1}$ are bijections, matrix $E$ defines a bijection, multiplication by $S$ is a bijection (because we only use the positive square root). $Rad_{\bar{v}}$ maps lines to lines since all components of $Rad_{\bar{v}}$ are linear. By definition, $Rad_{\bar 0}$ is the identity map.

The time axis is mapped to the time axis by $Rad_{\bar{v}}$:  $R_{\bar {v}}$ leave the time axis in place. Galilean transformation $G_{v}$ maps the time axis to the line defined by $x=vt$. Matrix $E_{v}$ maps this line back to the time axis. $S_{v}$ and $R^{-1}_{\bar{v}}$ leaves the time axis in place.
The rotations around the time axis do not change the time axis. The core map $C_v$ scales the time axis by factor $\sqrt{1-|\bar{v}|^2}$.

$G_{v}\circ R_{\bar{v}}$ transforms light cones moving with velocity $\bar {v}$ into right light cones. The rest of the transformations giving $Rad_{\bar v}$ map right light cones to right ones.

If $\bar x$ is orthogonal to $(1,0,0,0)$ and $(0,\bar v)$, then $R_{\bar v}$ rotates it to become orthogonal to the $tx$-plane. Therefore, the Galilean boost $G_{v}$ and the Lorentz boost $S_v\circ E_v$ do not change the $R_{\bar v}$ image of $\bar x$. Finally $R^{-1}_{\bar v}$ rotates this image back to $\bar x$. Hence $Rad_{\bar v}(\bar x)=\bar x$.

The line through the origin with velocity $\bar{v} = (v_x,v_y,v_z)$, which is line $e$ in Figure \ref{fig-radar4D}, is defined by equation system $x=v_xt$, $y=v_yt$ and $z=v_zt$. After rotation $R_{\bar v}$, this line has speed $-v$ in the $tx$ plane, which core map $C_{v}$ maps to itself:
\begin{equation*}
\begin{bmatrix}
    \sqrt{1-v^2/\mathfrak{c}_\mathfrak{e}^2} & \frac{-v/\mathfrak{c}_\mathfrak{e}^2}{\sqrt{1-v^2/\mathfrak{c}_\mathfrak{e}^2}} & 0 & 0\\
    0 & \frac{1}{\sqrt{1-v^2/\mathfrak{c}_\mathfrak{e}^2}} & 0 & 0\\
    0 & 0 & 1 & 0\\
    0 & 0 & 0 & 1
\end{bmatrix} \cdot
\begin{bmatrix}
    1\\
    -v\\
    0\\
    0
\end{bmatrix} =
\begin{bmatrix}
    \frac{1}{\sqrt{1-v^2/\mathfrak{c}_\mathfrak{e}^2}}\\
    \frac{-v}{\sqrt{1-v^2/\mathfrak{c}_\mathfrak{e}^2}}\\
    0\\
    0
\end{bmatrix} = \frac{1}{\sqrt{1-v^2/\mathfrak{c}_\mathfrak{e}^2}}
\begin{bmatrix}
    1\\
    -v\\
    0\\
    0
\end{bmatrix}.
\end{equation*}
Rotation $R^{-1}_{\bar v}$ puts the line with speed $-v$ back on the line with velocity $\bar{v}$.
Since $Rad_{\bar{v}}$ is a linear bijection, lines parallel to this line are mapped to parallel lines.
\end{proof}

\section{A formal translation of SpecRel into ClassicalKin}

In this section, using the radarization transformations of section~\ref{sec:rad}, we give a formal translation from the language of \sy{SpecRel} to that of \sy{ClassicalKin} such that all the translated axioms of \ax{SpecRel_{Full}} become theorems of \sy{ClassicalKin_{Full}}. To do so, we will have to translate the basic concepts of \sy{SpecRel} to formulas of the language of \sy{ClassicalKin}.

Since the basic concepts of the two languages use the same symbols, we indicate in a superscript whether we are speaking about the classical or the relativistic version when they are not translated identically. So we use $IOb^{SR}$ and $W^{SR}$ for relativistic inertial observers and worldview relations, and  $IOb^{CK}$ and $W^{CK}$ for classical inertial observers and worldview relations. Even though from the context it is always clear which language we use because formulas before the translation are in the language of \sy{SpecRel} and formulas after the translation are in the language of \sy{ClassicalKin}, sometimes we use this notation even in defined concepts (such as events, worldlines, and worldview transformations) to help the readers.

\noindent
Let us define $Rad_{\bar v_k(e)}(\bar x)$ and its inverse $Rad^{-1}_{\bar v_k(e)}(\bar y)$ as
\begin{equation*}
Rad_{\bar v_k(e)}(\bar x) = \bar y 
\defiff \Bexists{\bar v}{Q^3}[\bar v = \bar v_k(e) \AND Rad_{\bar v}(\bar x) = \bar y ]
\end{equation*}
\begin{equation*}
Rad^{-1}_{\bar v_k(e)}(\bar y) = \bar x 
\defiff \Bexists{\bar v}{Q^3}[\bar v = \bar v_k(e) \AND Rad_{\bar v}(\bar x) = \bar y].
\end{equation*}

Let us now give the translation of all the basic concepts  of \sy{SpecRel} in the language of \sy{ClassicalKin}. Mathematical expressions are translated into themselves:
\[Tr ( a+b=c ) \defeq (a+b=c),\ 
Tr ( a \cdot b=c ) \defeq (a \cdot b=c),\ 
Tr ( a < b) \defeq a < b.\]

\noindent
Light signals are translated to light signals:
$Tr\big(Ph(p)\big)\defeq Ph(p).$

\noindent
The translation of relativistic inertial observers are classical inertial observers which are slower-than-light with respect to the ether frame:
\begin{equation*}
Tr \big( IOb^{SR}(k) \big) \defeq
IOb^{CK}(k)\AND \bforall {e} {\Ether} \big[ \speed_{e}(k) < \mathfrak{c_e}\big].
\end{equation*}

\noindent
Relativistic coordinates are translated into classical coordinates by radarization\footnote{By Convention~\ref{partial-conv} on page \pageref{partial-conv}, a relation defined by formula $Tr \big( W^{SR}(k,b,\bar{x}) \big)$ is empty if $wl_k(e)$ is not a subset of a straight line for every ether observer $e$ (because in this case the partial function $\bar{v}_k(e)$ is undefined). The same applies to the translations of the defined concepts \textit{event}, \textit{worldline} and \textit{worldview transformation}.}:
\begin{equation*}
Tr \big( W^{SR}(k,b,\bar{x}) \big)
\defeq \bforall{e}{\Ether} \big[W^{CK}\big(k,b,Rad^{-1}_{\bar{v}_k(e)}(\bar{x})\big)\big].
\end{equation*}

\noindent
Complex formulas are translated by preserving the logical connectives: 
%\begin{align*}
\[
Tr(\neg \varphi) \defeq \neg  Tr(\varphi),\  %\\
Tr(\psi \AND \varphi) \defeq Tr (\psi) \land Tr(\varphi),\  %\\
Tr(\exists x [\varphi]) \defeq \exists x [Tr(\varphi)],\   %\\
\text{ etc.}
\]
%\end{align*}
This defines translation $Tr$ on all formulas in the language of \sy{SpecRel_{Full}}.

Let us now see into what $Tr$ translates the important defined concepts, such as events, worldlines and worldview transformations. 

\noindent
Worldlines are translated as:
\begin{equation*}
Tr \big(\vx \in wl_k(b) \big) \equiv
\bforall {e} {\Ether}\left[Rad^{-1}_{\bar{v}_k(e)}(\vx)\in \wl_k(b)\right]
\end{equation*}
and events as:
\begin{equation*}
Tr \big(b \in \ev_k(\vx) \big) \equiv
\bforall {e} {\Ether}\left[b \in \ev_k(Rad^{-1}_{\bar{v}_k(e)}(\vx))\right].
\end{equation*}
Since these translations often lead to very complicated formulas, we provide some techniques to simplify translated formulas in the Appendix on p.\pageref{appendix}. In the proofs below, we will always use the simplified formulas. The simplified translation of the worldview transformation is the following:
\[Tr \big( w^{SR}_{hk}(\vx, \vy) \big) \equiv
\bforall {e} {\Ether} \left[ w^{CK}_{hk}\big( Rad^{-1}_{\bar{v}_h(e)}(\bar x), Rad^{-1}_{\bar{v}_k(e)}(\bar y) \big) \right].
\]
Since \sy{ClassicalKin_{Full}} implies that $w^{CK}_{hk}$ is a transformation (and not just a relation) if $k$ and $h$ are inertial observers, 
\[Tr\big(w^{SR}_{hk} (\vx, \vy)\big)\equiv \bforall {e} {\Ether}\left[\left(Rad_{\bar{v}_k(e)}\circ w^{CK}_{hk}\circ Rad^{-1}_{\bar{v}_h(e)}\right)(\bar x)=\bar y\right]\]
in this case.

\begin{lem}\label{lemma-trc}
Assume \sy{ClassicalKin_{Full}}. Then $Tr(\mathfrak{c}) \equiv \mathfrak{c_e}$.\footnote{That is, the translation of the defining formula of constant $\mathfrak{c}$ is equivalent to the defining formula of constant $\mathfrak{c_e}$ in \sy{ClassicalKin_{Full}}. The same remark, with $\mathfrak{c_e}$ and $\mathfrak{c}$ switched and on \sy{SpecRel_{Full}}, can be made for Corollary \ref{cor-Tr_+'(c_e)} below.}
\end{lem}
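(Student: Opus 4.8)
The plan is to unfold the translation of the formula that defines the relativistic light speed $\mathfrak{c}$ and to show that, over \sy{ClassicalKin_{Full}}, it is satisfied by exactly one value, namely $\mathfrak{c_e}$. Recall from \ax{AxPh_c} that $\mathfrak{c}$ is the unique $c$ for which $c>0$ and $\bforall{k}{IOb^{SR}}\Bforall{\vx,\vy}{\Q^4}\big(\bexists{p}{\Ph}[\vx,\vy\in\wl_k(p)]\leftrightarrow space(\vx,\vy)=c\cdot time(\vx,\vy)\big)$. Since $Tr$ fixes all quantity terms (so $space$, $time$, the ordering and the clause $c>0$ are translated to themselves) and fixes $\Ph$, the only parts that change are the relativistic observer quantifier and the photon-worldline membership $\vx\in\wl_k(p)$, i.e.\ $W^{SR}(k,p,\vx)$. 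Hence the whole task reduces to computing $Tr\big(\bexists{p}{\Ph}[\vx,\vy\in\wl_k(p)]\big)$ for a translated (classical, slower-than-light) observer $k$.

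First I would simplify this translated photon formula. Unfolding the definition of $Tr$ on $W^{SR}$ gives $\bexists{p}{\Ph}\bforall{e}{\Ether}\big[Rad^{-1}_{\bar v_k(e)}(\vx)\in\wl^{CK}_k(p)\AND Rad^{-1}_{\bar v_k(e)}(\vy)\in\wl^{CK}_k(p)\big]$. By Corollary~\ref{cor-v} all ether observers assign $k$ the same velocity, so $\bar v_k(e)$, and therefore $Rad_{\bar v_k(e)}$, does not depend on the choice of $e$; together with \ax{AxEther} (the ether is nonempty) the bounded quantifier $\bforall{e}{\Ether}$ collapses to one fixed ether observer $e$. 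Thus the formula becomes: there is a classical light signal $p$ whose $k$-worldline passes through both $Rad^{-1}_{\bar v_k(e)}(\vx)$ and $Rad^{-1}_{\bar v_k(e)}(\vy)$.

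The heart of the argument is to identify this with a right light cone condition. In \sy{ClassicalKin_{Full}} two events are connectible by a light signal according to $k$ exactly when the difference of their classical $k$-coordinates lies on the light cone moving with velocity $\bar v_k(e)$: this follows from the defining formula of \Ether (light is isotropic at speed $\mathfrak{c_e}$ in the ether frame), together with \ax{AxEv} and Theorem~\ref{thm-gal} (photon worldlines, hence photon-connectibility, are carried between $e$ and $k$ by the Galilean transformation $\w_{ek}$, which turns the isotropic cone of $e$ into the cone moving with the ether velocity $\bar v_k(e)$ as seen by $k$). Now I apply Lemma~\ref{thm-lemma}, item~\ref{rad-cone}: $Rad_{\bar v_k(e)}$ maps light cones moving with velocity $\bar v_k(e)$ to right light cones, and since $Rad_{\bar v_k(e)}$ is a linear bijection it carries the pair $\big(Rad^{-1}_{\bar v_k(e)}(\vx),Rad^{-1}_{\bar v_k(e)}(\vy)\big)$ back to $(\vx,\vy)$. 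Consequently $Tr\big(\bexists{p}{\Ph}[\vx,\vy\in\wl_k(p)]\big)$ is equivalent, over \sy{ClassicalKin_{Full}}, to $space(\vx,\vy)=\mathfrak{c_e}\cdot time(\vx,\vy)$.

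Finally I would assemble the pieces. Substituting the last equivalence into the translated defining formula, $Tr$ of the definition of $\mathfrak{c}$ becomes (modulo \sy{ClassicalKin_{Full}}) the statement $c>0\AND\bforall{k}{\dots}\Bforall{\vx,\vy}{\Q^4}\big(space(\vx,\vy)=\mathfrak{c_e}\cdot time(\vx,\vy)\leftrightarrow space(\vx,\vy)=c\cdot time(\vx,\vy)\big)$. The hypothesis on $k$ is nonvacuous because the ether observer itself is a classical slower-than-light observer (its speed relative to the ether is $0<\mathfrak{c_e}$), so it satisfies the translated observer predicate; and choosing any $\vx,\vy$ with $time(\vx,\vy)\neq0$ and $space(\vx,\vy)=\mathfrak{c_e}\cdot time(\vx,\vy)$ forces $c=\mathfrak{c_e}$. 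Since $c=\mathfrak{c_e}$ (with $c>0$) is exactly the defining property of $\mathfrak{c_e}$, we obtain $Tr(\mathfrak{c})\equiv\mathfrak{c_e}$. I expect the main obstacle to be the third step --- rigorously showing that the classical photon-connectibility relation, once radarized, is precisely the right light cone relation --- since this is where the geometry of the ether cone, the Galilean change of observer, and the cone-straightening property of $Rad$ (item~\ref{rad-cone}) must be combined correctly.
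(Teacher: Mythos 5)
Your proof is correct, but it takes a more self-contained route than the paper's. The paper's own proof is essentially a two-step specialization: it invokes $Tr(\ax{AxPh_c})$ (established separately, as part of Theorem~\ref{thm-tr}) to conclude that all translated observers agree on a single light speed, and then evaluates that speed at the most convenient witness --- an ether observer, which satisfies the translated $IOb^{SR}$ predicate and for which $Rad_{\bar 0}$ is the identity, so its translated worldview is literally its classical worldview and the speed read off is $\mathfrak{c_e}$ by the definition of \Ether. You instead inline the substance of the proof of $Tr(\ax{AxPh_c})$: you unfold the translated defining formula for an \emph{arbitrary} slower-than-light observer $k$, collapse the ether quantifier via Corollary~\ref{cor-v}, and run the cone-straightening argument (ether isotropy plus Theorem~\ref{thm-gal} to get the tilted cone in $k$'s worldview, then Item~\ref{rad-cone} of Lemma~\ref{thm-lemma} to straighten it), before pinning down $c=\mathfrak{c_e}$ by non-vacuity. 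What your version buys is independence from the forward reference to Theorem~\ref{thm-tr} (the paper's proof quietly relies on a fact proved only in the next section, though without circularity since $Tr(\ax{AxPh_c})$ is proved there without using this lemma); what the paper's version buys is brevity and no duplication of the light-cone geometry, which it defers entirely to the proof of $Tr(\ax{AxPh_c})$. The key geometric content --- that radarization turns the classically tilted ether cone into the right cone of slope $\mathfrak{c_e}$ --- is the same in both.
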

\begin{proof} 
By $Tr(\ax{AxPh_c^{SR}})$ we know there is an observer-independent speed of light for the translated inertial observers. So we can chose any translated inertial observer to establish the speed of light. Ether observers are also translations of some inertial observers because they are inertial observers moving slower than $\mathfrak{c_e}$ with respect to ether observers. Let us take an ether observer, which in the translation has $Rad_{\bar 0}$ being the identity. Hence, $Tr(\mathfrak{c})$ is the speed of light according to our fixed ether observer, which is $\mathfrak{c_e}$ in \sy{ClassicalKin_{Full}} by definition.
\end{proof}

Lemma~\ref{lemma-cannon} is helpful for proving properties of translations involving more than one observer:

\begin{lem}\label{lemma-cannon}  
Assuming \sy{{ClassicalKin}_{Full}}, if $e$ and $e'$ are ether observers and $k$ and $h$ are slower-than-light inertial observers, then
\begin{equation*}
Rad_{\bar{v}_k(e)}\circ w^{CK}_{hk}\circ Rad^{-1}_{\bar{v}_h(e)}= Rad_{\bar{v}_k(e')}\circ w^{CK}_{hk}\circ Rad^{-1}_{\bar{v}_h(e')}
\end{equation*}
and it is a Poincar\' e transformation.
\end{lem}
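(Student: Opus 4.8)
The plan is to prove the two assertions of Lemma~\ref{lemma-cannon} separately: first that the displayed map is independent of the chosen ether observer, and then that it is a Poincar\'e transformation corresponding to $\mathfrak{c_e}$.

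For the independence I would simply invoke Corollary~\ref{cor-v}: all ether observers have the same velocity according to any inertial observer, so $\bar v_k(e)=\bar v_k(e')$ and $\bar v_h(e)=\bar v_h(e')$. Since $Rad_{\bar v}$ depends only on the velocity $\bar v$, the factors $Rad_{\bar v_k(e)}$ and $Rad^{-1}_{\bar v_h(e)}$ coincide with $Rad_{\bar v_k(e')}$ and $Rad^{-1}_{\bar v_h(e')}$, and the two composites agree factor by factor. This reduces everything to a single fixed but arbitrary ether observer $e$.

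The main work is to show that $\Phi := Rad_{\bar v_k(e)}\circ w^{CK}_{hk}\circ Rad^{-1}_{\bar v_h(e)}$ is a Poincar\'e transformation. It is clearly an affine bijection, being the composite of the Galilean transformation $w^{CK}_{hk}$ (Theorem~\ref{thm-gal}) with the linear bijections of Lemma~\ref{thm-lemma}. I would factor it through the ether observer: using \ax{AxEv} to compose worldview transformations, $w^{CK}_{hk}=w^{CK}_{ek}\circ w^{CK}_{he}$, and using $Rad_{\bar v_e(e)}=Rad_{\bar 0}=\mathrm{id}$ (Lemma~\ref{thm-lemma}, item~\ref{rad-0}), one obtains $\Phi=P_k\circ P_h^{-1}$, where $P_m:=Rad_{\bar v_m(e)}\circ w^{CK}_{em}$ sends the classical coordinates of $e$ to the radarized coordinates of $m$. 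It then suffices to show each $P_m$ is a Poincar\'e transformation, since Poincar\'e transformations form a group. Because $e$ is an ether observer, in its classical coordinates every light cone is a right light cone; the Galilean transformation $w^{CK}_{em}$ carries a right light cone into a light cone moving with velocity $\bar v_m(e)$ (Galilean velocity addition, using that $e$ moves with velocity $\bar v_m(e)$ relative to $m$), and $Rad_{\bar v_m(e)}$ maps such cones back to right light cones by Lemma~\ref{thm-lemma}, item~\ref{rad-cone}. Hence $P_m$ is an affine bijection carrying right light cones to right light cones, so it preserves the Minkowski form $\mathfrak{c_e}^2 time(\cdot,\cdot)^2-space(\cdot,\cdot)^2$ up to a positive scalar $\lambda_m$ (two nondegenerate quadratic forms with the same null cone are proportional in dimension $\ge 3$). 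Finally $\lambda_m=1$: choosing a spatial vector whose $w^{CK}_{em}$-image is orthogonal to the direction of motion $\bar v_m(e)$, that image is fixed by $Rad_{\bar v_m(e)}$ (Lemma~\ref{thm-lemma}, item~\ref{rad-perp}) and keeps its spatial length (spatial isometry of Galilean transformations, \ax{AxSymD}), so $P_m$ preserves the form on it and $\lambda_m=1$.

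The hard part will be precisely pinning the dilation factor $\lambda_m$ to $1$, i.e.\ verifying that the asymmetry-correcting scaling $S_v$ built into the radarization compensates exactly and leaves no residual dilation. The orthogonal-spatial-vector argument above settles this, and it can be cross-checked by the explicit computation that the core map composed with the relevant Galilean boost equals the Lorentz matrix $L_v$; everything else (affineness, bijectivity, the light-cone bookkeeping, and closure under composition and inverse) is routine.
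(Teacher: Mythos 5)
Your proof is correct, but it takes a genuinely different route from the paper's for the Poincar\'e part (the first half --- invoking Corollary~\ref{cor-v} to get $\bar v_k(e)=\bar v_k(e')$ and $\bar v_h(e)=\bar v_h(e')$ and noting that $Rad_{\bar v}$ depends only on $\bar v$ --- is exactly the paper's argument). The paper never factors through the ether observer: it expands both radarizations into their components $R$, $G$, $E$, $S$ and conjugates $w^{CK}_{hk}$ stage by stage, arguing that $T_1=R_{\bar v}\circ w^{CK}_{hk}\circ R^{-1}_{\bar u}$ is a (trivial) Galilean transformation, that $T_2=G_v\circ T_1\circ G_u^{-1}$ is still Galilean, and that sandwiching $T_2$ between the Lorentz boosts $S_v\circ E_v$ and $E_u^{-1}\circ S_u^{-1}$ and the outer rotations yields a Poincar\'e transformation; the normalization of scale comes for free because $S_v\circ E_v$ has already been computed to be the explicit Lorentz matrix $L_v$. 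You instead write the map as $P_k\circ P_h^{-1}$ with $P_m=Rad_{\bar v_m(e)}\circ w^{CK}_{em}$ and show each $P_m$ is Poincar\'e by a light-cone argument: $P_m$ is an affine bijection carrying right light cones to right light cones (item~\ref{rad-cone} of Lemma~\ref{thm-lemma} together with the fact that $w^{CK}_{em}$ tilts right cones into cones moving with velocity $\bar v_m(e)$ --- the same fact the paper uses in proving $Tr(\ax{AxPh_c})$), hence preserves the Minkowski form up to a positive factor, which you then pin to $1$ via a horizontal vector orthogonal to $\bar v_m(e)$ using item~\ref{rad-perp} and the spatial-isometry property of Galilean maps. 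What your route buys is freedom from the matrix bookkeeping and from the paper's somewhat delicate intermediate claims that $T_1$ and $T_2$ are trivial; what it costs is the imported linear-algebra fact that two nondegenerate quadratic forms with the same null cone are proportional (true over Euclidean fields, and the kernel of the Alexandrov--Zeeman-type arguments the authors relegate to a footnote, but not proved anywhere in the paper), so a fully formal write-up would need to supply that auxiliary lemma.
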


\begin{figure}
  \begin{center}
    \scalebox{.8}{\begin{tikzpicture}[scale=0.3, every node/.style={scale=0.8}]
\pgfmathsetmacro\size{4}
\pgfmathsetmacro\v{0.5}
\pgfmathsetmacro\r{0.13}
\pgfmathsetmacro\scaling{1.1/(1-\v*\v)}

\tikzstyle{T0}=[cm={1,0,-\v,1,(-9,0)}]
\tikzstyle{T5}=[cm={1,0,-\v,1,(2,0)}]
\tikzstyle{T1}=[cm={1,0,0,1,(9,0)}]
\tikzstyle{T2}=[cm={1,0,0,1,(16,0)}]
\tikzstyle{T3}=[cm={0.8,0,0,0.8,(23,0)}]
\tikzstyle{T4}=[cm={1,0,0,1,(30,0)}]
\tikzstyle{T10}=[cm={1,0,-\v,1,(-9,-8)}]
\tikzstyle{T15}=[cm={1,0,-\v,1,(2,-8)}]
\tikzstyle{T11}=[cm={1,0,0,1,(9,-8)}]
\tikzstyle{T12}=[cm={1,0,0,1,(16,-8)}]
\tikzstyle{T13}=[cm={0.8,0,0,0.8,(23,-8)}]
\tikzstyle{T14}=[cm={1,0,0,1,(30,-8)}]

\newcommand{\cone}[3]{\draw[thick, red,shift={(#2,#3)}] (0,#1) ellipse (#1 and 0.1*#1) (-#1,#1)--(0,0)--(#1,#1);}

\newcommand{\conemirror}[3]{\draw[thick, red,shift={(#2,#3)}] (3,#1) ellipse (#1 and 0.1*#1) (-#1+3,#1)--(0,0)--(#1+3,#1);} %quick  hack - does not really mirror, will not work with other values

\pgfmathsetmacro\Xx{0.5*\size}
\pgfmathsetmacro\Xy{-.3*\size}
% \coordinate (Y) at (-.7*\size,-.3*\size);
\pgfmathsetmacro\Yx{-.7*\size}
\pgfmathsetmacro\Yy{-.3*\size}

\begin{scope}[T0]
\draw[ultra thick, blue] (0,0) to (\v*\size,\size) ;
\draw (4,\size)  -- (0,0)  ;
\conemirror{0.75*\size}{0}{0}
\draw[ultra thick, blue] (0.5*\v*\size,0.5*\size) to (\v*\size,\size);
\end{scope}

\draw[shift={(-8,-8)}] (\Xx-1,\Xy) -- (-1,0) -- (\Yx-1,\Yy );

%\draw[->,ultra thick,>=latex] (-4,1)  to [out=30,in=150] node[above]{$R_{\bar{v}}$} (-2,1);
\draw[->,ultra thick,>=latex] (-2.7,1)  to [out=120,in=30] node[above]{$R^{-1}_{\bar{u}}$} (-4.7,1);

\begin{scope}[T5]
\draw[ultra thick, blue] (0,0) to (\v*\size,\size)  ;
\draw (0,\size)  -- (0,0) -- (\size,0) (0,0) -- (-.5*\size,-.3*\size);
\cone{0.75*\size}{0}{0} 
%\draw[red,thick,rotate=45] (0,0) rectangle (1.06*\size,\v*\size);
%\draw[red,thick] (-2.5,2.9) to (0,0) to (2.5,2.9);
%draw[red,thick] (-2.2,2) to (0,4) to (2.2,2);
%\draw[red,thick,shift={(0,2.9)}] (0,0) ellipse (2.5 and .2);
\draw[ultra thick, blue] (0.5*\v*\size,0.5*\size) to (\v*\size,\size);

\end{scope}

%\draw[->,ultra thick,>=latex] (4,1)  to [out=30,in=150] node[above]{$G_{\bar{v}}$} (6,1);
\draw[->,ultra thick,>=latex] (6,1)  to [out=120,in=30] node[above]{$G^{-1}_u$} (4,1);

\begin{scope}[T1]
\draw[ultra thick, blue] (0,0) to (\v*\size,\size)  ;
\draw (0,\size)  -- (0,0) -- (\size,0) (0,0) -- (-.5*\size,-.3*\size);
\draw[thick, red] (\size*0.75,\size*0.75) to (0.5*\size,\size) ;
%\draw[red,thick,rotate=45] (0,0) rectangle (0.7*\size,0.7*\size);
%\draw[red,thick,shift={(1,2)}] (0,0) ellipse (2.2 and .1);
%\draw[red,thick,shift={(1,2)},rotate=27,] (0,0) ellipse (2.2 and .1);
\cone{0.75*\size}{0}{0} 
%\cone{-0.75*\size}{2}{4}
\draw[ultra thick, blue] (0.5*\v*\size,0.5*\size) to (\v*\size,\size);

\end{scope}

%\draw[->,ultra thick,>=latex] (12,1)  to   [out=30,in=150] node[above]{$E_v$} (14,1);
\draw[->,ultra thick,>=latex] (14,1)  to [out=120,in=30] node[above]{$E^{-1}_{u}$} (12,1);

\begin{scope}[T2]
\draw[ultra thick, blue] (0,0) to (0,\size) ;
\cone{0.5*\size}{0}{0} 
%\cone{-0.5*\size}{0}{4}
\draw (-\v*\size,\size)  -- (0,0) -- (\size,0) (0,0) -- (-.5*\size,-.3*\size);
\draw[thick, red] (\size/2,\size/2) to (0,\size) ;
\draw[ultra thick, blue] (0,0.5*\size) to (0,\size);
\end{scope}

%\draw[->,ultra thick,>=latex] (18,1)  to [out=30,in=150] node[above]{$S_{v}$} (20,1);
\draw[->,ultra thick,>=latex] (20,1)  to [out=120,in=30] node[above]{$S^{-1}_{u}$} (18,1);

\begin{scope}[T3]
\draw[ultra thick, blue] (0,0) to (0,\size*1.25) ;
\cone{0.5*\size}{0}{0} 
%\cone{-0.5*\size}{0}{4}
\draw (-\v*\size*1.25,\size*1.25)  -- (0,0) -- (\size,0) (0,0) -- (-.5*\size,-.3*\size);
\draw[ultra thick, blue] (0,0.5*\size) to (0,\size);
\end{scope}

%\draw[->,ultra thick,>=latex] (26,1)  to [out=30,in=150] node[above]{$R^{-1}_{\bar{v}}$} (28,1);
\draw[->,ultra thick,>=latex] (28,1)  to [out=120,in=30] node[above]{$R_{\bar u}$} (26,1);

\begin{scope}[T4]
\draw[ultra thick, blue] (0,0) to (0,\size) ;
\cone{0.4*\size}{0}{0} 
%\draw (0,0.4*\size) to (10,0.4*\size);
% \cone{-0.5*\size}{0}{4}
\draw (-\v*\size+4,\size)  -- (0,0) 
 (\Xx,\Xy) -- (0,0) -- (\Yx,\Yy);
% -- (0.5*\size,-.3*\size) (0,0) -- (-.7*\size,-.3*\size);
\draw[ultra thick, blue] (0,0.4*\size) to (0,\size);
\end{scope}

\draw[<-,ultra thick,>=latex]  (-9,-3) -- node[right]  {$w^{CK}_{hk}$} (-9,-1);
\draw[<-,ultra thick,>=latex]  (30,-3) -- node[left]  {$Tr \big(w^{SR}_{hk}\big)$} (30,-1);
\draw[<-,ultra thick,>=latex]  (2,-3) -- node[right]  {$T_1$} (2,-1);
\draw[<-,ultra thick,>=latex]  (9,-3) -- node[right]  {$T_2$} (9,-1);
\draw[<-,ultra thick,>=latex]  (23,-3) -- node[left]  {$T_3$} (23,-1);

\begin{scope}[T10]
\draw[ultra thick, blue] (0,0) to (\v*\size,\size) ;
\draw (4,\size)  -- (0,0)  ;
\conemirror{0.75*\size}{0}{0}
\draw[ultra thick, blue] (0.5*\v*\size,0.5*\size) to (\v*\size,\size);
\end{scope}

\draw[shift={(-8,0)}] (\Xx-1,\Xy) -- (-1,0) -- (\Yx-1,\Yy );

\draw[->,ultra thick,>=latex] (-4.5,-7)  to [out=30,in=150] node[above]{$R_{\bar{v}}$} (-2.5,-7);

\begin{scope}[T15]
\draw[ultra thick, blue] (0,0) to (\v*\size,\size)  ;
\draw (0,\size)  -- (0,0) -- (\size,0) (0,0) -- (-.5*\size,-.3*\size);
\cone{0.75*\size}{0}{0} 
%\draw[red,thick,rotate=45] (0,0) rectangle (1.06*\size,\v*\size);
%\draw[red,thick] (-2.5,2.9) to (0,0) to (2.5,2.9);
%draw[red,thick] (-2.2,2) to (0,4) to (2.2,2);
%\draw[red,thick,shift={(0,2.9)}] (0,0) ellipse (2.5 and .2);
\draw[ultra thick, blue] (0.5*\v*\size,0.5*\size) to (\v*\size,\size);

\end{scope}

\draw[->,ultra thick,>=latex] (4,-7)  to [out=30,in=150] node[above]{$G_v$} (6,-7);

\begin{scope}[T11]
\draw[ultra thick, blue] (0,0) to (\v*\size,\size)  ;
\draw (0,\size)  -- (0,0) -- (\size,0) (0,0) -- (-.5*\size,-.3*\size);
\draw[thick, red] (\size*0.75,\size*0.75) to (0.5*\size,\size) ;
%\draw[red,thick,rotate=45] (0,0) rectangle (0.7*\size,0.7*\size);
%\draw[red,thick,shift={(1,2)}] (0,0) ellipse (2.2 and .1);
%\draw[red,thick,shift={(1,2)},rotate=27,] (0,0) ellipse (2.2 and .1);
\cone{0.75*\size}{0}{0} 
%\cone{-0.75*\size}{2}{4}
\draw[ultra thick, blue] (0.5*\v*\size,0.5*\size) to (\v*\size,\size);

\end{scope}

\draw[->,ultra thick,>=latex] (12,-7)  to   [out=30,in=150] node[above]{$E_v$} (14,-7);

\begin{scope}[T12]
\draw[ultra thick, blue] (0,0) to (0,\size) ;
\cone{0.5*\size}{0}{0} 
%\cone{-0.5*\size}{0}{4}
\draw (-\v*\size,\size)  -- (0,0) -- (\size,0) (0,0) -- (-.5*\size,-.3*\size);
\draw[thick, red] (\size/2,\size/2) to (0,\size) ;
\draw[ultra thick, blue] (0,0.5*\size) to (0,\size);
\end{scope}

\draw[->,ultra thick,>=latex] (18,-7)  to [out=30,in=150] node[above]{$S_{v}$} (20,-7);

\begin{scope}[T13]
\draw[ultra thick, blue] (0,0) to (0,\size*1.25) ;
\cone{0.5*\size}{0}{0} 
%\cone{-0.5*\size}{0}{4}
\draw (-\v*\size*1.25,\size*1.25)  -- (0,0) -- (\size,0) (0,0) -- (-.5*\size,-.3*\size);
\draw[ultra thick, blue] (0,0.5*\size) to (0,\size);
\end{scope}

\draw[->,ultra thick,>=latex] (26,-7)  to [out=30,in=150] node[above]{$R^{-1}_{\bar{v}}$} (28,-7);

\begin{scope}[T14]
\draw[ultra thick, blue] (0,0) to (0,\size) ;
\cone{0.4*\size}{0}{0} 
%\draw (0,0.4*\size) to (10,0.4*\size);
% \cone{-0.5*\size}{0}{4}
\draw (-\v*\size+4,\size)  -- (0,0) 
 (\Xx,\Xy) -- (0,0) -- (\Yx,\Yy);
% -- (0.5*\size,-.3*\size) (0,0) -- (-.7*\size,-.3*\size);
\draw[ultra thick, blue] (0,0.4*\size) to (0,\size);
\end{scope}

%\draw[<-,ultra thick,>=latex]  (-9,-3) -- node[right]  {$Rad_{\bar{v}}$} (-9,-1);

%\draw[->,ultra thick,>=latex] (-3,-5)  to [out=120,in=30] node[above]{$R^{-1}_{\bar{v}}$} (-5,-5);
\end{tikzpicture}}
    \caption{\label{fig-cannon}Lemma \ref{lemma-cannon}: Read the figure starting in the top-right corner and follow the arrows to the left along the components of $Rad^{-1}_{\bar u}$, down along Galilean transformation $w^{CK}_{hk}$ and right along the components of $Rad_{\bar v}$, which results in Poincar\' e transformation $Tr\big(w^{SR}_{hk}\big)$.}
  \end{center}
\end{figure}

\begin{proof}
Let $e$ and $e'$ be ether observes and let  $k$ and $h$ be inertial observers with velocities $\bar v = \bar v_{k}(e)$ and $\bar u = \bar v_{h}(e')$. By Corollary~\ref{cor-v}, $\bar v=\bar v_{k}(e')$ and $\bar u= \bar v_{h}(e)$. Therefore,
\[Rad_{\bar{v}_k(e)}\circ w^{CK}_{hk}\circ Rad^{-1}_{\bar{v}_h(e)}= Rad_{\bar{v}_k(e')}\circ w^{CK}_{hk}\circ Rad^{-1}_{\bar{v}_h(e')}.\]
By Theorem~\ref{thm-gal},
$w^{CK}_{hk}$ is a Galilean transformation. Trivial Galilean transformations are also (trivial) Poincar\' e transformations. Therefore
\begin{equation*}
T_1 = R_{\bar v} \circ w^{CK}_{hk} \circ R^{-1}_{\bar u}
\end{equation*}
is a Galilean transformation because it is a composition of a Galilean transformation and two rotations, which are also (trivial) Galilean transformations. $T_1$ is also a trivial transformation because it is a transformation between two ether observers, which are at rest relative to each other by Corollary~\ref{thm-lightspeed}.
\begin{equation*}
T_2 = G_{v} \circ T_1 \circ G^{-1}_{u}
\end{equation*}
is a (trivial) Galilean transformation. Since $S_{v} \circ \ E_{v}$ and $E^{-1}_{u} \circ S^{-1}_{u}$ are Lorentz transformations (which are special cases of Poincar\' e transformations),
\begin{equation*}
T_3 = S_{v} \circ \ E_{v} \circ T_2 \circ E^{-1}_{u} \circ S^{-1}_{u}
\end{equation*}
is a Poincar\' e transformation. Since rotations are (trivial) Poincar\'e transformations,
\begin{equation*}
Rad_{\bar{v}_k(e)}\circ w^{CK}_{hk}\circ Rad^{-1}_{\bar{v}_h(e)} = 
R^{-1}_{\bar v} \circ T_3 \circ R_{\bar u}
\end{equation*}
is a Poincar\' e transformation.
\end{proof}

Since, by Lemma \ref{lemma-cannon}, %composition 
$Rad_{\bar{v}_k(e)}\circ w^{CK}_{hk}\circ Rad^{-1}_{\bar{v}_h(e)}$ leads to the same Poincar\' e transformation independently of the choice of ether observer $e$, we can use the notation $Tr\big(w^{SR}_{hk}\big)$ for this transformation, as on the right side of Figure ~\ref{fig-cannon}.

\begin{lem}\label{triv-poi}
Assume  \sy{{ClassicalKin}_{Full}}.  Let $e$ be an ether observer, and let $k$ and $h$ be slower-than-light inertial observers. Assume that $w^{CK}_{hk}$ is a trivial transformation consisting of the translation by the vector $\bar{z}$ after the linear trivial transformation $T$. Then $Rad_{\bar{v}_k(e)}\circ w^{CK}_{hk}\circ Rad^{-1}_{\bar{v}_h(e)}$ is the trivial transformation which is the translation by vector $Rad_{\bar{v}_k(e)}(\bar{z})$ after $T$.
\end{lem}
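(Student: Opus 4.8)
The plan is to reduce the statement to a single \emph{equivariance} property of radarization under spatial rotations, and then to verify that property from the explicit structure of $Rad_{\bar v}$.

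First I would record the consequences of $w^{CK}_{hk}$ being trivial. Writing its linear part as $T$, triviality means $T$ fixes the time axis and acts as a spatial orthogonal map $O$ on the space coordinates, so that $T(t,\bar r)=(t,O\bar r)$, while the affine part is the translation by $\bar z$. Because $w^{CK}_{hk}$ maps the worldline of $e$ in the coordinate system of $h$ (a line of direction $(1,\bar v_h(e))$) onto the worldline of $e$ in the coordinate system of $k$, comparing direction vectors and using $T(1,\bar v_h(e))=(1,O\bar v_h(e))$ gives $\bar v_k(e)=O\,\bar v_h(e)$. Abbreviate $A\de Rad_{\bar v_h(e)}$ and $B\de Rad_{\bar v_k(e)}$, which are linear bijections by Lemma~\ref{thm-lemma}. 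Since $w^{CK}_{hk}(\bar x)=T\bar x+\bar z$, linearity of $B$ gives
\[
\big(B\circ w^{CK}_{hk}\circ A^{-1}\big)(\bar x)=\big(B\,T\,A^{-1}\big)\bar x+B\bar z,
\]
whose translation part is exactly $B\bar z=Rad_{\bar v_k(e)}(\bar z)$, as required. Hence the whole lemma reduces to proving that the linear part equals $T$, i.e.\ that $B\,T\,A^{-1}=T$, equivalently $Rad_{O\bar v}\circ T=T\circ Rad_{\bar v}$ with $\bar v\de\bar v_h(e)$.

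To establish this equivariance I would use the decomposition $Rad_{\bar v}=R^{-1}_{\bar v}\circ C_v\circ R_{\bar v}$, where the core map $C_v$ depends only on the speed $v=|\bar v|$, and note $|O\bar v|=|\bar v|$ so the two core maps coincide. Setting $Q\de R_{O\bar v}\circ T\circ R^{-1}_{\bar v}$, the identity $Rad_{O\bar v}\circ T=T\circ Rad_{\bar v}$ is equivalent to $C_v\circ Q=Q\circ C_v$. Now $Q$ is a composition of rotations fixing the time axis and is therefore itself a spatial rotation; and tracking the vector $-v\hat{x}$ (the $R_{\bar v}$-image of the spatial direction of $\bar v$) shows that $R^{-1}_{\bar v}$ sends it to $\bar v$, then $T$ sends $\bar v$ to $O\bar v$, and then $R_{O\bar v}$ sends $O\bar v$ back to $-v\hat{x}$; so $Q$ fixes both the time axis and the $x$-axis. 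Consequently $Q$ acts only in the $yz$-plane, while $C_v$ is the identity on the $yz$-plane and acts only in the $tx$-plane, so the two maps act on complementary coordinate blocks and commute. This yields $B\,T\,A^{-1}=T$ and completes the reduction.

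I expect the main obstacle to be the bookkeeping in this equivariance step: one must be careful that $T$ is a genuine spatial rotation (so that $\bar v_k(e)=O\bar v_h(e)$ holds and $T$ commutes block-wise with $C_v$), and one must treat the degenerate cases separately --- when $\bar v=\bar 0$ both $A$ and $B$ are the identity by Lemma~\ref{thm-lemma}(\ref{rad-0}) and the claim is immediate, while when $\bar v$ is parallel to the $x$-axis one uses the corresponding special-case definition of $R_{\bar v}$ and checks that the vector-tracking argument still forces $Q$ to fix $\hat{x}$. An alternative to the explicit computation would be to invoke the characterization of $Rad_{\bar v}$ by the properties of Lemma~\ref{thm-lemma} (time-axis scaling, the image of light cones moving with velocity $\bar v$, the identity action off the plane of the time axis and $(0,\bar v)$, and invariance of the velocity-$\bar v$ line) and to verify that $T^{-1}\circ Rad_{O\bar v}\circ T$ satisfies the same properties with $\bar v$, provided these properties are known to determine the map uniquely.
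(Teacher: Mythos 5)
Your proof is correct, and the core step is carried out by a genuinely different route than the paper's. You agree with the paper on the preliminary reductions: splitting off the translation part via linearity of $Rad_{\bar{v}_k(e)}$ (so that the affine part comes out as $Rad_{\bar{v}_k(e)}(\bar z)$ immediately), and the observation that $\bar v_k(e)=O\,\bar v_h(e)$ where $O$ is the spatial isometry part of $T$. But for the remaining claim --- that the linear parts agree --- the paper first invokes Lemma~\ref{lemma-cannon} to see that the conjugate is a Poincar\'e transformation, argues it maps vertical lines to vertical ones and is hence trivial, then matches the time parts via the scaling factors of Item~\ref{rad-time-scale} of Lemma~\ref{thm-lemma} and the space parts by noting that two spatial isometries agreeing on two independent vectors (supplied by Item~\ref{rad-perp}) can differ only by a mirroring, which is excluded by orientation preservation. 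You instead reformulate the claim as the equivariance identity $Rad_{O\bar v}\circ T=T\circ Rad_{\bar v}$ and prove it from the explicit decomposition $Rad_{\bar v}=R^{-1}_{\bar v}\circ C_v\circ R_{\bar v}$: since $C_v$ depends only on the speed and $|O\bar v|=|\bar v|$, the identity reduces to $C_v$ commuting with $Q=R_{O\bar v}\circ T\circ R^{-1}_{\bar v}$, which follows because $Q$ fixes the time axis and the $x$-axis and so acts only on the $yz$-block where $C_v$ is the identity. Your version does not need Lemma~\ref{lemma-cannon} at all, avoids the orientation argument (the block computation works whether $Q$ is proper or improper), and isolates a reusable fact (equivariance of radarization under spatial isometries); the cost is that it leans on the explicit matrix form of $Rad_{\bar v}$ and requires checking the special-case definitions of $R_{\bar v}$ and the degenerate case $\bar v=\bar 0$, which you correctly flag and handle. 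Your closing ``alternative'' via a uniqueness characterization of $Rad_{\bar v}$ would need a uniqueness claim the paper does not prove, but your main argument does not depend on it.
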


\begin{proof}
By Lemma~\ref{lemma-cannon}, $Rad_{\bar{v}_k(e)}\circ w^{CK}_{hk}\circ Rad^{-1}_{\bar{v}_h(e)}$ is a Poincar\'e transformation. Since  $w^{CK}_{hk}$ is a trivial transformation, it maps vertical lines to vertical ones. By Lemma~\ref{thm-lemma}, $Rad_{\bar{v}_k(e)}\circ w^{CK}_{hk}\circ Rad^{-1}_{\bar{v}_h(e)}$ also  maps vertical lines to vertical ones. Consequently, $Rad_{\bar{v}_k(e)}\circ w^{CK}_{hk}\circ Rad^{-1}_{\bar{v}_h(e)}$ is a Poincar\'e transformation that maps vertical lines to vertical ones. Hence it is a trivial transformation.

Let $M_{\bar{z}}$ denote the translation by vector $\bar{z}$.
By the assumptions, $w^{CK}_{hk}=M_{\bar{z}}\circ T$. The linear part $T$ of $w^{CK}_{hk}$ transforms the velocity of the ether frame as $(0,\bar{v}_k(e))=T(0,\bar{v}_h(e))$ and the translation part $M_{\bar{z}}$ does not change the velocity of the ether frame. Hence $\bar{v}_k(e)$ is $\bar{v}_h(e)$ transformed by the spatial isometry part of $T$.

We also have that $Rad_{\bar{v}_k(e)}\circ w^{CK}_{hk}\circ Rad^{-1}_{\bar{v}_h(e)}$ is $Rad_{\bar{v}_k(e)}\circ M_{\bar{z}} \circ T\circ Rad^{-1}_{\bar{v}_h(e)}$. Since $Rad_{\bar{v}_k(e)}$ is linear, we have  $Rad_{\bar{v}_k(e)}\circ M_{\bar{z}} = M_{Rad_{\bar{v}_k(e)}(\bar{z})}\circ Rad_{\bar{v}_k(e)}$. Therefore, it is enough to prove that $Rad_{\bar{v}_k(e)}\circ w^{CK}_{hk}\circ Rad^{-1}_{\bar{v}_h(e)}=w^{CK}_{hk}$ if $w^{CK}_{hk}$ is linear. From now on, assume that $w^{CK}_{hk}$ is linear.

Since it is a linear trivial transformation, $w^{CK}_{hk}$ maps $(1,0,0,0)$ to itself. By Item~\ref{rad-time-scale} of Lemma~\ref{thm-lemma}, $Rad_{\bar{v}_k(e)}\circ w^{CK}_{hk}\circ Rad^{-1}_{\bar{v}_h(e)}=w^{CK}_{hk}$ also  maps $(1,0,0,0)$ to itself because $Rad^{-1}_{\bar{v}_h(e)}$ scales up the time axis the same factor as $Rad_{\bar{v}_k(e)}$ scales down because $|\bar{v}_h(e)|=|\bar{v}_k(e)|$. So  $Rad_{\bar{v}_k(e)}\circ w^{CK}_{hk}\circ Rad^{-1}_{\bar{v}_h(e)}$ and $w^{CK}_{hk}$ agree restricted to time.

Now we have to prove that  $Rad_{\bar{v}_k(e)}\circ w^{CK}_{hk}\circ Rad^{-1}_{\bar{v}_h(e)}$ and $w^{CK}_{hk}$ also agree restricted to space. By Item~\ref{rad-perp} of Lemma~\ref{thm-lemma}, $Rad^{-1}_{\bar{v}_h(e)}$ is identical on the vectors orthogonal to the plane containing the time axis and the direction of motion of the ether frame, determined by vector $\bar{v}_h(e)$. The worldview transformation  $w^{CK}_{hk}$ leaves the time axis fixed and maps the velocity of ether frame $\bar{v}_h(e)$ to $\bar{v}_k(e)$. After this $Rad_{\bar{v}_k(e)}$ does not change the vectors orthogonal to the plane containing the time axis and the direction of motion of the ether frame. Therefore, $Rad_{\bar{v}_k(e)}\circ w^{CK}_{hk}\circ Rad^{-1}_{\bar{v}_h(e)}$ and $w^{CK}_{hk}$ do the same thing with the vectors orthogonal to the plane containing the time axis and the direction of motion of the ether frame, determined by vector $\bar{v}_h(e)$. So the space part of $Rad_{\bar{v}_k(e)}\circ w^{CK}_{hk}\circ Rad^{-1}_{\bar{v}_h(e)}$ and $w^{CK}_{hk}$ are isometries of $Q^3$ that agree on two independent vectors. This means that they are either equal or differ in a mirroring. However, they cannot differ in a mirroring as $Rad_{\bar{v}_k(e)}$ and $Rad^{-1}_{\bar{v}_h(e)}$ are orientation preserving maps because all of their components are such. Consequently, $Rad_{\bar{v}_k(e)}\circ w^{CK}_{hk}\circ Rad^{-1}_{\bar{v}_h(e)}=w^{CK}_{hk}$.
\end{proof}

\section{Interpretation}
Now that we have established the translation and developed the tools to simplify\footnote{See Appendix on p.\pageref{appendix}.} translated formulas, we prove that it is an interpretation of \sy{SpecRel_{Full}} in \sy{ClassicalKin_{Full}}.

\begin{thm}\label{thm-tr}
$Tr$ is an interpretation of \sy{SpecRel_{Full}} in \sy{ClassicalKin_{Full}}, i.e.,
\begin{equation*}
\sy{ClassicalKin_{Full}}\vdash Tr(\varphi)\enskip\text{ if }\enskip \sy{SpecRel_{Full}}\vdash \varphi.
\end{equation*}
\end{thm}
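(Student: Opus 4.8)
The plan is to exploit the fact that $Tr$ is a \emph{translation}: because it commutes with the logical connectives, the quantifiers and equality, it also commutes with the rules of inference of first-order logic. Hence, as already packaged in the definition of interpretation (``translates all axioms, and hence all theorems''), if $\sy{SpecRel_{Full}}\vdash\varphi$ via some formal derivation, then applying $Tr$ line by line turns that derivation into one of $Tr(\varphi)$ from $Tr(\sy{SpecRel_{Full}})$. So it suffices to prove the finitely many statements $\sy{ClassicalKin_{Full}}\vdash Tr(\mathrm{Ax})$, one for each axiom $\mathrm{Ax}$ of $\sy{SpecRel_{Full}}=\sy{Kin_{Full}}\cup\{\ax{AxPh_c},\ax{AxThExp}\}$. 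First I would record this reduction, and then check the axioms one at a time, always replacing the raw translated formulas by the simplified forms provided in the Appendix.

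The easy cases come first. Since mathematical expressions are translated into themselves, $Tr(\AxEField)$ is $\AxEField$ itself, which is already an axiom of $\sy{ClassicalKin_{Full}}$. For $\ax{AxNoAcc}$ I would unfold the translations of $IOb^{SR}$ and of $Ob$ and use that $\sy{ClassicalKin_{Full}}$ already proves every coordinatizing body to be a classical inertial observer, so that only the slower-than-light clause remains, which is built into the definition of a translated observer. The genuinely geometric content is carried by the worldview transformations: by Lemma~\ref{lemma-cannon} the map $Tr\big(w^{SR}_{hk}\big)=Rad_{\bar v_k(e)}\circ w^{CK}_{hk}\circ Rad^{-1}_{\bar v_h(e)}$ is a Poincar\'e transformation corresponding to $\mathfrak{c_e}$, independent of the chosen ether observer $e$. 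Because $\ax{AxSelf}$, $\ax{AxEv}$ and $\AxLine$ assert, after translation, only properties preserved by affine bijections with the correct behaviour on the time axis (an observer being stationary on its own time axis, agreement on the set of events, straightness of worldlines), each follows once the translated worldview transformations are known to be such bijections; here I would invoke Lemma~\ref{thm-lemma}, in particular that $Rad_{\bar v}$ is a linear bijection mapping the time axis to the time axis.

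For the two distinctly relativistic axioms I would argue as follows. For $Tr(\ax{AxPh_c})$, the key observation is that a light signal of a classical observer traces a light cone moving with the ether velocity, and by Item~\ref{rad-cone} of Lemma~\ref{thm-lemma} radarization sends such a cone to a \emph{right} light cone; combined with Lemma~\ref{lemma-trc}, which gives $Tr(\mathfrak{c})\equiv\mathfrak{c_e}$, this yields exactly the isotropic constant-speed-of-light statement with constant $\mathfrak{c_e}$ for every translated observer. For $Tr(\ax{AxThExp})$, I would use that $\ax{AxThExp_+}$ furnishes classical inertial observers on every non-horizontal line, in particular on every line of speed below $\mathfrak{c_e}$; radarizing these realizes precisely the slower-than-light lines of the relativistic coordinate system, which is what the translated axiom demands (and the required witness for its existential clause is the ether observer itself, which is slower-than-light by definition). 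The symmetry axioms $\ax{AxSymD}$ and $\ax{AxTriv}$ are handled by Lemma~\ref{triv-poi}: a trivial classical worldview transformation is carried by radarization to a trivial transformation with the same rotational part but translation vector $Rad_{\bar v_k(e)}(\bar z)$, which is exactly the form needed to confirm that translations of trivial coordinate systems are again inertial and that simultaneous spatial distances are preserved.

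I expect the main obstacle to be $Tr(\ax{AxPh_c})$, since this is where the entire radarization construction earns its keep: one must verify that the \emph{moving} light cones of the ether world become genuinely isotropic right cones in every translated frame, uniformly in direction, and that the translated light-speed constant is unambiguously $\mathfrak{c_e}$. The bookkeeping for $\ax{AxTriv}$ and $\ax{AxSymD}$ is also delicate, because the translation of a coordinate system must be shown to land back among the \emph{inertial} (hence slower-than-light) observers and to respect orientation, which is why the orientation-preservation remarks underlying Lemma~\ref{triv-poi} are essential.
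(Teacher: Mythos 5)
Your overall strategy is exactly the paper's: reduce the interpretation claim to the finitely many statements $\sy{ClassicalKin_{Full}}\vdash Tr(\mathrm{Ax})$ for the axioms of $\sy{SpecRel_{Full}}$, then discharge each one using the simplified translated formulas from the Appendix, Lemma~\ref{thm-lemma} for the geometric properties of $Rad_{\bar v}$, Lemma~\ref{lemma-cannon} for the translated worldview transformations, and Lemma~\ref{lemma-trc} for the light-speed constant. Your treatments of \AxEField, \ax{AxEv}, \ax{AxSelf}, \AxLine, \ax{AxPh_c}, \ax{AxThExp}, \ax{AxTriv} and \ax{AxNoAcc} all match the paper's proof in substance.

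The one step that would fail as justified is $Tr(\ax{AxSymD})$. You route it, together with \ax{AxTriv}, through Lemma~\ref{triv-poi}; but that lemma applies only when $w^{CK}_{kk'}$ is a \emph{trivial} transformation, whereas the two observers quantified over in \ax{AxSymD} are arbitrary inertial observers in arbitrary relative motion, so there is nothing trivial about their worldview transformation. The correct tool is the one you already invoked for the other worldview-transformation axioms: by Lemma~\ref{lemma-cannon}, $Rad_{\bar v_{k'}(e)}\circ w^{CK}_{kk'}\circ Rad^{-1}_{\bar v_k(e)}$ is a Poincar\'e transformation, hence preserves $\mathfrak{c_e}^2\cdot \mathit{time}^2-\mathit{space}^2$; since both pairs of events in the hypothesis of \ax{AxSymD} are simultaneous, the time terms vanish and the spatial distances must agree (taking positive square roots). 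With that substitution your argument coincides with the paper's; Lemma~\ref{triv-poi} is needed only for \ax{AxTriv}, where the classical worldview transformation really is trivial and one must check that the witnessing observer is again slower than light.
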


\begin{proof} 
It is enough to prove that the $Tr$-translation of every axiom of \ax{SpecRel_{Full}} follows from \ax{ClassicalKin_{Full}}. Now we will go trough all the axioms of \ax{SpecRel_{Full}} and prove their translations one by one from \ax{ClassicalKin_{Full}}.

\begin{itemize}[leftmargin=*]
\item $\ax{AxEField^{CK}} \vdash  Tr(\ax{AxEField^{SR}})$ follows  since all purely mathematical expressions are translated into themselves, hence $Tr(\AxEField)$ is the axiom $\AxEField$ itself.

\item $\sy{\mathsf{ClassicalKin}_{Full}} \vdash Tr(\ax{AxEv^{SR}})$. The translation of $\ax{AxEv^{SR}}$ is equivalent to:
\begin{multline*}
\bforall {k,h}{ IOb} 
\Bforall {\vx}{\Q^4} 
\bforall {e}{\Ether} 
\Bigg(\fblock{\speed_{e}(k) < \mathfrak{c_e} \\ \speed_{e}(h) < \mathfrak{c_e}} \\ \to 
\Bexists {\vy}{\Q^4}
\Big [\ev_k\left(Rad^{-1}_{\bar{v}_k(e)}(\vx)\right)=\ev_{h}\left(Rad^{-1}_{\bar{v}_h(e)}(\vy)\right)\Big]\Bigg).
\end{multline*}

To prove this, let $k$ and $h$ be inertial observers such that $\speed_{e}(k) < \mathfrak{c_e}$ and $\speed_{e}(h) < \mathfrak{c_e}$ according to any Ether observer $e$ and let $\bar x \in \Q^4$.  We have to prove that there is a $\vy \in Q^4$ such that $\ev_k[Rad^{-1}_{\bar{v}_k(e)}(\vx)]=\ev_{h}[Rad^{-1}_{\bar{v}_h(e)}(\vy)]$. Let us denote $Rad^{-1}_{\bar{v}_k(e)}(\bar{x})$ by $\bar{x}'$.  $\bar x'$ exists since $Rad_{\bar{v}_k(e)}$ is a well-defined bijection. There is a $\bar y'$ such that $\ev_k\left(\bar{x}'\right)=\ev_{h}\left(\bar{y}'\right)$ because of \ax{AxEv^{CK}}. Then  $\bar y=Rad_{\bar{v}_h(e)}(\bar{y}')$ has the requited properties.

\item  $\sy{\mathsf{ClassicalKin}_{Full}} \vdash Tr(\ax{AxSelf^{SR}})$. The translation of $\ax{AxSelf^{SR}}$ is equivalent to
\begin{multline*}
%Tr(\ax{AxSelf^{SR}}) \equiv\\
\bforall {k} { IOb}  \bforall {e} {\Ether}  
\Big( \speed_{e}(k) < \mathfrak{c_e} \\\to
\Bforall {\bar{y}}{\Q^4} \big[ W\big((k,k,Rad^{-1}_{\bar{v}_k(e)}(\bar{y})\big) \leftrightarrow y_1=y_2=y_3=0 \big] \Big).
\end{multline*}

To prove the formula above, let $k$ be an inertial observer such that $\speed_{e}(k) < \mathfrak{c_e}$ according to any Ether observer $e$ and let $\bar y \in \Q^4$. We have to prove that $W\big((k,k,Rad^{-1}_{\bar{v}_k(e)}(\bar{y})\big)$ if and only if $y_1=y_2=y_3=0$. Let $\bar x \in \Q^4$ be such that $Rad_{\bar{v}_k(e)}(\bar{x})=\bar y$. By \ax{AxSelf^{CK}}, $W\big((k,k,\bar x)\big)$ if and only if $x_1=x_2=x_3=0$. This holds if and only if $y_1=y_2=y_3=0$ since by item \ref{tx} of Lemma \ref{thm-lemma} $Rad_{\bar{v}}$ transformation maps the time axis on the time axis.

\item $\sy{\mathsf{ClassicalKin}_{Full}} \vdash Tr(\ax{AxSymD^{SR}})$. The translation of $\ax{AxSymD}$ is equivalent to:
\begin{multline*}
\bforall {k,k'}{ IOb}\Bforall{\vx,\vy,\vx',\vy'}{Q^4}\bforall{e}{Ether} \\
\vast(\fblock{
\speed_{e}(k) < \mathfrak{c_e} \AND \speed_{e}(k') < \mathfrak{c_e}\big) \\
time(\bar x,\bar y) = time(\bar x', \bar y') = 0 \\ 
ev_k(Rad^{-1}_{\bar{v}_k(e)}(\vx)) = \ev_{k'}(Rad^{-1}_{\bar{v}_{k'}(e)}(\vx')) \\
ev_k(Rad^{-1}_{\bar{v}_k(e)}(\vy)) = \ev_{k'}(Rad^{-1}_{\bar{v}_{k'}(e)}(\vy')) }
\to  space(\vx,\vy)= space(\vx',\vy')\vast).
\end{multline*}

Let $k$ and $k'$ be inertial observers, let $\bar x$, $\bar y$, $\bar x'$, and $\bar y'$ be coordinate points, and let $e$ be an ether observer such that $\speed_{e}(k)<\mathfrak{c_e}$, $\speed_{e}(k') < \mathfrak{c_e}$, $time(\bar x,\bar y) = time(\bar x', \bar y') = 0$, $ev_{k}(Rad^{-1}_{\bar{v}_k(e)}(\vx)) = \ev_{k'}(Rad^{-1}_{\bar{v}_{k'}(e)}(\vx'))$, and $ev_k(Rad^{-1}_{\bar{v}_k(e)}(\vy)) = \ev_{k'}(Rad^{-1}_{\bar{v}_{k'}(e)}(\vy'))$. Let $P=Rad_{v_{k'}(e)}\circ w_{kk'}\circ Rad^{-1}_{v_k(e)}$. By Lemma~\ref{lemma-cannon}, $P$ is a Poincar\'e transformation. By the assumptions, $P(\bar x)=\bar x'$ and  $P(\bar y)=\bar y'$. Therefore,  $time(\vx,\vy)^2- space(\vx,\vy)^2= time(\vx',\vy')^2- space(\vx',\vy')^2$. Since both $time(\bar x,\bar y)$ and $time(\bar x', \bar y')$ are zero, $space(\vx,\vy)^2=space(\vx',\vy')^2$. Consequently, $space(\vx,\vy)= space(\vx',\vy')$ because they are both positive quantities.

\item $\sy{\mathsf{ClassicalKin}_{Full}} \vdash Tr(\ax{AxLine^{SR}})$. The translation of $\ax{AxLine^{SR}}$ is equivalent to:
\begin{multline*}
\bforall {k,h}{ IOb} 
\Bforall {\vx,\vy,\vz}{\Q^4} 
\Bforall {e}{\Ether} 
\Big[ \\
\fblock{
 \speed_{e}(k) < \mathfrak{c_e} \\
 \speed_{e}(h) < \mathfrak{c_e} \\
Rad^{-1}_{\bar{v}_k(e)}(\vx),Rad^{-1}_{\bar{v}_k(e)}(\vy),Rad^{-1}_{\bar{v}_k(e)}(\vz)\in \wl_k(h)}\\ 
\to \bexists {a}{Q} \big[ \vz-\vx=a(\vy-\vx) \lor
\vy-\vz=a(\vz-\vx)\big]\Big]. 
\end{multline*}

Because of $\ax{AxLine^{CK}}$, $Rad^{-1}_{\bar{v}_k(e)}(\bar x)$, $Rad^{-1}_{\bar{v}_k(e)}(\bar y)$ and $Rad^{-1}_{\bar{v}_k(e)}(\bar z)$ are on a straight line. Since $Rad_{\bar v}$ is a linear map, $\bar x$, $\bar y$ and $\bar z$ are on a straight line, hence the translation of $\ax{AxLine^{SR}}$ follows.

\item $\sy{\mathsf{ClassicalKin}_{Full}} \vdash  Tr(\ax{AxTriv^{SR}})$. The translation of $\ax{AxTriv^{SR}}$ is equivalent to:
\begin{multline*}
\bforall{T}{\Triv} 
\bforall {h}  {IOb} 
\bforall {e}{Ether}
\BIG(\speed_{e}(h) < \mathfrak{c_e} \\ \to  
\bexists {k} {IOb}
\fblock{
speed_{e}(k) < \mathfrak{c_e} \\
Rad_{\bar v_{k}(e)}\circ w_{hk} \circ Rad^{-1}_{\bar v_{h}(e)}=T}
\BIG).
\end{multline*}

To prove $Tr(\ax{AxTriv^{SR}})$, we have to find a slower-than-light  inertial observer $k$  for every trivial transformation $T$ and a slower-than-light inertial observer $h$ such that $Rad_{\bar v_{k}(e)}\circ w_{hk} \circ Rad^{-1}_{\bar v_{h}(e)}=T$. 

By $\ax{AxTriv^{CK}}$ and  Lemma~\ref{triv-poi}, there is an inertial observer $k$ such that $Rad_{\bar v_{k}(e)}\circ w^{CK}_{hk} \circ Rad^{-1}_{\bar v_{h}(e)}=T$. This $k$ is also slower-than-light since $w^{CK}_{hk}$ is a trivial transformation.

\item  $\sy{\mathsf{ClassicalKin}_{Full}} \vdash Tr(\ax{AxPh_c^{SR}})$. The translation of $\ax{AxPh_c^{SR}}$ is equivalent to:
\begin{multline*}
\bexists {c}{Q}\BIGG( c>0
\AND 
\bforall {k} {IOb} \Bforall {\vx,\vy}{Q^4} \bforall {e}{\Ether} 
\BIGG[\speed_{e}(k) < \mathfrak{c_e}  
\to \\ \bexists {p}{\Ph} \BIG( 
\fblock{ W^{CK}\big(k,p,Rad^{-1}_{\bar{v}_k(e)}(\vx)\big) \\
W^{CK}\big(k,p,Rad^{-1}_{\bar{v}_k(e)}(\vy)\big)} 
\leftrightarrow 
 space(\vx,\vy)= c\cdot time(\vx,\vy)\BIG)\BIGG]\BIGG) .
\end{multline*}
It is enough to show that any slower-than-light inertial observers $k$ can send a light signal trough coordinate points $\bar x'$ and $\bar y'$ exactly if 
\begin{equation*}
space\big(Rad_{\bar{v}_k(e)}(\vx'),Rad_{\bar{v}_k(e)}(\vy')\big)= \mathfrak{c_e}\cdot time\big(Rad_{\bar{v}_k(e)}(\vx'),Rad_{\bar{v}_k(e)}(\vy')\big)
\end{equation*}
holds for any ether observer $e$, i.e., if $Rad_{\bar{v}_k(e)}(\bar x')$ and $Rad_{\bar{v}_k(e)}(\bar y')$ are on a right light cone.  $k$ can send a light signal through coordinate points $\bar x'$ and $\bar y'$ if they are on a light cone moving with velocity $\bar{v}_{k}(e)$ by Theorem~\ref{thm-gal} because Galilean transformation $w_{ek}$ maps worldlines of light signals to worldlines of light signals and light signals move along right light cones according to $e$ by \ax{AxEther}. $Rad_{\bar{v}_k(e)}$ transform these cones into right light cones by Item~\ref{rad-cone} of Lemma~\ref{thm-lemma}.  Therefore, $Rad_{\bar{v}_k(e)}(\bar x')$ and $Rad_{\bar{v}_k(e)}(\bar y')$ are on a right light cone if $k$ can send a light signal trough $\bar x'$ and $\bar y'$, and this is what we wanted to show.

\item $\sy{\mathsf{ClassicalKin}_{Full}} \vdash Tr(\ax{AxThExp^{SR}})$. $Tr(\ax{AxThExp^{SR}})$ is equivalent to:
\begin{multline*}
\bexists {h}{ IOb} \bforall {e}{\Ether}  
\big[ \speed_{e}(h) < \mathfrak{c_e} \big]
\AND \\ \bforall {k}{ IOb} \Bforall{\vx,\vy}{Q^4} \bforall {e}{\Ether} 
\BIG(
\fblock{\speed_{e}(k) < \mathfrak{c_e} \\
 space(\vx,\vy)< \mathfrak{c_e} \cdot time(\vx,\vy)}
\\ \to 
\bexists {k'}{ IOb} 
\fblock{ \speed_{e}(k') < \mathfrak{c_e}
\\
x,y \in wl_k(k')}
\BIG).
\end{multline*}
The first conjunct of the translation follows immediately from $\ax{AxEther}$. Let us now prove the second conjunct. From \ax{AxThExp_+} we get inertial observers both inside and outside of the light cones. Those observers which are inside of the light cone (which are the ones we are interested in) stay inside the light cone by the translation by Items \ref{rad-time} and \ref{rad-cone} of Lemma~\ref{thm-lemma}. Since we have only used that there are observers on every straight line inside of the light cones, this proof of $Tr(\ax{AxThExp^{SR})}$ goes trough also for the $NoFTL$ case, needed in Theorem \ref{thm-tr+} below.

\item $\sy{ClassicalKin^{STL}_{Full}} \vdash Tr(\ax{AxNoAcc})$. The translation of $\ax{AxNoAcc}$ is equivalent to:
\begin{multline*}
\bforall{k}{B} \bexists{\bar x}{Q^4} \bexists{b}{B} \bforall {e} {\Ether} \\
\big[W^{CK}\big(k,b,Rad^{-1}_{\bar{v}_k(e)}(\bar{x})\big)
\rightarrow \big(IOb^{CK}(k) \AND [ \speed_{e}(k) < \mathfrak{c_e}]\big)\big].
\end{multline*}
which follows directly from $\ax{AxNoAcc}$ since $Rad_{\bar{v}_k(e)}$ is the same bijection for all ether observers $e$, and from $\ax{AxNoFTL}$. 
\qedhere
\end{itemize}
\end{proof}

While our translation function translates axioms of special relativity theory into theorems of classical kinematics, models are transformed the other way round from classical mechanics to special relativity theory. If we take any model $\mathfrak{M}_{CK}$ of \sy{ClassicalKin_{Full}}, then our translation $Tr$ tells us how to understand the basic concepts of \sy{SpecRel_{Full}} in $\mathfrak{M}_{CK}$ in such a way that they satisfy the axioms of \sy{SpecRel_{Full}}, turning the model $\mathfrak{M}_{CK}$ into a model $\mathfrak{M}_{SR}$ of \sy{SpecRel_{Full}}. For an illustration on how a model is transformed by our translation, see the discussion of the Michelson--Morley experiment in the next section.

Let us now show that there is no inverse interpretation $Tr'$ of classical kinematics in special relativity theory.

\begin{thm}\label{thm-noinv}
There is no interpretation of \sy{ClassicalKin_{Full}} in \sy{SpecRel_{Full}}.
\end{thm}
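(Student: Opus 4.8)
The plan is to argue semantically, exploiting the fact that the translations used in this paper do \emph{not} relativize quantifiers: a translation merely replaces each primitive relation by a formula and commutes with the connectives, the quantifiers and equality. Consequently, if $Tr'$ were an interpretation of \sy{ClassicalKin_{Full}} in \sy{SpecRel_{Full}}, then every model $\mathfrak{M}\models\sy{SpecRel_{Full}}$ would give rise to a model $\mathfrak{M}'\models\sy{ClassicalKin_{Full}}$ on the \emph{same} universe $\langle B_{\mathfrak{M}},Q_{\mathfrak{M}}\rangle$, obtained by reinterpreting each classical primitive $R$ as the set defined in $\mathfrak{M}$ by $Tr'(R)$; indeed $\mathfrak{M}'\models\varphi$ iff $\mathfrak{M}\models Tr'(\varphi)$, so all \sy{ClassicalKin_{Full}}-theorems hold in $\mathfrak{M}'$. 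The decisive structural observation is that every relation of $\mathfrak{M}'$ is then $\emptyset$-definable in $\mathfrak{M}$, so $\mathrm{Aut}(\mathfrak{M})\subseteq\mathrm{Aut}(\mathfrak{M}')$; in particular every set definable in $\mathfrak{M}'$ without parameters, such as the set of ether observers and the set of classical inertial observers, is invariant under every automorphism of $\mathfrak{M}$.

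First I would fix a maximally symmetric model. Take $Q=\mathbb{R}$ and let $\mathfrak{M}$ be the ``full'' relativistic model whose inertial observers $o_g$ are indexed by the Poincar\'e group and whose photons $p_\ell$ are indexed by the lightlike lines, with $\W$ given by the standard coordinatizations; one checks that it satisfies \sy{SpecRel_{Full}}, including \ax{AxThExp}, \ax{AxTriv} and \ax{AxPh_c}. The key computation is that right translation $o_g\mapsto o_{gs}$ by a fixed Poincar\'e transformation $s$, together with the induced map $p_\ell\mapsto p_{s^{-1}(\ell)}$ on photons, is an automorphism of $\mathfrak{M}$ fixing $Q$ pointwise: the pairwise worldview transformations $w_{o_g,o_h}=hg^{-1}$ are preserved, since $(hs)(gs)^{-1}=hg^{-1}$. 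Hence $\mathrm{Aut}(\mathfrak{M})$ contains a copy of the Poincar\'e group acting simply transitively on the inertial observers and transitively on the photons, so $\mathfrak{M}$ has exactly two orbits on bodies.

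Next I would derive the contradiction. In $\mathfrak{M}'$ the set $\mathcal E$ of ether observers is nonempty by \ax{AxEther}; it is a single frame by Corollary~\ref{thm-lightspeed}; and it is a proper subset of the set $I'$ of classical inertial observers, because \ax{AxThExp_+} forces classical observers of every velocity and hence observers outside the ether frame. Both $\mathcal E$ and $I'$ are $\mathrm{Aut}(\mathfrak{M})$-invariant, and since $\mathfrak{M}$ has only the two body-orbits, the invariant subsets of $B_{\mathfrak{M}}$ form a four-element Boolean lattice. A chain $\emptyset\neq\mathcal E\subsetneq I'$ of invariant sets therefore forces $I'$ to meet both orbits. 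In the principal case, in which $I'$ is exactly the orbit of inertial observers, this is impossible, because an orbit has no proper nonempty invariant subset; so $\mathcal E$ cannot exist and $Tr'$ cannot exist. Intuitively, the Poincar\'e symmetry of \sy{SpecRel_{Full}} leaves no preferred frame, whereas \ax{AxEther} demands one.

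The genuinely delicate case, and the main obstacle, is that $Tr'$ might build the classical observers partly out of the photons of $\mathfrak{M}$, so that $I'$ is the union of both orbits and $\mathcal E$ is a whole orbit; then pure orbit-counting no longer closes the gap, since automorphisms fix each orbit and cannot map one onto the other. To handle it I would either pass to a richer model $\mathfrak{M}$ whose automorphism group fuses into a single orbit all bodies capable of coordinatizing, so that any candidate observer set is one orbit, or use the \sy{ClassicalKin_{Full}}-provable internal structure of the frames to exclude the configuration ``ether frame $=$ relativistic observers, moving classical observers $=$ photons'': by \ax{AxThExp_+} the moving observers must realize unbounded, additively composing relative velocities, which no definable $Tr'(\W)$ over the photon orbit of a \sy{SpecRel_{Full}} model can produce. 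Making this bookkeeping airtight, rather than establishing the symmetry idea itself, is where the real effort lies.
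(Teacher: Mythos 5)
Your setup is essentially the paper's: you pick a \sy{SpecRel_{Full}} model $\mathfrak{M}$ with $B_{\mathfrak{M}}=IOb_{\mathfrak{M}}\cup Ph_{\mathfrak{M}}$ whose automorphism group acts transitively on each of the two body classes, observe that $\mathrm{Aut}(\mathfrak{M})\subseteq\mathrm{Aut}(\mathfrak{M}')$ for the induced \sy{ClassicalKin_{Full}}-model $\mathfrak{M}'$, and try to contradict the existence of the definable ether. The gap is the one you yourself flag: your contradiction rests only on the chain $\emptyset\neq\mathcal{E}\subsetneq I'$ of $\mathrm{Aut}(\mathfrak{M})$-invariant sets, which kills the case $I'=$ one orbit but says nothing when $I'=B_{\mathfrak{M}}$ and $\mathcal{E}$ is a whole orbit. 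Neither of your two proposed repairs is carried out, and the first one (fusing all coordinatizing bodies into a single orbit of a richer model) is not obviously available, since you would have to re-verify transitivity for whatever extra bodies you add; the second is only a heuristic about ``additively composing velocities.'' As it stands the proof is incomplete.

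The paper closes exactly this hole by replacing the two-element invariant-set argument with an invariant that takes infinitely many values. In any model of \sy{ClassicalKin_{Full}} the ether frame is definable without parameters, hence fixed setwise by every automorphism of $\mathfrak{M}'$, and therefore the speed of a body relative to the ether frame is an automorphism invariant; two bodies with different ether-relative speeds can never be mapped to one another by an automorphism of $\mathfrak{M}'$. Since \ax{AxThExp_+} forces $\mathfrak{M}'$ to contain inertial observers of every finite speed, $\mathfrak{M}'$ has infinitely many pairwise non-automorphic bodies. But any two bodies in the same $\mathrm{Aut}(\mathfrak{M})$-orbit are automorphic in $\mathfrak{M}'$, so there are at most two such classes --- a contradiction that requires no case analysis on how $I'$ sits relative to the orbits. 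If you graft this step onto your argument in place of your final two paragraphs, your proof goes through and coincides with the paper's.
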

\begin{proof}
Assume towards contradiction that there is an interpretation, say  $Tr'$, of \sy{ClassicalKin_{Full}} in \sy{SpecRel_{Full}}. In the same way as the translation $Tr$ turns models of \sy{ClassicalKin_{Full}} into models of \sy{SpecRel_{Full}}, this inverse translation $Tr'$ would turn every model of \sy{SpecRel_{Full}} into a model  of \sy{ClassicalKin_{Full}}. There is a model $\mathfrak{M}$ of \sy{SpecRel_{Full}} such that $B_\mathfrak{M}=IOb_\mathfrak{M}\cup Ph_\mathfrak{M}$ and the automorphism group of $\mathfrak{M}$ acts transitively on both $IOb_\mathfrak{M}$ and $Ph_\mathfrak{M}$, see e.g., the second model constructed in \citep[Thm.2]{superluminal}. That is, for any two inertial observers $k$ and $h$ in $\mathfrak{M}$, there is an automorphism of $\mathfrak{M}$ taking $k$ to $h$, and the same is true for any two light signals in $\mathfrak{M}$.

Let $\mathfrak{M}'$ be the model of \sy{ClassicalKin_{Full}} that $\mathfrak{M}$ is turned into by translating the basic concepts of \sy{ClassicalKin_{Full}} to defined concepts of \sy{SpecRel_{Full}} via $Tr'$. Then every automorphism of $\mathfrak{M}$ is also an automorphism of $\mathfrak{M'}$. Since $Tr'$ has to translate bodies into bodies, there have to be two sets of bodies in $\mathfrak{M'}$ such that any two bodies from the same set can be mapped to each other by an automorphism of $\mathfrak{M'}$. However, in models of \sy{ClassicalKin_{Full}}, observers moving with different speeds relative to the ether frame cannot be mapped to each other by an automorphism. By $\ax{AxThExp_+}$ there are inertial observers in $\mathfrak{M}$ moving with every finite speed. Therefore, there are infinitely many sets of inertial observers which elements cannot be mapped into each other by an automorphism. This is a contradiction showing that no such model $\mathfrak{M'}$ and hence no such translation $Tr'$ can exist.
\end{proof}

\begin{cor}\label{cor-nodefeq}
%There is no definitional equivalence between
\sy{SpecRel_{Full}} and \sy{ClassicalKin_{Full}} are not definitionally equivalent. 
\end{cor}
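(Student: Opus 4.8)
The plan is to obtain the corollary immediately from Theorem~\ref{thm-noinv} by unwinding the definition of definitional equivalence. The essential observation is that $\equiv_{\Delta}$ requires interpretations in \emph{both} directions: by the definition given in the introduction, a definitional equivalence of $\sy{Th_1}$ and $\sy{Th_2}$ consists of an interpretation $Tr$ of $\sy{Th_1}$ in $\sy{Th_2}$ \emph{together with} a second interpretation $Tr'$ which, by the typing of the two round-trip conditions, sends formulas of the language of $\sy{Th_2}$ to formulas of the language of $\sy{Th_1}$ and is itself an interpretation of $\sy{Th_2}$ in $\sy{Th_1}$. Thus $\sy{Th_1}\equiv_{\Delta}\sy{Th_2}$ entails in particular that $\sy{Th_2}$ is interpretable in $\sy{Th_1}$.

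First I would suppose, towards a contradiction, that $\sy{SpecRel_{Full}}\equiv_{\Delta}\sy{ClassicalKin_{Full}}$. Instantiating the definition with $\sy{Th_1}=\sy{SpecRel_{Full}}$ and $\sy{Th_2}=\sy{ClassicalKin_{Full}}$, the witnessing second interpretation $Tr'$ is precisely an interpretation of $\sy{ClassicalKin_{Full}}$ in $\sy{SpecRel_{Full}}$. But Theorem~\ref{thm-noinv} states that no such interpretation exists, a contradiction. Since by Theorem~\ref{thm-eqrel} definitional equivalence is symmetric, it is immaterial which theory is taken as $\sy{Th_1}$, so this single contradiction already settles the statement.

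I do not expect a genuine obstacle here: the result is a purely formal consequence of the already-established non-interpretability (Theorem~\ref{thm-noinv}) together with the definition of $\equiv_{\Delta}$. The only point worth making explicit is that the map $Tr'$ demanded by the definition of definitional equivalence really is a full interpretation in the sense of Theorem~\ref{thm-noinv}, i.e.\ that it carries theorems of $\sy{ClassicalKin_{Full}}$ to theorems of $\sy{SpecRel_{Full}}$; but this is immediate, since $Tr'$ is by definition required to be an interpretation. All of the substantive work was carried out in Theorem~\ref{thm-noinv}, whose model-theoretic argument exploited the transitive action of the automorphism group on the inertial observers of a suitable model of $\sy{SpecRel_{Full}}$ against the impossibility of mapping observers of distinct ether-relative speeds to one another by an automorphism in any model of $\sy{ClassicalKin_{Full}}$.
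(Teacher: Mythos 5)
Your argument is correct and is exactly the paper's intended route: the corollary is stated immediately after Theorem~\ref{thm-noinv} with no separate proof precisely because, as you observe, a definitional equivalence would in particular supply an interpretation of $\sy{ClassicalKin_{Full}}$ in $\sy{SpecRel_{Full}}$, which that theorem rules out. Nothing further is needed.
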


\section{Intermezzo: The Michelson--Morley experiment}

As an illustration, we will now show how the null result of the \citep{MM} experiment behaves under our interpretation. This is a complicated experiment involving interferometry to measure the relative speed between two light signals, but we can make abstraction of that here.

Let, as illustrated on the right side of Figure \ref{fig-MM}, inertial observer $k$ send out two light signals at time $-L$, perpendicular to each other in the $x$ and $y$ direction. Let us assume that observer $k$ sees the ether frame moving with speed $-v$ in the $tx$-plane --- such that we do not have to rotate into the $tx$-plane (i.e. $\bar{v} = (-v, 0, 0)$ so that $R_{\bar v}$ is the identity as defined on page \pageref{core-matrix}). The light signals are reflected by mirrors at $(0,L,0,0)$ and $(0,0,L,0)$, assuming that the speed of light is $1$. In accordance with the null result of the Michelson--Morley experiment, the reflected light signals are both received by observer $k$ at time $L$.

\begin{figure}[ht]
  \begin{center}
    \begin{tikzpicture}[scale=0.5, every node/.style={scale=0.7}]

\begin{scope}[shift={(-22,0)}]
\draw [ultra thick, green!50!black] (20.75,2) -- (18.1,9)  node[right] {$e$};
\draw [gray, ultra thick] (21,3.8) -- (21,7.8)  node[right] {Mirror 1};
\draw [gray, ultra thick] (18,1.2) -- (18,5.2)  node[left] {Mirror 2};
\draw [->,  thick] (19,4)  -- (25,4)  node[right] {$x$};
\draw [<-, thick] (16,2.4)  node[below] {$y$} -- (20.5,4.2);
\draw [->, blue, ultra thick] (20,-1) -- (20,9)  node[right] {$k$};
\draw [red, ultra thick] (20,-0.5) node[right] {$(-L/\sqrt{1-v^2},0,0,0)$} -- (21,5.8) node[right] {$(Lv/\sqrt{1-v^2},L\sqrt{1-v^2},0,0)$};
\draw [red, ultra thick] (20,-0.5) -- (18,3.2) node[left] {$(0,0,L,0)$};
\draw [red, ultra thick] (21,5.8) -- (20,8.5) node[right] {$(L/\sqrt{1-v^2},0,0,0)$} ;
\draw [red, ultra thick] (18,3.2) -- (20,8.5) ;
\end{scope}

\begin{scope}[shift={(-8,0)}]
\draw [ultra thick, green!50!black] (20.75,2) -- (18.1,9)  node[right] {$e$};
\draw [gray, ultra thick] (23,2) -- (23,6)  node[right] {Mirror 1};
\draw [gray, ultra thick] (18,1.2) -- (18,5.2)  node[left] {Mirror 2};
\draw [->, thick] (19,4)  -- (25,4)  node[right] {$x$};
\draw [<-, thick] (16,2.4)  node[below] {$y$} -- (20.5,4.2);
\draw [->, blue, ultra thick] (20,-1) -- (20,9)  node[right] {$k$};
\draw [red, ultra thick] (20,0) node[right] {$(-L,0,0,0)$} -- (23,4) node[below right] {$(0,L,0,0)$};
\draw [red, ultra thick] (20,0) -- (18,3.2) node[left] {$(0,0,L,0)$};
\draw [red, ultra thick] (23,4) -- (20,8) node[right] {$(L,0,0,0)$} ;
\draw [red, ultra thick] (18,3.2) -- (20,8) ;
\end{scope}

\draw[->,ultra thick,>=latex] (7,7)  to [out=120,in=30] node[above]{$Rad^{-1}_{\bar{v}}$} (3,7);
\draw[->,ultra thick,>=latex] (3,2)  to [out=-60,in=-120] node[below]{$Rad_{\bar{v}}$} (7,2);

\end{tikzpicture}
    \caption{\label{fig-MM}  On the left we have the classical setup which is transformed by $Rad_{\bar v}$ into the setup for the Michelson--Morley experiment on the right.}
  \end{center}
\end{figure}
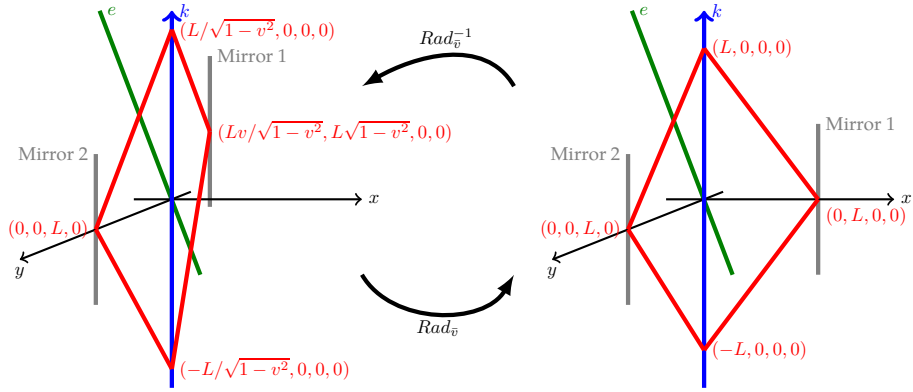

To understand how this typical relativistic setup can be modeled in classical kinematics, we have to find which classical setup is being transformed into it. So we need the inverse $Rad^{-1}_{\bar v}$ of the radarization, which since the ether is in the $tx$-plane is just the inverse $C_v^{-1}$ of the core map $C_v$ as defined on page \pageref{core-matrix}. 

Using the coordinates found by multiplying\footnote{See \citep[p.44]{diss} for the calculations.} the matrix $C_v^{-1}$ by the coordinates from the right side of the figure, we can draw the left side of the figure, which is the classical setup translated by $Rad_{\bar v}$ into our setup of the Michelson--Morley experiment. On the left side of the figure, we see that the speed of light is not the same in every direction according to observer $k$. This illustrates that our translation does not preserve simultaneity (the time at which the light beams hit the mirrors are not the same anymore), speed, distance and time difference.

\section{Definitional Equivalence}
We will now slightly modify our axiom systems \ax{SpecRel_{Full}} and \ax{ClassicalKin_{Full}} to establish a definitional equivalence between them. This will provide us with an insight about the exact differences between special relativity and classical kinematics.

To make classical kinematics equivalent to special relativity, we ban the inertial observers that are not moving slower than light relative to the ether frame. This is done by the next axiom.
\begin{description}
\item[\underline{\ax{AxNoFTL}}]
All inertial observers move slower than light with respect to the ether frames:
\begin{equation*}
\bforall {k} {IOb} 
\bforall {e} {Ether} \big[\speed_{e}(k) < \mathfrak{c_e}\big].
\end{equation*}
\end{description}

Axiom \ax{AxNoFTL} contradicts \ax{AxThExp_+}. Therefore, we replace \ax{AxThExp_+} with the following weaker assumption.
\begin{description}
\item[\underline{\ax{AxThExp^{STL}}}]
Inertial observers can move with any speed which is in the ether frame slower than that of light:
\begin{multline*}
\bexists {h}{B}  [ IOb(h)] \AND 
\bforall {e} {\Ether}
\Bforall {\vx,\vy} {\Q^4}\\
\big( space(\vx,\vy)<\mathfrak{c_e}\cdot time(\vx,\vy)
 \to \bexists {k}  {IOb} \big[\vx,\vy\in \wl_e(k)\big]\big).
\end{multline*}
\end{description}
Let us now introduce our axiom system \sy{ClassicalKin_{Full}^{STL}} as follows:
\[\sy{ClassicalKin_{Full}^{STL}}\de \sy{ClassicalKin_{Full}} \cup \{\ax{AxNoFTL},\ax{AxThExp^{STL}}\}\setminus \{\ax{AxThExp_+}\}.\]

To make special relativity equivalent to classical kinematics, we have to introduce a class of observers which will play the role of Ether, for which we have to extend our language with a unary relation $E$ to
\begin{equation*}
\{\, \B,\Q\,;  IOb, \Ph, E, +,\cdot,\le,\W\,\}.
\end{equation*}
We call the set defined by $E$ the \textit{primitive ether frame}, and its elements \textit{primitive ether observers}. They are \textit{primitive} in the sense that they are concepts which are solely introduced by an axiom  without definition:
\begin{description}
\item[\underline{\ax{AxPrimitiveEther}}]
There is a non-empty class of distinguished observers, stationary with respect to each other, which is closed under trivial transformations:
%\begin{figure}
%  \begin{center}
%    \scalebox{.8}{\includegraphics[scale=0.15]{tshirt.jpg}}
%  \end{center}
%  \caption{\label{fig-tshirt} ``Just pick any relativistic observer and give him a t-shirt. Then we call him, and all observers who are at rest relative to him, \textit{primitive ether observer}.'' This can be done because all relativistic observers agree on the speed of light.}
%\end{figure}
\begin{equation*}
  \bexists {e}{IOb} \BIGG[ \bforall {k}{B} \BIGG( 
    \fblock{ IOb(k) \\
      \bexists {T}  {\Triv} 
      \w_{e k}^{SR}=T } \leftrightarrow E(k) \BIGG) \BIGG].
\end{equation*}
\end{description}
Let us now introduce our axiom system \sy{SpecRel_{Full}^{e}} as follows:
\[\sy{SpecRel_{Full}^{e}}\de\sy{SpecRel_{Full}}\cup\{\ax{AxPrimitiveEther}\}.\]

Let the translation $Tr_+$ from \sy{SpecRel_{Full}^{e}} to \sy{ClassicalKin_{Full}^{STL}} be the translation that restricted to \sy{SpecRel_{Full}} is $Tr$:
\begin{equation*}
  Tr_{+}|_{\mathsf{SpecRel_{Full}}} \de Tr ,
\end{equation*}
while the translation of the primitive ether is the classical ether:
\begin{equation*}
  Tr_{+} \big(E(x) \big) \defeq Ether(x).
\end{equation*}
Let the inverse translation $Tr_+'$ from \sy{ClassicalKin_{Full}^{STL}} to \sy{SpecRel_{Full}^{e}} be the translation that translates the basic concepts as:
\[Tr_+' ( a+b=c ) \defeq (a+b=c),\ 
Tr_+' ( a \cdot b=c ) \defeq (a \cdot b=c),\ 
Tr_+' ( a < b) \defeq a < b,\]
\[Tr_+' \big(Ph^{CK}(p)\big)\defeq Ph^{SR}(p),\quad 
Tr_+' \big(IOb^{CK}(b)\big)\defeq IOb^{SR}(b),\]
\[Tr_+' \big( W^{CK}(k,b,\bar{x}) \big)
\defeq \bforall{e}{E} \big[W^{SR}\big(k,b,Rad_{\bar{v}_k(e)}(\bar{x})\big)\big].\]

By Theorem~\ref{thm-defeq}, the two slight modifications above are enough to make classical kinematics and special relativity definitionally equivalent.

\begin{lem}\label{lem-Tr_+'(Ether)} Assume \sy{SpecRel_{Full}^{e}}. Then $Tr_+'[Ether(b)] \equiv E(b)$.
\end{lem}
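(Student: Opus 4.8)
The plan is to unfold the left-hand side $Tr_+'[Ether(b)]$ into the language of \sy{SpecRel_{Full}^{e}} and to show that it holds exactly when $b$ is at rest relative to the primitive ether frame, which by \ax{AxPrimitiveEther} is the same as $E(b)$. Since $Tr_+'$ fixes mathematical expressions, sends $IOb^{CK}$ to $IOb^{SR}$ and $Ph^{CK}$ to $Ph^{SR}$, and rewrites $W^{CK}(b,p,\vx)$ as $\bforall{e}{E}[W^{SR}(b,p,Rad_{\bar v_b(e)}(\vx))]$, the translated formula asserts: $b$ is a relativistic inertial observer, and there is a $c>0$ such that, for all $\vx,\vy$, there is a relativistic photon $p$ with $Rad_{\bar v_b(e)}(\vx),Rad_{\bar v_b(e)}(\vy)\in wl^{SR}_b(p)$ for every $e\in E$ if and only if $space(\vx,\vy)=c\cdot time(\vx,\vy)$. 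In words, after radarizing relative to the primitive ether the photons of $b$ must move isotropically with some speed $c$.

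For the geometric core I would use \ax{AxPh_c^{SR}}, by which the relativistic photons of any inertial observer lie on right light cones, together with Item~\ref{rad-cone} of Lemma~\ref{thm-lemma}, which says $Rad_{\bar v}$ maps light cones moving with velocity $\bar v$ onto right light cones. Hence $Rad_{\bar v_b(e)}(\vx),Rad_{\bar v_b(e)}(\vy)$ lie on a common photon worldline of $b$ exactly when $\vx,\vy$ lie on a light cone moving with velocity $\bar v_b(e)$. So, for each fixed $e$, the translated photon relation describes a light cone moving with velocity $\bar v_b(e)$, and $Tr_+'[Ether(b)]$ demands that the set of admissible pairs be an isotropic right light cone $\{space=c\cdot time\}$.

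For the direction $E(b)\Rightarrow Tr_+'[Ether(b)]$: if $E(b)$ then $b$ is stationary with respect to every $e\in E$, so $\bar v_b(e)=\bar 0$, and by Item~\ref{rad-0} of Lemma~\ref{thm-lemma} each $Rad_{\bar v_b(e)}$ is the identity. The translated photon relation collapses to the genuine relativistic photon relation of $b$, which is isotropic with speed $\mathfrak c$ by \ax{AxPh_c^{SR}}; taking $c=\mathfrak c$ verifies $Tr_+'[Ether(b)]$. For the converse I argue by contraposition: if $b\notin E$ then $b$ is not at rest relative to the ether, so $\bar v_b(e_0)\neq\bar 0$ for some $e_0\in E$. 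Restricting the conjunction $\bforall{e}{E}$ to this single $e_0$ shows that every pair satisfying the translated photon relation lies on the light cone moving with velocity $\bar v_b(e_0)$. If $Tr_+'[Ether(b)]$ held, the isotropic cone $\{space=c\cdot time\}$ would therefore have to be contained in this tilted moving cone, which is impossible for a nonzero velocity; hence $Tr_+'[Ether(b)]$ forces $b$ to be at rest relative to the ether.

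It remains to identify ``at rest relative to the ether'' with $E(b)$: $b$ has zero velocity relative to $e\in E$ iff $w^{SR}_{eb}$ maps vertical lines to vertical lines, and since $w^{SR}_{eb}$ is a Poincar\'e transformation by Theorem~\ref{thm-poi}, this holds iff it is a trivial transformation, which by \ax{AxPrimitiveEther} is exactly $E(b)$; combined with the conjunct $IOb^{SR}(b)$ present on both sides, this yields the equivalence. I expect the main obstacle to be the $\bforall{e}{E}$ quantifier in the translation of $W^{CK}$, because $\bar v_b(e)$ genuinely depends on the choice of primitive ether observer (rotated ether observers record rotated velocities). The device that sidesteps a full analysis of this dependence is to use $\bar v_b(e)=\bar 0$ \emph{uniformly} in the easy direction and to project the conjunction onto a \emph{single} ether observer $e_0$ in the hard direction, so that the remaining $e$-dependence never needs to be resolved.
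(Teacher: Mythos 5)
Your proposal is correct and follows essentially the same route as the paper's proof: unfold $Tr_+'$ through the definitions of $Ether$, worldlines and $Tr_+'(W)$, observe that the resulting condition forces $Rad_{\bar v_b(e)}$ to carry right light cones to right light cones, conclude via Lemma~\ref{thm-lemma} and bijectivity that $\bar v_b(e)=\bar 0$, and identify stationarity relative to the primitive ether with $E(b)$ via \ax{AxPrimitiveEther}. Your version merely spells out the two directions (identity of $Rad_{\bar 0}$ for the easy one, projection onto a single $e_0$ and the cone-containment contradiction for the hard one) that the paper compresses into one sentence.
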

\begin{proof}
By the definition of $Ether$, $Tr_+'[\Ether(b)]$ is equivalent to 
\[ Tr_+' \begin{pmatrix} IOb(b)\AND \bexists {c} {\Q} \Big[c>0\AND \Bforall {\vx,\vy}{\Q^4} \\ \Big(\bexists {p} {\Ph} \big(\big[\vx,\vy\in\wl_{b}(p) \big] \leftrightarrow  space(\vx,\vy)=c\cdot time(\vx,\vy)\big)\Big)\Big]
 \end{pmatrix}, \]
which is by the defintion of $Tr_+'$ equivalent to
\begin{multline*}
IOb(b) \AND \bexists {c} {\Q} \Big[c>0\AND \Bforall {\vx,\vy}{\Q^4} \\ 
 \Big(\bexists {p} {\Ph} \big(Tr_+'\big[[\vx,\vy\in \wl_{b}(p)] \big] \leftrightarrow  space(\vx,\vy)=c\cdot time(\vx,\vy)\big)\Big)\Big],
\end{multline*}
which by the definition of worldlines is equivalent to 
\begin{multline*}
IOb(b) \AND \bexists {c} {\Q} \Big[c>0\AND \Bforall {\vx,\vy}{\Q^4} \\ 
  \Big(\bexists {p} {\Ph} \big(Tr_+'\big[\W(b,p,\vx) \AND \W(b,p,\vy) \big] \leftrightarrow  space(\vx,\vy)=c\cdot time(\vx,\vy)\big)\Big)\Big],
\end{multline*}
which by the definition of $Tr_+'(W)$ is equivalent to
\begin{multline*}
IOb(b) \AND \bexists {c} {\Q} \Big[c>0\AND \Bforall {\vx,\vy}{\Q^4} \\
  \Big(\bexists {p} {\Ph} \big(\big[\bforall{e}{E} \big(W^{SR}[b,p,Rad_{\bar{v}_b(e)}(\bar{x})]\big)\\ \AND \bforall{e}{E} \big(W^{SR}[b,p,Rad_{\bar{v}_b(e)}(\bar{y})]\big) \big]  \leftrightarrow  space(\vx,\vy)=c\cdot time(\vx,\vy)\big)\Big)\Big],
\end{multline*}
which by the definition of worldlines is equivalent to
\begin{multline*}
IOb(b) \AND \bexists {c} {\Q} \Big[c>0\AND \Bforall {\vx,\vy}{\Q^4}  
 \Big(\bexists {p} {\Ph} \big(\bforall{e}{E}\\ [Rad_{\bar{v}_b(e)}(\bar{x}),Rad_{\bar{v}_b(e)}(\bar{y})\in \wl_{b}(p)]  \leftrightarrow  space(\vx,\vy)=c\cdot time(\vx,\vy)\big)\Big)\Big].
\end{multline*}
This means that $b$ is an inertial observer which, after transforming its worldview by $Rad_{\bar{v}(e)}$, sees the light signals moving along right light cones. Since the light cones are already right ones in \sy{SpecRel_{Full}^{e}} and $Rad_{\bar{v}_b(e)}$ would tilt light cones if $\bar{v}_b(e) \neq (0,0,0)$ because of Item \ref{rad-0} of Lemma \ref{thm-lemma} and because $Rad_{\bar{v}_b(e)}$ is a bijection, $b$ must be stationary relatively to primitive ether $e$, and hence the above is equivalent to $E(b)$. 
\end{proof}

\begin{cor}\label{cor-Tr_+'(c_e)} Assume \sy{SpecRel_{Full}^{e}}. Then $Tr_+'(\mathfrak{c_e}) \equiv \mathfrak{c}$.
\end{cor}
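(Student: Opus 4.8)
The plan is to mirror the proof of Lemma~\ref{lemma-trc}, exchanging the roles of $\mathfrak{c}$ and $\mathfrak{c_e}$ and working inside \sy{SpecRel_{Full}^{e}} rather than \sy{ClassicalKin_{Full}}. Recall that $\mathfrak{c_e}$ is the unique quantity satisfying $\bforall{e}{\Ether}\bforall{p}{\Ph}[\speed_e(p)=\mathfrak{c_e}]$, so $Tr_+'(\mathfrak{c_e})$ is by definition the $Tr_+'$-image of this defining formula. First I would apply $Tr_+'$ to it term by term. By Lemma~\ref{lem-Tr_+'(Ether)} the bounded quantifier $\bforall{e}{\Ether}$ becomes $\bforall{e}{E}$, i.e.\ a quantifier ranging over the primitive ether observers, while $\Ph(p)$ is preserved.

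The crucial step is to unfold $Tr_+'(\speed_e(p))$ through the translation of worldlines. Since $\speed_e(p)$ is defined from $\wl_e(p)$, and $\bar x\in\wl_e(p)$ abbreviates $\W(e,p,\bar x)$, the translation of the speed carries the radarization maps $Rad_{\bar v_e(e')}$ with $e'$ ranging over $E$. Here I would invoke \ax{AxPrimitiveEther}: the primitive ether observers are pairwise related by trivial transformations and are therefore stationary with respect to one another, so $\bar v_e(e')=\bar 0$ for all $e,e'\in E$ (using also \ax{AxSelf}). By Item~\ref{rad-0} of Lemma~\ref{thm-lemma}, $Rad_{\bar 0}$ is the identity map, so the inner universally quantified body no longer depends on $e'$ and the radarization collapses. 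Consequently $Tr_+'(\W(e,p,\bar x))$ reduces to $W^{SR}(e,p,\bar x)$, hence $Tr_+'(\wl_e(p))$ is the plain relativistic worldline and $Tr_+'(\speed_e(p))$ is the plain relativistic speed $\speed_e(p)$ computed in \sy{SpecRel_{Full}^{e}}.

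With the radarization removed, the translated defining formula simply asserts that every light signal has the same speed according to every primitive ether observer. But \ax{AxPh_c} already guarantees that the speed of light is observer-independent and equal to $\mathfrak{c}$ for every inertial observer, and primitive ether observers are inertial observers; so $\speed_e(p)=\mathfrak{c}$ for all $e\in E$ and $p\in\Ph$. Therefore the unique quantity satisfying the translated defining formula is $\mathfrak{c}$, which yields $Tr_+'(\mathfrak{c_e})\equiv\mathfrak{c}$.

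I expect the main obstacle to be the bookkeeping in the second step: one must push $Tr_+'$ through the defined notion $\speed$ (via $\wl$ and $\W$) carefully enough to see that the maps $Rad_{\bar v_e(e')}$ genuinely appear with argument $\bar v_e(e')$, and then justify $\bar v_e(e')=\bar 0$ from \ax{AxPrimitiveEther}. Once the radarization on the ether frame is recognized as the identity, the conclusion is immediate from \ax{AxPh_c}.
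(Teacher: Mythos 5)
Your proposal is correct and follows essentially the same route as the paper: translate the defining formula of $\mathfrak{c_e}$, use Lemma~\ref{lem-Tr_+'(Ether)} to replace $\Ether$ by $E$, and conclude via the observer-independence of $\mathfrak{c}$ from \ax{AxPh_c}. The paper's two-sentence proof leaves implicit the step you spell out — that the radarization indices $\bar v_e(e')$ vanish because primitive ether observers are mutually stationary by \ax{AxPrimitiveEther}, so $Rad_{\bar 0}$ is the identity and the translated speed is the plain relativistic one — and making that explicit is a faithful elaboration rather than a different argument.
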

\begin{proof}
Since the ether is being translated into the primitive ether, the speed of light $\mathfrak{c_e}$ in the ether frame is translated to the speed of light of the primitive ether. In \ax{SpecRel^{e}_{Full}} the speed of light $\mathfrak{c}$ is the same for all observers.
\end{proof}

\begin{thm}\label{thm-tr+}
$Tr_+$ is an interpretation of \sy{SpecRel^e_{Full}} in \sy{ClassicalKin^{STL}_{Full}}, i.e.,
\begin{equation*}
\sy{ClassicalKin^{STL}_{Full}}\vdash Tr_+(\varphi)\enskip\text{ if }\enskip \sy{SpecRel^e_{Full}}\vdash \varphi.
\end{equation*}
\end{thm}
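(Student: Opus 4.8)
The plan is to reduce the statement, exactly as in the proof of Theorem~\ref{thm-tr}, to checking that the $Tr_+$-translation of each axiom of \sy{SpecRel^e_{Full}} is a theorem of \sy{ClassicalKin^{STL}_{Full}}; since $Tr_+$ commutes with the logical connectives and quantifiers, this suffices for every consequence. The axioms split into two groups: those inherited from \sy{SpecRel_{Full}}, on which $Tr_+$ coincides with $Tr$, and the single new axiom \ax{AxPrimitiveEther}.

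For the inherited axioms I would re-run the proof of Theorem~\ref{thm-tr}, verifying that it survives replacing \sy{ClassicalKin_{Full}} by \sy{ClassicalKin^{STL}_{Full}}. Every axiom invoked there (\AxEField, \ax{AxEv}, \ax{AxSelf}, \ax{AxSymD}, \AxLine, \ax{AxTriv}, \ax{AxNoAcc}, \ax{AxAbsTime}, \ax{AxEther}) is still present, and the auxiliary results Theorem~\ref{thm-gal}, Corollaries~\ref{thm-lightspeed}--\ref{cor-v}, and Lemmas~\ref{thm-lemma}--\ref{triv-poi} nowhere use \ax{AxThExp_+}, so they remain available. The sole use of \ax{AxThExp_+} was in deriving $Tr(\ax{AxThExp^{SR}})$, where only the existence of inertial observers on timelike lines is needed; these are provided by \ax{AxThExp^{STL}}, exactly as flagged in the proof of Theorem~\ref{thm-tr}. (Moreover, its treatment of $Tr(\ax{AxNoAcc})$ already relied on \ax{AxNoFTL}, which is an axiom of \sy{ClassicalKin^{STL}_{Full}}.)

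The genuinely new content is $\sy{ClassicalKin^{STL}_{Full}}\vdash Tr_+(\ax{AxPrimitiveEther})$. I would witness the leading existential quantifier by an ether observer $e$, which exists by \ax{AxEther} and has ether-speed $0$, hence satisfies $Tr_+(IOb^{SR}(e))$. Because $\bar{v}_e(e)=\bar 0$, Item~\ref{rad-0} of Lemma~\ref{thm-lemma} makes $Rad_{\bar{v}_e(e)}$ the identity, so by Lemma~\ref{lemma-cannon} the translated worldview transformation simplifies to the Poincar\'e transformation $Tr_+(\w^{SR}_{ek})=Rad_{\bar{v}_k(e)}\circ\w^{CK}_{ek}$. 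Under \ax{AxNoFTL} every inertial observer is slower-than-light, so $Tr_+(IOb^{SR}(k))$ reduces to $IOb^{CK}(k)$. It then remains to prove, for every body $k$,
\[
\big(IOb^{CK}(k)\AND Tr_+(\w^{SR}_{ek})\text{ is trivial}\big)\longleftrightarrow Ether(k).
\]

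For $\Leftarrow$, if $Ether(k)$ then $k$ is at rest relative to $e$ by Corollary~\ref{thm-lightspeed}, so $\bar{v}_k(e)=\bar 0$, $Rad_{\bar{v}_k(e)}$ is the identity, and $\w^{CK}_{ek}$ is a trivial Galilean transformation (Theorem~\ref{thm-gal}); hence $Tr_+(\w^{SR}_{ek})=\w^{CK}_{ek}$ is trivial. For $\Rightarrow$, recall that a trivial transformation carries a line of direction $(1,\bar{v})$ to one of direction $(1,A\bar{v})$ with $A$ an invertible spatial isometry, so it produces a vertical line only if $\bar v=\bar 0$. Now $\w^{CK}_{ek}$ sends the worldline $\wl_e(k)$, of direction $(1,\bar{v}_k(e))$, to the time axis by \ax{AxSelf}, and $Rad_{\bar{v}_k(e)}$ fixes the time axis by Item~\ref{rad-time} of Lemma~\ref{thm-lemma}; thus $Tr_+(\w^{SR}_{ek})$ maps the direction $(1,\bar{v}_k(e))$ to a vertical line, forcing $\bar{v}_k(e)=\bar 0$. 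Finally, a zero-velocity worldview transformation from the ether observer $e$ is trivial, and trivial transformations preserve spatial distances and time differences, so $k$ too sees light isotropically at speed $\mathfrak{c_e}$, i.e.\ $Ether(k)$. I expect this last $\Rightarrow$ direction --- reading off $\bar{v}_k(e)=\bar 0$ from triviality and then upgrading ``at rest relative to the ether'' to ``is an ether observer'' --- to be the main obstacle; the remainder is bookkeeping that reuses Theorem~\ref{thm-tr} together with the radarization geometry of Lemmas~\ref{thm-lemma} and \ref{lemma-cannon}.
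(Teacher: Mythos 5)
Your proposal is correct and follows essentially the same route as the paper: reuse the proof of Theorem~\ref{thm-tr} for the inherited axioms (with \ax{AxThExp^{STL}} supplying the observers inside the light cones), and prove $Tr_+(\ax{AxPrimitiveEther})$ by witnessing the existential with an ether observer and establishing that a slower-than-light inertial observer $k$ is related to $e$ by a trivial translated worldview transformation iff $Ether(k)$, your forward direction merely spelling out more explicitly why triviality forces the relative velocity to vanish. One harmless slip: the worldline $\wl_e(k)$ has direction $(1,\bar{v}_e(k))$ rather than $(1,\bar{v}_k(e))$, but since these vanish simultaneously the conclusion $\bar{v}_k(e)=\bar 0$ is unaffected.
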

\begin{proof}

The proof of this theorem is basically the same as that of Theorem \ref{thm-tr}. The only differences are the proof of $Tr_+(\ax{AxThExp})$ as $\ax{AxThExp_+}$ is replaced by $\ax{AxThExp^{STL}}$, and that we have to prove $Tr_+(\ax{AxPrimitiveEther})$.

\begin{itemize}[leftmargin=*]
\item The proof for $Tr(\ax{AxThExp})$ goes through since we only use that there are observers on every straight line inside of the light cones, which is covered by $\ax{AxThExp^{STL}}$.
\item $\sy{ClassicalKin^{STL}_{Full}} \vdash Tr_+(\ax{AxPrimitiveEther})$.  $Tr_+(\ax{AxPrimitiveEther})$ is
\begin{multline*}
Tr_+\Big[\bexists {e}{IOb} \big[ \bforall {k}{B} \big( [ IOb(k) \AND 
\bexists {T}  {\Triv} 
\w_{e k}^{SR}=T ] \leftrightarrow E(k) \big) \big] \Big],
\end{multline*}
which by the previously established translations of $IOb$, $w$ and $Ether$, and by %, is equivalent to
%\begin{multline*}
%\bexists {e}{IOb} \Big( \bforall {e'}{Ether}[speed_{e'}(e) < \mathfrak{c_e}] \AND \\
%\Big[ \bforall {k}{B} \big( \big[ IOb(k) \AND \bforall {e'}{Ether}[speed_{e'}(k) < \mathfrak{c_e}] \AND \\
%\bexists {T}  {\Triv} \bforall {e'}{Ether} \Bforall {\vx,\vy}{Q^4} \\
%\w_{ek}^{CK}(Rad^{-1}_{\bar{v}_e(e')}(\bar x), Rad^{-1}_{\bar{v}_k(e')}(\bar y))=T(\vx,\vy) \big] \leftrightarrow Ether(k) \big) \Big] \Big),
%\end{multline*}
%which, 
using the result of the Appendix is equivalent to %, can be simplified to
\begin{multline*}
\bexists {e}{IOb} \bforall {e'}{Ether} \Big[(speed_{e'}(e) < \mathfrak{c_e}) \AND \\
\Big( \bforall {k}{B} \big[ \big( IOb(k) \AND [speed_{e'}(k) < \mathfrak{c_e}] \AND %\\
\bexists {T}  {\Triv} \Bforall {\vx,\vy}{Q^4} \\
\w_{ek}^{CK}(Rad^{-1}_{\bar{v}_e(e')}(\bar x), Rad^{-1}_{\bar{v}_k(e')}(\bar y))=T(\vx,\vy) \big) \leftrightarrow Ether(k) \big] \Big) \Big].
\end{multline*}
The $\w_{ek}^{CK}(Rad^{-1}_{\bar{v}_e(e')}(\bar x), Rad^{-1}_{\bar{v}_k(e')}(\bar y))=T(\vx,\vy)$ part in the above translation can be written as $Rad_{\bar{v}_k(e')} \circ \w_{ek}^{CK} \circ Rad^{-1}_{\bar{v}_e(e')} = T$, from which, by Item 5 of Lemma \ref{thm-lemma}, follows that there is a trivial transformation $T'$ such that $\w_{ek}^{CK} = Rad^{-1}_{\bar{v}_k(e')} \circ T \circ Rad_{\bar{v}_e(e')} = T'$. So $Tr_+(\ax{AxPrimitiveEther})$ is equivalent to
\begin{multline*}
\bexists {e}{IOb} \bforall {e'}{Ether} \Big[[speed_{e'}(e) < \mathfrak{c_e}] \AND \\
\big( \bforall {k}{B} \big[ \big( IOb(k) \AND [speed_{e'}(k) < \mathfrak{c_e}] \AND \\
\bexists {T'}  {\Triv} \Bforall {\vx,\vy}{Q^4} 
\w_{ek}^{CK}(\bar x,\bar y)=T'(\vx,\vy) \big) \leftrightarrow Ether(k) \big] \big) \Big].
\end{multline*}
To prove this from \sy{ClassicalKin^{STL}_{Full}}, let $e$ be an ether observer. Then $IOb(e)$ holds and $speed_{e'}(e) = 0 < \mathfrak{c_e}$ for every $e'\in Ether$. Therefore, we only have to prove that $k$ is an ether observer if and only if it is a slower-than-light inertial observer such that $w^{CK}_{ek}$ is some trivial transformation.

If $k$ is an ether observer, then $k$ is a slower-than-light inertial observer and $w^{CK}_{ek}$ is indeed a trivial transformation because it is a Galilean transformation between two ether observers which are stationary relative to each other.
The other direction of the proof is that if $k$ is an inertial observer which transforms to an ether observer by a trivial transformation, then $k$ is itself an ether observer  because then $k$ also sees the light cones right.
\qedhere
\end{itemize}
\end{proof}

\begin{lem}\label{lemma-mosquito}   
Assuming \sy{{SpecRel}^{e}_{Full}}, if $e$ and $e'$ are primitive ether observers and $k$ and $h$ are inertial observers, then
\begin{equation*}
Rad^{-1}_{\bar{v}_{h}(e)}\circ w^{SR}_{kh}\circ Rad_{\bar{v}_{k}(e)}= Rad^{-1}_{\bar{v}_{h}(e')}\circ w^{SR}_{kh}\circ Rad_{\bar{v}_{k}(e')}
\end{equation*}
and it is a Galilean transformation.

\end{lem}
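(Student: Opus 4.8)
The plan is to prove this as the exact mirror image of Lemma~\ref{lemma-cannon} (compare Figure~\ref{fig-cannon}), with the roles of \sy{ClassicalKin^{STL}_{Full}} and \sy{SpecRel^e_{Full}} interchanged, Galilean and Poincar\'e transformations swapped, and $Rad$ and $Rad^{-1}$ swapped. Write $\bar v = \bar v_k(e)$ and $\bar u = \bar v_h(e)$. First I would settle the equality of the two compositions. By \ax{AxPrimitiveEther} the primitive ether observers $e$ and $e'$ are both related to the distinguished observer by trivial transformations, hence to each other by a trivial transformation, and are therefore at rest relative to each other; two bodies at rest relative to each other have parallel worldlines, and hence the same velocity, according to any inertial observer. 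This yields $\bar v_k(e)=\bar v_k(e')$ and $\bar v_h(e)=\bar v_h(e')$ (the analogue of Corollary~\ref{cor-v} inside \sy{SpecRel^e_{Full}}), so the two sides of the claimed identity are literally the same map and the equality needs no further argument.

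For the second assertion I would expand the composition using the definitions of $Rad_{\bar v}$ and $Rad^{-1}_{\bar u}$, obtaining
\begin{multline*}
Rad^{-1}_{\bar u} \circ w^{SR}_{kh} \circ Rad_{\bar v} = \\
R^{-1}_{\bar u} \circ G^{-1}_u \circ E^{-1}_u \circ S^{-1}_u \circ R_{\bar u} \circ w^{SR}_{kh} \circ R^{-1}_{\bar v} \circ S_v \circ E_v \circ G_v \circ R_{\bar v},
\end{multline*}
and peel the factors off from the inside out, as in Lemma~\ref{lemma-cannon} but with the Lorentz block and the Galilean boost exchanged in order. By Theorem~\ref{thm-poi}, $w^{SR}_{kh}$ is a Poincar\'e transformation, so $T_1 = R_{\bar u} \circ w^{SR}_{kh} \circ R^{-1}_{\bar v}$ is Poincar\'e (a Poincar\'e map conjugated by rotations). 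Then $T_2 = (E^{-1}_u \circ S^{-1}_u) \circ T_1 \circ (S_v \circ E_v) = L_u^{-1} \circ T_1 \circ L_v$ is again Poincar\'e, since $L_v = S_v \circ E_v$ and $L_u^{-1} = E^{-1}_u \circ S^{-1}_u$ are Lorentz transformations. Finally $T_3 = G^{-1}_u \circ T_2 \circ G_v$ and then $R^{-1}_{\bar u} \circ T_3 \circ R_{\bar v}$ are Galilean \emph{provided $T_2$ is a trivial transformation}, because the Galilean boosts $G_u, G_v$ and the rotations $R_{\bar u}, R_{\bar v}$ are themselves (trivial) Galilean transformations and Galilean transformations are closed under composition.

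The crux, and the step I expect to be the main obstacle, is to show that $T_2 = L_u^{-1} \circ R_{\bar u} \circ w^{SR}_{kh} \circ R^{-1}_{\bar v} \circ L_v$ is trivial, the exact counterpart of the triviality of $T_2$ in Lemma~\ref{lemma-cannon}. I would argue this by tracking the time axis. Since $w^{SR}_{kh}$ maps worldlines of $e$ to worldlines of $e$, its linear part sends the direction $(1,\bar v)$ of $e$'s worldline according to $k$ to a scalar multiple of the direction $(1,\bar u)$ of $e$'s worldline according to $h$ (both timelike by Corollary~\ref{cor-max}). The Lorentz boost $L_v$ carries the time axis to the line of velocity $-v$ in the $tx$-plane, which $R^{-1}_{\bar v}$ rotates to $(1,\bar v)$; symmetrically, $R_{\bar u}$ followed by $L_u^{-1}$ carries $(1,\bar u)$ back to the time axis. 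Hence the linear part of $T_2$ fixes the time-axis direction, and since a Lorentz transformation preserves the Minkowski form the scalar is forced to be $1$, so $(1,0,0,0)$ is fixed exactly. A Poincar\'e transformation whose linear part fixes the time axis is a spatial isometry together with a time translation, i.e.\ a trivial transformation; equivalently, $T_2$ is the transformation between two coordinatizations in which the ether is at rest, which are at rest relative to each other. This establishes that $T_2$ is trivial, completing the peeling and showing that $Rad^{-1}_{\bar v_h(e)} \circ w^{SR}_{kh} \circ Rad_{\bar v_k(e)}$ is a Galilean transformation.
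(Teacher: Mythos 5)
Your proposal is correct and follows essentially the same route as the paper, whose proof of this lemma is just the remark that it is analogous to Lemma~\ref{lemma-cannon} (with the full details deferred to the dissertation): you carry out exactly that mirror-image peeling, with the rotations and Lorentz blocks absorbed first and the key triviality argument correctly relocated to the junction where the Galilean boosts $G_u^{-1},G_v$ are applied. Your explicit tracking of the time-axis direction to establish that $T_2$ is trivial is a slightly more detailed justification than the paper's one-line ``transformation between two ether observers at rest relative to each other,'' but it is the same idea.
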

\begin{figure}
  \begin{center}
    \scalebox{.8}{\begin{tikzpicture}[scale=0.3, every node/.style={scale=0.8}]
\pgfmathsetmacro\size{4}
\pgfmathsetmacro\v{0.5}
\pgfmathsetmacro\r{0.13}
\pgfmathsetmacro\scaling{1.1/(1-\v*\v)}

\tikzstyle{T0}=[cm={1,0,-\v,1,(-9,0)}]
\tikzstyle{T5}=[cm={1,0,-\v,1,(2,0)}]
\tikzstyle{T1}=[cm={1,0,0,1,(9,0)}]
\tikzstyle{T2}=[cm={1,0,0,1,(16,0)}]
\tikzstyle{T3}=[cm={0.8,0,0,0.8,(23,0)}]
\tikzstyle{T4}=[cm={1,0,0,1,(30,0)}]
\tikzstyle{T10}=[cm={1,0,-\v,1,(-9,-8)}]
\tikzstyle{T15}=[cm={1,0,-\v,1,(2,-8)}]
\tikzstyle{T11}=[cm={1,0,0,1,(9,-8)}]
\tikzstyle{T12}=[cm={1,0,0,1,(16,-8)}]
\tikzstyle{T13}=[cm={0.8,0,0,0.8,(23,-8)}]
\tikzstyle{T14}=[cm={1,0,0,1,(30,-8)}]

\newcommand{\cone}[3]{\draw[thick, red,shift={(#2,#3)}] (0,#1) ellipse (#1 and 0.1*#1) (-#1,#1)--(0,0)--(#1,#1);}

\newcommand{\conemirror}[3]{\draw[thick, red,shift={(#2,#3)}] (3,#1) ellipse (#1 and 0.1*#1) (-#1+3,#1)--(0,0)--(#1+3,#1);} %quick  hack - does not really mirror, will not work with other values

\pgfmathsetmacro\Xx{0.5*\size}
\pgfmathsetmacro\Xy{-.3*\size}
% \coordinate (Y) at (-.7*\size,-.3*\size);
\pgfmathsetmacro\Yx{-.7*\size}
\pgfmathsetmacro\Yy{-.3*\size}

\begin{scope}[T0]
\draw[ultra thick, blue] (0,0) to (\v*\size,\size) ;
\draw (4,\size)  -- (0,0)  ;
\conemirror{0.75*\size}{0}{0}
\draw[ultra thick, blue] (0.5*\v*\size,0.5*\size) to (\v*\size,\size);
\end{scope}

\draw[shift={(-8,-8)}] (\Xx-1,\Xy) -- (-1,0) -- (\Yx-1,\Yy );

%\draw[->,ultra thick,>=latex] (-4,1)  to [out=30,in=150] node[above]{$R_{\bar{v}}$} (-2,1);
\draw[->,ultra thick,>=latex] (-2.7,1)  to [out=120,in=30] node[above]{$R^{-1}_{\bar{u}}$} (-4.7,1);

\begin{scope}[T5]
\draw[ultra thick, blue] (0,0) to (\v*\size,\size)  ;
\draw (0,\size)  -- (0,0) -- (\size,0) (0,0) -- (-.5*\size,-.3*\size);
\cone{0.75*\size}{0}{0} 
%\draw[red,thick,rotate=45] (0,0) rectangle (1.06*\size,\v*\size);
%\draw[red,thick] (-2.5,2.9) to (0,0) to (2.5,2.9);
%draw[red,thick] (-2.2,2) to (0,4) to (2.2,2);
%\draw[red,thick,shift={(0,2.9)}] (0,0) ellipse (2.5 and .2);
\draw[ultra thick, blue] (0.5*\v*\size,0.5*\size) to (\v*\size,\size);

\end{scope}

%\draw[->,ultra thick,>=latex] (4,1)  to [out=30,in=150] node[above]{$G_{\bar{v}}$} (6,1);
\draw[->,ultra thick,>=latex] (6,1)  to [out=120,in=30] node[above]{$G^{-1}_u$} (4,1);

\begin{scope}[T1]
\draw[ultra thick, blue] (0,0) to (\v*\size,\size)  ;
\draw (0,\size)  -- (0,0) -- (\size,0) (0,0) -- (-.5*\size,-.3*\size);
\draw[thick, red] (\size*0.75,\size*0.75) to (0.5*\size,\size) ;
%\draw[red,thick,rotate=45] (0,0) rectangle (0.7*\size,0.7*\size);
%\draw[red,thick,shift={(1,2)}] (0,0) ellipse (2.2 and .1);
%\draw[red,thick,shift={(1,2)},rotate=27,] (0,0) ellipse (2.2 and .1);
\cone{0.75*\size}{0}{0} 
%\cone{-0.75*\size}{2}{4}
\draw[ultra thick, blue] (0.5*\v*\size,0.5*\size) to (\v*\size,\size);

\end{scope}

%\draw[->,ultra thick,>=latex] (12,1)  to   [out=30,in=150] node[above]{$E_v$} (14,1);
\draw[->,ultra thick,>=latex] (14,1)  to [out=120,in=30] node[above]{$E^{-1}_{u}$} (12,1);

\begin{scope}[T2]
\draw[ultra thick, blue] (0,0) to (0,\size) ;
\cone{0.5*\size}{0}{0} 
%\cone{-0.5*\size}{0}{4}
\draw (-\v*\size,\size)  -- (0,0) -- (\size,0) (0,0) -- (-.5*\size,-.3*\size);
\draw[thick, red] (\size/2,\size/2) to (0,\size) ;
\draw[ultra thick, blue] (0,0.5*\size) to (0,\size);
\end{scope}

%\draw[->,ultra thick,>=latex] (18,1)  to [out=30,in=150] node[above]{$S_{v}$} (20,1);
\draw[->,ultra thick,>=latex] (20,1)  to [out=120,in=30] node[above]{$S^{-1}_{u}$} (18,1);

\begin{scope}[T3]
\draw[ultra thick, blue] (0,0) to (0,\size*1.25) ;
\cone{0.5*\size}{0}{0} 
%\cone{-0.5*\size}{0}{4}
\draw (-\v*\size*1.25,\size*1.25)  -- (0,0) -- (\size,0) (0,0) -- (-.5*\size,-.3*\size);
\draw[ultra thick, blue] (0,0.5*\size) to (0,\size);
\end{scope}

%\draw[->,ultra thick,>=latex] (26,1)  to [out=30,in=150] node[above]{$R^{-1}_{\bar{v}}$} (28,1);
\draw[->,ultra thick,>=latex] (28,1)  to [out=120,in=30] node[above]{$R_{\bar u}$} (26,1);

\begin{scope}[T4]
\draw[ultra thick, blue] (0,0) to (0,\size) ;
\cone{0.4*\size}{0}{0} 
%\draw (0,0.4*\size) to (10,0.4*\size);
% \cone{-0.5*\size}{0}{4}
\draw (-\v*\size+4,\size)  -- (0,0) 
 (\Xx,\Xy) -- (0,0) -- (\Yx,\Yy);
% -- (0.5*\size,-.3*\size) (0,0) -- (-.7*\size,-.3*\size);
\draw[ultra thick, blue] (0,0.4*\size) to (0,\size);
\end{scope}

\draw[<-,ultra thick,>=latex]  (-9,-1) -- node[right]  {$Tr_+'\big(w^{CK}_{kh}\big)$} (-9,-3);
\draw[<-,ultra thick,>=latex]  (30,-1) -- node[right]  {$w^{SR}_{kh}$} (30,-3);
\draw[<-,ultra thick,>=latex]  (2,-1) -- node[right]  {$T_3$} (2,-3);
\draw[<-,ultra thick,>=latex]  (9,-1) -- node[right]  {$T_2$} (9,-3);
\draw[<-,ultra thick,>=latex]  (23,-1) -- node[right]  {$T_1$} (23,-3);

\begin{scope}[T10]
\draw[ultra thick, blue] (0,0) to (\v*\size,\size) ;
\draw (4,\size)  -- (0,0)  ;
\conemirror{0.75*\size}{0}{0}
\draw[ultra thick, blue] (0.5*\v*\size,0.5*\size) to (\v*\size,\size);
\end{scope}

\draw[shift={(-8,0)}] (\Xx-1,\Xy) -- (-1,0) -- (\Yx-1,\Yy );

\draw[->,ultra thick,>=latex] (-4.5,-7)  to [out=30,in=150] node[above]{$R_{\bar{v}}$} (-2.5,-7);

\begin{scope}[T15]
\draw[ultra thick, blue] (0,0) to (\v*\size,\size)  ;
\draw (0,\size)  -- (0,0) -- (\size,0) (0,0) -- (-.5*\size,-.3*\size);
\cone{0.75*\size}{0}{0} 
%\draw[red,thick,rotate=45] (0,0) rectangle (1.06*\size,\v*\size);
%\draw[red,thick] (-2.5,2.9) to (0,0) to (2.5,2.9);
%draw[red,thick] (-2.2,2) to (0,4) to (2.2,2);
%\draw[red,thick,shift={(0,2.9)}] (0,0) ellipse (2.5 and .2);
\draw[ultra thick, blue] (0.5*\v*\size,0.5*\size) to (\v*\size,\size);

\end{scope}

\draw[->,ultra thick,>=latex] (4,-7)  to [out=30,in=150] node[above]{$G_v$} (6,-7);

\begin{scope}[T11]
\draw[ultra thick, blue] (0,0) to (\v*\size,\size)  ;
\draw (0,\size)  -- (0,0) -- (\size,0) (0,0) -- (-.5*\size,-.3*\size);
\draw[thick, red] (\size*0.75,\size*0.75) to (0.5*\size,\size) ;
%\draw[red,thick,rotate=45] (0,0) rectangle (0.7*\size,0.7*\size);
%\draw[red,thick,shift={(1,2)}] (0,0) ellipse (2.2 and .1);
%\draw[red,thick,shift={(1,2)},rotate=27,] (0,0) ellipse (2.2 and .1);
\cone{0.75*\size}{0}{0} 
%\cone{-0.75*\size}{2}{4}
\draw[ultra thick, blue] (0.5*\v*\size,0.5*\size) to (\v*\size,\size);

\end{scope}

\draw[->,ultra thick,>=latex] (12,-7)  to   [out=30,in=150] node[above]{$E_v$} (14,-7);

\begin{scope}[T12]
\draw[ultra thick, blue] (0,0) to (0,\size) ;
\cone{0.5*\size}{0}{0} 
%\cone{-0.5*\size}{0}{4}
\draw (-\v*\size,\size)  -- (0,0) -- (\size,0) (0,0) -- (-.5*\size,-.3*\size);
\draw[thick, red] (\size/2,\size/2) to (0,\size) ;
\draw[ultra thick, blue] (0,0.5*\size) to (0,\size);
\end{scope}

\draw[->,ultra thick,>=latex] (18,-7)  to [out=30,in=150] node[above]{$S_{v}$} (20,-7);

\begin{scope}[T13]
\draw[ultra thick, blue] (0,0) to (0,\size*1.25) ;
\cone{0.5*\size}{0}{0} 
%\cone{-0.5*\size}{0}{4}
\draw (-\v*\size*1.25,\size*1.25)  -- (0,0) -- (\size,0) (0,0) -- (-.5*\size,-.3*\size);
\draw[ultra thick, blue] (0,0.5*\size) to (0,\size);
\end{scope}

\draw[->,ultra thick,>=latex] (26,-7)  to [out=30,in=150] node[above]{$R^{-1}_{\bar{v}}$} (28,-7);

\begin{scope}[T14]
\draw[ultra thick, blue] (0,0) to (0,\size) ;
\cone{0.4*\size}{0}{0} 
%\draw (0,0.4*\size) to (10,0.4*\size);
% \cone{-0.5*\size}{0}{4}
\draw (-\v*\size+4,\size)  -- (0,0) 
 (\Xx,\Xy) -- (0,0) -- (\Yx,\Yy);
% -- (0.5*\size,-.3*\size) (0,0) -- (-.7*\size,-.3*\size);
\draw[ultra thick, blue] (0,0.4*\size) to (0,\size);
\end{scope}

%\draw[<-,ultra thick,>=latex]  (-9,-3) -- node[right]  {$Rad_{\bar{v}}$} (-9,-1);

%\draw[->,ultra thick,>=latex] (-3,-5)  to [out=120,in=30] node[above]{$R^{-1}_{\bar{v}}$} (-5,-5);
\end{tikzpicture}}
    \caption{\label{fig-mosquito}Lemma \ref{lemma-mosquito}: Read the figure starting in the bottom-left corner and follow the arrows to the right along the components of $Rad_{\bar v}$, up along Poincar{\' e} transformation $w^{SR}_{kh}$ and left along the components of $Rad^{-1}_{\bar u}$, which results in Galilean transformation $Tr'_+(w^{CK}_{kh})$.}
  \end{center}
\end{figure}
\begin{proof}The proof is analogous to that for Lemma \ref{lemma-cannon}, see \citep[p.50]{diss} for the full proof.
\end{proof}

\begin{lem}\label{triv-gal}
Assume  \sy{SpecRel^e_{Full}}.  Let $e$ be a primitive ether observer, and let $k$ and $h$ be inertial observers. Assume that $w^{SR}_{hk}$ is a trivial transformation which is translation by vector $\bar{z}$ after linear trivial transformation $T$. Then $Rad^{-1}_{\bar{v}_k(e)}\circ w^{SR}_{hk}\circ Rad_{\bar{v}_h(e)}$ is the trivial transformation which is translation by vector $Rad^{-1}_{\bar{v}_k(e)}(\bar{z})$ after $T$.
\end{lem}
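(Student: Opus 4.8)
The plan is to follow the proof of Lemma~\ref{triv-poi} almost verbatim, replacing Lemma~\ref{lemma-cannon} by Lemma~\ref{lemma-mosquito} and interchanging the roles of $Rad$ and $Rad^{-1}$ throughout. First I would apply Lemma~\ref{lemma-mosquito} (with $k$ and $h$ interchanged) to obtain that $Rad^{-1}_{\bar{v}_k(e)}\circ w^{SR}_{hk}\circ Rad_{\bar{v}_h(e)}$ is a Galilean transformation. Since $w^{SR}_{hk}$ is trivial it sends vertical lines to vertical lines, and by Item~\ref{rad-time} of Lemma~\ref{thm-lemma} both $Rad_{\bar{v}_h(e)}$ and $Rad^{-1}_{\bar{v}_k(e)}$ keep the time axis in place, so the whole composition sends vertical lines to vertical lines; a Galilean transformation with that property is trivial. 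Thus the composition is already known to be a trivial transformation, and it only remains to pin down its translation and linear parts.

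Next I would peel off the translation exactly as before. Writing $w^{SR}_{hk}=M_{\bar{z}}\circ T$ with $M_{\bar{z}}$ the translation by $\bar{z}$, linearity of $Rad^{-1}_{\bar{v}_k(e)}$ gives $Rad^{-1}_{\bar{v}_k(e)}\circ M_{\bar{z}}=M_{Rad^{-1}_{\bar{v}_k(e)}(\bar{z})}\circ Rad^{-1}_{\bar{v}_k(e)}$, so the translation part of the composition is precisely $Rad^{-1}_{\bar{v}_k(e)}(\bar{z})$, as claimed, and it suffices to treat the linear case $w^{SR}_{hk}=T$ and prove $Rad^{-1}_{\bar{v}_k(e)}\circ T\circ Rad_{\bar{v}_h(e)}=T$. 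I would also record here that the spatial part of the linear trivial map $T$ is an isometry carrying $(0,\bar{v}_h(e))$ to $(0,\bar{v}_k(e))$, whence $|\bar{v}_h(e)|=|\bar{v}_k(e)|$; the degenerate case $\bar{v}_h(e)=\bar{v}_k(e)=(0,0,0)$ is then immediate from Item~\ref{rad-0}.

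I would then verify $Rad^{-1}_{\bar{v}_k(e)}\circ T\circ Rad_{\bar{v}_h(e)}=T$ on a spanning set. On the time axis, Item~\ref{rad-time-scale} of Lemma~\ref{thm-lemma} shows $Rad_{\bar{v}_h(e)}$ scales time down by $\sqrt{1-|\bar{v}_h(e)|^2}$ while $Rad^{-1}_{\bar{v}_k(e)}$ scales it up by the same factor (using $|\bar{v}_h(e)|=|\bar{v}_k(e)|$), so the composition agrees with $T$ on $(1,0,0,0)$. On space, Item~\ref{rad-perp} of Lemma~\ref{thm-lemma} shows $Rad_{\bar{v}_h(e)}$ fixes every vector orthogonal to the plane spanned by the time axis and $(0,\bar{v}_h(e))$; the isometry $T$ carries such a vector to one orthogonal to the plane spanned by the time axis and $(0,\bar{v}_k(e))$, which $Rad^{-1}_{\bar{v}_k(e)}$ in turn fixes. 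Hence the composition and $T$ agree on the two independent spatial vectors orthogonal to the motion, so their spatial parts are isometries of $Q^3$ agreeing on a two-dimensional subspace, and are therefore either equal or differ by the reflection across that subspace.

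The main obstacle, just as in Lemma~\ref{triv-poi}, is this last spatial step: ruling out the mirror-image alternative. I would close it with the orientation argument, namely that $Rad_{\bar{v}_h(e)}$ and $Rad^{-1}_{\bar{v}_k(e)}$ are orientation preserving because each of their constituent factors ($R_{\bar{v}}$, $G_v$, $E_v$, $S_v$ and the inverse rotation) is, so the composition cannot introduce a reflection; this forces equality and completes the identification of $Rad^{-1}_{\bar{v}_k(e)}\circ w^{SR}_{hk}\circ Rad_{\bar{v}_h(e)}$ as the translation by $Rad^{-1}_{\bar{v}_k(e)}(\bar{z})$ after $T$. Everything else is a routine transcription of the earlier proof with $Rad$ and $Rad^{-1}$ exchanged.
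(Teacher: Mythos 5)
Your proof is correct and is exactly the argument the paper intends: the paper's own "proof" of this lemma is just the remark that it is analogous to Lemma~\ref{triv-poi} (with the full details deferred to the dissertation), and you have carried out that analogy faithfully — substituting Lemma~\ref{lemma-mosquito} for Lemma~\ref{lemma-cannon}, swapping $Rad$ and $Rad^{-1}$, and reusing the time-scale, orthogonal-complement, and orientation arguments from Items~\ref{rad-0}, \ref{rad-time-scale} and \ref{rad-perp} of Lemma~\ref{thm-lemma}. No gaps beyond those already present in the paper's proof of Lemma~\ref{triv-poi} itself.
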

\begin{proof}
The proof is analogous to that for Lemma \ref{triv-poi}, see \citep[p.51]{diss} for the full proof.
\end{proof}

\begin{thm}\label{thm-tr'+}
$Tr'_+$ is an interpretation of \sy{ClassicalKin^{STL}_{Full}} in \sy{SpecRel^e_{Full}}, i.e.,
\begin{equation*}
\sy{SpecRel^e_{Full}}\vdash Tr'_+(\varphi) \enskip\text{ if }\enskip \sy{ClassicalKin^{STL}_{Full}}\vdash \varphi.
\end{equation*}
\end{thm}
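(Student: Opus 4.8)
The plan is to proceed exactly as in the proofs of Theorems~\ref{thm-tr} and~\ref{thm-tr+}: it suffices to verify that the $Tr'_+$-translation of each axiom of $\sy{ClassicalKin^{STL}_{Full}}$, namely of
\[\{\AxEField,\ax{AxEv},\ax{AxSelf},\ax{AxSymD},\AxLine,\ax{AxTriv},\ax{AxNoAcc},\ax{AxAbsTime},\ax{AxEther},\ax{AxNoFTL},\ax{AxThExp^{STL}}\},\]
is a theorem of $\sy{SpecRel^e_{Full}}$, and then to march through this list one axiom at a time. The whole argument is the mirror image of the proof of Theorem~\ref{thm-tr}: the roles of the two theories are interchanged, $Rad_{\bar v}$ and $Rad^{-1}_{\bar v}$ swap places in the translated worldview relation, and the two structural lemmas are exchanged. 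Here Lemma~\ref{lemma-mosquito} --- which says that $Tr'_+\big(w^{CK}_{kh}\big)=Rad^{-1}_{\bar v_h(e)}\circ w^{SR}_{kh}\circ Rad_{\bar v_k(e)}$ is a Galilean transformation, independently of the primitive ether observer $e$ --- plays the part that Lemma~\ref{lemma-cannon} played, and Lemma~\ref{triv-gal} replaces Lemma~\ref{triv-poi}.

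For the common axioms the verifications are the literal duals of those in Theorem~\ref{thm-tr}, so I would only indicate the mechanism rather than grind through them. $Tr'_+(\AxEField)$ is $\AxEField$ itself, since mathematical expressions are translated into themselves. For $\ax{AxEv}$, $\ax{AxSelf}$ and $\AxLine$ I would use that $Rad_{\bar v}$ is a linear bijection fixing the time axis setwise (Items~\ref{rad-0} and~\ref{rad-time} of Lemma~\ref{thm-lemma}) together with the corresponding relativistic axiom; for instance $Tr'_+(\ax{AxSelf})$ reduces, after pushing the coordinates through $Rad_{\bar v_k(e)}$, to $\ax{AxSelf}$ of $\sy{SpecRel^e_{Full}}$ because radarization sends the time axis to the time axis. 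For $\ax{AxSymD}$ I would invoke Lemma~\ref{lemma-mosquito}: the relevant translated transformation is Galilean, and Galilean transformations preserve both simultaneity and the spatial distance between simultaneous events, which is exactly what $\ax{AxSymD}$ demands. Finally $\ax{AxTriv}$ is handled by Lemma~\ref{triv-gal}, which guarantees that a trivial transformation on the relativistic side yields, after conjugation by $Rad$, a trivial transformation on the classical side, and $\ax{AxNoAcc}$ follows from $\ax{AxNoAcc}$ of $\sy{SpecRel^e_{Full}}$ since $Rad_{\bar v_k(e)}$ is one and the same bijection for all primitive ether observers.

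The genuinely new work concerns the four distinctively classical axioms. The axiom $\ax{AxAbsTime}$, which has no relativistic counterpart, is the heart of the matter: its translation asserts that the translated worldview transformations preserve time differences, and this is precisely the statement that $Tr'_+\big(w^{CK}_{kh}\big)$ is Galilean, delivered by Lemma~\ref{lemma-mosquito}. For $\ax{AxEther}$ I would use Lemma~\ref{lem-Tr_+'(Ether)}, which identifies $Tr'_+[Ether(b)]$ with $E(b)$, so that the existence of a classical ether observer translates into the existence of a primitive ether observer, guaranteed by $\ax{AxPrimitiveEther}$. For $\ax{AxNoFTL}$ I would note that a primitive ether observer $e$ is stationary, so $\bar v_e(e)=(0,0,0)$ and $Rad_{\bar v_e(e)}$ is the identity by Item~\ref{rad-0} of Lemma~\ref{thm-lemma}; hence the classical worldview of any $k$ according to $e$ coincides with the relativistic one, the translated speed $\speed_e(k)$ equals its relativistic value, which is below $\mathfrak{c}$ by Corollary~\ref{cor-max}, while $\mathfrak{c_e}$ translates to $\mathfrak{c}$ by Corollary~\ref{cor-Tr_+'(c_e)}. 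For $\ax{AxThExp^{STL}}$ I would derive the existence of observers on every line that is slower-than-light in the ether frame from $\ax{AxThExp}$ of $\sy{SpecRel^e_{Full}}$ together with the fact that $Rad_{\bar v}$ maps lines to lines and right light cones to moving light cones (Items~\ref{rad-cone} and~\ref{rad-vel} of Lemma~\ref{thm-lemma}).

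I expect the main obstacle to be the bookkeeping around $\ax{AxNoFTL}$ and $\ax{AxAbsTime}$: one must check that the translated notions of speed and of time difference are computed in the radarized coordinates yet still match the intended relativistic quantities. This hinges on the independence of $Tr'_+\big(w^{CK}_{kh}\big)$ from the choice of primitive ether observer (Lemma~\ref{lemma-mosquito}) and on the vanishing of $\bar v_e(e)$ for ether observers. Once those identifications are in place, every translated axiom reduces to an already-established fact about Galilean transformations, about radarization via Lemma~\ref{thm-lemma}, or about the relativistic axioms themselves.
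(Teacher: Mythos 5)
Your proposal matches the paper's own proof in both structure and substance: an axiom-by-axiom verification dualizing Theorem~\ref{thm-tr}, with Lemma~\ref{lemma-mosquito} supplying the Galilean character of $Tr'_+\big(w^{CK}_{kh}\big)$ for $\ax{AxSymD}$ and $\ax{AxAbsTime}$, Lemma~\ref{triv-gal} for $\ax{AxTriv}$, Corollary~\ref{cor-max} for $\ax{AxNoFTL}$, $\ax{AxPrimitiveEther}$ for $\ax{AxEther}$, and Lemma~\ref{thm-lemma} for the remaining axioms. This is essentially the same argument as in the paper; the only cosmetic difference is which items of Lemma~\ref{thm-lemma} you cite for $\ax{AxThExp^{STL}}$ (the paper uses Items~\ref{rad-time} and~\ref{rad-cone}).
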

\begin{proof}

\noindent
\begin{itemize}[leftmargin=*] 
\item $\ax{EField^{SR}}\vdash Tr'_+(\ax{AxEField^{CK}})$ since mathematical formulas are translated into themselves.
\item Let us now prove that $\sy{SpecRel^e_{Full}}\vdash Tr'_+(\ax{AxEv^{CK}})$.
The translation of $\ax{AxEv^{CK}}$ is equivalent to:
\begin{multline*}
\bforall {k,h}{ IOb} 
\Bforall {\vx}{\Q^4} 
\bforall {e}{E} 
%\bigg(\fblock{\speed_{e}(k) < \mathfrak{c_e} \\ \speed_{e}(h) < \mathfrak{c_e}} \\ \to 
\Bexists {\vy}{\Q^4}\\
\Big [\ev_k\left(Rad_{\bar{v}_k(e)}(\vx)\right)=\ev_{h}\left(Rad_{\bar{v}_h(e)}(\vy)\right)\Big].
\end{multline*}
To prove the formula above, let $k$ and $h$ be inertial observers, let $e$ be a primitive ether observer, and let $\bar x \in \Q^4$.  We have to prove that there is a $\vy \in Q^4$ such that $\ev_k[Rad_{\bar{v}_k(e)}(\vx)]=\ev_{h}[Rad_{\bar{v}_h(e)}(\vy)]$. Let us denote $Rad_{\bar{v}_k(e)}(\bar{x})$ by $\bar{x}'$.  $\bar x'$ exists since $Rad_{\bar{v}_k(e)}$ is a well-defined bijection. There is a $\bar y'$ such that $\ev_k\left(\bar{x}'\right)=\ev_{h}\left(\bar{y}'\right)$ because of \ax{AxEv^{SR}}. Then  $\bar y=Rad^{-1}_{\bar{v}_h(e)}(\bar{y}')$ has the requited properties.
\item Let us now prove that $\sy{SpecRel^e_{Full}}\vdash Tr'_+(\ax{ax{AxLine^{CK}}})$. The translation of $\ax{AxLine^{CK}}$ is equivalent to:
\begin{multline*}
\bforall {k,h}{ IOb} 
\Bforall {\vx,\vy,\vz}{\Q^4} 
\Bforall {e}{\Ether} \\
\Big[
Rad_{\bar{v}_k(e)}(\vx),Rad_{\bar{v}_k(e)}(\vy),Rad_{\bar{v}_k(e)}(\vz)\in \wl_k(h)\\ 
\to \bexists {a}{Q} \big[ \vz-\vx=a(\vy-\vx) \lor
\vy-\vz=a(\vz-\vx)\big]\Big]. 
\end{multline*}
Because of $\ax{AxLine^{SR}}$, $Rad_{\bar{v}_k(e)}(\bar x)$, $Rad_{\bar{v}_k(e)}(\bar y)$ and $Rad_{\bar{v}_k(e)}(\bar z)$ are on a straight line. Since $Rad_{\bar v}$ is a linear map, $\bar x$, $\bar y$ and $\bar z$ are on a straight line, hence the translation of $\ax{AxLine^{CK}}$ follows.
\item Let us now prove that $\sy{SpecRel^e_{Full}}\vdash Tr'_+(\ax{AxSelf^{CK}})$. The translation of $\ax{AxSelf^{CK}}$ is equivalent to
%\begin{multline*}
%Tr(\ax{AxSelf^{SR}}) \equiv\\
\[
\bforall {k} { IOb}  \bforall {e} {E}  
\Bforall {\bar{y}}{\Q^4} \big[ W\big((k,k,Rad_{\bar{v}_k(e)}(\bar{y})\big) \leftrightarrow y_1=y_2=y_3=0 \big].
\]
%\end{multline*}
To prove the formula above, let $k$ be an inertial observer, let $e$ be a primitive ether observer, and let $\bar y \in \Q^4$. We have to prove that $W\big((k,k,Rad_{\bar{v}_k(e)}(\bar{y})\big)$ if and only if $y_1=y_2=y_3=0$. Let $\bar x \in \Q^4$ be such that $Rad^{-1}_{\bar{v}_k(e)}(\bar{x})=\bar y$. By \ax{AxSelf^{SR}}, $W\big((k,k,\bar x)\big)$ if and only if $x_1=x_2=x_3=0$. This holds if and only if $y_1=y_2=y_3=0$ since by item \ref{tx} of Lemma \ref{thm-lemma} $Rad_{\bar{v}}$ transformation maps the time axis to the time axis.
\item Let us now prove that $\sy{SpecRel^e_{Full}}\vdash Tr'_+(\ax{AxSymD^{CK}})$. The translation of $\ax{AxSymD^{CK}}$ is equivalent to:
\begin{multline*}
\bforall {k,k'}{ IOb}\Bforall{\vx,\vy,\vx',\vy'}{Q^4}\bforall{e}{E} \\
\vast(\fblock{
time(\bar x,\bar y) = time(\bar x', \bar y') = 0 \\ 
ev_k(Rad_{\bar{v}_k(e)}(\vx)) = \ev_{k'}(Rad_{\bar{v}_{k'}(e)}(\vx')) \\
ev_k(Rad_{\bar{v}_k(e)}(\vy)) = \ev_{k'}(Rad_{\bar{v}_{k'}(e)}(\vy')) }
\to  space(\vx,\vy)= space(\vx',\vy')\vast).
\end{multline*}
Let $k$ and $k'$ be inertial observers, let $\bar x$, $\bar y$, $\bar x'$, and $\bar y'$ be coordinate points, and let $e$ be a primitive ether observer such that $time(\bar x,\bar y) = time(\bar x', \bar y') = 0$, $ev_{k}(Rad_{\bar{v}_k(e)}(\vx)) = \ev_{k'}(Rad_{\bar{v}_{k'}(e)}(\vx'))$, and $ev_k(Rad_{\bar{v}_k(e)}(\vy)) = \ev_{k'}(Rad_{\bar{v}_{k'}(e)}(\vy'))$. Let $G=Rad^{-1}_{v_{k'}(e)}\circ w_{kk'}\circ Rad_{v_k(e)}$. By Lemma~\ref{lemma-mosquito}, $G$ is a Galilean transformation. By the assumptions, $G(\bar x)=\bar x'$ and  $G(\bar y)=\bar y'$. Since G is a Galilean transformation and $time(\bar x,\bar y)=0$ we have $space(\vx,\vy)=space(\vx',\vy')$. 

\item Let us now prove that $\sy{SpecRel^e_{Full}}\vdash Tr'_+(\ax{AxTriv^{CK}})$. The translation of $\ax{AxTriv^{CK}}$ is equivalent to:
\begin{multline*}
\bforall{T}{\Triv} 
\bforall {h}  {IOb} 
\bforall {e}{E}
\bexists {k}  {IOb}
\big[Rad^{-1}_{\bar v_{k}(e)}\circ w_{hk} \circ Rad_{\bar v_{h}(e)}=T
\big].
\end{multline*}
To prove $Tr'_+(\ax{AxTriv^{CK}})$, we have to find an inertial observer $k$  for every trivial transformation $T$ and an inertial observer $h$ such that $Rad^{-1}_{\bar v_{k}(e)}\circ w_{hk} \circ Rad_{\bar v_{h}(e)}=T$. 
By $\ax{AxTriv^{SR}}$ and  Lemma~\ref{triv-gal}, there is an inertial observer $k$ such that $Rad^{-1}_{\bar v_{k}(e)}\circ w_{hk} \circ Rad_{\bar v_{h}(e)}=T$. 

\item Let us now prove that $\sy{SpecRel^e_{Full}}\vdash Tr'_+(\ax{AxAbsTime^{CK}})$. The translation of $\ax{AxAbsTime^{CK}}$ is equivalent to:
\begin{multline*}
\bforall {k,k'}{ IOb} \Bforall {\vx,\vy,\vx',\vy'}{\Q^4} \bforall{e}{E} \\
\left(\fblock{\ev_k(Rad_{\bar{v}_k(e)}(\vx))=\ev_{k'}(Rad_{\bar{v}_{k'}(e)}(\vx'))\\
\ev_k(Rad_{\bar{v}_k(e)}(\vy))=\ev_{k'}(Rad_{\bar{v}_{k'}(e)}(\vy'))}\to
 time(\vx,\vy)= time(\vx',\vy') \right).
\end{multline*}
Let $k$ and $k'$ be inertial observers, let $\bar x$, $\bar y$, $\bar x'$, and $\bar y'$ be coordinate points, and let $e$ be a primitive ether observer such that $ev_{k}(Rad_{\bar{v}_k(e)}(\vx)) = \ev_{k'}(Rad_{\bar{v}_{k'}(e)}(\vx'))$, and $ev_k(Rad_{\bar{v}_k(e)}(\vy)) = \ev_{k'}(Rad_{\bar{v}_{k'}(e)}(\vy'))$. Let $G=Rad^{-1}_{v_{k'}(e)}\circ w_{kk'}\circ Rad_{v_k(e)}$. By Lemma~\ref{lemma-mosquito}, $G$ is a Galilean transformation. By the assumptions, $G(\bar x)=\bar x'$ and  $G(\bar y)=\bar y'$. Since $G$ is a Galilean transformation, which keeps simultaneous events simultaneous, and \ax{AxSymD^{SR}} we have $time(\vx,\vy)=time(\vx',\vy')$. 

\item Let us now prove that $\sy{SpecRel^e_{Full}}\vdash Tr'_+(\ax{AxNoFTL})$. The translation of $\ax{AxNoFTL}$ is equivalent to:
%\begin{equation*}
$\bforall {k} {IOb} 
\bforall {e} {E} \big[\speed_{e}(k) < \mathfrak{c}\big].$
%\end{equation*}
This follows from Corollary \ref{cor-max}.
\item Let us now prove that $\sy{SpecRel^e_{Full}}\vdash Tr'_+(\ax{AxThExp^{STL}})$. The translation of $\ax{AxThExp^{STL}}$ is equivalent to:
\begin{multline*}
\bexists {h}{B}  [ IOb(h)] \AND 
\bforall {e} {E}
\Bforall {\vx,\vy} {\Q^4}
\big( space(\vx,\vy)<\mathfrak{c}\cdot time(\vx,\vy)
\\ \to \bexists {k}  {IOb} \big[Rad_{\bar{v}_k(e)}(\vx),Rad_{\bar{v}_k(e)}(\vy) \in \wl_e(k)\big]\big).
\end{multline*}
From \ax{AxThExp} we get inertial observers inside of the light cones. Inertial observers  stay inside the light cone by the translation by Items \ref{rad-time} and \ref{rad-cone} of Lemma~\ref{thm-lemma}.

\item Let us now prove that $\sy{SpecRel^e_{Full}}\vdash Tr'_+(\ax{AxEther})$. The translation of $\ax{AxEther}$ is equivalent to: 
%\begin{equation*}
 $   \bexists {e}{B} \big[E(e)\big].$
%  \end{equation*}
This follows from $\ax{AxPrimitiveEther}$.
\item $\sy{SpecRel^e_{Full}} \vdash Tr'_+(\ax{AxNoAcc})$. The translation of $\ax{AxNoAcc}$ is equivalent to:
\begin{multline*}
\bforall{k}{B} \bexists{\bar x}{Q^4} \bexists{b}{B} \bforall {e} {\Ether} \\
\big[W^{SR}\big(k,b,Rad_{\bar{v}_k(e)}(\bar{x})\big)
\rightarrow IOb^{SR}(k) \big],
\end{multline*}
which follows directly from $\ax{AxNoAcc}$ since $Rad_{\bar{v}_k(e)}$ is the same bijection for all ether observers $e$. \qedhere
\end{itemize}
\end{proof}

\begin{lem}\label{conserve-speed}
Both translations $Tr_+$ and $Tr'_+$ preserve the concept ``ether velocity'':
\begin{align*}
\sy{ClassicalKin^{STL}_{Full}} \vdash&Tr_+\big(E(e)\land \bar{v}_k(e)=\bar v\big) \leftrightarrow Ether(e)\land \bar{v}_k(e)=\bar{v}
\\
\sy{SpecRel^{e}_{Full}}\vdash&Tr'_+\big(Ether(e)\land \bar{v}_k(e)=\bar v\big) \leftrightarrow E(e)\land \bar{v}_k(e)=\bar{v}
\end{align*}
\end{lem}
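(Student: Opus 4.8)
The plan is to handle both biconditionals by the same mechanism, exploiting that $Tr_+$ and $Tr'_+$ preserve conjunction. For the first line, $Tr_+\big(E(e)\land\bar v_k(e)=\bar v\big)$ splits as $Tr_+\big(E(e)\big)\land Tr_+\big(\bar v_k(e)=\bar v\big)$; the first conjunct is $Ether(e)$ directly by the definition of $Tr_+$ on the primitive ether, so the whole statement reduces to showing that, assuming $Ether(e)$, the translated velocity formula $Tr_+\big(\bar v_k(e)=\bar v\big)$ is equivalent in $\sy{ClassicalKin^{STL}_{Full}}$ to the classical formula $\bar v_k(e)=\bar v$. Symmetrically, for the second line I would use Lemma~\ref{lem-Tr_+'(Ether)} to replace $Tr'_+\big(Ether(e)\big)$ by $E(e)$ and reduce to the velocity statement for $Tr'_+$.

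For the velocity part of $Tr_+$, I would unwind the translated worldline, since $\bar v_k(e)=\bar v$ is defined purely through $\wl_k(e)$. Its translation rests on $Tr\big(\vx\in\wl_k(e)\big)\equiv\bforall{e'}{\Ether}\big[Rad^{-1}_{\bar v_k(e')}(\vx)\in\wl_k(e)\big]$. By Corollary~\ref{cor-v} all ether observers share the same velocity according to $k$, so $\bar v_k(e')=\bar v_k(e)$ throughout and the universal quantifier collapses: the translated (relativistic) worldline of $e$ according to $k$ is exactly $Rad_{\bar v_k(e)}\big(\wl_k(e)\big)$. Now the classical worldline $\wl_k(e)$ is a straight line of velocity $\bar v_k(e)$, which is precisely the radarization parameter. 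Hence Item~\ref{rad-vel} of Lemma~\ref{thm-lemma} applies verbatim: $Rad_{\bar v_k(e)}$ sends the line of velocity $\bar v_k(e)$ through the origin to itself and lines parallel to it to parallel ones, so the image line again has velocity $\bar v_k(e)$. Therefore the translated velocity of $e$ according to $k$ coincides with its classical velocity, giving $Tr_+\big(\bar v_k(e)=\bar v\big)\leftrightarrow\bar v_k(e)=\bar v$.

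The velocity part of $Tr'_+$ is entirely parallel. Here the inverse worldview translation gives $Tr'_+\big(\vx\in\wl_k(e)\big)\equiv\bforall{e'}{E}\big[Rad_{\bar v_k(e')}(\vx)\in\wl_k(e)\big]$, and \ax{AxPrimitiveEther} guarantees that the primitive ether observers are stationary with respect to one another, hence share a common velocity $\bar v_k(e)$ according to $k$ (the $\sy{SpecRel^e_{Full}}$ analogue of Corollary~\ref{cor-v}). So the quantifier again collapses and the translated worldline is $Rad^{-1}_{\bar v_k(e)}\big(\wl_k(e)\big)$. The relativistic worldline $\wl_k(e)$ has velocity $\bar v_k(e)$, which is the radarization parameter, and since $Rad_{\bar v_k(e)}$ fixes the line of that velocity, its inverse does too; thus the velocity is preserved and $Tr'_+\big(\bar v_k(e)=\bar v\big)\leftrightarrow\bar v_k(e)=\bar v$.

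The main obstacle, and the reason the lemma holds at all, is the self-referential role of $\bar v_k(e)$: the quantity whose value is being asserted is simultaneously the parameter of the radarization that translates the very worldline defining it. The delicate steps are therefore the justified collapse of $\bforall{e'}{\Ether}$ (resp. $\bforall{e'}{E}$) to a single transformation via the velocity-agreement corollaries, and the observation that because the measured worldline carries exactly the radarization velocity, Item~\ref{rad-vel} forces velocity invariance \emph{even though} radarization generally alters velocities (relativistic velocity addition). I would also record that definedness of the partial function $\bar v_k(e)$ --- existence of two distinct worldline points and non-horizontality of the line --- transfers across the equivalence because $Rad_{\bar v_k(e)}$ is a linear bijection mapping the time axis to the time axis, so the two sides are simultaneously defined.
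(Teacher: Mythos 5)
Your proposal is correct and follows essentially the same route as the paper's proof: reduce to the velocity formula after translating $E$/$Ether$, collapse the ether quantifier using the fact that all (primitive) ether observers have the same velocity according to $k$ (the paper cites Corollary~\ref{thm-lightspeed}, you cite the equivalent Corollary~\ref{cor-v}), and conclude via Item~\ref{rad-vel} of Lemma~\ref{thm-lemma} that radarization with parameter $\bar v_k(e)$ fixes the direction of the very line it is applied to. Your added remark on the simultaneous definedness of the partial function $\bar v_k(e)$ on both sides is a careful touch the paper leaves implicit.
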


\begin{proof}
The translation of $\bar{v}_{k}(b) = \bar{v}$ by $Tr_+$ is:
\begin{equation*} 
Tr_+\BIG(\fblock{\Bexists{\vx,\vy}{\wl_k(b)}( \vx \neq \vy )  \\
\Bforall{\vx,\vy}{\wl_k(b)} 
\left[ (y_1-x_1,y_2-x_2,y_3-x_3) = \bar{v} \cdot (y_0 - x_0)\right] }\BIG),
\end{equation*}
which is equivalent to
\begin{equation*}
\bforall {e} {\Ether}
\fblock{
\Bexists{\vx',\vy'}{\wl_k(b)} \big(Rad_{\bar{v}_k(e)}(\vx') \neq Rad_{\bar{v}_k(e)}(\vy')
\big) \\ 
\Bforall{\vx',\vy'}{\wl_k(b)} 
\fblock{(y_1-x_1,y_2-x_2,y_3-x_3) = \bar{v} \cdot (y_0 - x_0) \\
\vx = Rad_{\bar{v}_k(e')}(\vx') \\
\vy = Rad_{\bar{v}_k(e')}(\vy')}
}.
\end{equation*}
The translation of velocity relative to the primitive ether, $Tr_+[E(e) \AND \bar{v}_{k}(e) = \bar{v}]$, is:
\begin{multline*}
\Ether(e) \AND 
\bforall {e'} {\Ether} \\
\fblock{
\Bexists{\vx',\vy'}{\wl_k(e)} \big(Rad_{\bar{v}_k(e')}(\vx') \neq Rad_{\bar{v}_k(e')}(\vy')\big) \\
\Bforall{\vx',\vy'}{\wl_k(e)} 
\fblock{(y_1-x_1,y_2-x_2,y_3-x_3) = \bar{v} \cdot (y_0 - x_0) \\
\vx = Rad_{\bar{v}_k(e')}(\vx') \\
\vy = Rad_{\bar{v}_k(e')}(\vy')}
}.
\end{multline*}

Since $e'$ only occurs in $v_k(e')$ and since all ether observers are at rest relative to each other by Corollary~\ref{thm-lightspeed}, they all have the same speed relative to $k$. Since, by  Lemma \ref{thm-lemma}, $Rad_{\bar{v}}$ is a bijection, $Rad_{\bar{v}}(\vx') \neq Rad_{\bar{v}}(\vy')$ is equivalent to $\vx \neq \vy$. Hence we can simplify the above to:
\begin{multline*}
Tr_+[E(e) \AND \bar{v}_{k}(e) = \bar{v}] \equiv \Ether(e) \land \Bexists{\vx',\vy'}{\wl_k(e)} [\vx' \neq \vy'] \land \\
\Bforall{\vx',\vy'}{\wl_k(e)} 
\fblock{(y_1-x_1,y_2-x_2,y_3-x_3) = \bar{v} \cdot (y_0 - x_0) \\
\vx = Rad_{\bar v_k(e)}(\vx')  \land 
\vy = Rad_{\bar v_k(e)}(\vy')},
\end{multline*}
which says that the $Rad_{\bar{v}_k(e)}$-image of worldline $\wl_k(e)$ moves with speed $\bar{v}$, which is equivalent to that $\wl_k(e)$ moves with speed $\bar{v}$ by Item \ref{rad-vel} of Lemma \ref{thm-lemma}, hence
\[Tr_+\big(E(e)\land \bar{v}_k(e)=\bar v\big)\equiv Ether(e)\land \bar{v}_k(e)=\bar{v}.\]
We will now prove this in the other direction.
$Tr'_+[\bar{v}_{k}(b) = \bar{v}]$ %The translation of velocity by $Tr'_+$ 
is:
\begin{multline*}
\bforall {e} {E}
\fblock{
\Bexists{\vx',\vy'}{\wl_k(b)} \big(Rad^{-1}_{\bar{v}_k(e)}(\vx') \neq Rad^{-1}_{\bar{v}_k(e)}(\vy')
\big) \\ 
\Bforall{\vx',\vy'}{\wl_k(b)} 
\fblock{(y_1-x_1,y_2-x_2,y_3-x_3) = \bar{v} \cdot (y_0 - x_0) \\
\vx = Rad^{-1}_{\bar{v}_k(e')}(\vx') \\
\vy = Rad^{-1}_{\bar{v}_k(e')}(\vy')}
}.
\end{multline*}
The translation of velocity relative to the ether %by $Tr'_+$ is equivalent to:
$Tr'_+[Ether(e) \AND \bar{v}_{k}(e) = \bar{v}]$ is equivalent to
\begin{multline*}
E(e) \land \Bexists{\vx',\vy'}{\wl_k(e)} [\vx' \neq \vy'] \land \\
\Bforall{\vx',\vy'}{\wl_k(e)} 
\fblock{(y_1-x_1,y_2-x_2,y_3-x_3) = \bar{v} \cdot (y_0 - x_0) \\
%\bar u = \bar v_k(e)\land 
\vx = Rad^{-1}_{\bar v_k(e)}(\vx')  \land 
\vy = Rad^{-1}_{\bar v_k(e)}(\vy')},
\end{multline*}
which by $Rad_{\bar{v}_k(e)}$ being a bijection and Item \ref{rad-vel} of Lemma \ref{thm-lemma} leads us to
\begin{equation*}
Tr'_+\big(Ether(e)\land \bar{v}_k(e)=\bar v\big)\equiv E(e)\land \bar{v}_k(e)=\bar{v}.     \tag*{\qedhere} 
\end{equation*}
\end{proof}

\begin{thm}\label{thm-defeq}
$Tr_{+}$ is a definitional equivalence between theories \sy{SpecRel_{Full}^{e}} and \sy{ClassicalKin_{Full}^{STL}}.
\end{thm}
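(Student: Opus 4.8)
The plan is to appeal directly to the definition of definitional equivalence. By Theorem~\ref{thm-tr+}, $Tr_+$ is already an interpretation of \sy{SpecRel^e_{Full}} in \sy{ClassicalKin^{STL}_{Full}}, and by Theorem~\ref{thm-tr'+}, $Tr'_+$ is an interpretation in the reverse direction. Hence it remains only to verify the two round-trip conditions $\sy{SpecRel^e_{Full}}\vdash Tr'_+\big(Tr_+(\varphi)\big)\leftrightarrow\varphi$ and $\sy{ClassicalKin^{STL}_{Full}}\vdash Tr_+\big(Tr'_+(\psi)\big)\leftrightarrow\psi$. Since both $Tr_+$ and $Tr'_+$ commute with the Boolean connectives and the quantifiers, so do their compositions; thus a routine induction on the structure of formulas reduces everything to checking these equivalences on the \emph{atomic} formulas built from the primitive relation symbols of each language.

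For the purely arithmetical atoms ($\alpha=\beta$, $\alpha\le\beta$, and the graphs of $+$ and $\cdot$) both translations act as the identity, so each round trip is the identity; the same holds for $Ph$, which both translations fix. For the observer predicates I would argue as follows: on the \sy{SpecRel^e_{Full}} side, $Tr_+\big(IOb^{SR}(k)\big)$ is $IOb^{CK}(k)$ conjoined with the clause $\bforall{e}{\Ether}[\speed_{e}(k)<\mathfrak{c_e}]$, and applying $Tr'_+$ while using Lemma~\ref{lem-Tr_+'(Ether)} (so \Ether\ translates back to $E$), Corollary~\ref{cor-Tr_+'(c_e)} (so $\mathfrak{c_e}$ becomes $\mathfrak{c}$) and Lemma~\ref{conserve-speed} (so the ether-relative speed is preserved) turns this clause into the assertion that $k$ is slower than $\mathfrak{c}$ relative to every primitive ether observer, which is a theorem for every inertial observer by Corollary~\ref{cor-max}; the round trip therefore collapses to $IOb^{SR}(k)$. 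The mirror computation on the \sy{ClassicalKin^{STL}_{Full}} side uses \ax{AxNoFTL} in place of Corollary~\ref{cor-max} to discharge the slower-than-light clause, returning $IOb^{CK}(k)$. The ether atom is immediate: $Tr_+\big(E(k)\big)=\Ether(k)$ and Lemma~\ref{lem-Tr_+'(Ether)} gives $Tr'_+\big(\Ether(k)\big)\equiv E(k)$.

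The main obstacle is the worldview atom, where the two radarizations must cancel. Unfolding the definitions, $Tr'_+\big(Tr_+(W^{SR}(k,b,\bar{x}))\big)$ reduces, after translating \Ether\ to $E$ and pushing $W^{CK}$ through $Tr'_+$, to the assertion of $W^{SR}\big(k,b,Rad_{\bar{v}_k(e')}(Rad^{-1}_{\bar{v}_k(e)}(\bar{x}))\big)$ for all primitive ether observers $e,e'$. The crucial inputs are that $Rad_{\bar{v}}$ is a bijection with inverse $Rad^{-1}_{\bar{v}}$ (Lemma~\ref{thm-lemma}); that all primitive ether observers are stationary relative to one another by \ax{AxPrimitiveEther}, so $\bar{v}_k(e)=\bar{v}_k(e')$ just as in the classical Corollary~\ref{cor-v}; and that Lemma~\ref{conserve-speed} guarantees the velocity appearing inside the inner $Rad^{-1}$ really translates to the matched ether velocity rather than to an artefact of the translation. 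Together these force $Rad_{\bar{v}_k(e')}\circ Rad^{-1}_{\bar{v}_k(e)}=\mathrm{id}$, collapsing the formula to $W^{SR}(k,b,\bar{x})$.

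The reverse round trip $Tr_+\big(Tr'_+(W^{CK}(k,b,\bar{x}))\big)$ is handled symmetrically, this time using $Rad^{-1}\circ Rad=\mathrm{id}$ together with Corollary~\ref{cor-v} itself to see that all classical ether observers share a common velocity relative to $k$. I expect the only genuinely delicate point to be the bookkeeping of which velocity is attached to which radarization, and in particular checking that Lemma~\ref{conserve-speed} licenses the cancellation uniformly in the bound (primitive) ether observers; once that is pinned down, every remaining step is either definitional or a direct appeal to the cited results, so the two round-trip identities — and hence the definitional equivalence — follow.
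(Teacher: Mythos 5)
Your proposal is correct and follows essentially the same route as the paper's proof: reduce to the round-trip conditions on atomic formulas (the interpretations being Theorems~\ref{thm-tr+} and \ref{thm-tr'+}), discharge the observer predicates via Lemma~\ref{lem-Tr_+'(Ether)}, Corollary~\ref{cor-Tr_+'(c_e)}, Corollary~\ref{cor-max} and \ax{AxNoFTL}, and cancel the radarizations in the worldview atom using Lemma~\ref{conserve-speed}. Your extra care about matching the two bound ether variables $e,e'$ is exactly the bookkeeping the paper handles implicitly via Lemma~\ref{conserve-speed}; the only detail you omit that the paper mentions is that \ax{AxNoAcc} guarantees $\bar{v}_k(e)$ is defined throughout.
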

\begin{proof}
We only need to prove that the inverse translations of the translated statements are logical equivalent to the original statements since $Tr_+$ and $Tr_+'$ are interpretations by Theorem \ref{thm-tr+} and Theorem \ref{thm-tr'+}. $\ax{AxNoAcc}$ guarantees that $\bar{v}_k(e)$ is defined for every ether observer $e$ and observer $k$.
\begin{itemize}[leftmargin=*]
\item Mathematical expressions, quantities and light signals are translated into themselves by both $Tr_+$ and $Tr_+'$.
\item $Tr_+'\big(Tr_+[E(e)]\big) \equiv Tr_+'[Ether(e)] \equiv E(e)$ follows from the definition of $Tr_+$ and Lemma \ref{lem-Tr_+'(Ether)}.
\item The back and forth translation of $IOb^{SR}$ is the following:
\begin{multline*} Tr_+'\big(Tr_+[IOb^{SR}(k)]\big) 
\equiv Tr_+'\big( IOb^{CK}(k)\AND \bforall {e} {\Ether} \big[ \speed_{e}(k) < \mathfrak{c_e}\big] \big) \\
\equiv IOb^{SR}(k)\AND \bforall {e} {E} \big[ \speed_{e}(k) < \mathfrak{c}\big] 
\equiv IOb^{SR}(k).
\end{multline*}
The second equivalence is true because of Lemma \ref{lem-Tr_+'(Ether)} and Corollary \ref{cor-Tr_+'(c_e)}. The last equivalence is true because observers are always slower-than-light in \sy{SpecRel^e_{Full}}, as per Corollary \ref{cor-max}.
\item The back and forth translation of $IOb^{CK}$ is the following:
\begin{multline*} Tr_+\big(Tr_+'[IOb^{CK}(b)]\big) \equiv Tr_+[IOb^{SR}(b)] \\ 
\equiv IOb^{CK}(b) \AND \bforall{e}{Ether} \big[ \speed_{e}(b) < \mathfrak{c_e}\big]\equiv IOb^{CK}(b).
\end{multline*}
The last equivalence holds because $\bforall{e}{Ether} \big[ \speed_{e}(b) < \mathfrak{c_e}\big]$ is true by \ax{AxNoFTL}.
\item The back and forth translation of $W^{SR}$ is the following:
\begin{multline*} Tr_+'\big(Tr_+[W^{SR}(k,b,\bar x)]\big) \equiv Tr\Big[\bforall{e}{Ether} \big[W^{CK}\big(k,b,Rad^{-1}_{\bar{v}_k(e)}(\bar{x})\big)\big]\Big] \\
\equiv \bforall{e}{E} \Big(W^{SR}\big[k,b,Rad_{\bar{v}_k(e)}\big(Rad^{-1}_{\bar{v}_k(e)}(\bar{x})\big)\big]\Big) \\
\equiv \bforall{e}{E}[W^{SR}(k,b,\bar x)] \equiv W^{SR}(k,b,\bar x).
\end{multline*}
We use Lemma \ref{conserve-speed} to translate the indexes $\bar{v}_k(e)$ into themselves.
\item The back and forth translation of $W^{CK}$ is the following:
\begin{multline*}
Tr_+\big(Tr_+'[W^{CK}(k,b,\bar x)]\big) \equiv Tr_+\Big[\bforall{e}{E} \big[W^{SR}\big(k,b,Rad_{\bar{v}_k(e)}(\bar{x})\big)\big]\Big] \\
\equiv \bforall{e}{Ether} \Big(W^{SR}\big[k,b,Rad^{-1}_{\bar{v}_k(e)}\big(Rad_{\bar{v}_k(e)}(\bar{x})\big)\big]\Big) \\
\equiv \bforall{e}{Ether}[W^{CK}(k,b,\bar x)] \equiv W^{CK}(k,b,\bar x).
\end{multline*}
We use Lemma \ref{conserve-speed} to translate the indexes $\bar{v}_k(e)$ into themselves.  \qedhere
\end{itemize}
\end{proof}

\section{Faster-Than-Light Observers are Definable from Slower-Than-Light ones in Classical Kinematics}

Now, we show that \sy{ClassicalKin^{STL}_{Full}} and \sy{ClassicalKin_{Full}} are definitionally equivalent theories. In this section, we work only with classical theories. We use the notations $IOb^{STL}$, $W^{STL}$ and $w^{STL}$ for inertial observers, worldview relations and worldview transformations in \sy{ClassicalKin^{STL}_{Full}} to distinguish them from their counterparts in \sy{ClassicalKin_{Full}}.

We can map the interval of speeds $[0,\mathfrak{c_e}]$ to $[0,\infty]$ by replacing slower-than-light speed $v$ by classical speed $V=\frac{v}{\mathfrak{c_e}-v}$, and conversely map the interval of speeds $[0,\infty]$ to $[0,\mathfrak{c_e}]$ by replacing speed $V$ by speed $v=\frac{\mathfrak{c_e}V}{1+V}$.
Similarly, for arbitrary (finite) velocity $\bar{V}$, we have that $\bar v=\frac{\mathfrak{c_e}\bar{V}}{1+\left|\bar{V}\right|}$ is slower than $\mathfrak{c_e}$, and from $\bar v$, we can get $\bar{V}$ back by the equation $\bar V=\frac{\bar v}{\mathfrak{c_e}-\left|\bar v\right|}$.

Let $G_{\bar{V}}$ and $G_{\bar{v}}$, respectively, be the Galilean boosts that map bodies moving with velocity $\bar{V}$ and $\bar{v}$ to stationary ones. Let $X_{\bar{V}}=G^{-1}_{\bar v}\circ G_{\bar{V}}$ and $Y_{\bar{v}}=G^{-1}_{\bar V}\circ G_{\bar{v}}$, see Figure~\ref{figxy}. 

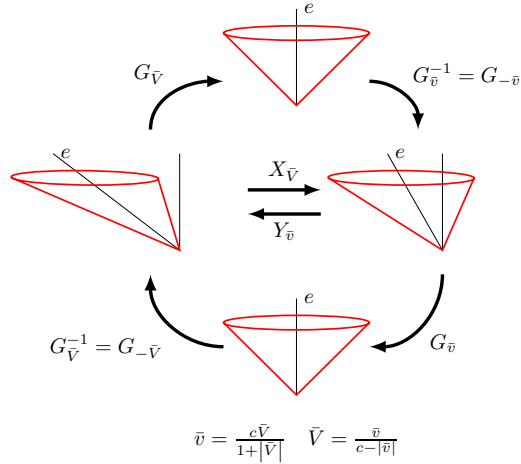
\begin{figure}[!htb]\label{figxy}
  \begin{center}
    \scalebox{.8}{\begin{tikzpicture}[scale=0.4]
\pgfmathsetmacro\size{4}
\pgfmathsetmacro\v{1.3}
\pgfmathsetmacro\w{\v/(1+\v)}
\pgfmathsetmacro\u{-1}
\pgfmathsetmacro\uu{\u/(1-\u)}

\tikzstyle{T0}=[cm={1,0,-\v,1,(0,0)}]
\tikzstyle{T1}=[cm={1,0,-\w,1,(0,0)}]
\tikzstyle{T00}=[cm={1,0,-\u,1,(0,0)}]
\tikzstyle{T11}=[cm={1,0,-\uu,1,(0,0)}]

\draw (0,\size) node[right] {$e$} -- (0,0)  ;
\draw[thick, red] (0,3) ellipse (3 and 0.3) (-3,3)--(0,0)--(3,3);

\draw[->,ultra thick,>=latex] (-6,-1)  to [out=90,in=180] node[above left]{$G_{\bar{V}}$} (-3,1);

\begin{scope}[shift={(-1.2*\size,-1.5*\size)}]
\draw[T0] (0,\size) node[right] {$e$} -- (0,0)  ;
\draw (0,0) to (0*\size,\size);
\draw[thick, red,T0] (0,3) ellipse (3 and 0.3) (-3,3)--(0,0)--(3,3);
\end{scope}

\draw[->,ultra thick,>=latex] (3,1)  to [out=0,in=90] node[above right]{$G^{-1}_{\bar{v}}=G_{-\bar{v}}$} (5,-1);

\begin{scope}[shift={(1.5*\size,-1.5*\size)}]
\draw[T1] (0,\size) node[right] {$e$} -- (0,0)  ;
\draw (0,0) to (0*\size,\size);
\draw[thick, red,T1] (0,3) ellipse (3 and 0.3) (-3,3)--(0,0)--(3,3);
\end{scope}

\draw[->,ultra thick,>=latex] (-2,-3.5)  to [out=0,in=180] node[above]{$X_{\bar{V}}$} (1,-3.5);
\draw[<-,ultra thick,>=latex] (-2,-4.5)  to [out=0,in=180] node[below]{$Y_{\bar{v}}$} (1,-4.5);

\begin{scope}[shift={(0,-3*\size)}]
\draw (0,\size) node[right] {$e$} -- (0,0)  ;
\draw[thick, red] (0,3) ellipse (3 and 0.3) (-3,3)--(0,0)--(3,3);
\end{scope}

\draw[<-,ultra thick,>=latex] (-6,-7)  to [out=-90,in=180] node[below left]{$G^{-1}_{\bar{V}}=G_{-\bar{V}}$} (-3,-10);
\draw[<-,ultra thick,>=latex] (3,-10)  to [out=0,in=-90] node[below right]{$G_{\bar{v}}$} (6,-7);

\node (text) at (0,-14) {$\bar{v}=\frac{c\bar{V}}{1+\left|\bar{V}\right|}$\quad $\bar{V}=\frac{\bar{v}}{c-\left|\bar{v}\right|}$};

\end{tikzpicture}}
    \caption{\label{fig-transformationsXY} $\bar{V}$ is an arbitrary velocity and $\bar v = \frac{\mathfrak{c_e}\bar{V}}{1+\left|\bar{V}\right|}$ is the corresponding STL velocity. $G_{\bar{V}}$ and $G_{\bar{v}}$ are Galilean boosts that, respectively, map bodies moving with velocity $\bar{V}$ and $\bar{v}$ to stationary ones. Transformations $X_{\bar{V}}=G^{-1}_{\bar v}\circ G_{\bar{V}}$ and $Y_{\bar{v}}=G^{-1}_{\bar V}\circ G_{\bar{v}}$ allow us to map between observers seeing the ether frame moving with  up to infinite speeds on the left and  STL speeds on the right. The light cones on the top and on the bottom are the same one.}
  \end{center}
\end{figure}

\begin{lem}\label{lem-xy}
Let $\bar v\in Q^3$ for which $|\bar v|\in [0,\mathfrak{c_e}]$  and let $\bar V=\frac{\bar v}{\mathfrak{c_e}-|\bar v|}$. Then $X^{-1}_{\bar V}=Y_{\bar v}$. 
\end{lem}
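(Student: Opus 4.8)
The plan is to reduce the statement to the elementary fact that the inverse of a composition reverses the order of its factors, together with the observation that $G^{-1}_{\bar w}=G_{-\bar w}$ for every Galilean boost. First I would recall verbatim the two definitions given in the paragraph preceding the lemma, namely $X_{\bar V}=G^{-1}_{\bar v}\circ G_{\bar V}$ and $Y_{\bar v}=G^{-1}_{\bar V}\circ G_{\bar v}$, where $G_{\bar V}$ and $G_{\bar v}$ are the Galilean boosts sending bodies of velocity $\bar V$ and $\bar v$, respectively, to rest. Since these boosts are affine bijections of $Q^4$ (each being, by the footnote on Galilean boosts, the map $(t,\bar r)\mapsto(t,\bar r-\bar w t)$ for the appropriate $\bar w$), their inverses exist and the composition algebra below is legitimate.

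The one point that genuinely needs checking is that the indices occurring in the two definitions really designate the same pair of boosts. Starting from the given $\bar v$ with $|\bar v|\in[0,\mathfrak{c_e}]$ and setting $\bar V=\bar v/(\mathfrak{c_e}-|\bar v|)$, I would verify that the slower-than-light velocity the paper associates to this $\bar V$, namely $\mathfrak{c_e}\bar V/(1+|\bar V|)$, recovers exactly our $\bar v$. A short computation using $|\bar V|=|\bar v|/(\mathfrak{c_e}-|\bar v|)$ gives $1+|\bar V|=\mathfrak{c_e}/(\mathfrak{c_e}-|\bar v|)$ and hence $\mathfrak{c_e}\bar V/(1+|\bar V|)=\bar v$, so the two rescaling rules are mutually inverse. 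Consequently the boost $G_{\bar v}$ appearing in the definition of $X_{\bar V}$ is the very same $G_{\bar v}$ that indexes $Y_{\bar v}$, and likewise for $G_{\bar V}$.

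With this consistency established, the conclusion is immediate:
\[
X^{-1}_{\bar V}=\bigl(G^{-1}_{\bar v}\circ G_{\bar V}\bigr)^{-1}=G^{-1}_{\bar V}\circ\bigl(G^{-1}_{\bar v}\bigr)^{-1}=G^{-1}_{\bar V}\circ G_{\bar v}=Y_{\bar v}.
\]
Thus $X^{-1}_{\bar V}=Y_{\bar v}$, as claimed. If one prefers a coordinate-level check, I would instead note that Galilean boosts compose additively, $G_{\bar a}\circ G_{\bar b}=G_{\bar a+\bar b}$, so that $X_{\bar V}=G_{\bar V-\bar v}$ and $Y_{\bar v}=G_{\bar v-\bar V}$, whence $X_{\bar V}\circ Y_{\bar v}=G_{\bar 0}=\mathrm{Id}$.

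There is no real obstacle here: the formal inversion is one line, and the only thing requiring care is the bookkeeping of indices in the previous paragraph, i.e.\ confirming that the speed correspondence $\bar v\leftrightarrow\bar V$ is an involution so that $X_{\bar V}$ and $Y_{\bar v}$ are assembled from the identical pair of boosts. Once that is pinned down, the anti-homomorphism property of inversion does all the work.
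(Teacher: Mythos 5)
Your proof is correct and follows essentially the same route as the paper's: the paper's entire argument is the one-line computation $X^{-1}_{\bar V}=\bigl(G^{-1}_{\bar v}\circ G_{\bar V}\bigr)^{-1}=G^{-1}_{\bar V}\circ G_{\bar v}=Y_{\bar v}$, which is exactly your central display. Your additional verification that the correspondence $\bar v\leftrightarrow\bar V$ is an involution (so that both definitions are built from the same pair of boosts) is a sensible piece of bookkeeping that the paper leaves implicit, but it does not change the approach.
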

\begin{proof}
By definition of $X_{\bar V}$ and $Y_{\bar v}$ and by the inverse of composed transformations:
%\[
$X^{-1}_{\bar V} = \big(G^{-1}_{\bar v}\circ G_{\bar{V}}\big)^{-1} = G^{-1}_{\bar V}\circ \big(G^{-1}_{\bar{v}}\big)^{-1} = G^{-1}_{\bar V}\circ G_{\bar{v}} = Y_{\bar v}.$
%\tag*{\qedhere} 
%\]
\end{proof}

Assume that the ether frame is moving with a faster-than-light velocity $\bar{V}$ with respect to an inertial observer $k$.  Then by $X_{\bar V}$ we can transform the worldview of $k$ such a way that after the transformation the ether frame is moving slower than light with respect to $k$, see Figure \ref{fig-ftl2stl}. Systematically modifying every observer's worldview using the corresponding transformation $X_{\bar V}$, we can achieve that every observer sees that the ether frame is moving slower than light. These transformations tell us where the observers should see the non-observer bodies. However, this method does usually not work for bodies representing inertial observers because the transformation $X_{\bar {V}}$ leaves the time axis fixed only if $\bar{V}=(0,0,0)$. Therefore, we will translate the worldlines of bodies representing observers in harmony with \ax{AxSelf} to represent the motion of the corresponding observer's coordinate system.
This means that we have to split up the translation of $W$ between the observer and non-observer cases.\footnote{There are other ways to handle this issue, such as introducing a new sort for inertial observers. This however would complicate the previous sections of this paper, and also take our axiom system further away from stock \sy{SpecRel}.}

\noindent
Let us define $X_{\bar v_k(e)}(\bar x)$ and $Y_{\bar v_k(e)}(\bar x)$ and their inverses as:
\begin{equation*}
X_{\bar v_k(e)}(\bar x) = \bar y 
\defiff X^{-1}_{\bar v_k(e)}(\bar y) = \bar x 
\defiff \Bexists{\bar v}{Q^3}[\bar V = \bar v_k(e) \AND X_{\bar V}(\bar x) = \bar y],
\end{equation*}
and
\begin{equation*}
Y_{\bar v_k(e)}(\bar x) = \bar y 
\defiff Y^{-1}_{\bar v_k(e)}(\bar y) = \bar x 
\defiff \Bexists{\bar v}{Q^3}[\bar v = \bar v_k(e) \AND Y_{\bar v}(\bar x) = \bar y].
\end{equation*}

\noindent
Let $Tr_*$ be the following translation:
\begin{multline*}
  Tr_*\big(W^{STL}(k,b,\bar{x})\big)\defeq
  \bforall{e}{Ether} \\
  \fblock{
    b\not\in IOb \to W^{CK}\big(k,b,X^{-1}_{\bar{v}_k(e)}(\bar{x})\big)\\
    b\in IOb \to \bexists{t}{Q} \big[w^{CK}_{kb}\big(X^{-1}_{\bar{v}_k(e)}(\bar{x})\big)= X_{\bar{v}_b(e)}^{-1}(t,0,0,0) \big] }.
\end{multline*}
and $Tr_*$ is the identity on the other concepts.  

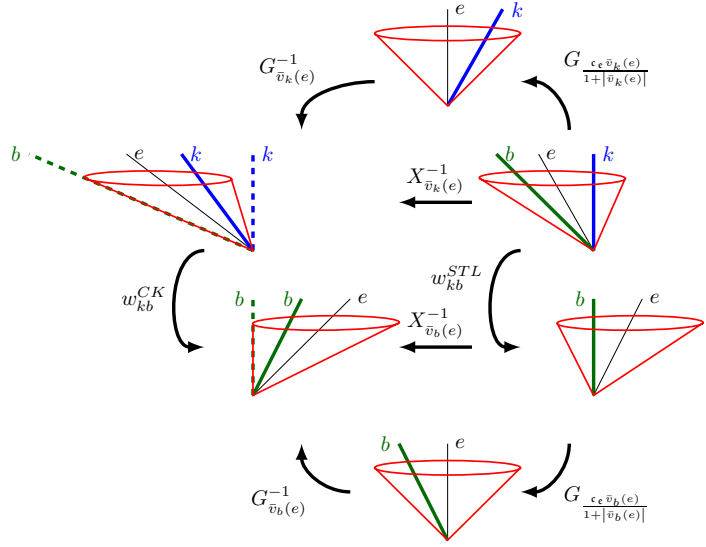
\begin{figure}[!htb]
  \begin{center}
    \scalebox{.8}{\begin{tikzpicture}[scale=0.4]
\pgfmathsetmacro\size{4}
\pgfmathsetmacro\v{1.3}
\pgfmathsetmacro\w{\v/(1+\v)}
\pgfmathsetmacro\u{-1}
\pgfmathsetmacro\uu{\u/(1-\u)}

\tikzstyle{T0}=[cm={1,0,-\v,1,(0,0)}]
\tikzstyle{T1}=[cm={1,0,-\w,1,(0,0)}]
\tikzstyle{T00}=[cm={1,0,-\u,1,(0,0)}]
\tikzstyle{T11}=[cm={1,0,-\uu,1,(0,0)}]

%%%%%% k

\draw (0,\size) node[right] {$e$} -- (0,0)  ;
\draw[ultra thick, blue] (0,0) to (\w*\size,\size) node[right]{$k$};
% \draw[ultra thick, green!42!black] (0,0) to (-\v*\size,\size) node[right]{$h$};
\draw[thick, red] (0,3) ellipse (3 and 0.3) (-3,3)--(0,0)--(3,3);

\draw[<-,ultra thick,>=latex] (-6,-1)  to [out=90,in=180] node[above left]{$G^{-1}_{{\bar{v}_k(e)}}$} (-3,1);

\begin{scope}[shift={(-2*\size,-1.5*\size)}]
\draw[T0] (0,\size) node[right] {$e$} -- (0,0)  ;
\draw[ultra thick, green!42!black, dashed] (0,0) to (-\size*2.3,\size) node[left]{$b$};
\draw[T0,ultra thick, blue] (0,0) to (\w*\size,\size) node[right]{$k$};
\draw[ultra thick, blue, dashed] (0,0) to (0*\size,\size) node[right]{$k$};
\draw[thick, red,T0] (0,3) ellipse (3 and 0.3) (-3,3)--(0,0)--(3,3);
\end{scope}

\draw[<-,ultra thick,>=latex] (3,1)  to [out=0,in=90] node[above right]{$G_{\frac{\mathfrak{c_e}\bar{v}_k(e)}{1+\left|\bar{v}_k(e)\right |}}$} (5,-1);

\begin{scope}[shift={(1.5*\size,-1.5*\size)}]
\draw[T1] (0,\size) node[right] {$e$} -- (0,0)  ;
% \draw[ultra thick, blue,dashed] (0,0) to (0*\size,\size) node[right]{$k$};
\draw[ultra thick, blue] (0,0) to (0*\size,\size) node[right]{$k$};
 \draw[ultra thick, green!42!black] (0,0) to (\u*\size,\size) node[right]{$b$};
\draw[thick, red,T1] (0,3) ellipse (3 and 0.3) (-3,3)--(0,0)--(3,3);
\end{scope}

\draw[<-,ultra thick,>=latex] (-2,-4)  to [out=0,in=180] node[above]{$X^{-1}_{\bar{v}_k(e)}$} (1,-4);

\draw[<-,ultra thick,>=latex] (-10,-10)  to [out=180,in=180] node[left]{$w^{CK}_{kb}$} (-10,-6);

%%%%%% h

\begin{scope}[shift={(0,-4.5*\size)}]
\draw (0,\size) node[right] {$e$} -- (0,0)  ;
\draw[ultra thick, green!42!black] (0,0) to (\uu*\size,\size) node[left]{$b$};
% \draw[ultra thick, green!42!black] (0,0) to (-\u*\size,\size) node[right]{$h$};
\draw[thick, red] (0,3) ellipse (3 and 0.3) (-3,3)--(0,0)--(3,3);
\end{scope}

\draw[<-,ultra thick,>=latex] (-6,-14)  to [out=-90,in=-180] node[below left]{$G^{-1}_{\bar{v}_b(e)}$} (-4,-16);

\begin{scope}[shift={(-2*\size,-3*\size)}]
\draw[T00] (0,\size) node[right] {$e$} -- (0,0)  ;
\draw[T00,ultra thick,  green!42!black] (0,0) to (\uu*\size,\size) node[left]{$b$};
\draw[ultra thick, green!42!black,dashed] (0,0) to (0*\size,\size) node[left]{$b$};
\draw[thick, red,T00] (0,3) ellipse (3 and 0.3) (-3,3)--(0,0)--(3,3);
\end{scope}

\draw[<-,ultra thick,>=latex] (3,-16)  to [out=0,in=-90] node[below right]{$G_{\frac{\mathfrak{c_e}\bar{v}_b(e)}{1+\left|\bar{v}_b(e)\right |}}$} (5,-14);

\begin{scope}[shift={(1.5*\size,-3*\size)}]
\draw[T11] (0,\size) node[right] {$e$} -- (0,0)  ;
% \draw[ultra thick, green!42!black,dashed] (0,0) to (0*\size,\size) node[left]{$b$};
\draw[ultra thick, green!42!black] (0,0) to (0*\size,\size) node[left]{$b$};
% \draw[ultra thick, green!42!black] (0,0) to (-4*\u*\size,\size) node[right]{$h$};
\draw[thick, red,T11] (0,3) ellipse (3 and 0.3) (-3,3)--(0,0)--(3,3);
\end{scope}

\draw[<-,ultra thick,>=latex] (-2,-10)  to [out=0,in=180] node[above]{$X^{-1}_{\bar{v}_b(e)}$} (1,-10);

\draw[<-,ultra thick,>=latex] (3,-10)  to [out=180,in=180] node[above left]{$w^{STL}_{kb}$} (3,-6);

\end{tikzpicture}}
    \caption{\label{fig-ftl2stl-inv}Transformations from, on the right, \sy{ClassicalKin^{STL}_{Full}} to, on the left, \sy{ClassicalKin_{Full}} for, on top, observer $k$ and, at the bottom, observer $b$. The dashed lines are where we put the observers after the transformation in order to respect \ax{AxSelf}. All transformations, including the worldviev transfrormations from the top to the bottom, are Galilean.}
  \end{center}
\end{figure}
\begin{figure}[!htb]
  \begin{center}
    \scalebox{.8}{\begin{tikzpicture}[scale=0.4]
\pgfmathsetmacro\size{4}
\pgfmathsetmacro\v{1.3}
\pgfmathsetmacro\w{\v/(1+\v)}
\pgfmathsetmacro\u{-1}
\pgfmathsetmacro\uu{\u/(1-\u)}

\tikzstyle{T0}=[cm={1,0,-\v,1,(0,0)}]
\tikzstyle{T1}=[cm={1,0,-\w,1,(0,0)}]
\tikzstyle{T00}=[cm={1,0,-\u,1,(0,0)}]
\tikzstyle{T11}=[cm={1,0,-\uu,1,(0,0)}]

%%%%%% k

\draw (0,\size) node[right] {$e$} -- (0,0)  ;
\draw[ultra thick, blue] (0,0) to (\v*\size,\size) node[right]{$k$};
% \draw[ultra thick, green!42!black] (0,0) to (-\v*\size,\size) node[right]{$h$};
\draw[thick, red] (0,3) ellipse (3 and 0.3) (-3,3)--(0,0)--(3,3);

\draw[->,ultra thick,>=latex] (-6,-1)  to [out=90,in=180] node[above left]{$G_{\frac{\bar{v}_k(e)}{\mathfrak{c_e}-\left|\bar v_k(e)\right|}}$} (-3,1);

\begin{scope}[shift={(-2*\size,-1.5*\size)}]
\draw[T0] (0,\size) node[right] {$e$} -- (0,0)  ;
\draw[ultra thick, green!42!black] (0,0) to (-\size*2.3,\size) node[left]{$b$};
\draw[ultra thick, blue] (0,0) to (0*\size,\size) node[right]{$k$};
% \draw[ultra thick, green!42!black] (0,0) to (-4*\v*\size,\size) node[right]{$h$};
\draw[thick, red,T0] (0,3) ellipse (3 and 0.3) (-3,3)--(0,0)--(3,3);
\end{scope}

\draw[->,ultra thick,>=latex] (3,1)  to [out=0,in=90] node[above right]{$G^{-1}_{\bar{v}_k(e)}$} (5,-1);

\begin{scope}[shift={(1.5*\size,-1.5*\size)}]
\draw[T1] (0,\size) node[right] {$e$} -- (0,0)  ;
\draw[ultra thick, blue,dashed] (0,0) to (0*\size,\size) node[right]{$k$};
\draw[ultra thick, blue,T1] (0,0) to (\v*\size,\size) node[right]{$k$};
\draw[ultra thick, green!42!black, dashed] (0,0) to (\u*\size,\size) node[right]{$b$};
\draw[thick, red,T1] (0,3) ellipse (3 and 0.3) (-3,3)--(0,0)--(3,3);
\end{scope}

\draw[->,ultra thick,>=latex] (-2,-4)  to [out=0,in=180] node[above]{$Y^{-1}_{\bar{v}_k(e)}$} (1,-4);

\draw[<-,ultra thick,>=latex] (-10,-10)  to [out=180,in=180] node[left]{$w^{CK}_{kb}$} (-10,-6);

%%%%%% h

\begin{scope}[shift={(0,-4.5*\size)}]
\draw (0,\size) node[right] {$e$} -- (0,0)  ;
\draw[ultra thick, green!42!black] (0,0) to (\u*\size,\size) node[left]{$b$};
% \draw[ultra thick, green!42!black] (0,0) to (-\u*\size,\size) node[right]{$h$};
\draw[thick, red] (0,3) ellipse (3 and 0.3) (-3,3)--(0,0)--(3,3);
\end{scope}

\draw[->,ultra thick,>=latex] (-6,-14)  to [out=-90,in=-180] node[below left]{$G_{\frac{\bar{v}_b(e)}{\mathfrak{c_e}-\left|\bar v_b(e)\right|}}$} (-4,-16);

\begin{scope}[shift={(-2*\size,-3*\size)}]
\draw[T00] (0,\size) node[right] {$e$} -- (0,0)  ;
\draw[ultra thick,  green!42!black] (0,0) to (0*\size,\size) node[left]{$b$};
% \draw[ultra thick, green!42!black] (0,0) to (-4*\u*\size,\size) node[right]{$h$};
\draw[thick, red,T00] (0,3) ellipse (3 and 0.3) (-3,3)--(0,0)--(3,3);
\end{scope}

\draw[->,ultra thick,>=latex] (3,-16)  to [out=0,in=-90] node[below right]{$G^{-1}_{\bar{v}_b(e)}$} (5,-14);

\begin{scope}[shift={(1.5*\size,-3*\size)}]
\draw[T11] (0,\size) node[right] {$e$} -- (0,0)  ;
\draw[ultra thick, green!42!black,dashed] (0,0) to (0*\size,\size) node[left]{$b$};
\draw[ultra thick, green!42!black,T11] (0,0) to (\u*\size,\size) node[left]{$b$};
% \draw[ultra thick, green!42!black] (0,0) to (-4*\u*\size,\size) node[right]{$h$};
\draw[thick, red,T11] (0,3) ellipse (3 and 0.3) (-3,3)--(0,0)--(3,3);
\end{scope}

\draw[->,ultra thick,>=latex] (-2,-10)  to [out=0,in=180] node[above]{$Y^{-1}_{\bar{v}_b(e)}$} (1,-10);

\draw[<-,ultra thick,>=latex] (3,-10)  to [out=180,in=180] node[above left]{$w^{STL}_{kb}$} (3,-6);

\end{tikzpicture}}
    \caption{\label{fig-ftl2stl} Transformations from, on the left, \sy{ClassicalKin_{Full}} to, on the right, \sy{ClassicalKin^{STL}_{Full}} for, on top, observer $k$ and, at the bottom, observer $b$. The dashed lines are where we put the observers after the transformation in order to respect \ax{AxSelf}. All transformations, including the worldviev transfrormations from the top to the bottom, are Galilean. The cone at the bottom is only displayed for completion, it is not used in our reasoning.}
  \end{center}
\end{figure}

\noindent
Let $Tr_*'$ be the following translation:
\begin{multline*}
  Tr'_*\big(W^{CK}(k,b,\bar{x})\big)\defeq
  \bforall{e}{Ether} \\
  \fblock{
    b\not\in IOb \to W^{STL}\big(k,b,Y^{-1}_{\bar{v}_k(e)}(\bar{x})\big) \\
    b\in IOb \to \bexists{t}{Q} \big[w^{STL}_{kb}\big(Y^{-1}_{\bar{v}_k(e)}(\bar{x})\big)= Y^{-1}_{\bar{v}_b(e)}(t,0,0,0) \big]
  }.
\end{multline*}
and $Tr'_*$ is the identity on the other concepts.

\begin{lem}\label{lem-tr'_FTL} Assume \sy{ClassicalKin^{STL}_{Full}}. Then
\[Tr'_*\left(w^{CK}_{kh}\right)=Y_{\bar{v}_h(e)}\circ w^{STL}_{kh}\circ Y^{-1}_{\bar{v}_k(e)}.\]
\end{lem}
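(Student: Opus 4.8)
The plan is to unfold the defined relation $w^{CK}_{kh}$ into the primitive worldview relation $W^{CK}$, apply $Tr'_*$ clause by clause, and then simplify the outcome into the claimed composition, just as $Tr(w^{SR}_{hk})$ was simplified in the Appendix. Recall that $w^{CK}_{kh}(\vx,\vy)$ abbreviates $\ev^{CK}_k(\vx)=\ev^{CK}_h(\vy)$, i.e. $\bforall{b}{B}\big[W^{CK}(k,b,\vx)\leftrightarrow W^{CK}(h,b,\vy)\big]$, so $Tr'_*\big(w^{CK}_{kh}(\vx,\vy)\big)$ asserts that the $Tr'_*$-translated $k$-event at $\vx$ coincides with the $Tr'_*$-translated $h$-event at $\vy$. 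Before comparing these translated events I would first note that, by Corollary~\ref{cor-v}, the ether velocity $\bar v_k(e)$ seen by a fixed inertial observer is the same for every ether observer $e$; hence each of $Y_{\bar v_k(e)}$, $Y_{\bar v_h(e)}$ and $Y_{\bar v_b(e)}$ is independent of $e$, the universal quantifier $\bforall{e}{Ether}$ in the definition of $Tr'_*$ collapses to a single representative, and the right-hand side $Y_{\bar v_h(e)}\circ w^{STL}_{kh}\circ Y^{-1}_{\bar v_k(e)}$ is well defined regardless of which $e$ is chosen.

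Next I would treat the non-observer bodies, which already pin down the value of $\vy$. By its first clause, $Tr'_*$ places a body $b\notin IOb$ in the translated $k$-event at $\vx$ exactly when $b\in\ev^{STL}_k\big(Y^{-1}_{\bar v_k(e)}(\vx)\big)$, and in the translated $h$-event at $\vy$ exactly when $b\in\ev^{STL}_h\big(Y^{-1}_{\bar v_h(e)}(\vy)\big)$. Since $w^{STL}_{kh}$ is an honest transformation satisfying $\ev^{STL}_k(\bar z)=\ev^{STL}_h\big(w^{STL}_{kh}(\bar z)\big)$ (by the justification Theorem~\ref{thm-gal}), these two collections of non-observer bodies agree precisely when $Y^{-1}_{\bar v_h(e)}(\vy)=w^{STL}_{kh}\big(Y^{-1}_{\bar v_k(e)}(\vx)\big)$, that is, when $\vy=\big(Y_{\bar v_h(e)}\circ w^{STL}_{kh}\circ Y^{-1}_{\bar v_k(e)}\big)(\vx)$. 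This forces the desired value of $\vy$, so it remains only to verify that the observer bodies impose no further constraint.

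The main obstacle is the observer clause, since there $Tr'_*$ does not merely transport the coordinate but repositions $b$ in accordance with \ax{AxSelf}. Fix $\vy$ as above and write $\vx'=Y^{-1}_{\bar v_k(e)}(\vx)$. Then a body $b\in IOb$ lies in the translated $k$-event at $\vx$ iff $\bexists{t}{Q}\big[w^{STL}_{kb}(\vx')=Y^{-1}_{\bar v_b(e)}(t,0,0,0)\big]$, and in the translated $h$-event at $\vy$ iff $\bexists{s}{Q}\big[w^{STL}_{hb}\big(Y^{-1}_{\bar v_h(e)}(\vy)\big)=Y^{-1}_{\bar v_b(e)}(s,0,0,0)\big]$. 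By the value of $\vy$ just obtained, $Y^{-1}_{\bar v_h(e)}(\vy)=w^{STL}_{kh}(\vx')$, and the composition law $w^{STL}_{hb}\circ w^{STL}_{kh}=w^{STL}_{kb}$ for worldview transformations gives $w^{STL}_{hb}\big(Y^{-1}_{\bar v_h(e)}(\vy)\big)=w^{STL}_{kb}(\vx')$. Hence the two observer conditions become literally identical, so every observer body lies in one translated event iff it lies in the other and no extra restriction arises. Combining both cases, the translated $k$-event at $\vx$ equals the translated $h$-event at $\vy$ exactly when $\vy=\big(Y_{\bar v_h(e)}\circ w^{STL}_{kh}\circ Y^{-1}_{\bar v_k(e)}\big)(\vx)$, which is the asserted identity.
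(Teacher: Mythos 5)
Your overall route is the same as the paper's: you unfold $w^{CK}_{kh}(\vx,\vy)$ into $\bforall{b}{B}\big[Tr'_*(W^{CK}(k,b,\vx))\leftrightarrow Tr'_*(W^{CK}(h,b,\vy))\big]$, split according to the two clauses of $Tr'_*$, use $\ev^{STL}_k(\bar z)=\ev^{STL}_h\big(w^{STL}_{kh}(\bar z)\big)$ for the non-observer clause and the composition law $w^{STL}_{hb}\circ w^{STL}_{kh}=w^{STL}_{kb}$ for the observer clause; that last computation is essentially verbatim the paper's. The collapse of $\bforall{e}{Ether}$ via Corollary~\ref{cor-v} is also fine. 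The one structural difference is in the ``only if'' direction: the paper derives the coordinate identity from agreement of the \emph{full} translated events (both clauses together) plus the functionality of $w^{STL}_{kh}$, whereas you extract it from the non-observer clause alone and then check that the observer clause adds nothing.

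That reorganization is where the gap sits. You assert that the two collections of non-observer bodies ``agree precisely when $Y^{-1}_{\bar v_h(e)}(\vy)=w^{STL}_{kh}\big(Y^{-1}_{\bar v_k(e)}(\vx)\big)$'', justified only by $w^{STL}_{kh}$ being a transformation. That justifies the ``if'' direction and the injectivity of full events, but the ``only if'' direction needs strictly more: it needs that a coordinate point is already determined by the \emph{non-observer} bodies located there, i.e., that $\bar w\mapsto \ev^{STL}_h(\bar w)\setminus IOb$ is injective. This is in fact true in \sy{ClassicalKin^{STL}_{Full}} --- by \ax{AxEther} and Theorem~\ref{thm-gal} every coordinate point carries photons in a full sphere of lightlike directions, photons are not inertial observers by \ax{AxNoFTL} together with \ax{AxNoAcc}, and two distinct points share at most one photon line --- but none of that appears in your argument, and it does not follow from $w^{STL}_{kh}$ being ``honest''. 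Either supply this photon-richness argument, or revert to the paper's decomposition and establish agreement on observer bodies as well before invoking functionality of $w^{STL}_{kh}$ on the full events. With that one step repaired, the proof is sound.
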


\begin{proof}
We should show that 
\begin{equation}\label{eq-tr_*}
\bforall{b}{B}\big[Tr_*\big(W^{CK}(k,b,\bar{x})\big)\leftrightarrow Tr_*\big(W^{CK}(h,b,\bar{y})\big)\big]
\end{equation}
holds iff 
\begin{equation}\label{eq-tr_*2}
\bforall{b}{B}\bforall{e}{Ether}\big[w^{STL}_{kh}\big(Y^{-1}_{\bar{v}_k(e)}(\bar{x})\big)=Y^{-1}_{\bar{v}_h(e)}(\bar{y})\big]
\end{equation}
holds. 

Let us first assume that \eqref{eq-tr_*} holds. This is equivalent to conjunction of the following two statements: 
\begin{multline}
\label{eq-tr_*3}\bforall{b}{B\setminus \IOb}\bforall{e}{Ether}\\\big[\big(W^{STL}(k,b,Y^{-1}_{\bar{v}_k(e)}(\bar{x}))\big)\leftrightarrow \big(W^{STL}(h,b,Y^{-1}_{\bar{v}_h(e)}(\bar{y}))\big)\big]
\end{multline}
and 
\begin{multline}\label{eq-tr_*4}
 \bforall{b}{IOb}\bforall{e}{Ether} \Big[\bexists{t}{Q} \big[ w^{STL}_{kb}\big(Y^{-1}_{\bar{v}_k(e)}(\bar{x})\big)=Y^{-1}_{\bar{v}_b(e)}(t,0,0,0)\big]\\ \leftrightarrow \bexists{t}{Q} \big[w^{STL}_{hb}\big(Y^{-1}_{\bar{v}_h(e)}(\bar{y})\big)=  Y^{-1}_{\bar{v}_b(e)}(t,0,0,0)\big]\Big].
\end{multline}
By \eqref{eq-tr_*3}, we have that $W^{STL}\big(k,b,Y^{-1}_{\bar{v}_k(e)}(\bar{x})\big)$ iff $W^{STL}\big(h,b,Y^{-1}_{\bar{v}_h(e)}(\bar{y})\big)$ if $b$ is not an inertial observer. To show \eqref{eq-tr_*2}, we have to show this also if $b$ is an inertial observer. Let $b$ be an inertial observer, then $W^{STL}\big(k,b,Y^{-1}_{\bar{v}_k(e)}(\bar{x})\big)$ holds exactly if $w^{STL}_{kb}\big(Y^{-1}_{\bar{v}_k(e)}(\bar{x})\big)$ is on the $Y^{-1}_{\bar{v}_b(e)}$-image of the time axis by the definition of $Tr_*$. By \eqref{eq-tr_*4}, we have that $w^{STL}_{kb}\big(Y^{-1}_{\bar{v}_k(e)}(\bar{x})\big)$ is on the $Y^{-1}_{\bar{v}_b(e)}$-image of the time axis iff $w^{STL}_{hb}\big(Y^{-1}_{\bar{v}_h(e)}(\bar{y})\big)$ is on the $Y^{-1}_{\bar{v}_b(e)}$-image of the time axis, which is equivalent to $W^{STL}\big(h,b,Y^{-1}_{\bar{v}_h(e)}(\bar{x})\big)$ by the definition of $Tr'_*$. Consequently, $W^{STL}\big(k,b,Y^{-1}_{\bar{v}_k(e)}(\bar{x})\big)$ iff $W^{STL}\big(h,b,Y^{-1}_{\bar{v}_h(e)}(\bar{y})\big)$. Therefore, \eqref{eq-tr_*} implies \eqref{eq-tr_*2}.  

Let us now assume that statement \eqref{eq-tr_*2} holds.  Then \eqref{eq-tr_*3} follows immediately from the definition of the worldview  transformation.  By \eqref{eq-tr_*2}, we have $Y^{-1}_{\bar{v}_h(e)}(\bar{y})=w^{STL}_{kh}\big(Y^{-1}_{\bar{v}_k(e)}(\bar{x})\big)$. Hence, by the definition of the worldview transformation, \[w^{STL}_{hb}\big(Y^{-1}_{\bar{v}_h(e)}(\bar{y})\big)=w^{STL}_{hb}\big(w^{STL}_{kh}\big(Y^{-1}_{\bar{v}_k(e)}(\bar{x})\big)\big) =w^{STL}_{kb}\big(Y^{-1}_{\bar{v}_k(e)}(\bar{x})\big),\] which implies \eqref{eq-tr_*4} immediately.  
\end{proof}

\begin{lem}\label{lem-tr_FTL} Assume \sy{ClassicalKin_{Full}}. Then
\[Tr_*\left(w^{STL}_{kb}\right)=X_{\bar{v}_b(e)}\circ w^{CK}_{kb}\circ X^{-1}_{\bar{v}_k(e)}.\]
\end{lem}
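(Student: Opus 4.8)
The plan is to mirror the proof of Lemma~\ref{lem-tr'_FTL}, interchanging the roles of the two theories: I now reason inside \sy{ClassicalKin_{Full}}, unfold the concepts of \sy{ClassicalKin^{STL}_{Full}} through $Tr_*$, and everywhere replace $Y$ by $X$ and $w^{STL}$ by $w^{CK}$. First I would unfold the left-hand side. By the definition of the worldview transformation and of events, $Tr_*\big(w^{STL}_{kb}(\bar x,\bar y)\big)$ is equivalent to $\bforall{c}{B}\big[Tr_*\big(W^{STL}(k,c,\bar x)\big)\leftrightarrow Tr_*\big(W^{STL}(b,c,\bar y)\big)\big]$, and inserting the definition of $Tr_*$ splits this biconditional, exactly as in the proof of Lemma~\ref{lem-tr'_FTL}, into a non-observer part (the clause $c\notin IOb$) and an observer part (the clause $c\in IOb$).

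Next I would show that this is equivalent to
\[\bforall{c}{B}\bforall{e}{\Ether}\big[w^{CK}_{kb}\big(X^{-1}_{\bar v_k(e)}(\bar x)\big)=X^{-1}_{\bar v_b(e)}(\bar y)\big]\]
in both directions. For the forward direction, the non-observer part already asserts the event equality $\ev_k\big(X^{-1}_{\bar v_k(e)}(\bar x)\big)=\ev_b\big(X^{-1}_{\bar v_b(e)}(\bar y)\big)$ on all $c\notin IOb$; it remains to extend the biconditional to inertial-observer bodies. By the observer clause of $Tr_*$, $Tr_*\big(W^{STL}(k,c,\bar x)\big)$ holds exactly when $w^{CK}_{kc}\big(X^{-1}_{\bar v_k(e)}(\bar x)\big)$ lies on the $X^{-1}_{\bar v_c(e)}$-image of the time axis, so the observer part of the split transfers this property between $k$ and $b$, yielding the full biconditional and hence the event equality; since $w^{CK}$ is a genuine transformation by Theorem~\ref{thm-gal}, this event equality is the displayed equation. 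For the converse I would assume the displayed equation, observe that the non-observer part is then immediate, and for observer bodies use the composition law $w^{CK}_{bc}\circ w^{CK}_{kb}=w^{CK}_{kc}$ to rewrite $w^{CK}_{bc}\big(X^{-1}_{\bar v_b(e)}(\bar y)\big)=w^{CK}_{kc}\big(X^{-1}_{\bar v_k(e)}(\bar x)\big)$, which gives the observer clause at once.

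Finally, applying $X_{\bar v_b(e)}$ to both sides of the displayed equation rewrites it as $\big(X_{\bar v_b(e)}\circ w^{CK}_{kb}\circ X^{-1}_{\bar v_k(e)}\big)(\bar x)=\bar y$, which is exactly the claimed identity for $Tr_*\big(w^{STL}_{kb}\big)$. The genuinely delicate point is again the observer case: because the boost-difference map $X_{\bar V}$ fixes the time axis only when $\bar V=(0,0,0)$, worldlines of observers cannot be translated by the same rule as worldlines of ordinary bodies, which is precisely why $Tr_*$ handles $W^{STL}(k,b,\cdot)$ separately for $b\in IOb$; the work lies in verifying that the reformulation that the image point lies on the $X^{-1}_{\bar v_c(e)}$-image of the time axis faithfully encodes observer-coordinatization and is preserved when passing between $k$ and $b$. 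Well-definedness of every map in sight rests on Theorem~\ref{thm-gal}, so that each $w^{CK}$ is a Galilean transformation and hence a function, and on $\bar v_k(e)$ being defined for every inertial observer $k$ and ether observer $e$.
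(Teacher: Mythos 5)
Your proposal is correct and follows essentially the same route as the paper: the paper's own proof of this lemma simply states that it is completely analogous to the proof of Lemma~\ref{lem-tr'_FTL}, and your argument is exactly that analogy carried out explicitly (splitting the biconditional over bodies into the non-observer and observer clauses of $Tr_*$, transferring the time-axis-image condition between $k$ and $b$, and using the composition law $w^{CK}_{bc}\circ w^{CK}_{kb}=w^{CK}_{kc}$ for the converse). No gaps.
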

\begin{proof}
The proof is completely analogous to that of Lemma~\ref{lem-tr'_FTL}.
\end{proof}

\begin{lem}\label{triv-FTL}
Assume  \sy{{ClassicalKin}_{Full}}.  Let $e$ be an ether observer, and let $h$ and $k$ be inertial observers. Assume that $w^{CK}_{hk}$ is a trivial transformation which is the translation by the vector $\bar{z}$ after the linear trivial transformation $T_0$. Then $X_{\bar{v}_{k}(e)}\circ w^{CK}_{hk}\circ X^{-1}_{\bar{v}_h(e)}$ is the trivial transformation which is the translation by the vector $X_{\bar{v}_{k}(e)}(\bar{z})$ after $T_0$.
\end{lem}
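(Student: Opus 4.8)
The plan is to mimic the proof of Lemma~\ref{triv-poi}, but to exploit the fact that here the conjugating maps $X_{\bar v_k(e)}$ and $X^{-1}_{\bar v_h(e)}$ are themselves Galilean boosts, so that the whole conjugation can be carried out explicitly inside the group of Galilean transformations rather than via the ``Poincar\'e $+$ maps verticals to verticals'' argument. Write $\bar V$ for $\bar v_k(e)$ and $\bar U$ for $\bar v_h(e)$, and let $\bar v=\frac{\mathfrak{c_e}\bar V}{1+|\bar V|}$ and $\bar u=\frac{\mathfrak{c_e}\bar U}{1+|\bar U|}$ be the associated slower-than-light velocities. Denoting by $G_{\bar a}$ the Galilean boost $(t,\bar x)\mapsto(t,\bar x-t\bar a)$, the definitions of $X$ give $X_{\bar V}=G^{-1}_{\bar v}\circ G_{\bar V}=G_{\bar V-\bar v}$ and $X^{-1}_{\bar U}=G_{\bar u-\bar U}$. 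Two elementary group identities will drive everything: boosts compose additively, $G_{\bar a}\circ G_{\bar b}=G_{\bar a+\bar b}$, and if $T_0$ is the linear trivial transformation $(t,\bar x)\mapsto(t,R\bar x)$ with spatial orthogonal part $R\in O(3)$, then $T_0\circ G_{\bar a}=G_{R\bar a}\circ T_0$.

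The one step that needs genuine care is pinning down how $T_0$ relates $\bar U$ and $\bar V$. Since $w^{CK}_{hk}$ is a trivial transformation with linear part $T_0$, it maps the worldline of the ether observer $e$ in $h$'s coordinates to its worldline in $k$'s coordinates, and translations do not change velocities, so the linear part carries the ether velocity seen by $h$ to that seen by $k$: $\bar V=R\bar U$. Because $R\in O(3)$ this forces $|\bar V|=|\bar U|$, and then the defining formula for the associated slower-than-light velocity yields $R\bar u=\frac{\mathfrak{c_e}R\bar U}{1+|\bar U|}=\frac{\mathfrak{c_e}\bar V}{1+|\bar V|}=\bar v$, so that $R(\bar u-\bar U)=\bar v-\bar V$. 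This simultaneous rotation of $\bar U\mapsto\bar V$ and $\bar u\mapsto\bar v$ is exactly what will make the total boost velocity cancel.

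With these relations in hand the computation is short. Writing $w^{CK}_{hk}=M_{\bar z}\circ T_0$ with $M_{\bar z}$ the translation by $\bar z$, the linearity of $X_{\bar V}$ gives $X_{\bar V}\circ M_{\bar z}=M_{X_{\bar V}(\bar z)}\circ X_{\bar V}$, so it is enough to treat the linear case $\bar z=\bar 0$. There, using the two boost identities together with the velocity relations,
\begin{equation*}
X_{\bar V}\circ T_0\circ X^{-1}_{\bar U}
=G_{\bar V-\bar v}\circ T_0\circ G_{\bar u-\bar U}
=G_{\bar V-\bar v}\circ G_{R(\bar u-\bar U)}\circ T_0
=G_{(\bar V-\bar v)+(\bar v-\bar V)}\circ T_0
=T_0,
\end{equation*}
since $G_{\bar 0}$ is the identity. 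Reinstating the translation yields $X_{\bar V}\circ w^{CK}_{hk}\circ X^{-1}_{\bar U}=M_{X_{\bar V}(\bar z)}\circ T_0$, which is precisely the translation by $X_{\bar v_k(e)}(\bar z)$ after $T_0$; being a spacetime translation composed with a linear spatial isometry, it is a trivial transformation, as claimed.

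I expect the velocity bookkeeping of the second paragraph to be the only real obstacle, everything else being routine algebra in the group of Galilean boosts. It is worth stressing that, in contrast to Lemma~\ref{triv-poi}, no orientation/mirroring subtlety arises here: the conjugation reproduces $T_0$ \emph{exactly}, rather than merely agreeing with it on two independent vectors, so there is nothing to rule out by an orientation argument.
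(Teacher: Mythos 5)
Your proof is correct and rests on the same skeleton as the paper's: the decomposition $w^{CK}_{hk}=M_{\bar z}\circ T_0$, the reduction to the linear case via $X_{\bar V}\circ M_{\bar z}=M_{X_{\bar V}(\bar z)}\circ X_{\bar V}$, and the key observation that the spatial isometry part $R$ of $T_0$ carries $\bar v_h(e)$ to $\bar v_k(e)$ (hence, by orthogonality of $R$, also $\bar u$ to $\bar v$) all appear in the paper's argument. Where you differ is only in the last step: the paper first argues abstractly that the conjugate is a trivial transformation (Galilean and vertical-line-preserving) and then checks agreement with $T_0$ separately on the time axis and on the hyperplane $t=0$, whereas you compute the conjugation explicitly in the group of Galilean boosts using $G_{\bar a}\circ G_{\bar b}=G_{\bar a+\bar b}$ and $T_0\circ G_{\bar a}=G_{R\bar a}\circ T_0$, so that the cancellation $G_{\bar V-\bar v}\circ G_{\bar v-\bar V}=\mathrm{Id}$ delivers $T_0$ exactly and the preliminary triviality argument becomes superfluous. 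Both routes are sound; yours is a bit more self-contained and makes transparent why, unlike in Lemma~\ref{triv-poi}, no orientation argument is needed.
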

\begin{proof}
Let $M_{\bar{z}}$ denote the translation by vector $\bar{z}$.
By the assumptions, $w^{CK}_{hk}=M_{\bar{z}}\circ T_0$. The linear part $T_0$ of $w^{CK}_{hk}$ transforms the velocity of the ether frame as $(0,\bar{v}_k(e))=T_0(0,\bar{v}_h(e))$ and the translation part $M_{\bar{z}}$ does not change the velocity of the ether frame. Hence $\bar{v}_k(e)$ is $\bar{v}_h(e)$ transformed by the spatial isometry part of $T_0$.

By the definition of $X_{\bar v}$ and Theorem~\ref{thm-gal}, $X_{\bar{v}_k(e)}\circ w^{SR}_{hk}\circ X^{-1}_{\bar{v}_h(e)}$ is a Galilean transformation. Since  $w^{CK}_{hk}$ is a trivial transformation, it maps vertical lines to vertical ones. $X_{\bar{v}_k(e)}\circ w^{CK}_{hk}\circ X^{-1}_{\bar{v}_h(e)}$ also maps vertical lines to vertical ones. Consequently,  $X_{\bar{v}_k(e)}\circ w^{CK}_{hk}\circ X^{-1}_{\bar{v}_h(e)}$  is a Galilean transformation that maps vertical lines to vertical ones. Hence it is a trivial transformation.

We also have that $X_{\bar{v}_k(e)}\circ w^{CK}_{hk}\circ X^{-1}_{\bar{v}_h(e)}$ is $X_{\bar{v}_k(e)}\circ M_{\bar{z}} \circ T_0\circ X^{-1}_{\bar{v}_h(e)}$. Since $X_{\bar{v}_k(e)}$ is linear, we have  $X_{\bar{v}_k(e)}\circ M_{\bar{z}} = M_{X_{\bar{v}_k(e)}(\bar{z})}\circ X_{\bar{v}_k(e)}$. Therefore, it is enough to prove that $X_{\bar{v}_k(e)}\circ w^{CK}_{hk}\circ X^{-1}_{\bar{v}_h(e)}=w^{CK}_{hk}$ if $w^{CK}_{hk}$ is linear. From now on assume that $w^{CK}_{hk}$ is linear.

Since it is a linear trivial transformation, $w^{CK}_{hk}$ maps $(1,0,0,0)$ to itself. $X_{\bar{v}_k(e)}\circ w^{CK}_{hk}\circ X^{-1}_{\bar{v}_h(e)}=w^{CK}_{hk}$ also  maps $(1,0,0,0)$ to itself because $\bar{v}_k(e)$ is $\bar{v}_h(e)$ transformed by the spatial isometry part of $w^{CK}_{hk}$. So $X_{\bar{v}_k(e)}\circ w^{CK}_{hk}\circ X^{-1}_{\bar{v}_h(e)}$ and $w^{CK}_{hk}$ also agree restricted to time.

Now we have to prove that  $X_{\bar{v}_k(e)}\circ w^{CK}_{hk}\circ X^{-1}_{\bar{v}_h(e)}$ and $w^{CK}_{hk}$ also agree restricted to space. Since they are Galilean boosts, $X_{\bar{v}_h(e)}$ and $X_{\bar{v}_k(e)}$ are identical on the vectors orthogonal to the time axis. Consequently, $X_{\bar{v}_k(e)}\circ w^{CK}_{hk}\circ X^{-1}_{\bar{v}_h(e)}$ and $w^{CK}_{hk}$ also agree restricted to space. Hence $X_{\bar{v}_k(e)}\circ w^{CK}_{hk}\circ X^{-1}_{\bar{v}_h(e)}=w^{CK}_{hk}$.
\end{proof}

\begin{lem}\label{tr_Ether}
\begin{equation*}
\sy{ClassicalKin_{Full}}\vdash Tr_*(Ether(e)) \leftrightarrow Ether(e). 
\end{equation*}
\end{lem}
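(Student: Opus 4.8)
The plan is to expand $Tr_*(Ether(e))$ using the definition of $Ether$ together with the definition of $Tr_*$, and then to show that the resulting condition forces $e$ to see light isotropically, which happens exactly when $e$ is at rest relative to the ether frame. Since $Tr_*$ is the identity on $IOb$, $Ph$ and all arithmetical atomic formulas, the only atoms of $Ether(e)$ that are genuinely transformed are the worldline memberships $\bar x\in\wl_e(p)$, i.e.\ the atoms $W^{STL}(e,p,\bar x)$ for light signals $p$. Writing $\bar V:=\bar v_e(e')$ --- which by Corollary~\ref{cor-v} does not depend on the chosen ether observer $e'$ --- and using that light signals fall under the non-observer branch of $Tr_*(W^{STL})$, I would first record that $Tr_*\big(\bar x\in\wl_e(p)\big)\equiv\big(X^{-1}_{\bar V}(\bar x)\in\wl^{CK}_e(p)\big)$.

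Next I would identify the light-cone structure that $e$ actually sees in $\sy{ClassicalKin_{Full}}$. By \ax{AxEther} light signals move along right light cones according to any ether observer $e'$, and by Theorem~\ref{thm-gal} the worldview transformation $w^{CK}_{e'e}$ is Galilean; hence the photon worldlines according to $e$ lie exactly on the light cones moving with velocity $\bar V=\bar v_e(e')$. Consequently the translated existence of a common photon through $X^{-1}_{\bar V}(\bar x)$ and $X^{-1}_{\bar V}(\bar y)$ is equivalent, after applying the bijection $X_{\bar V}$, to the statement that $\bar x$ and $\bar y$ lie on a common light cone moving with the slower-than-light velocity $\bar v:=\tfrac{\mathfrak{c_e}\bar V}{1+|\bar V|}$. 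This is because $X_{\bar V}=G^{-1}_{\bar v}\circ G_{\bar V}$, where $G_{\bar V}$ straightens the velocity-$\bar V$ cones into right cones and $G^{-1}_{\bar v}$ tilts right cones into velocity-$\bar v$ cones.

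Putting this together, $Tr_*(Ether(e))$ becomes: $IOb(e)$ holds, and there is $c>0$ such that for all $\bar x,\bar y$ the pair lies on a common light cone moving with velocity $\bar v$ if and only if $space(\bar x,\bar y)=c\cdot time(\bar x,\bar y)$. The implication $Ether(e)\Rightarrow Tr_*(Ether(e))$ is then immediate: if $e$ is an ether observer then $\bar V=\bar v_e(e')=\bar 0$ by Corollary~\ref{thm-lightspeed}, so $X_{\bar V}$ is the identity, $Tr_*$ acts as the identity on the whole defining formula, and $Ether(e)$ is returned. For the converse I would argue geometrically that a light cone moving with velocity $\bar v$ can satisfy the isotropic relation $space=c\cdot time$ for a single $c$ and all pairs only when $\bar v=\bar 0$ (comparing the apparent light speed in the $+\bar v$ and $-\bar v$ directions, namely $\mathfrak{c_e}-|\bar v|$ versus $\mathfrak{c_e}+|\bar v|$); since $\bar v=\bar 0$ forces $\bar V=\bar 0$, the observer $e$ is at rest relative to the ether and sees light isotropically, i.e.\ $Ether(e)$ holds with $c=\mathfrak{c_e}$.

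The main obstacle is precisely this anisotropy argument: one must show cleanly, inside the Euclidean field, that a nonzero ether velocity makes the speed of light direction-dependent and hence incompatible with $space=c\cdot time$, while zero velocity restores isotropy. A secondary point needing care is the two-branch definition of $Tr_*$ on $W^{STL}$: I would justify that light signals fall under the $b\not\in IOb$ branch (either by the standing convention that light signals are not inertial observers, or by checking that for a body that happened to be both, both branches describe the same photon worldline on the light cone), so that the computation of $Tr_*(\bar x\in\wl_e(p))$ above is unambiguous.
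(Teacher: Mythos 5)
Your proof is correct and its skeleton matches the paper's: unfold $Tr_*(Ether(e))$, use the fact that $\bar v_e(e')$ is independent of the choice of ether observer $e'$ (Corollary~\ref{cor-v}), and observe that $X_{\bar 0}$ is the identity. However, you go further than the paper in a way that is genuinely valuable. The paper's proof simplifies the index $\bar v_e(e')$ to $\bar v_e(e)=\bar 0$ and concludes at once; but that substitution of $e'$ by $e$ is only licensed once one already knows that $e$ is (at rest with respect to) an ether observer, so the printed argument really only covers the direction $Ether(e)\Rightarrow Tr_*(Ether(e))$. Your anisotropy argument --- that a family of light cones moving with velocity $\bar v=\tfrac{\mathfrak{c_e}\bar V}{1+|\bar V|}$ can satisfy $space=c\cdot time$ for a single $c$ only if $\bar v=\bar 0$, since the apparent speeds $\mathfrak{c_e}+|\bar v|$ and $\mathfrak{c_e}-|\bar v|$ in the two directions along $\bar v$ must coincide --- is exactly the missing converse, and it is sound (note $|\bar v|<\mathfrak{c_e}$ always, so no absolute value issues arise).

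One caveat on your secondary point about the two-branch definition of $Tr_*(W^{STL})$. Your primary resolution (light signals are not inertial observers) is the one that works and is what the paper tacitly assumes. Your fallback --- checking that both branches agree for a body that is simultaneously a photon and an inertial observer --- would not actually succeed: for such a body $p$ the $b\in IOb$ branch places $w^{CK}_{ep}\big(X^{-1}_{\bar V}(\bar x)\big)$ on the $X^{-1}_{\bar v_p(e')}$-image of the time axis, and since $|\bar v_p(e')|=\mathfrak{c_e}\neq 0$ this is a line different from the time axis itself, so the two branches describe different worldlines. So if you want to be fully rigorous you must rely on the convention (or restrict the existential $\bexists{p}{\Ph}$ to non-observer light signals), not on the agreement of the branches.
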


\begin{proof}
Assuming \sy{ClassicalKin_{Full}}, $Tr_*(Ether(e))$ is equivalent to
\begin{multline*}
\Big(IOb(e)\AND \bexists {c} {\Q} \Big[c>0\AND \Bforall {\vx,\vy}{\Q^4} \bforall{e'}{Ether} \\ 
\big(\bexists {p} {\Ph}
%\vx,\vy\in\wl_{e}(p) 
\fblock{
W^{CK}\big(e,p,X^{-1}_{v_e(e')}(\vx)\big) \\
W^{CK}\big(e,p,X^{-1}_{v_e(e')}(\vy)\big) }
\leftrightarrow  space(\vx,\vy)=c\cdot time(\vx,\vy)\big)\Big] \Big),
\end{multline*}
which, since by Corollary \ref{thm-lightspeed} all ether observers are stationary relative to each other and therefore $\Bforall{k}{IOb}\Bforall{e_1,e_2}{Ether}[v_k(e_1) = v_k(e_2)]$, can be simplified to
\begin{multline*}
\Big(IOb(e)\AND \bexists {c} {\Q} \Big[c>0\AND \Bforall {\vx,\vy}{\Q^4} \\ 
\big(\bexists {p} {\Ph}
%\vx,\vy\in\wl_{e}(p) 
\fblock{
W^{CK}\big(e,p,X^{-1}_{\bar v_e(e)}(\vx)\big) \\
W^{CK}\big(e,p,X^{-1}_{\bar v_e(e)}(\vy)\big) }
\leftrightarrow  space(\vx,\vy)=c\cdot time(\vx,\vy)\big)\Big] \Big).
\end{multline*}
Since $\bar v_e(e) = \bar 0$, $X^{-1}_{\bar 0}$ is by definition the identity, hence from $\ax{AxEther^{CK}}$ it follows that $Tr_*(Ether(e)) \leftrightarrow Ether(e)$.
\end{proof}

\begin{cor}\label{cor_ce1}
\begin{equation*}
\sy{ClassicalKin_{Full}}\vdash Tr_*(\mathfrak{c_e}) = \mathfrak{c_e}. 
\end{equation*}
\end{cor}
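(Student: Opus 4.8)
The plan is to mimic the proof of Lemma~\ref{lemma-trc}, exploiting that by Lemma~\ref{tr_Ether} the relation $Ether$ is translated to itself and that the constant $\mathfrak{c_e}$ is pinned down entirely by the behaviour of light according to ether observers. Recall that $\mathfrak{c_e}$ is the unique quantity satisfying $\bforall{e}{Ether}\bforall{p}{Ph}[\speed_e(p)=\mathfrak{c_e}]$. Applying $Tr_*$ to this defining formula, the bound relation $Ether$ becomes $Tr_*(Ether)$, which is logically equivalent to $Ether$ by Lemma~\ref{tr_Ether}, while $Ph$ is translated to itself since $Tr_*$ is the identity on all concepts other than $W^{STL}$. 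Hence it suffices to show that for an ether observer $e$ and a photon $p$ the translated speed $Tr_*(\speed_e(p))$ equals $\speed_e(p)$.

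First I would unfold $\speed_e(p)$ as a condition on the worldline $\wl_e(p)$, i.e.\ on the atomic formulas $W^{STL}(e,p,\bar x)$. Since $p$ is a photon and not an inertial observer, only the non-observer branch of the definition of $Tr_*$ is active, so $Tr_*\big(W^{STL}(e,p,\bar x)\big)$ is $\bforall{e'}{Ether}\big[W^{CK}\big(e,p,X^{-1}_{\bar v_e(e')}(\bar x)\big)\big]$. The crucial observation is that all ether observers are mutually stationary by Corollary~\ref{thm-lightspeed}, so $\bar v_e(e')=\bar v_e(e)=\bar 0$ for every ether observer $e'$; consequently $X^{-1}_{\bar v_e(e')}=X^{-1}_{\bar 0}$ is the identity map. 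Therefore the translated worldline of $p$ according to $e$ coincides with $\wl_e(p)$ itself, which gives $Tr_*(\speed_e(p))=\speed_e(p)$.

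Combining these, the $Tr_*$-translation of the defining formula of $\mathfrak{c_e}$ is equivalent to that very formula, so the unique quantity it determines is unchanged, yielding $Tr_*(\mathfrak{c_e})=\mathfrak{c_e}$. The only delicate step is the reduction of the universally quantified family $X^{-1}_{\bar v_e(e')}$ to the identity: this rests on Corollary~\ref{thm-lightspeed} (ether observers are at rest relative to one another) together with the convention that $X_{\bar 0}$ — equivalently $X^{-1}_{\bar 0}$ — is the identity, exactly as was used at the end of the proof of Lemma~\ref{tr_Ether}. Everything else is the same formula-chasing through the definitions of $\speed$, $\wl$ and $W$ already carried out there.
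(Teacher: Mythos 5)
Your proof is correct and follows essentially the same route as the paper, which simply cites Lemma~\ref{tr_Ether}: you have just unfolded why that lemma suffices, namely that the defining formula of $\mathfrak{c_e}$ is preserved because $Ether$ and $Ph$ translate to themselves and, since all ether observers are mutually stationary (Corollary~\ref{thm-lightspeed}), the maps $X^{-1}_{\bar v_e(e')}=X^{-1}_{\bar 0}$ reduce to the identity. This is exactly the computation already carried out inside the paper's proof of Lemma~\ref{tr_Ether}, so no new ideas are needed.
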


\begin{proof}
This is a direct consequence of Lemma~\ref{tr_Ether}.
\end{proof}

\begin{lem}\label{tr_Ether2}
\begin{equation*}
\sy{ClassicalKin^{STL}_{Full}}\vdash Tr'_*(Ether(e)) \leftrightarrow Ether(e). 
\end{equation*}
\end{lem}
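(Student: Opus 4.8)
The plan is to follow the proof of Lemma~\ref{tr_Ether} line for line, interchanging the roles of $X$ and $Y$ and of $W^{CK}$ and $W^{STL}$, and carrying out the whole argument inside \sy{ClassicalKin^{STL}_{Full}}. First I would unfold $Tr'_*(Ether(e))$ by inserting the defining formula of $Ether$ and then applying the definition of $Tr'_*$ to each occurrence of the worldview relation. Every such occurrence inside $Ether(e)$ has the shape $W^{CK}(e,p,\cdot)$ with $p$ a light signal, and in \sy{ClassicalKin^{STL}_{Full}} no light signal is an inertial observer (a light signal has ether-speed $\mathfrak{c_e}$, whereas \ax{AxNoFTL} makes every inertial observer strictly slower than $\mathfrak{c_e}$ relative to the ether), so only the $b\notin IOb$ branch of $Tr'_*$ is ever activated. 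This turns $Tr'_*(Ether(e))$ into
\begin{multline*}
IOb(e)\AND \bexists {c} {\Q} \Big[c>0\AND \Bforall {\vx,\vy}{\Q^4} \bforall{e'}{Ether}\\
\big(\bexists {p} {\Ph}\fblock{W^{STL}\big(e,p,Y^{-1}_{\bar{v}_e(e')}(\vx)\big) \\ W^{STL}\big(e,p,Y^{-1}_{\bar{v}_e(e')}(\vy)\big)}\leftrightarrow  space(\vx,\vy)=c\cdot time(\vx,\vy)\big)\Big].
\end{multline*}

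The second step is to remove the inner quantifier $\bforall{e'}{Ether}$. Corollary~\ref{thm-lightspeed} and Corollary~\ref{cor-v} remain available in \sy{ClassicalKin^{STL}_{Full}}, since their derivations do not use \ax{AxThExp_+} (the only axiom of \sy{ClassicalKin_{Full}} dropped in passing to \sy{ClassicalKin^{STL}_{Full}}) and rely only on axioms of \sy{Kin_{Full}} together with \ax{AxEther} and \ax{AxAbsTime}, all of which are retained. Hence all ether observers are mutually stationary and receive one and the same velocity $\bar{v}_e(e')$ from the inertial observer $e$; the index $Y^{-1}_{\bar{v}_e(e')}$ is therefore independent of $e'$, and the universal quantifier over $e'$ may be dropped.

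Finally, I would argue that this common ether velocity must be $\bar 0$. By Theorem~\ref{thm-gal} the light signals of $e$ form a light cone moving with the ether velocity $\bar{v}_e(e')$, and the surviving formula demands that the $Y_{\bar{v}_e(e')}$-image of that cone be a right light cone. The key geometric fact — the analogue of Item~\ref{rad-cone} of Lemma~\ref{thm-lemma} for $Y$ — is that $Y_{\bar v}$ carries a light cone moving with velocity $\bar v$ to the light cone moving with the faster-than-light velocity $\frac{\bar v}{\mathfrak{c_e}-|\bar v|}$, which is a right cone if and only if $\bar v=\bar 0$. This forces $\bar{v}_e(e')=\bar 0$, i.e. (by mutual stationarity) that $e$ is itself an ether observer, in which case $Y^{-1}_{\bar 0}$ is the identity by the very definition of $Y$ (through $\bar V=\frac{\bar v}{\mathfrak{c_e}-|\bar v|}$ and $Y_{\bar v}=G^{-1}_{\bar V}\circ G_{\bar v}$), and the displayed formula collapses to the defining formula of $Ether(e)$ written with $W^{STL}$. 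The only genuinely non-clerical point, rather than bookkeeping about indices, is this last collapse: one must confirm that Corollary~\ref{thm-lightspeed} (hence Corollary~\ref{cor-v}) survives the restriction to slower-than-light observers so that $\bar{v}_e(e')$ is well defined and independent of $e'$, after which the cone-tilting fact and $Y^{-1}_{\bar 0}=$ identity force agreement with $Ether(e)$ exactly as in Lemma~\ref{tr_Ether}.
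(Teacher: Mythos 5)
Your proof is correct and follows the same route as the paper's: unfold $Tr'_*$ (only the $b\notin IOb$ branch fires, since \ax{AxNoFTL} prevents any light signal from being an inertial observer), collapse the $\bforall{e'}{\Ether}$ quantifier via Corollary~\ref{thm-lightspeed}, and use that $Y^{-1}_{\bar 0}$ is the identity. You are in fact somewhat more careful than the paper, whose one-line conclusion from $\bar v_e(e)=\bar 0$ tacitly presupposes that $e$ is already an ether observer and so only really covers the direction $Ether(e)\to Tr'_*(Ether(e))$, whereas your observation that $Y_{\bar v}$ carries a cone moving with velocity $\bar v$ to one moving with velocity $\bar v/(\mathfrak{c_e}-|\bar v|)$ — a right cone only for $\bar v=\bar 0$ — explicitly supplies the converse direction as well.
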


\begin{proof}
Assuming \sy{ClassicalKin^{STL}_{Full}}, $Tr'_*(Ether(e))$ is, after simplification by Corollary \ref{thm-lightspeed} equivalent to
\begin{multline*}
\Big(IOb(e)\AND \bexists {c} {\Q} \Big[c>0\AND \Bforall {\vx,\vy}{\Q^4} \\ 
\big(\bexists {p} {\Ph}
%\vx,\vy\in\wl_{e}(p) 
\fblock{
W^{CK}\big(e,p,Y^{-1}_{\bar v_e(e)}(\vx)\big) \\
W^{CK}\big(e,p,Y^{-1}_{\bar v_e(e)}(\vy)\big) }
\leftrightarrow  space(\vx,\vy)=c\cdot time(\vx,\vy)\big)\Big] \Big).
\end{multline*}
Since $\bar v_e(e) = \bar 0$, $Y^{-1}_{\bar 0}$ is by definition the identity, hence from $\ax{AxEther^{STL}}$ follows that $Tr'_*(Ether(e)) \leftrightarrow Ether(e)$.
\end{proof}

\begin{cor}\label{cor_ce2}
\begin{equation*}
\sy{ClassicalKin^{STL}_{Full}}\vdash Tr'_*(\mathfrak{c_e}) = \mathfrak{c_e}. 
\end{equation*}
\end{cor}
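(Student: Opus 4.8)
The plan is to mirror the proof of Corollary~\ref{cor_ce1}, obtaining the claim as an immediate consequence of Lemma~\ref{tr_Ether2}. First I would recall that $\mathfrak{c_e}$ is not a primitive symbol but a defined constant: it is the unique quantity satisfying $\bforall{e}{Ether}\bforall{p}{Ph}[\speed_e(p)=\mathfrak{c_e}]$. Hence proving $Tr'_*(\mathfrak{c_e})=\mathfrak{c_e}$ reduces to showing that $Tr'_*$ sends this defining formula to a formula logically equivalent to it in \sy{ClassicalKin^{STL}_{Full}}; uniqueness of the quantity satisfying the defining formula then forces the two constants to coincide.

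Next I would push $Tr'_*$ through the defining formula connective by connective. The outer bounded quantifiers over $Ether$ and $Ph$ are handled by Lemma~\ref{tr_Ether2}, which gives $Tr'_*(Ether(e))\leftrightarrow Ether(e)$, together with the fact that $Tr'_*$ is the identity on $Ph$ and on all purely mathematical atoms. The only substantive point is the matrix $\speed_e(p)=\mathfrak{c_e}$, whose translation unwinds, through the definitions of $\speed$ and of worldlines, to the $Tr'_*$-image of $W^{CK}(e,p,\bar x)$ for the photon $p$. Since $p$ is not an inertial observer, the non-observer clause of $Tr'_*(W^{CK})$ applies, yielding $\bforall{e'}{Ether}\,W^{STL}\big(e,p,Y^{-1}_{\bar v_e(e')}(\bar x)\big)$.

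The key step, and essentially the only place any work is needed, is to see that this coordinate adjustment is trivial for an ether observer. By Corollary~\ref{thm-lightspeed} all ether observers are stationary relative to one another, so $\bar v_e(e')=\bar v_e(e)=\bar 0$ for every $e'\in Ether$, and $Y^{-1}_{\bar 0}$ is the identity map by definition. Therefore $Tr'_*\big(W^{CK}(e,p,\bar x)\big)$ collapses to $W^{STL}(e,p,\bar x)$, so the worldline of every photon according to every ether observer, and hence its measured speed, is left unchanged by the translation. Consequently $Tr'_*$ fixes the defining formula of $\mathfrak{c_e}$, and by uniqueness $Tr'_*(\mathfrak{c_e})=\mathfrak{c_e}$.

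I expect the main, and essentially only, obstacle to be the bookkeeping of unwinding $\speed_e(p)$ into $W$ and confirming that the photon case really does route through the non-observer branch of $Tr'_*(W^{CK})$. Once that is in place, the vanishing of $\bar v_e(e')$ makes the argument identical in spirit to Lemma~\ref{tr_Ether2}, which is precisely why the statement can be recorded as a direct corollary rather than a separate computation.
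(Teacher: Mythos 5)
Your proposal is correct and follows essentially the same route as the paper, which records this corollary as a one-line consequence of Lemma~\ref{tr_Ether2}; your unpacking (reducing to the defining formula of $\mathfrak{c_e}$, routing the photon through the non-observer clause of $Tr'_*(W^{CK})$, and using $\bar v_e(e')=\bar 0$ so that $Y^{-1}_{\bar 0}$ is the identity) is exactly the computation already embedded in that lemma's proof. Nothing is missing.
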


\begin{proof}
This is a direct consequence of Lemma~\ref{tr_Ether2}.
\end{proof}

\begin{lem}\label{velocity-translation}\leavevmode
\begin{multline*}
\sy{ClassicalKin_{Full}}\vdash Tr_*\big(Ether(e) \AND \bar{v}^{STL}_k(e)=\bar v\big) \\ \leftrightarrow \BIG(Ether(e) \AND \bar{v}^{CK}_k(e)=\frac{\bar v}{\mathfrak{c_e}- |\bar v|}\BIG).
\end{multline*}
\end{lem}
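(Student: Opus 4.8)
The plan is to split the conjunction, using that $Tr_*$ commutes with $\land$, and to treat the two conjuncts separately. The $Ether(e)$ conjunct is already handled: by Lemma~\ref{tr_Ether}, $\sy{ClassicalKin_{Full}} \vdash Tr_*(Ether(e)) \leftrightarrow Ether(e)$. So the whole work reduces to proving, under the hypothesis $Ether(e)$, that $Tr_*\big(\bar{v}^{STL}_k(e)=\bar v\big)$ is equivalent to $\bar{v}^{CK}_k(e)=\frac{\bar v}{\mathfrak{c_e}-|\bar v|}$.

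First I would unwind the translation of the worldline membership that underlies the velocity formula. Since $\bar{v}^{STL}_k(e)=\bar v$ is defined through the points $\bar x\in\wl_k(e)$, i.e.\ $W^{STL}(k,e,\bar x)$, and since $e$ is an ether observer and hence an inertial observer, the translation uses the $b\in IOb$ clause of $Tr_*$:
\[Tr_*\big(W^{STL}(k,e,\bar x)\big)\equiv \bforall{e'}{Ether}\bexists{t}{Q}\big[w^{CK}_{ke}\big(X^{-1}_{\bar{v}_k(e')}(\bar x)\big)=X^{-1}_{\bar{v}_e(e')}(t,0,0,0)\big].\]
Here I would simplify using the two ether corollaries. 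By Corollary~\ref{cor-v} all ether observers have the same velocity relative to $k$, so $\bar{v}_k(e')$ is independent of $e'$ and the quantifier $\bforall{e'}{Ether}$ collapses, leaving $\bar{v}_k(e)$; by Corollary~\ref{thm-lightspeed} ether observers are stationary relative to each other, so $\bar{v}_e(e')=\bar 0$ and $X^{-1}_{\bar 0}$ is the identity, whence the right-hand side is just $(t,0,0,0)$. Thus the condition states that $w^{CK}_{ke}\big(X^{-1}_{\bar{v}_k(e)}(\bar x)\big)$ lies on the time axis, which by \ax{AxSelf} and the definition of the worldview transformation (which carries the time axis $\wl_e(e)$ to $\wl^{CK}_k(e)$) is equivalent to $X^{-1}_{\bar{v}_k(e)}(\bar x)\in\wl^{CK}_k(e)$. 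Hence the $Tr_*$-image of $\wl^{STL}_k(e)$ is exactly $X_{\bar{v}_k(e)}\big(\wl^{CK}_k(e)\big)$.

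Next I would read off the velocity of this image. Writing $\bar V:=\bar{v}^{CK}_k(e)$, the line $\wl^{CK}_k(e)$ has velocity $\bar V$, and by construction $X_{\bar V}=G^{-1}_{\bar v}\circ G_{\bar V}$ with $\bar v=\frac{\mathfrak{c_e}\bar V}{1+|\bar V|}$; since $G_{\bar V}$ sends a velocity-$\bar V$ line to the time axis and $G^{-1}_{\bar v}$ sends the time axis to a velocity-$\frac{\mathfrak{c_e}\bar V}{1+|\bar V|}$ line, the image $X_{\bar{v}_k(e)}(\wl^{CK}_k(e))$ has velocity $\frac{\mathfrak{c_e}\bar V}{1+|\bar V|}$. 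As $X_{\bar{v}_k(e)}$ is a bijection, the image contains two distinct points whenever $\wl^{CK}_k(e)$ does, so the existential clause of the velocity definition is preserved. Therefore $Tr_*(\bar{v}^{STL}_k(e)=\bar v)$ holds iff $\frac{\mathfrak{c_e}\bar V}{1+|\bar V|}=\bar v$, which—because $V\mapsto\frac{\mathfrak{c_e}V}{1+|V|}$ and $v\mapsto\frac{v}{\mathfrak{c_e}-|v|}$ are mutually inverse between $[0,\mathfrak{c_e}]$ and $[0,\infty]$—is equivalent to $\bar V=\frac{\bar v}{\mathfrak{c_e}-|\bar v|}$, i.e.\ $\bar{v}^{CK}_k(e)=\frac{\bar v}{\mathfrak{c_e}-|\bar v|}$. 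Combining this with the $Ether(e)$ conjunct gives the claim.

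The hard part will be the step that identifies the $Tr_*$-translated worldline of $e$ with $X_{\bar{v}_k(e)}(\wl^{CK}_k(e))$: one must correctly apply the observer clause of $Tr_*$, use \ax{AxSelf} to recognize the time-axis condition as membership in $\wl^{CK}_k(e)$, and invoke Corollaries~\ref{cor-v} and~\ref{thm-lightspeed} both to remove the dependence on the auxiliary ether observer $e'$ and to collapse $X^{-1}_{\bar{v}_e(e')}$ to the identity. Once that identification is in place, the remaining velocity bookkeeping with the Galilean boosts and the speed reparametrization is routine.
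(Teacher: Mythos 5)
Your proposal is correct and follows essentially the same route as the paper's own proof: translate $Ether(e)$ via Lemma~\ref{tr_Ether}, collapse the auxiliary ether quantifier and reduce $X^{-1}_{\bar v_e(e')}$ to the identity using Corollary~\ref{thm-lightspeed}, identify the translated worldline of $e$ with $X_{\bar v_k(e)}(\wl^{CK}_k(e))$ via \ax{AxSelf}, and read off the velocity from the decomposition $X_{\bar V}=G^{-1}_{\bar v}\circ G_{\bar V}$ with $\bar v=\frac{\mathfrak{c_e}\bar V}{1+|\bar V|}$. The only difference is presentational: the paper writes out the full translated velocity formula before simplifying, whereas you work directly at the level of worldlines.
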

\begin{proof}
By Lemma \ref{tr_Ether}, $Ether(e)$ translates into itself and hence
\begin{multline*}
Tr_*\big(Ether(e) \AND \bar{v}_{k}(e) = \bar{v} \big) \equiv  
Ether(e) \AND
\Bforall{e'}{Ether} \\
\vast(\Bexists{\vx,\vy}{Q^4} \fblock{
%W^{CK}\big(k,e,X^{-1}_{\bar v_k(e)}(\vx)\big) \\
%W^{CK}\big(k,e,X^{-1}_{\bar v_k(e)}(\vy)\big) \\ 
\bexists{\tau}{Q}\left(X_{\bar{v}_{e}(e')}\circ w^{CK}_{ke}\circ X^{-1}_{\bar{v}_k(e')}\right)(\vx)=(\tau,0,0,0) \\
\bexists{\tau}{Q}\left(X_{\bar{v}_{e}(e')}\circ w^{CK}_{ke}\circ X^{-1}_{\bar{v}_k(e')}\right)(\vy)=(\tau,0,0,0) \\
\vx \neq \vy}  \AND \\
\Bforall{\vx,\vy}{Q^4}
\fblock{
%W^{CK}\big(k,e,X^{-1}_{\bar v_k(e)}(\vx)\big) \\
%W^{CK}\big(k,e,X^{-1}_{\bar v_k(e)}(\vy)\big) \\ 
\bexists{\tau}{Q}\left(X_{\bar{v}_{e}(e')}\circ w^{CK}_{ke}\circ X^{-1}_{\bar{v}_k(e')}\right)(\vx)=(\tau,0,0,0) \\
\bexists{\tau}{Q}\left(X_{\bar{v}_{e}(e')}\circ w^{CK}_{ke}\circ X^{-1}_{\bar{v}_k(e')}\right)(\vy)=(\tau,0,0,0) \\
(y_1-x_1,y_2-x_2,y_3-x_3) = \bar{v} \cdot (y_0 - x_0)}\vast).
\end{multline*}
As $X_{\bar{v}_{e}(e)} = X_{\bar{0}} = Id$ and $\bar{v}_k(e') = \bar{v}_k(e)$ by Corollary \ref{thm-lightspeed}, we can simplify this to
\begin{multline*}
Tr_*\big(Ether(e) \AND \bar{v}_{k}(e) = \bar{v} \big) \equiv  
Ether(e) \AND \\
\Bexists{\vx,\vy}{Q^4}
\fblock{
%W^{CK}\big(k,e,X^{-1}_{\bar v_k(e)}(\vx)\big) \\
%W^{CK}\big(k,e,X^{-1}_{\bar v_k(e)}(\vy)\big) \\ 
\bexists{\tau}{Q}\left( w^{CK}_{ke}\circ X^{-1}_{\bar{v}_k(e)}\right)(\vx)=(\tau,0,0,0) \\
\bexists{\tau}{Q}\left( w^{CK}_{ke}\circ X^{-1}_{\bar{v}_k(e)}\right)(\vy)=(\tau,0,0,0) \\
\vx \neq \vy}  \AND \\
\Bforall{\vx,\vy}{Q^4}
\fblock{
%W^{CK}\big(k,e,X^{-1}_{\bar v_k(e)}(\vx)\big) \\
%W^{CK}\big(k,e,X^{-1}_{\bar v_k(e)}(\vy)\big) \\ 
\bexists{\tau}{Q}\left( w^{CK}_{ke}\circ X^{-1}_{\bar{v}_k(e)}\right)(\vx)=(\tau,0,0,0) \\
\bexists{\tau}{Q}\left( w^{CK}_{ke}\circ X^{-1}_{\bar{v}_k(e)}\right)(\vy)=(\tau,0,0,0) \\
(y_1-x_1,y_2-x_2,y_3-x_3) = \bar{v} \cdot (y_0 - x_0)}.
\end{multline*}

Since $X^{-1}_{\bar v_k(e)} = G_{-\bar v_k(e)} \circ G_{\frac{\mathfrak{c_e}\bar v_k(e)}{1+\left|\bar v_k(e)\right|}}$, and the images of $\vx$ and $\vy$ are on the time axis and hence by \ax{AxSelf} are on the worldline of $e$ according to $e$ (meaning that in Figure \ref{fig-ftl2stl-inv} $e$ coincides with $b$), $\bar {v}= \frac{\mathfrak{c_e}\bar v_k(e)}{1+\left|\bar v_k(e)\right|}$. From this it follows that the above is equivalent to \[Ether(e) \AND \bar{v}_k(e)=\frac{\bar v}{\mathfrak{c_e}-|\bar v|}. \tag*{\qedhere} \]
\end{proof}

\begin{lem}\label{velocity-translation2}\leavevmode
\begin{multline*}
\sy{ClassicalKin^{STL}_{Full}}\vdash Tr'_*\Big(Ether(e) \AND \bar{v}^{CK}_k(e)=\bar V\Big) \\ \leftrightarrow \BIG(Ether(e) \AND \bar{v}^{STL}_k(e)=\frac{\mathfrak{c_e}\cdot \bar V}{1+|\bar V|}\BIG).
\end{multline*}
\end{lem}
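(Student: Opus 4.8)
The plan is to mirror the proof of Lemma~\ref{velocity-translation}, swapping the transformations $X$ for their counterparts $Y$ and the two directions of the speed reparametrization. First I would invoke Lemma~\ref{tr_Ether2} to replace the conjunct $Tr'_*(Ether(e))$ by $Ether(e)$ itself, reducing the task to unfolding $Tr'_*(\bar v^{CK}_k(e)=\bar V)$. Here it is crucial that $Tr'_*$ carries the ether-velocity index $\bar v_k(e)$ as a \emph{slower-than-light} velocity, so that the relevant map is $Y^{-1}_{\bar v_k(e)}$ rather than $X^{-1}_{\bar v_k(e)}$.

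To unfold the velocity formula, I would substitute the definition of $\bar v_k(e)$ (two distinct points of $\wl_k(e)$ plus the velocity equation for every pair) and then translate the membership $\vx\in\wl_k(e)$, i.e.\ $W^{CK}(k,e,\vx)$. Since $e$ is an inertial observer, the relevant branch of the definition of $Tr'_*$ is the $b\in IOb$ case, replacing the condition by $\bexists{t}{Q}[w^{STL}_{ke}(Y^{-1}_{\bar v_k(e')}(\vx))=Y^{-1}_{\bar v_e(e')}(t,0,0,0)]$ quantified over all ether observers $e'$. At this point the simplification proceeds exactly as in Lemma~\ref{velocity-translation}: by Corollary~\ref{thm-lightspeed} all ether observers are stationary relative to one another, so $\bar v_k(e')=\bar v_k(e)$ and $\bar v_e(e')=\bar 0$; because $Y_{\bar 0}=Id$ by definition, the inner map collapses to the requirement that $w^{STL}_{ke}\circ Y^{-1}_{\bar v_k(e)}$ send $\vx$ and $\vy$ onto the time axis.

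The geometric heart of the argument is the velocity bookkeeping through $Y_{\bar v_k(e)}$. By \ax{AxSelf} the time axis is $\wl^{STL}_e(e)$, and $w^{STL}_{ke}$ maps $\wl^{STL}_k(e)$ onto it; hence the condition says precisely that $\vx,\vy$ range over the $Y_{\bar v_k(e)}$-image of $\wl^{STL}_k(e)$, and the velocity equation forces this image to move with velocity $\bar V$. Now $\wl^{STL}_k(e)$ moves with the slower-than-light velocity $\bar v^{STL}_k(e)$, and writing $Y_{\bar v}=G^{-1}_{\bar V'}\circ G_{\bar v}$ with $\bar V'=\frac{\bar v}{\mathfrak{c_e}-|\bar v|}$ shows that $Y_{\bar v^{STL}_k(e)}$ turns a velocity-$\bar v^{STL}_k(e)$ line into one of velocity $\frac{\bar v^{STL}_k(e)}{\mathfrak{c_e}-|\bar v^{STL}_k(e)|}$ (the boost $G_{\bar v}$ brings the line to rest and $G^{-1}_{\bar V'}$ gives it velocity $\bar V'$). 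Equating this with $\bar V$ and inverting the scalar relation $\bar V=\frac{\bar v}{\mathfrak{c_e}-|\bar v|}\Leftrightarrow\bar v=\frac{\mathfrak{c_e}\bar V}{1+|\bar V|}$ yields $\bar v^{STL}_k(e)=\frac{\mathfrak{c_e}\bar V}{1+|\bar V|}$, which is exactly the desired right-hand side.

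I expect the main obstacle to be bookkeeping rather than conceptual difficulty: one must keep straight that the index on $Y^{-1}_{\bar v_k(e)}$ is interpreted as a slower-than-light velocity, correctly apply the observer branch of $Tr'_*$ because $e\in IOb$, and verify the algebraic inversion of the speed reparametrization. Since the two proofs differ only in swapping $X\leftrightarrow Y$ and reversing the direction of the speed map, an acceptable alternative is simply to declare the argument analogous to that of Lemma~\ref{velocity-translation} and refer the reader there for the full details.
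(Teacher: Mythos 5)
Your proposal is correct and takes essentially the same route as the paper: the paper's own proof of Lemma~\ref{velocity-translation2} consists solely of the remark that it is analogous to Lemma~\ref{velocity-translation}, and your spelled-out version (swap $X$ for $Y$, use Lemma~\ref{tr_Ether2} in place of Lemma~\ref{tr_Ether}, collapse via Corollary~\ref{thm-lightspeed} and $Y_{\bar 0}=Id$, then invert the speed reparametrization) executes exactly that analogy with the correct bookkeeping.
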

\begin{proof}
The proof is analogous to the proof of Lemma \ref{velocity-translation}.
\end{proof} 

\begin{thm}\label{thm-tr_FTL}
$Tr_*$ is an interpretation of \sy{ClassicalKin^{STL}_{Full}} in \sy{ClassicalKin_{Full}}, i.e.,
\begin{equation*}
\sy{ClassicalKin_{Full}}\vdash Tr_*(\varphi) \enskip\text{ if }\enskip \sy{ClassicalKin^{STL}_{Full}}\vdash \varphi.
\end{equation*}
\end{thm}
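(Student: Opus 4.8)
The plan is to follow exactly the strategy used in the proofs of Theorems~\ref{thm-tr}, \ref{thm-tr+} and \ref{thm-tr'+}: since an interpretation is determined by its action on axioms, it suffices to show that the $Tr_*$-translation of each axiom of \sy{ClassicalKin^{STL}_{Full}} is a theorem of \sy{ClassicalKin_{Full}}, so I would go through them one by one. As $\ax{AxEField}$ and all purely mathematical expressions are translated into themselves, $Tr_*(\ax{AxEField})$ is immediate. The central tool for the remaining ``geometric'' axioms is Lemma~\ref{lem-tr_FTL}, which tells us that the translated worldview transformation $Tr_*(w^{STL}_{kb})=X_{\bar{v}_b(e)}\circ w^{CK}_{kb}\circ X^{-1}_{\bar{v}_k(e)}$ is a Galilean transformation, being a composition of the Galilean transformation $w^{CK}_{kb}$ (Theorem~\ref{thm-gal}) with Galilean boosts. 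Armed with this, the translations of $\ax{AxEv}$, $\ax{AxSelf}$, $\ax{AxSymD}$, $\ax{AxLine}$ and $\ax{AxAbsTime}$ go through just as the corresponding items in the proof of Theorem~\ref{thm-tr'+}: Galilean transformations preserve time differences and the spatial distance of simultaneous events, map straight lines to straight lines, and the $X$-maps send the time axis to the time axis.

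Next I would dispatch the bookkeeping axioms. $Tr_*(\ax{AxTriv})$ follows from $\ax{AxTriv}$ in the base theory together with Lemma~\ref{triv-FTL}, which shows that a trivial transformation is translated into a trivial transformation; $Tr_*(\ax{AxEther})$ follows from Lemma~\ref{tr_Ether}, giving $Tr_*(Ether(e)) \leftrightarrow Ether(e)$, so the existence of an ether observer is preserved; and $Tr_*(\ax{AxNoAcc})$ follows directly from $\ax{AxNoAcc}$, since by Corollary~\ref{thm-lightspeed} the map $X_{\bar{v}_k(e)}$ is the same bijection for every ether observer $e$.

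The genuinely new content lies in the two axioms that separate the theories, $\ax{AxNoFTL}$ and $\ax{AxThExp^{STL}}$, and these rest on the speed reparametrization $\bar V \mapsto \bar v = \frac{\mathfrak{c_e}\bar V}{1+|\bar V|}$ recorded in Lemma~\ref{velocity-translation} (with Corollary~\ref{cor_ce1} ensuring $Tr_*(\mathfrak{c_e})=\mathfrak{c_e}$). For $Tr_*(\ax{AxNoFTL})$ I would observe that in \sy{ClassicalKin_{Full}} an inertial observer $k$ may see the ether moving with any finite velocity $\bar{V}=\bar{v}^{CK}_k(e)$, but by Lemma~\ref{velocity-translation} this corresponds under the translation to the ether velocity $\bar{v}=\frac{\mathfrak{c_e}\bar{V}}{1+|\bar{V}|}$, whose magnitude $\frac{\mathfrak{c_e}|\bar{V}|}{1+|\bar{V}|}$ is strictly below $\mathfrak{c_e}$ for every finite $\bar{V}$; since relative speeds between inertial observers agree in magnitude (the transformations being Galilean), every translated inertial observer is slower than light. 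For $Tr_*(\ax{AxThExp^{STL}})$ I would use that $\ax{AxThExp_+}$ supplies inertial observers on every non-horizontal straight line, i.e.\ at every finite speed, in the base theory, and that the reparametrization is solvable for every target STL speed $v<\mathfrak{c_e}$ via $\bar V=\frac{\bar v}{\mathfrak{c_e}-|\bar v|}$, so the translated observers realise exactly the STL lines demanded by $\ax{AxThExp^{STL}}$, with $\ax{AxTriv}$ handling the shift off the origin.

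The step I expect to be most delicate is keeping the observer/non-observer case split in the definition of $Tr_*(W)$ under control. Because worldlines of observer-bodies are translated through the $w^{CK}_{kb}$-and-time-axis clause (to respect $\ax{AxSelf}$) while non-observer bodies are translated pointwise through $X^{-1}_{\bar{v}_k(e)}$, one must verify that the two clauses are mutually consistent and that the defined notions (events, worldlines, worldview transformations) simplify as expected. Lemma~\ref{lem-tr_FTL} is precisely what reconciles the two cases into a single Galilean worldview transformation, so I would lean on it throughout rather than unfolding the split by hand.
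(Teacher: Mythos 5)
Your proposal is correct and follows essentially the same route as the paper: axiom-by-axiom verification, with Lemma~\ref{lem-tr_FTL} giving that the translated worldview transformations are Galilean, Lemma~\ref{triv-FTL} for $\ax{AxTriv}$, Lemma~\ref{tr_Ether} for $\ax{AxEther}$, and Lemma~\ref{velocity-translation} together with the reparametrization $\bar v=\mathfrak{c_e}\bar V/(1+|\bar V|)$ for $\ax{AxNoFTL}$ and $\ax{AxThExp^{STL}}$. One small caution: the $X$-maps do \emph{not} send the time axis to the time axis (that is precisely why $Tr_*(W)$ splits off the observer case), but for $\ax{AxSelf}$ the relevant composition $X_{\bar{v}_k(e)}\circ w^{CK}_{kk}\circ X^{-1}_{\bar{v}_k(e)}$ collapses to the identity, so your argument still goes through exactly as in the paper.
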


\begin{proof}

\begin{itemize}[leftmargin=*]
\item  $Tr_*(\ax{EField^{STL}})$ follows from  \sy{ClassicalKin_{Full}} because $Tr_*(\ax{EField^{STL}})\equiv\ax{EField^{CK}}\in\sy{ClassicalKin_{Full}}$.

\item $Tr_*(\ax{AxSelf^{STL}})$ is equivalent to the following formula:
\begin{multline*}
\bforall {k}  {IOb}
\bforall {t,x,y,z}{\Q}
\Big(\bforall{e}{Ether}\\
\bexists{\tau}{Q} \big[X_{\bar{v}_k(e)} \big(w^{CK}_{kk}\big(X^{-1}_{\bar{v}_k(e)}(\bar{x})\big)\big)= (\tau,0,0,0)\big]\leftrightarrow x=y=z=0\Big).
\end{multline*}
Since $w^{CK}_{kk}=Id$, $Tr_*(\ax{AxSelf^{STL}})$ is equivalent to:
\begin{multline*}
\bforall {k}  {IOb}
\bforall {t,x,y,z}{\Q}
\Big(\bforall{e}{Ether}\\
\bexists{\tau}{Q} \big[\bar{x}= (\tau,0,0,0)\big]\leftrightarrow x=y=z=0\Big),
\end{multline*}
which is a tautology since $(t,x,y,z)=(\tau,0,0,0)$ for some $\tau\in Q$ exactly if $x=y=z=0$.

\item By Lemma \ref{lem-tr_FTL}, $Tr_*(\ax{AxEv^{STL}})$ is equivalent to the following formula:
  \begin{multline*}
    \bforall {k,h}{ IOb} 
    \Bforall {\bar x}{\Q^4}
    \Bexists {\bar y}{\Q^4}
    \bforall{e}{Ether}\\
    \left[\left(X_{\bar{v}_h(e)}\circ w^{CK}_{kh}\circ X^{-1}_{\bar{v}_k(e)}\right)(\bar x)=\bar y\right].
  \end{multline*}

Since $X_{\bar{v}_{k'}(e)}\circ w^{CK}_{kk'}\circ X^{-1}_{\bar{v}_k(e)}$ only consists of Galilean transformations, it is also a Galilean transformation. Since $X_{\bar{v}_{k'}(e)}\circ w^{CK}_{kk'}\circ X^{-1}_{\bar{v}_k(e)}$ is a Galilean transformation, a $\bar y$ for which $\left(X_{\bar{v}_h(e)}\circ w^{CK}_{kh}\circ X^{-1}_{\bar{v}_k(e)}\right)(\bar x)=\bar y$ must exist because Galilean transformations are surjective.

\item By Lemma \ref{lem-tr_FTL}, $Tr_*(\ax{AxSymD^{STL}})$, is equivalent to the following formula:
 \begin{multline*}
    \bforall {k,k'}{ IOb}
    \Bforall {\vx,\vy,\vx',\vy'}{\Q^4}
    \bforall{e}{Ether}\\
    \left(\fblock{  time(\vx,\vy)= time(\vx',\vy')=0\\
      \left(X_{\bar{v}_{k'}(e)}\circ w^{CK}_{kk'}\circ X^{-1}_{\bar{v}_k(e)}\right)(\vx)=\vx' \\
      \left(X_{\bar{v}_{k'}(e)}\circ w^{CK}_{kk'}\circ X^{-1}_{\bar{v}_k(e)}\right)(\vy)=\vy'}\to  space(\vx,\vy)= space(\vx',\vy')\right).
  \end{multline*}

Since $\vx$ and $\vy$ are simultaneous before and after the Galilean transformation $X_{\bar{v}_{k'}(e)}\circ w^{CK}_{kk'}\circ X^{-1}_{\bar{v}_k(e)}$, their distance before and after the transformation must, by $\ax{AxSymD^{CK}}$ and the definition of Galilean transformations, be the same.

\item By the definition of $Tr_*$, $Tr_*(\ax{AxLine^{STL}})$ is equivalent to
\begin{multline*}
\bforall {k,h}{ IOb}
\Bforall {\vx,\vy,\vz}{Q^4} 
\bforall{e}{Ether}\\
\fblock{
\bexists{\tau}{Q}\left(X_{\bar{v}_{h}(e)}\circ w^{CK}_{kh}\circ X^{-1}_{\bar{v}_k(e)}\right)(\vx)=(\tau,0,0,0) \\
\bexists{\tau}{Q}\left(X_{\bar{v}_{h}(e)}\circ w^{CK}_{kh}\circ X^{-1}_{\bar{v}_k(e)}\right)(\vy)=(\tau,0,0,0) \\
\bexists{\tau}{Q}\left(X_{\bar{v}_{h}(e)}\circ w^{CK}_{kh}\circ X^{-1}_{\bar{v}_k(e)}\right)(\vz)=(\tau,0,0,0) \\
\bexists {a} {\Q}  \big[\vz-\vx=a(\vy-\vx) \lor
\vy-\vz=a(\vz-\vx)\big]}.
\end{multline*}
$X_{\bar{v}_{h}(e)}\circ w^{CK}_{kh}\circ X^{-1}_{\bar{v}_k(e)}$ is a linear bijection, so the images of $\vx$, $\vy$ and $\vz$ are on a line if and only if $\vx$, $\vy$ and $\vz$, are on a line. Since the images of $\vx$, $\vy$ and $\vz$ are all on the time axis, this translation follows.

\item By Lemma \ref{lem-tr_FTL}, $Tr_*(\ax{AxTriv^{STL}})$ is equivalent to
%\begin{multline*}
\[
\bforall{T}{\Triv} 
\bforall {h}  {IOb} 
\bexists {k}  {IOb}
\bforall{e}{Ether}
[X_{\bar v_{k}(e)}\circ w^{CK}_{hk} \circ X^{-1}_{\bar v_{h}(e)}=T].
\]
%\end{multline*}

To prove $Tr_*(\ax{AxTriv^{STL}})$, we have to find an inertial observer $k$ for every trivial transformation $T$ and the inertial observer $h$ such that $X_{\bar v_{k}(e)}\circ w^{CK}_{hk} \circ X^{-1}_{\bar v_{h}(e)}=T$. Since $T$ is an affine transformation it is a translation by a vector, say $\bar z'$, after a linear transformation, say $T_0$.  By Lemma~\ref{triv-FTL}, if $w^{CK}_{hk}$ is the translation by the vector $\bar{z}=X^{-1}_{\bar{v}_{k}(e)}(\bar{z}')$ after $T_0$, then $X_{\bar v_{k}(e)}\circ w^{CK}_{hk} \circ X^{-1}_{\bar v_{h}(e)}=T$. By axiom $\ax{AxTriv^{CK}}$, there is an inertial observer $k$ such that $w^{CK}_{hk}$ is the translation by the vector $\bar{z}=X^{-1}_{\bar{v}_{k}(e)}(\bar{z}')$ after the trivial linear transformation $T_0$. Hence there is an inertial observer $k$ for which $X_{\bar v_{k}(e)}\circ w^{CK}_{hk} \circ X^{-1}_{\bar v_{h}(e)}=T$, and that is what we wanted to show. 

\item  By Lemma \ref{lem-tr_FTL}, $Tr_*(\ax{AxAbsTime^{STL}})$ is equivalent to
\begin{multline*}
\bforall {k,k'}{ IOb}
\Bforall {\vx,\vy,\vx',\vy'}{\Q^4} 
\Bforall {e}{Ether}\\
\left(\fblock{
 \left(X_{\bar{v}_{k'}(e)}\circ w^{CK}_{kk'}\circ X^{-1}_{\bar{v}_k(e)}\right)(\vx)=\vx' \\
 \left(X_{\bar{v}_{k'}(e)}\circ w^{CK}_{kk'}\circ X^{-1}_{\bar{v}_k(e)}\right)(\vy)=\vy'}\to
 time(\vx,\vy)= time(\vx',\vy') \right).
\end{multline*}
Since $X_{\bar{v}_{k'}(e)}\circ w^{CK}_{kk'}\circ X^{-1}_{\bar{v}_k(e)}$ is a Galilean transformation, it preserves absolute time. Therefore, the above follows from the definition of Galilean transformations and \ax{AxAbsTime^{CK}}

\item By Lemma~\ref{tr_Ether}, $Tr_*(\ax{AxEther^{STL}})$ is equivalent to $\ax{AxEther^{CK}}$.

\item  By the definition of $X_{\bar{v}_{k}(e)}$ and Lemma~\ref{tr_Ether}, $Tr_*(\ax{AxThExp^{STL}})$ is equivalent to
\begin{multline*}
\bexists {h}{B}  [ IOb(h)] \AND 
\bforall {e} {Ether}
\Bforall {\vx,\vy} {\Q^4}
\BIGG(space(\vx,\vy)<\mathfrak{c_e}\cdot time(\vx,\vy)
\\ \to \bexists {k}  {IOb} 
%\vx,\vy\in \wl_e(k)
\fblock{\bexists{\tau}{Q}\left(X_{\bar{v}_{k}(e)}\circ w^{CK}_{ek}\circ X^{-1}_{\bar{v}_e(e)}\right)(\vx)=(\tau,0,0,0) \\
\bexists{\tau}{Q}\left(X_{\bar{v}_{k}(e)}\circ w^{CK}_{ek}\circ X^{-1}_{\bar{v}_e(e)}\right)(\vy)=(\tau,0,0,0) \\}
\BIGG),
\end{multline*}
which, since $X^{-1}_{\bar{v}_e(e)} = X^{-1}_{\bar{0}} = Id$, is equivalent to
\begin{multline*}
\bexists {h}{B}  [ IOb(h)] \AND 
\bforall {e} {Ether}
\Bforall {\vx,\vy} {\Q^4}
\BIGG(space(\vx,\vy)<\mathfrak{c_e}\cdot time(\vx,\vy)
\\ \to \bexists {k}  {IOb} 
%\vx,\vy\in \wl_e(k)
\fblock{\bexists{\tau}{Q}\left(X_{\bar{v}_{k}(e)}\circ w^{CK}_{ek} \right)(\vx)=(\tau,0,0,0) \\
\bexists{\tau}{Q}\left(X_{\bar{v}_{k}(e)}\circ w^{CK}_{ek} \right)(\vy)=(\tau,0,0,0) \\}
\BIGG).
\end{multline*}
The first conjunct follows immediately as it is the same as the first conjunct of \ax{AxThExp^{CK}_+}.
If $x_0=y_0$ then $time(\vx,\vy)=0$, and $space(\vx,\vy)$ cannot be smaller than zero, then the antecedent $space(\vx,\vy)<\mathfrak{c_e}\cdot time(\vx,\vy)$ is false and the implication is true. 
Let us now assume that $x_0 \neq y_0$. Let us denote the velocity corresponding to the line containing both $\bar x$ and $\bar y$ by $\bar v$.  $|\bar v|<\mathfrak{c_e}$ as $space(\vx,\vy)<\mathfrak{c_e}\cdot time(\vx,\vy)$.  The worldview transformation $w^{CK}_{ek}$ is $G^{-1}_{\bar v_k(e)}\circ M$ for a translation $M$ by some spacetime vector. By the definition of $X_{\bar{v}_{k}(e)}$, if $\bar{v}_k(e)=\frac{\bar v}{\mathfrak{c_e}-\bar v}$, then $X_{\bar{v}_{k}(e)}\circ w^{CK}_{ek}$ is $G^{-1}_{\bar v}\circ M$ which always maps $\bar x$ and $\bar y$ to a line parallel to the time axis; and with an appropriately chosen translation $M$, $G^{-1}_{\bar v}\circ M$ maps $\bar x$ and $\bar y$ to the time axis. Therefore, by \ax{AxTriv^{CK}} and \ax{AxThExp^{CK}_+} there is an observer $k$ such that $X_{\bar{v}_{k}(e)}\circ w^{CK}_{ek}$ maps $\bar{x}$ and $\bar {y}$ to the time axis. So $Tr_*(\ax{AxThExp^{STL}})$ follows from \sy{ClassicalKin_{Full}}.

\item By Lemma~\ref{tr_Ether} and Corollary~\ref{cor_ce1}, $Tr_*(\ax{AxNoFTL})$ is equivalent to

\[\bforall {k} {IOb} \bexists {e}{Ether} \big[Tr_*\big(\speed_{e}(k) < \mathfrak{c_e}\big)\big].\]

Since $speed_e(k)=speed_k(e)$, this is equivalent to 
\[\bforall {k} {IOb} \bexists {e}{Ether}\big[ Tr_*\big(\speed_{k}(e) < \mathfrak{c_e}\big)\big].\]

By Lemma~\ref{velocity-translation}, this is equivalent to
\[\bforall {k} {IOb} \bforall {e}{Ether}\bexists{v}{Q^3}\fblock{speed_k(e)=\frac{v}{\mathfrak{c_e}-v}\\  v<\mathfrak{c_e}},\]

which is equivalent to 
  \[\bforall {k} {IOb} \bforall {e}{Ether}\left[\frac{\mathfrak{c_e}\cdot speed_k(e)}{1+speed_k(e)}<\mathfrak{c_e}\right],\]
which is true since $speed_k(e)< 1+speed_k(e)$ always holds because $\mathfrak{c_e}$ and $speed_k(e)$ are positive. 

\item  $Tr_*(\ax{AxNoAcc^{STL}})$ is equivalent to
\begin{multline*}
\bforall{k}{B}\bexists{\bar x}{Q^4}\bexists{b}{B} \bforall{e}{Ether} \\
\left(
%W(k,b,\bar {x})
\fblock{
    b\not\in IOb \to W^{CK}\big(k,b,X^{-1}_{\bar{v}_k(e)}(\bar{x})\big)\\
    b\in IOb \to \bexists{t}{Q} \big[w^{CK}_{kb}\big(X^{-1}_{\bar{v}_k(e)}(\bar{x})\big)= X_{\bar{v}_b(e)}^{-1}(t,0,0,0) \big] }
\rightarrow IOb(k) \right).
\end{multline*} 
If $k$ is an inertial observer, then the consequent of the implation is true, and hence the formula is true whatever the truth value of the antecedent. If $k$ is not an inertial observer, then by \ax{AxNoAcc^{CK}} it has no worldview and hence $v_k(e)$ is not defined, which makes the antecedent false.  \qedhere
\end{itemize}
\end{proof}

\begin{lem}\label{triv-STL}
Assume  \sy{{ClassicalKin}^{STL}_{Full}}.  Let $e$ be an ether observer, and let $k$ and $h$ be inertial. Assume that $w^{CK}_{hk}$ is a trivial transformation which is the translation by the vector $\bar{z}$ after the trivial linear transformation $T$. Then $Y_{\bar{v}_k(e)}\circ w^{SDL}_{hk}\circ Y^{-1}_{\bar{v}_h(e)}$ is the trivial transformation which is the translation by the vector $Y_{\bar{v}_k(e)}(\bar{z})$ after $T$.
\end{lem}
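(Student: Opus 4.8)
The plan is to mirror the argument of Lemma~\ref{triv-FTL}, exchanging the boost families $X$ and $Y$ and working under \sy{ClassicalKin^{STL}_{Full}} instead of \sy{ClassicalKin_{Full}}. Writing the worldview transformation of the hypothesis as $w^{STL}_{hk}=M_{\bar z}\circ T$, where $M_{\bar z}$ is the translation by $\bar z$ and $T$ is the linear trivial part, I would first record how $T$ acts on the ether direction: the linear part satisfies $(0,\bar v_k(e))=T(0,\bar v_h(e))$, while $M_{\bar z}$ leaves velocities unchanged, so $\bar v_k(e)$ is the image of $\bar v_h(e)$ under the spatial isometry part of $T$. Here all these ether velocities are genuinely slower than light, since \ax{AxNoFTL} holds in \sy{ClassicalKin^{STL}_{Full}}, so the boosts $Y_{\bar v_k(e)}$ and $Y_{\bar v_h(e)}$ are well defined.

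Next I would show that $Y_{\bar v_k(e)}\circ w^{STL}_{hk}\circ Y^{-1}_{\bar v_h(e)}$ is again a trivial transformation. By Lemma~\ref{lem-xy} each $Y_{\bar v}=G^{-1}_{\bar V}\circ G_{\bar v}$ is a composition of Galilean boosts, hence Galilean, and $w^{STL}_{hk}$ is Galilean by (the \sy{ClassicalKin^{STL}_{Full}} version of) Theorem~\ref{thm-gal}, so the conjugate is Galilean. Because $w^{STL}_{hk}$ is trivial it sends vertical lines to vertical lines, and the conjugate does so as well; thus it is a Galilean transformation carrying vertical lines to vertical lines, i.e.\ a trivial transformation. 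Using linearity of $Y_{\bar v_k(e)}$ to pull the translation out, $Y_{\bar v_k(e)}\circ M_{\bar z}=M_{Y_{\bar v_k(e)}(\bar z)}\circ Y_{\bar v_k(e)}$, it then suffices to treat the linear case and prove $Y_{\bar v_k(e)}\circ w^{STL}_{hk}\circ Y^{-1}_{\bar v_h(e)}=w^{STL}_{hk}$ when $w^{STL}_{hk}$ is linear.

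In the linear case I would compare the two maps on a basis. On the time vector $(1,0,0,0)$ they agree: $w^{STL}_{hk}$ fixes it, and the conjugate fixes it too, because $Y^{-1}_{\bar v_h(e)}$ sends $(1,0,0,0)$ into the $h$-ether direction, $w^{STL}_{hk}$ carries the $h$-ether direction to the $k$-ether direction (by the velocity computation above), and $Y_{\bar v_k(e)}$ returns it to $(1,0,0,0)$. On purely spatial vectors they also agree: being compositions of Galilean boosts, $Y_{\bar v_h(e)}$ and $Y_{\bar v_k(e)}$ are the identity on the vectors orthogonal to the time axis, so the conjugate acts on such vectors exactly as $w^{STL}_{hk}$ does. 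Agreement on the time vector and on the spatial hyperplane forces $Y_{\bar v_k(e)}\circ w^{STL}_{hk}\circ Y^{-1}_{\bar v_h(e)}=w^{STL}_{hk}$, and reinstating the translation gives that the general conjugate is the translation by $Y_{\bar v_k(e)}(\bar z)$ after $T$, as claimed. The one step needing care—and the main obstacle—is the verticality claim of the second paragraph together with the verification that the justification theorem yielding the Galilean form of $w^{STL}_{hk}$ survives replacing \ax{AxThExp_+} by \ax{AxThExp^{STL}}; everything else is the formal mirror of Lemma~\ref{triv-FTL}.
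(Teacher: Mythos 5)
Your proof is correct and is essentially the paper's own proof: the paper disposes of Lemma~\ref{triv-STL} by declaring it analogous to Lemma~\ref{triv-FTL}, and your argument is exactly that analogy carried out, with $X$ replaced by $Y$, the same decomposition $w^{STL}_{hk}=M_{\bar z}\circ T$, the same reduction to the linear case, and the same agreement check on the time axis and on the spatial hyperplane. The two points you flag for care (verticality of the conjugate and the survival of Theorem~\ref{thm-gal} under \ax{AxThExp^{STL}}) are also implicit in the paper's treatment of Lemma~\ref{triv-FTL}, so nothing further is needed.
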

\begin{proof}
  The proof is analogous to the proof for Lemma \ref{triv-FTL}.
\end{proof}

\begin{thm}\label{thm-tr'_FTL}
$Tr'_*$ is an interpretation of \sy{ClassicalKin_{Full}} in \sy{ClassicalKin^{STL}_{Full}}, i.e.,
\begin{equation*}
\sy{ClassicalKin^{STL}_{Full}}\vdash Tr'_*(\varphi) \enskip\text{ if }\enskip \sy{ClassicalKin_{Full}}\vdash \varphi.
\end{equation*}
\end{thm}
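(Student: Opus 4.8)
The plan is to show, exactly as in the proof of Theorem~\ref{thm-tr_FTL}, that the $Tr'_*$-translation of every axiom of \sy{ClassicalKin_{Full}} is a theorem of \sy{ClassicalKin^{STL}_{Full}}; since interpretations preserve provability, this suffices. The whole argument is the mirror image of Theorem~\ref{thm-tr_FTL}, with $Tr_*$ replaced by $Tr'_*$, the transformation $X$ by $Y$, and each supporting lemma by its dual: Lemma~\ref{lem-tr_FTL} by Lemma~\ref{lem-tr'_FTL}, Lemma~\ref{triv-FTL} by Lemma~\ref{triv-STL}, Lemma~\ref{tr_Ether} by Lemma~\ref{tr_Ether2}, and Lemma~\ref{velocity-translation} by Lemma~\ref{velocity-translation2}. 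The central tool is Lemma~\ref{lem-tr'_FTL}, which rewrites the translated worldview transformation as $Tr'_*(w^{CK}_{kh}) = Y_{\bar{v}_h(e)}\circ w^{STL}_{kh}\circ Y^{-1}_{\bar{v}_k(e)}$; because each $Y$ is a composition of Galilean boosts and $w^{STL}_{kh}$ is Galilean by Theorem~\ref{thm-gal}, this composite is again a Galilean transformation, and the axioms that merely assert Galilean-invariant properties will follow from their \sy{ClassicalKin^{STL}_{Full}} counterparts.

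For the routine axioms I would proceed as follows. \AxEField\ translates to itself. For $\ax{AxSelf^{CK}}$ the relevant branch of $Tr'_*(W^{CK}(k,k,\bar x))$ uses $w^{STL}_{kk}=Id$, so the condition collapses to $Y^{-1}_{\bar v_k(e)}(\bar x)=Y^{-1}_{\bar v_k(e)}(t,0,0,0)$ for some $t$, which by bijectivity is the tautology $x=y=z=0$. For $\ax{AxEv^{CK}}$, $\ax{AxSymD^{CK}}$, $\ax{AxLine^{CK}}$ and $\ax{AxAbsTime^{CK}}$ I would use Lemma~\ref{lem-tr'_FTL} to express the translated hypotheses through the Galilean transformation $Y_{\bar{v}_{h}(e)}\circ w^{STL}_{kh}\circ Y^{-1}_{\bar{v}_k(e)}$, and then invoke, respectively, its surjectivity, its preservation of the distance of simultaneous events, its line-preservation, and its preservation of absolute time, each together with the matching \sy{ClassicalKin^{STL}_{Full}} axiom. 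For $\ax{AxTriv^{CK}}$ I would apply Lemma~\ref{triv-STL} and $\ax{AxTriv^{STL}}$: writing a trivial $T$ as a translation by $\bar z'$ after a linear trivial $T_0$, I produce an inertial $k$ whose $w^{STL}_{hk}$ is the translation by $Y^{-1}_{\bar v_k(e)}(\bar z')$ after $T_0$. Axiom $\ax{AxEther^{CK}}$ is settled by Lemma~\ref{tr_Ether2}, reducing $Tr'_*(\ax{AxEther^{CK}})$ to $\ax{AxEther^{STL}}$ (with Corollary~\ref{cor_ce2} matching the light-speed constant), and $\ax{AxNoAcc^{CK}}$ follows as before: an inertial $k$ makes the consequent of the translated implication true, while a non-inertial $k$ has no worldview, so $\bar v_k(e)$ is undefined and the antecedent is false.

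The main obstacle is $\ax{AxThExp_+^{CK}}$, which (unlike the slower-than-light $\ax{AxThExp^{STL}}$) asserts an inertial observer on \emph{every} non-horizontal line, hence at every finite speed, and these must be manufactured from the slower-than-light observers alone. The speed reparametrization now runs opposite to Theorem~\ref{thm-tr_FTL}: by Lemma~\ref{velocity-translation2} a classical observer seeing the ether at an arbitrary finite velocity $\bar V$ corresponds under $Tr'_*$ to a slower-than-light observer seeing it at $\tfrac{\mathfrak{c_e}\bar V}{1+|\bar V|}<\mathfrak{c_e}$. Given a non-horizontal target line of finite velocity $\bar V$, I would therefore set $\bar v=\tfrac{\mathfrak{c_e}\bar V}{1+|\bar V|}$, use $\ax{AxThExp^{STL}}$ together with $\ax{AxTriv^{STL}}$ to obtain a slower-than-light observer realizing $\bar v$, and then check --- using $Y^{-1}_{\bar v_e(e)}=Y^{-1}_{\bar 0}=Id$, the Galilean-boost decomposition of $w^{STL}_{ek}$, and Lemma~\ref{lem-xy} --- that its $Y$-image passes through $\bar x$ and $\bar y$. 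Verifying that $\bar V\mapsto\tfrac{\mathfrak{c_e}\bar V}{1+|\bar V|}$ maps all finite velocities onto exactly $[0,\mathfrak{c_e})$, so that every non-horizontal line of the source language is indeed covered by a slower-than-light observer, is the step that needs the most care; once this is in place the computation is the exact mirror of the $Tr_*(\ax{AxThExp^{STL}})$ argument in Theorem~\ref{thm-tr_FTL}.
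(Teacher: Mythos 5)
Your proposal is correct and follows essentially the same route as the paper's proof: the routine axioms are handled by the Galilean character of $Y_{\bar{v}_h(e)}\circ w^{STL}_{kh}\circ Y^{-1}_{\bar{v}_k(e)}$ together with the dual lemmas, and the critical case $\ax{AxThExp_+}$ is settled exactly as in the paper, by reducing to an ether observer's worldview (where $Y^{-1}_{\bar v_e(e)}=Id$) and realizing an arbitrary finite velocity $\bar V$ through a slower-than-light observer with $\bar v_k(e)=\mathfrak{c_e}\bar V/(1+|\bar V|)<\mathfrak{c_e}$ obtained from $\ax{AxThExp^{STL}}$ and $\ax{AxTriv^{STL}}$.
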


\begin{proof}
The proofs for translations $Tr'_*(\ax{EField^{CK}})$, $Tr'_*(\ax{AxSelf^{CK}})$, $Tr'_*(\ax{AxSymD^{CK}})$, $Tr'_*(\ax{AxLine^{CK}})$, $Tr'_*(\ax{AxTriv^{CK}})$, $Tr'_*(\ax{AxAbsTime}^{CK})$ and $Tr'_*(\ax{AxNoAcc^{CK}})$ are analogous to the corresponding proofs in Theorem \ref{thm-tr_FTL}. For the full proof see \citep[p.69-71]{diss}.

\noindent
\begin{itemize}[leftmargin=*]

\item $Tr'_*(\ax{AxEv^{CK}})$ is equivalent to the following formula:
  \begin{multline*}
    \bforall {k,h}{ IOb} 
    \Bforall {\bar x}{\Q^4}
    \Bexists {\bar y}{\Q^4}
    \bforall{e}{Ether}\\
    \left[\left(Y_{\bar{v}_h(e)}\circ w^{STL}_{kh}\circ Y^{-1}_{\bar{v}_k(e)}\right)(\bar x)=\bar y\right],
  \end{multline*}
which follows from \sy{ClassicalKin^{STL}_{Full}} since $Y_{\bar{v}_{k'}(e)}\circ w^{STL}_{kk'}\circ Y^{-1}_{\bar{v}_k(e)}$ is a Galilean transformation.

\item By Lemma \ref{tr_Ether2}, $Tr'_*(\ax{AxEther}^{CK})$ is $\ax{AxEther}^{STL}$.

\item $Tr'_*(\ax{AxThExp_+})$  is equivalent to
\begin{multline*}
\bexists {h}{B} \big[ IOb(h)\big] \AND
\bforall {k}{ IOb} 
\Bforall {\vx,\vy}{\Q^4} 
\bforall {e} {Ether} 
\BIG(x_0\neq y_0 \to\\ \Bexists {h}{ IOb} 
\fblock{\bexists{\tau}{Q}\left(Y_{\bar{v}_{h}(e)}\circ w^{STL}_{kh}\circ Y^{-1}_{\bar{v}_k(e)}\right)(\vx)=(\tau,0,0,0) \\
\bexists{\tau}{Q}\left(Y_{\bar{v}_{h}(e)}\circ w^{STL}_{kh}\circ Y^{-1}_{\bar{v}_k(e)}\right)(\vy)=(\tau,0,0,0) \\}
\BIG).
\end{multline*}
Since the worldviews of any two inertial observers differ only by a Galilean transformation it is enough to prove $Tr'_*(\ax{AxThExp_+})$ when $k$ is an ether observer. So it is enough to prove the following: 
\begin{multline*}
\bexists {h}{B} \big[ IOb(h)\big] \AND
\bforall {k}{ IOb} 
\Bforall {\vx,\vy}{\Q^4}\\ 
\left(x_0\neq y_0 \to \Bexists {h}{ Ether} 
\fblock{\bexists{\tau}{Q}\left(Y_{\bar{v}_{h}(e)}\circ w^{STL}_{eh}\right)(\vx)=(\tau,0,0,0) \\
\bexists{\tau}{Q}\left(Y_{\bar{v}_{h}(e)}\circ w^{STL}_{eh}\right)(\vy)=(\tau,0,0,0) \\}
\right).
\end{multline*}
The first conjunct follows immediately as it is the same as the first conjunct of \ax{AxThExp^{STL}}.
If $x_0 = y_0$ the above statement is true as the antecedent of the implication is false. If $x_0\neq y_0$, then let us denote the velocity corresponding to the line containing both $\bar x$ and $\bar y$ by $\bar V$.  The worldview transformation $w^{STL}_{ek}$ is $G^{-1}_{\bar v_k(e)}\circ M$ for a translation $M$ by some spacetime vector. By the definition of $Y_{\bar{v}_{k}(e)}$, if $\bar{v}_k(e)=\frac{\mathfrak{c_e}\cdot \bar V}{1+\bar V}$, then $Y_{\bar{v}_{k}(e)}\circ w^{STL}_{ek}$ is $G^{-1}_{\bar V}\circ M$, which always maps $\bar x$ and $\bar y$ to a line parallel to the time axis; and with an appropriately chosen translation $M$, $G^{-1}_{\bar V}\circ M$ maps $\bar x$ and $\bar y$ to the time axis. Since $\bar{v}_k(e)=\frac{\mathfrak{c_e}\cdot \bar V}{1+\bar V}<\mathfrak{c_e}$, by \ax{AxThExp^{STL}} and \ax{AxTriv^{STL}} there is an observer $k$ such that $Y_{\bar{v}_{k}(e)}\circ w^{STL}_{ek}$ maps $\bar{x}$ and $\bar {y}$ to the time axis. So $Tr_*(\ax{AxThExp^{STL}})$ follows from \sy{ClassicalKin^{STL}_{Full}}.\qedhere
\end{itemize}
\end{proof}

\begin{thm}\label{thm-defeqFTL}
$Tr_{*}$ is a definitional equivalence between theories \sy{ClassicalKin_{Full}} and \sy{ClassicalKin_{Full}^{STL}}.
\end{thm}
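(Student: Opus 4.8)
The plan is to apply the definition of definitional equivalence directly. Since $Tr_*$ and $Tr'_*$ have already been shown to be interpretations in Theorems~\ref{thm-tr_FTL} and~\ref{thm-tr'_FTL}, it only remains to verify the two round-trip conditions
\[\sy{ClassicalKin^{STL}_{Full}}\vdash Tr'_*\big(Tr_*(\varphi)\big)\leftrightarrow\varphi \quad\text{and}\quad \sy{ClassicalKin_{Full}}\vdash Tr_*\big(Tr'_*(\psi)\big)\leftrightarrow\psi.\]
Because both translations commute with the logical connectives and quantifiers, it suffices to check these equivalences on atomic formulas, and since $Tr_*$ and $Tr'_*$ act as the identity on $\Ph$, $ IOb$, the field operations, $\le$ and $=$, the only genuine work is the round-trip of the worldview relation $W$. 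I would therefore split the argument, exactly as the definitions of $Tr_*$ and $Tr'_*$ do, into the non-observer case $b\notin IOb$ and the observer case $b\in IOb$, using throughout that $Ether$ and the constant $\mathfrak{c_e}$ are preserved by Lemmas~\ref{tr_Ether} and~\ref{tr_Ether2} and Corollaries~\ref{cor_ce1} and~\ref{cor_ce2}.

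For the non-observer case I would compute $Tr'_*\big(Tr_*(W^{STL}(k,b,\bar x))\big)$ explicitly. Translating forward gives $\bforall{e}{Ether}\big[W^{CK}\big(k,b,X^{-1}_{\bar v_k(e)}(\bar x)\big)\big]$, and applying $Tr'_*$ turns each $W^{CK}$ into a formula built from $W^{STL}$ composed with $Y^{-1}$. The bound ether quantifier is preserved by Lemma~\ref{tr_Ether2}, and the velocity index inside $X^{-1}$ is retranslated according to Lemma~\ref{velocity-translation2}; the essential point is that the two maps $\bar v\mapsto \frac{\bar v}{\mathfrak{c_e}-|\bar v|}$ and $\bar V\mapsto\frac{\mathfrak{c_e}\bar V}{1+|\bar V|}$ appearing in Lemmas~\ref{velocity-translation} and~\ref{velocity-translation2} are mutually inverse, so the composite index is again $\bar v_k(e)$. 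Using Corollary~\ref{thm-lightspeed} to collapse the two ether quantifiers to a single ether velocity and then Lemma~\ref{lem-xy} to rewrite $X^{-1}_{\bar V}=Y_{\bar v}$, the inner composition becomes $Y^{-1}_{\bar v_k(e)}\circ Y_{\bar v_k(e)}=Id$, so the whole expression simplifies to $\bforall{e}{Ether}[W^{STL}(k,b,\bar x)]$, which equals $W^{STL}(k,b,\bar x)$ because \ax{AxEther} guarantees at least one ether observer. The reverse direction under \sy{ClassicalKin_{Full}} is symmetric, using Lemma~\ref{tr_Ether}, Lemma~\ref{velocity-translation}, and the dual identity $Y^{-1}_{\bar v}=X_{\bar V}$.

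For the observer case I would track the worldview transformation rather than $W$ itself. By \ax{AxSelf}, for inertial $b$ the formula $W^{STL}(k,b,\bar x)$ asserts that $w^{STL}_{kb}(\bar x)$ lies on the time axis, and Lemmas~\ref{lem-tr_FTL} and~\ref{lem-tr'_FTL} describe precisely how $Tr_*$ and $Tr'_*$ conjugate the worldview transformations by the $X$- and $Y$-families. Composing these two conjugations and again invoking Lemma~\ref{lem-xy} together with the mutual inverse of the velocity maps, the round-trip of $w^{STL}_{kb}$ reduces to $w^{STL}_{kb}$, and the time-axis condition is preserved since both $X$- and $Y$-transformations map the time axis into (a scaled copy of) itself, the scaling being absorbed by the existential $\exists t$ in the definition. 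I expect this observer case to be the main obstacle: the translation of $W$ there is not a plain coordinate substitution but an existentially quantified statement about where a worldline crosses the time axis, so the bookkeeping of which velocity index ($\bar v_k(e)$ versus $\bar v_b(e)$) is retranslated by which velocity lemma, and the verification that the two conjugations cancel exactly, demand care. Once both cases and both directions are settled, the two round-trip conditions hold and $Tr_*$ is a definitional equivalence.
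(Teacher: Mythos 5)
Your proposal is correct and follows essentially the same route as the paper's proof: reduce the two round-trip conditions to the worldview relation $W$ alone (all other symbols being translated identically), split into the observer and non-observer cases exactly as the definitions of $Tr_*$ and $Tr'_*$ do, cancel the composed conjugations via Lemma~\ref{lem-xy} (together with the mutual inversity of the velocity maps from Lemmas~\ref{velocity-translation} and~\ref{velocity-translation2}), and finish the observer case with \ax{AxSelf}. One minor inaccuracy: $X_{\bar V}$ and $Y_{\bar v}$ do \emph{not} map the time axis to a scaled copy of itself --- they tilt it whenever the velocity is nonzero, which is precisely why the observer case needs the $\exists t$ clause at all --- but your argument does not actually rely on that remark, since the exact cancellation $X_{\bar V}\circ Y_{\bar v}=Id$ already collapses the whole condition to $\bexists{t}{Q}\big[w^{STL}_{kb}(\bar x)=(t,0,0,0)\big]$.
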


\begin{proof}
We only need to prove that the inverse translations of the translated statements are logical equivalent to the original statements since $Tr_*$ and $Tr'_*$ are interpretations by Theorem~\ref{thm-tr_FTL} and Theorem~\ref{thm-tr'_FTL}. Since $Tr_*$ and $Tr'_*$ are identical on every concept but the worldview relation, we only have to prove the following two statements: 
\begin{align*}
\sy{ClassicalKin_{Full}^{STL}}\vdash &Tr'_*\big(Tr_*[W^{STL}(k,b,\bar x)]\big)\equiv W^{STL}(k,b,\bar x)\text{ and }\\
\sy{ClassicalKin_{Full}} \vdash &Tr_*\big(Tr'_*[W^{CK}(k,b,\bar x)]\big)\equiv W^{CK}(k,b,\bar x).
\end{align*}

The back and forth translation of $W^{STL}$ is the following:
\begin{multline*}
Tr'_*\big(Tr_*[W^{STL}(k,b,\bar x)]\big) \equiv  Tr'_*\BIGG( \bforall{e}{Ether} \bexists{\bar{V}}{Q^3}\\
  \fblock{\bar V=\bar{v}_b(e)\\
    b\not\in IOb \to W^{CK}\big(k,b,X^{-1}_{\bar{V}}(\bar{x})\big)\\
    b\in IOb \to \bexists{t}{Q} \big[\big(X_{\bar{V}}\circ w^{CK}_{kb}\circ X^{-1}_{\bar{V}}\big)(\bar{x})= (t,0,0,0)\big]
  }\BIGG),
\end{multline*}
which is equivalent to
\begin{multline*}
%\equiv
  \bforall{e}{Ether} \bexists{\bar{v},\bar{V}}{Q^3}\\
  \fblock{\bar v=\bar{v}_b(e)\land Tr'_*\big(\bar v_k(e)=\bar V\big)\\
    b\not\in IOb \to W^{STL}\Big(k,b,Y^{-1}_{\bar{v}}(X^{-1}_{\bar{V}}\big(\bar{x})\big)\Big)\\
    b\in IOb \to \bexists{t}{Q} \big[\big(X_{\bar{V}}\circ Y_{\bar{v}}\circ w^{STL}_{kb}\circ Y^{-1}_{\bar{v}}\circ X^{-1}_{\bar{V}}\big)(\bar{x})= (t,0,0,0)\big]
  }.
\end{multline*}
By Lemma \ref{lem-xy}, this is equivalent to
\[
\bforall{e}{Ether}
  \fblock{
    b\not\in IOb \to W^{STL}\big(k,b,\bar{x}\big)\\
    b\in IOb \to \bexists{t}{Q} [ w^{STL}_{kb}(\bar{x})= (t,0,0,0)]}.
\]

This is clearly equivalent to $W^{STL}(k,b,\bar x)$ if $b$ is not an inertial observer. If $b$ is an inertial observer, then we need that $\bexists{t}{Q} [ w^{STL}_{kb}(\bar{x})= (t,0,0,0)]$ is equivalent to $W^{STL}(k,b,\bar x)$, which holds because of $\ax{AxSelf}$ in \sy{ClassicalKin_{Full}^{STL}}  and the definition of the worldview transformation.

%The proof that 
Proving   $Tr_*\big(Tr'_*[W^{CK}(k,b,\bar x)]\big)\equiv W^{CK}(k,b,\bar x)$ %follows 
from \sy{ClassicalKin_{Full}}  is completely analogous. 
\end{proof}

\begin{cor}\label{cor-strong}
$Tr_{*} \circ Tr_+$ is a definitional equivalence between theories \sy{SpecRel_{Full}^{e}} and \sy{ClassicalKin_{Full}}.
\end{cor}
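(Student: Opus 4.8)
The plan is to derive the corollary purely from the two definitional equivalences already in hand, chaining them through the intermediate theory \sy{ClassicalKin^{STL}_{Full}}. By Theorem~\ref{thm-defeq}, $Tr_+$ (with inverse $Tr'_+$) witnesses $\sy{SpecRel^e_{Full}} \equiv_{\Delta} \sy{ClassicalKin^{STL}_{Full}}$, and by Theorem~\ref{thm-defeqFTL}, $Tr_*$ (with inverse $Tr'_*$) witnesses $\sy{ClassicalKin_{Full}} \equiv_{\Delta} \sy{ClassicalKin^{STL}_{Full}}$. Since definitional equivalence is an equivalence relation by Theorem~\ref{thm-eqrel}, in particular transitive, these combine to give $\sy{SpecRel^e_{Full}} \equiv_{\Delta} \sy{ClassicalKin_{Full}}$; it then remains only to check that the composite $Tr_* \circ Tr_+$ is the witnessing interpretation, with $Tr'_+ \circ Tr'_*$ as its inverse.

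First I would verify that $Tr_* \circ Tr_+$ is indeed an interpretation of \sy{SpecRel^e_{Full}} in \sy{ClassicalKin_{Full}}: interpretations compose, so if $\sy{SpecRel^e_{Full}} \vdash \varphi$ then $\sy{ClassicalKin^{STL}_{Full}} \vdash Tr_+(\varphi)$ by Theorem~\ref{thm-tr+}, whence $\sy{ClassicalKin_{Full}} \vdash Tr_*(Tr_+(\varphi))$ by Theorem~\ref{thm-tr_FTL}. Symmetrically, $Tr'_+ \circ Tr'_*$ is an interpretation of \sy{ClassicalKin_{Full}} in \sy{SpecRel^e_{Full}} by Theorem~\ref{thm-tr'_FTL} followed by Theorem~\ref{thm-tr'+}. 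The composite of two translations is again a translation, since each factor preserves equality, the connectives and the quantifiers, and these preservation properties are closed under composition.

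The core of the argument is the two round-trip conditions. For a formula $\varphi$ of \sy{SpecRel^e_{Full}}, set $\sigma := Tr_+(\varphi)$, a formula of \sy{ClassicalKin^{STL}_{Full}}. Theorem~\ref{thm-defeqFTL} gives $\sy{ClassicalKin^{STL}_{Full}} \vdash Tr'_*(Tr_*(\sigma)) \leftrightarrow \sigma$; applying the interpretation $Tr'_+$ (Theorem~\ref{thm-tr'+}), which preserves biconditionals, yields $\sy{SpecRel^e_{Full}} \vdash Tr'_+\big(Tr'_*(Tr_*(Tr_+(\varphi)))\big) \leftrightarrow Tr'_+(Tr_+(\varphi))$, and Theorem~\ref{thm-defeq} reduces the right-hand side to $\varphi$. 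The second round-trip is entirely dual: for a formula $\psi$ of \sy{ClassicalKin_{Full}}, put $\sigma := Tr'_*(\psi)$, use $\sy{ClassicalKin^{STL}_{Full}} \vdash Tr_+(Tr'_+(\sigma)) \leftrightarrow \sigma$ from Theorem~\ref{thm-defeq}, apply the interpretation $Tr_*$, and finish with $\sy{ClassicalKin_{Full}} \vdash Tr_*(Tr'_*(\psi)) \leftrightarrow \psi$ from Theorem~\ref{thm-defeqFTL}.

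None of these steps introduces new geometric content; the substantive work was done in Theorems~\ref{thm-defeq} and~\ref{thm-defeqFTL}. The only point demanding care — and the single place where a slip is likely — is the bookkeeping of the direction of each translation and of the theory over which each biconditional is provable, since $Tr_+$, $Tr_*$ and their primed inverses all move between three different languages. In effect this proof is just the concrete unpacking of the transitivity half of Theorem~\ref{thm-eqrel} for these particular translations, so one could alternatively cite Theorem~\ref{thm-eqrel} directly and merely record that composing the two witnessing pairs produces $Tr_* \circ Tr_+$ and $Tr'_+ \circ Tr'_*$.
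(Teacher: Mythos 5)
Your proposal is correct and takes essentially the same route as the paper, whose entire proof is the one-line citation of transitivity of definitional equivalence (Theorem~\ref{thm-eqrel}) together with Theorems~\ref{thm-defeq} and~\ref{thm-defeqFTL}. You have merely unpacked that transitivity argument explicitly — composing the interpretations and verifying the two round-trip conditions through the intermediate theory $\sy{ClassicalKin^{STL}_{Full}}$ — and your bookkeeping of directions and of the theories over which each biconditional is provable is accurate.
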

\begin{proof}
By transitivity of definitional equivalence (Theorem \ref{thm-eqrel}), Theorem \ref{thm-defeq} and Theorem \ref{thm-defeqFTL}.
\end{proof}

\section{Concluding remarks}

With the clarity of mathematical logic, we have achieved the following new results: 
%\begin{enumerate}
%\item 
constructing an interpretation of special relativity into classical kinematics using Poincar{\'e}--Einstein synchronisation;
%\item 
turning this interpretation into a definitional equivalence by extending special relativity with an ether concept and restricting classical kinematics to slower-than-light (STL) observers;
%\item 
proving a definitional equivalence between classical kinematics and classical kinematics restricted to STL observers;
%\item 
concluding by transitivity of definitional equivalence the main result that classical kinematics is definitionally equivalent to special relativity extended with an ether concept.
%\end{enumerate}

To get special relativity theory, ether is the only concept that has to be removed from classical kinematics. However, removing ether from classical kinematics also leads to a change of the notions of space and time, which in the framework of this paper is handled by the translation functions.% $Tr$.

It is philosophically interesting to note that our results are not identical to what we might expect from physical intuition: having or not having an upper speed limit appears to be an important physical distinction, but by Theorem~\ref{thm-defeqFTL} we can establish a definitional equivalence between theories with and without this speed limit. While we based the translation $Tr$ between \ax{SpecRel_{Full}} and \ax{ClassicalKin_{Full}} on the physical intuition of using a Poincar{\'e}--Einstein synchronisation, there was no such physical ground for the translation between \ax{ClassicalKin^{STL}_{Full}} and \ax{ClassicalKin_{Full}}. 

%Our results are all on kinematics; the similarities and differences between dynamic theories such as classical dynamics (Newton's Laws), Newton-Cartan theory (a geometrodynamical version of classical dynamics, which means that classical gravitation forces are expressed as curvatures of space and time like in relativity theory) and general relativity theory are good candidates for further research in the same spirit.

\section{Appendix: Simplification of translated formulas} \label{appendix}

The simplification tools in this appendix are (up to Lemma \ref{lemma-merge}) consequences of all ether observers being at rest relative to each other in \sy{{ClassicalKin}_{Full}}.

Let $b$, $k_1$, \ldots, $k_n$ be variables of sort $\B$. We say that formula $\varphi$ is \emph{ether-observer-independent} in variable $b$ provided that $k_1$, \ldots, $k_n$ are inertial observers if the truth or falsehood of $\varphi$ does not depend on to which ether observer we evaluated $b$, assuming $k_1$, \ldots, $k_n$ are evaluated to inertial observers, that is: 
\begin{multline*}
EOI^{k_1,\ldots,k_n}_{b}[\varphi] \defiff \\ \sy{{ClassicalKin}_{Full}}\vdash\bforall{k_1, \ldots, k_n}{ IOb}\bforall{e, e'}{\Ether} 
[\varphi(e / b) \leftrightarrow \varphi(e' / b)].
\end{multline*}
where $\varphi(k/b)$ means that $b$ gets replaced by $k$ in all free occurrences of $b$ in %formula 
$\varphi$.

Let us note that the fewer variables there are in the upper index of $EOI^{k_1,\ldots,k_n}_b(\varphi)$, the stronger a statement we have about the ether-observer independence of %formula 
$\varphi$. The strongest statement about the ether-observer independence of $\varphi$ is $EOI_b(\varphi)$.

\begin{lem}\label{lem-EOI-nob} %Assuming \sy{{ClassicalKin}_{Full}}, 
If $b$ is not a free variable of $\varphi$, then $EOI_{b}[\varphi]$ holds.
\end{lem}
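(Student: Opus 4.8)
The plan is to observe that this lemma is purely syntactic and needs no appeal whatsoever to the geometric content of $\sy{ClassicalKin_{Full}}$. First I would unfold the definition of $EOI_b[\varphi]$, i.e. the case of the preceding definition with an empty upper index, which asks us to establish
\[
\sy{ClassicalKin_{Full}} \vdash \bforall{e,e'}{\Ether}\big[\varphi(e/b)\leftrightarrow \varphi(e'/b)\big].
\]

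The key step is then the remark that, since $b$ is by hypothesis not a free variable of $\varphi$, the substitution $\varphi(e/b)$ replaces $b$ in all of its \emph{free} occurrences in $\varphi$, of which there are none. Hence $\varphi(e/b)$ and $\varphi(e'/b)$ are both literally the same string as $\varphi$, and the matrix $\varphi(e/b)\leftrightarrow\varphi(e'/b)$ collapses syntactically to $\varphi\leftrightarrow\varphi$.

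To finish, I would note that $\varphi\leftrightarrow\varphi$ is a propositional tautology, and that universally (bounded-)quantifying a logically valid formula, and weakening it by the antecedents $\Ether(e)$ and $\Ether(e')$ introduced by the bounded quantifiers, preserves logical validity. A logically valid sentence is a syntactic consequence of every theory, in particular of $\sy{ClassicalKin_{Full}}$, so the displayed derivability holds and $EOI_b[\varphi]$ follows.

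The hard part will be essentially nonexistent: the only point that deserves care is the bookkeeping that ``$b$ not free in $\varphi$'' genuinely turns the substitution into a no-op, so that no ether-specific reasoning is required. This is precisely why the statement functions as the base case — all the substantive work on ether-observer independence is deferred to the later compositional lemmas that invoke it.
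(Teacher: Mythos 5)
Your proof is correct and takes essentially the same route as the paper's: both observe that since $b$ is not free in $\varphi$, the substitutions $\varphi(e/b)$ and $\varphi(e'/b)$ are no-ops, so the biconditional collapses to the tautology $\varphi\leftrightarrow\varphi$, which is derivable from any theory. Your version merely spells out the quantifier and tautology bookkeeping that the paper leaves implicit.
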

\begin{proof}
In the definition of EOI, there is nothing to replace in formula $\varphi$, so both parts $\varphi(e / b)$ and $\varphi(e' / b)$ of the equivalence remain the same.
\end{proof}

From Lemma \ref{lem-EOI-nob}, the following immediatly follows:

\begin{cor}\label{cor-EOI-equiv} Assume \sy{{ClassicalKin}_{Full}}. Let $\alpha$ and $\beta$ be quantity terms and let $k$ and $h$ be body variables. Then atomic formulas $\alpha = \beta$ and $\alpha<\beta$ are ether-observer-independent, i.e., for any body variable $b$ we have:
%\begin{enumerate}
%\item 
$EOI_{b}[\alpha = \beta]$,
%\item 
$EOI_{b}[\alpha < \beta]$,
%\item 
$EOI_{b}[k = k]$, and
%\item 
$EOI_{b}[k = h]$.
%\end{enumerate}
\end{cor}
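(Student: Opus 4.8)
The plan is to reduce each of the four claims to Lemma~\ref{lem-EOI-nob} by the purely syntactic observation that the distinguished ether-observer variable $b$ does not occur freely in any of the atomic formulas listed. Since $EOI_b[\varphi]$ carries no upper index, unwinding the definition we must show $\sy{ClassicalKin_{Full}}\vdash \bforall{e,e'}{\Ether}\,[\varphi(e/b)\leftrightarrow\varphi(e'/b)]$ for each of $\varphi\in\{\alpha=\beta,\ \alpha<\beta,\ k=k,\ k=h\}$, and in every case I expect the substitution to have no effect, collapsing the biconditional to a triviality.

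First I would treat the two arithmetic atomic formulas $\alpha=\beta$ and $\alpha<\beta$. Recall from Section~\ref{sec-lang} that quantity terms are built up solely from quantity variables using the function symbols $+$ and $\cdot$ of sort $\Q$; in particular no body variable can occur in a quantity term. As $b$ is a variable of sort $\B$, it is therefore not a free variable of $\alpha=\beta$ nor of $\alpha<\beta$, and Lemma~\ref{lem-EOI-nob} yields $EOI_b[\alpha=\beta]$ and $EOI_b[\alpha<\beta]$ at once. Next, for the body equations $k=k$ and $k=h$, one takes $b$ to be the ether-observer slot, which in every application of this simplification is a variable distinct from the observer variables $k$ and $h$. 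Then $b$ does not occur freely in $k=k$ or in $k=h$, so Lemma~\ref{lem-EOI-nob} applies verbatim.

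The only point meriting a moment's care is the degenerate reading in which $b$ is literally the same variable as $k$ (or $h$), in which case Lemma~\ref{lem-EOI-nob} does not formally apply. For $k=k$ the substitution gives $e=e\leftrightarrow e'=e'$, which holds by reflexivity of equality and so is harmless; but this degenerate identification never occurs in the intended use, where $b$ is precisely the fresh variable over which we vary the ether observer while $k$ and $h$ stay fixed. I would accordingly state the corollary with the standing assumption that $b$ is disjoint from the observer variables, exactly as in the applications in the proofs of Theorems~\ref{thm-tr} and~\ref{thm-tr+}.

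In short, there is no genuine obstacle here: the entire content is the remark that a sort-$\B$ variable cannot appear inside a sort-$\Q$ term and does not appear in a body equation between other variables, so that each listed formula is $b$-free and Lemma~\ref{lem-EOI-nob} discharges it immediately. This is why the corollary can legitimately be announced as following at once from that lemma, and the proof I would write consists of little more than the two sentences recording these syntactic facts.
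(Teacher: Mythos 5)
Your proposal is correct and matches the paper exactly: the paper derives this corollary with the single remark that it ``immediately follows'' from Lemma~\ref{lem-EOI-nob}, which is precisely your reduction via the observation that the sort-$\B$ variable $b$ cannot occur in quantity terms and is distinct from $k$ and $h$ in body equations. Your caveat about the degenerate identification of $b$ with $k$ or $h$ is also mirrored in the paper's remark immediately after the corollary that $EOI_{b}[b=b]$ holds while $EOI_{b}[b=h]$ does not.
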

\noindent
Let us note here that $EOI_{b}[b = b]$ holds and that $EOI_{b}[b = h]$ does not hold.

\begin{lem}\label{lemma-EOI-IOb-Ph} Assume \sy{{ClassicalKin}_{Full}}. Let $k$ be a body variable. Then $k$ being an inertial-observer or a light signal are ether-observer-independent, i.e., for any body variable $b$ we have:
%\begin{enumerate}
%\item 
$EOI_{b}[ IOb(b)]$,
%\item 
$EOI_{b}[ IOb(k)]$,
%\item 
$EOI_{b}[\Ph(b)]$, and
%\item 
$EOI_{b}[\Ph(k)]$.
%\end{enumerate}
\end{lem}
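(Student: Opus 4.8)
The plan is to split the four claims into two groups. The two claims involving the separate variable $k$, namely $EOI_{b}[IOb(k)]$ and $EOI_{b}[\Ph(k)]$, I would dispatch immediately via Lemma~\ref{lem-EOI-nob}: since $k$ is a body variable distinct from $b$, the variable $b$ is not free in $IOb(k)$ nor in $\Ph(k)$, so there is nothing to substitute and the relevant biconditionals reduce to $IOb(k)\leftrightarrow IOb(k)$ and $\Ph(k)\leftrightarrow \Ph(k)$. This leaves the two substantive cases $EOI_{b}[IOb(b)]$ and $EOI_{b}[\Ph(b)]$, where I would unfold the definition of $EOI_{b}$ and verify $\bforall{e,e'}{\Ether}[\varphi(e/b)\leftrightarrow\varphi(e'/b)]$ by hand.

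For $EOI_{b}[IOb(b)]$ the two sides after substitution are $IOb(e)$ and $IOb(e')$ for arbitrary ether observers $e,e'$. Both are simply true, because the defining formula of $\Ether$ lists $IOb$ as its first conjunct, so $\Ether(e)\to IOb(e)$ and likewise for $e'$. Hence $IOb(e)\leftrightarrow IOb(e')$ holds as a biconditional between two true statements, and $EOI_{b}[IOb(b)]$ follows at once.

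The only real work is in $EOI_{b}[\Ph(b)]$, where after substitution the two sides are $\Ph(e)$ and $\Ph(e')$. I would prove the stronger fact that no ether observer is a light signal, i.e. $\sy{ClassicalKin_{Full}}\vdash\bforall{e}{\Ether}[\neg\Ph(e)]$, which instantly yields the desired biconditional as one between two false statements. To see this, suppose towards a contradiction that $\Ether(e)$ and $\Ph(e)$ both hold. By \ax{AxSelf} the worldline of $e$ according to $e$ is exactly the time axis, so the two distinct points $\vx=(0,0,0,0)$ and $\vy=(1,0,0,0)$ both lie on $\wl_e(e)$. Since $\Ph(e)$, the body $e$ itself witnesses $\bexists{p}{\Ph}[\vx,\vy\in\wl_e(p)]$, so the defining property of $\Ether(e)$ with its positive constant $\mathfrak{c_e}$ forces $space(\vx,\vy)=\mathfrak{c_e}\cdot time(\vx,\vy)$; but $space(\vx,\vy)=0$ while $time(\vx,\vy)=1$, giving $0=\mathfrak{c_e}$ and contradicting $\mathfrak{c_e}>0$.

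The main obstacle is precisely this last step, establishing $\neg\Ph(e)$ for ether observers, since $\Ph$ is a primitive relation and nothing in the signature forbids a single body from being tagged as both an inertial observer and a light signal. The resolution is that the combination of \ax{AxSelf} and the defining property of $\Ether$ (ultimately guaranteed by \ax{AxEther}) makes such a coincidence impossible: a body that is stationary in its own coordinate system cannot trace a light-like worldline once the ether speed of light is strictly positive. With $\neg\Ph(e)$ in hand, all four claims assemble simultaneously.
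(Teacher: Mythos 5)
Your proposal is correct and follows essentially the same route as the paper: the two $k$-cases are dispatched by Lemma~\ref{lem-EOI-nob}, $EOI_b[IOb(b)]$ holds because both substituted instances are true ($\Ether(e)\to IOb(e)$ by definition), and $EOI_b[\Ph(b)]$ holds because both instances are false. The only difference is that where the paper merely remarks that ``ether observers go slower-than-light, therefore both sides of the equivalence are false,'' you supply the explicit derivation of $\neg\Ph(e)$ from \ax{AxSelf} and the defining biconditional of $\Ether$ with $\mathfrak{c_e}>0$, which is a welcome sharpening of an otherwise terse step.
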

\begin{proof}
For proving $EOI_{b}[ IOb(b)]$, note that all ether observers are inertial observers, it by definition means that \[\sy{{ClassicalKin}_{Full}}\vdash\bforall{k_1, \ldots, k_n}{ IOb}\bforall{e, e'}{\Ether} 
[ IOb(e / b) \leftrightarrow  IOb(e' / b)],\] which is true. 
$EOI_{b}[Ph(b)]$ is true because ether observers go slower-than-light, therefore both sides of the equivalence are false, which makes the equivalence true.
$EOI_{b}[\Ph(b)]$ and $EOI_{b}[\Ph(k)]$ follow from Lemma \ref{lem-EOI-nob}.
\end{proof}

\begin{cor}\label{cor-ei-s} Assuming \sy{\mathsf{ClassicalKin}_{Full}}, the speed of an inertial observer is the same according to every ether observers, i.e.,
%\begin{equation*}
$EOI^{k}_{b}[\speed_{b}(k)=v].$
%\end{equation*}
\end{cor}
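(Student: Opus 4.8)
The plan is to reduce the statement directly to Corollary~\ref{cor-s}, so that essentially all of the work has already been done. By the definition of the ether-observer-independence predicate, establishing $EOI^{k}_{b}[\speed_{b}(k)=v]$ amounts to proving
\[
\sy{ClassicalKin_{Full}}\vdash\bforall{k}{IOb}\bforall{e,e'}{\Ether}\big[\speed_{e}(k)=v \leftrightarrow \speed_{e'}(k)=v\big],
\]
where the two sides are obtained by substituting the ether observers $e$ and $e'$ for the free variable $b$ in the formula $\speed_{b}(k)=v$.

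First I would fix an inertial observer $k$ and two ether observers $e,e'$. The key input is Corollary~\ref{cor-s}, which already asserts $\speed_{e}(k)=\speed_{e'}(k)$ for every inertial observer $k$ and all ether observers $e,e'$; this in turn rests on Corollary~\ref{thm-lightspeed}, stating that all ether observers are stationary relative to each other. Given that $\speed_{e}(k)$ and $\speed_{e'}(k)$ denote one and the same element of $\Q$, the equation $\speed_{e}(k)=v$ holds exactly when $\speed_{e'}(k)=v$ holds, which yields the desired biconditional. Since $k$, $e$, and $e'$ were arbitrary, the universally quantified equivalence follows, and this is precisely what $EOI^{k}_{b}[\speed_{b}(k)=v]$ abbreviates.

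I expect no real obstacle here, as the entire content is carried by Corollary~\ref{cor-s}. The only point requiring a little care is that $\speed_{b}(k)$ is a partial function, so by Convention~\ref{partial-conv} the biconditional carries the implicit assumption that both sides are defined. This causes no trouble: because $k$ is an inertial observer whose worldline is a straight line relative to every ether observer by \AxLine, the speed $\speed_{e}(k)$ is defined for each ether observer $e$, so definedness holds uniformly and the equivalence is genuine rather than vacuous.
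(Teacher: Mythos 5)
Your proposal is correct and matches the paper's intent: the paper states this as an immediate corollary (with no written proof) of Corollary~\ref{cor-s}, i.e.\ of the fact that all ether observers assign the same speed to any inertial observer, which is exactly the reduction you perform. Your added remark on definedness of the partial function $\speed_b(k)$ via Convention~\ref{partial-conv} is a harmless and sensible bit of extra care.
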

\noindent
Let us note that $EOI^{k}_{b}[\bar v_{b} (k) =\bar v]$ does not hold, because ether observers can be rotated relative to each other; hence the direction of $\bar v_b(k)$ depends on which ether observer variable $b$ is evaluated to.

\begin{cor}\label{cor-ei-v} Assuming \sy{\mathsf{ClassicalKin}_{Full}}, the velocity of all ether observers is the same according to every inertial observers, i.e.,
%\begin{equation*}
$EOI^{k}_{b}[\bar v_k (b) =\bar {v}].$
%\end{equation*}
\end{cor}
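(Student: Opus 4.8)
The plan is to observe that this corollary is simply a reformulation of Corollary~\ref{cor-v} in the $EOI$ notation, so that essentially all the substantive work has already been done. First I would unfold the definition of $EOI^{k}_{b}$ applied to the formula $\varphi \defeq (\bar v_k(b) = \bar v)$. Since $k$ appears in the upper index it ranges over inertial observers, while $b$ is the distinguished body variable to be evaluated at two ether observers. Performing the substitutions $\varphi(e/b)$ and $\varphi(e'/b)$ --- and being careful to replace $b$ (not $k$) --- the claim $EOI^{k}_{b}[\bar v_k(b)=\bar v]$ unfolds to the assertion that \sy{ClassicalKin_{Full}} proves
\begin{equation*}
\bforall{k}{ IOb}\bforall{e,e'}{\Ether}\big[\bar v_k(e)=\bar v \leftrightarrow \bar v_k(e')=\bar v\big].
\end{equation*}

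Next I would invoke Corollary~\ref{cor-v}, which states that $\bforall{e,e'}{\Ether}\bforall{k}{ IOb}[\bar v_k(e)=\bar v_k(e')]$ holds under \sy{ClassicalKin_{Full}}. From the equality $\bar v_k(e)=\bar v_k(e')$ the biconditional above follows at once: if $\bar v_k(e)=\bar v$ then substituting the equality yields $\bar v_k(e')=\bar v$, and symmetrically. This completes the argument.

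There is no real obstacle here; the substantive content --- that all ether observers share a single velocity relative to any fixed inertial observer --- is precisely Corollary~\ref{cor-v}, which in turn rests on Corollary~\ref{thm-lightspeed} (ether observers are mutually stationary) together with the Galilean structure of the worldview transformations from Theorem~\ref{thm-gal}. The only point requiring care is the bookkeeping of the $EOI$ substitution, in particular contrasting this with the remark following Corollary~\ref{cor-ei-s} that $EOI^{k}_{b}[\bar v_b(k)=\bar v]$ \emph{fails}: there the moving body $k$ is coordinatized by the varying ether observer $b$, and since distinct ether observers may be spatially rotated relative to one another, the \emph{direction} of $\bar v_b(k)$ genuinely depends on the choice of $b$. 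In the present statement the roles are reversed --- $k$ is the fixed coordinatizing inertial observer and the ether observers $e,e'$ are the bodies being measured --- so rotations among the ether observers do not alter the velocity they are reported to have relative to the single fixed $k$, which is exactly why this $EOI$ statement does hold.
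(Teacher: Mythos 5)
Your proposal is correct and matches the paper's intent exactly: the paper states this as a corollary without an explicit proof, the implicit argument being precisely your unfolding of the $EOI^{k}_{b}$ definition followed by an appeal to Corollary~\ref{cor-v} ($\bar{v}_k(e)=\bar{v}_k(e')$ for all ether observers $e,e'$ and inertial $k$). Your closing remark correctly identifies why this case succeeds while $EOI^{k}_{b}[\bar v_{b}(k)=\bar v]$ fails, which is the same point the paper makes in the note following Corollary~\ref{cor-ei-s}.
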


\begin{cor}\label{cor-ei-c} Assuming \sy{\mathsf{ClassicalKin}_{Full}}, an inertial observer being slower than %the speed of 
light is an ether-observer-independent statement, i.e.,
%\begin{equation*}
$EOI^{k}_{b}[\speed_{b}(k) < \mathfrak{c_e}].$
%\end{equation*}
\end{cor}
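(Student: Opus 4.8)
The plan is to unfold the definition of $EOI^{k}_{b}$ and then reduce the claim to the ether-observer-independence of speed already established in Corollary~\ref{cor-ei-s} (equivalently Corollary~\ref{cor-s}). Spelling out the definition, the goal to be proved is
\[\sy{ClassicalKin_{Full}}\vdash\bforall{k}{ IOb}\bforall{e,e'}{\Ether}\big[(\speed_{e}(k)<\mathfrak{c_e})\leftrightarrow(\speed_{e'}(k)<\mathfrak{c_e})\big].\]
So first I would fix an inertial observer $k$ and two ether observers $e,e'$, and aim at the single biconditional inside the quantifiers.

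By Corollary~\ref{cor-s} the speed of $k$ is the same according to all ether observers, i.e. $\speed_{e}(k)=\speed_{e'}(k)$; this is exactly the content of Corollary~\ref{cor-ei-s} read with its free quantity variable. Second, I would observe that the right-hand side $\mathfrak{c_e}$ is the single model-dependent ether-observer-independent speed of light, which is well-defined by Corollary~\ref{thm-lightspeed} and does not depend on the choice of $e$ or $e'$. Hence both inequalities compare the \emph{same} quantity against the \emph{same} bound, so they are trivially logically equivalent, which yields the desired biconditional and therefore $EOI^{k}_{b}[\speed_{b}(k)<\mathfrak{c_e}]$.

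There is essentially no hard step here: the statement is an immediate corollary of the ether-observer-independence of speed, so I expect the write-up to be very short. The only point requiring a little care is the partiality of $\speed$ (Convention~\ref{partial-conv}): writing $\speed_{e}(k)<\mathfrak{c_e}$ presupposes that $\speed_{e}(k)$ is defined, so I must ensure that \emph{definedness}, and not merely the value, is ether-observer-independent. This is handled automatically by Corollary~\ref{cor-ei-s}, whose formulation with a free variable $v$ asserts $\speed_{e}(k)=v\leftrightarrow\speed_{e'}(k)=v$ for every $v$: if $\speed_{e}(k)$ is defined with value $v_0$ then so is $\speed_{e'}(k)$, and if $\speed_{e}(k)$ is undefined then no $v$ satisfies either side, so $\speed_{e'}(k)$ is undefined as well. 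In both cases the two inequalities have the same truth value, completing the argument.
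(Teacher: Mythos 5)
Your proof is correct and matches the paper's intent exactly: the paper states this as an unproved corollary precisely because it follows immediately from the ether-observer-independence of speed (Corollaries~\ref{cor-s}/\ref{cor-ei-s}) together with the well-definedness of the constant $\mathfrak{c_e}$ via Corollary~\ref{thm-lightspeed}. Your additional remark about definedness of the partial function $\speed$ under Convention~\ref{partial-conv} is a careful touch that the paper leaves implicit.
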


The next three lemmas are being used to simplify the formulas  that the translation $Tr$ provides us.

The following rules can be used to show the ether independence of complex formulas:

\begin{lem}\label{lemma-EOI-formulas} 

\noindent 
\begin{enumerate}%[leftmargin=*]
\item \label{EOI-neg} $EOI^{k_1, \ldots, k_n}_{b}[\varphi]$ implies $EOI^{k_1, \ldots, k_n}_{b}[\neg\varphi]$.
\item \label{EOI-conn} If $*$ is a logical connective, then from $EOI^{k_1, \ldots, k_n}_{b}[\varphi]$ and $EOI^{h_1,\ldots,h_m}_{b}[\psi]$ follows $EOI^{k_1, \ldots, k_n,h_1,\ldots,h_m}_{b}[\varphi * \psi]$.
\item \label{EOI-exists} $EOI^{k_1, \ldots, k_n}_{b}[\varphi]$ implies $EOI^{k_1, \ldots, k_n}_{b}[\bexists {x}{Q}(\varphi)]$ and  $EOI^{k_1, \ldots, k_n}_{b}[\bexists {h}{B}(\varphi)]$.
\item \label{EOI-forall} $EOI^{k_1, \ldots, k_n}_{b}[\varphi]$ implies $EOI^{k_1, \ldots, k_n}_{b}[\bforall {x}{Q}(\varphi)]$ and  $EOI^{k_1, \ldots, k_n}_{b}[\bforall {h}{B}(\varphi)]$.
\end{enumerate}
\end{lem}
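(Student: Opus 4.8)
The plan is to treat all four items as instances of a single principle: logical equivalence is a congruence for each of the logical operations of our language, and substitution of the body variable $b$ commutes with those operations. After unwinding the definition of $EOI$, each item reduces to a purely logical validity provable in any first-order theory, so no special property of $\sy{{ClassicalKin}_{Full}}$ beyond what is already packaged into the hypotheses is needed.

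First I would handle the negation case. Unwinding the hypothesis $EOI^{k_1,\ldots,k_n}_{b}[\varphi]$ gives $\sy{{ClassicalKin}_{Full}}\vdash\bforall{k_1,\ldots,k_n}{IOb}\bforall{e,e'}{\Ether}[\varphi(e/b)\leftrightarrow\varphi(e'/b)]$. Since substitution distributes over negation, $(\neg\varphi)(e/b)$ is literally $\neg(\varphi(e/b))$, and the propositional tautology $(\alpha\leftrightarrow\beta)\to(\neg\alpha\leftrightarrow\neg\beta)$ turns the equivalence for $\varphi$ into the one for $\neg\varphi$, which is exactly $EOI^{k_1,\ldots,k_n}_{b}[\neg\varphi]$.

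Next, for the binary connectives, I would use that substitution distributes over $*$, i.e.\ $(\varphi*\psi)(e/b)\equiv\varphi(e/b)*\psi(e/b)$, together with the tautology $[(\alpha_1\leftrightarrow\beta_1)\AND(\alpha_2\leftrightarrow\beta_2)]\to[(\alpha_1*\alpha_2)\leftrightarrow(\beta_1*\beta_2)]$. The only bookkeeping is the index set: the hypotheses on $\varphi$ and $\psi$ range over $k_1,\ldots,k_n$ and $h_1,\ldots,h_m$ separately, so I would first weaken each by adjoining dummy universal quantifiers over the missing variables, obtaining both equivalences under the common block $\bforall{k_1,\ldots,k_n,h_1,\ldots,h_m}{IOb}$, and then combine them. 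For the two quantifier items I would note that, because $b$ is a body variable distinct from the bound variable and bound variables may be renamed to avoid capturing the witnesses $e,e'$, substitution commutes with the quantifier, so $(\exists x\,\varphi)(e/b)\equiv\exists x\,(\varphi(e/b))$ and likewise for $\forall$ and for a body quantifier $\exists h$ or $\forall h$. Since the hypothesis equivalence holds for every value of the quantified variable, the standard first-order rules ``from $\varphi(e/b)\leftrightarrow\varphi(e'/b)$ infer $\exists x\,\varphi(e/b)\leftrightarrow\exists x\,\varphi(e'/b)$'' (and the analogue for $\forall$) deliver the conclusions.

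The hard part is not mathematical but purely syntactic: in the quantifier cases I must be sure no variable capture occurs when pushing the substitution $(\,\cdot\,)(e/b)$ through $\exists$ and $\forall$. This is guaranteed by the usual convention that the bound variable is chosen distinct from $b$ and from $e,e'$ (renaming if necessary), after which every step above is a routine application of first-order congruence for $\leftrightarrow$. No appeal to the ether forming a single frame is required here --- that content already lives in the base cases established by Corollary~\ref{cor-ei-s}, Corollary~\ref{cor-ei-v} and Corollary~\ref{cor-ei-c}, and the present lemma merely propagates it through the formula-building operations.
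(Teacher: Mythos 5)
Your proposal is correct and takes essentially the same route as the paper: unwind the definition of $EOI$ and propagate the equivalence through each formula-building operation using the congruence properties of $\leftrightarrow$, merging the index sets by weakening to the combined list of inertial-observer variables. The only differences are cosmetic — the paper reduces the binary connectives to $\neg$ and $\land$ and derives the $\forall$ case from $\neg\exists\neg$, whereas you treat each case directly and are more explicit about avoiding variable capture; both are sound.
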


\begin{proof}

 If $EOI^{k_1,\ldots,k_n}_{b}[\varphi]$ holds, then by definition \[\sy{{ClassicalKin}_{Full}}\vdash\bforall{k_1, \ldots, k_n}{ IOb}\bforall{e, e'}{\Ether}[\varphi(e / b) \leftrightarrow \varphi(e' / b)],\] which is equivalent to \[\sy{{ClassicalKin}_{Full}}\vdash\bforall{k_1, \ldots, k_n}{ IOb}\bforall{e, e'}{\Ether}[\neg \varphi(e / b) \leftrightarrow \neg \varphi(e' / b)]\] because $A \leftrightarrow B$ is equivalent to $\neg A \leftrightarrow \neg B$ and therefore $EOI^{k_1,\ldots,k_n}_{b}[\neg \varphi]$ also holds. 

Let us now prove item \ref{EOI-conn}. Since all logical connectives can be constructed from negation and conjunction, we only need to prove the property of item \ref{EOI-exists} for conjunction, as we have already proven it for negation in item \ref{EOI-neg} above. 

To do so, we have to prove that in any model of \sy{{ClassicalKin}_{Full}}, $(\varphi\AND\psi)(e/b)$ holds if and only if $(\varphi\AND\psi)(e'/b)$ holds provided that $e$ and $e'$ are ether observers and $k_1, \ldots, k_n, h_1, \ldots, h_m$ are inertial observers. Formula $(\varphi\AND\psi)(e/b)$ holds exactly if both $\varphi(e/b)$ and $\psi (e/b)$ hold. Similarly, $(\varphi\AND\psi)(e'/b)$ holds exactly if both $\varphi(e'/b)$ and $\psi (e'/b)$ hold.

By $EOI^{k_1 \ldots k_n}_{b}[\varphi]$, formula $\varphi(e/b)$ holds exactly if $\varphi(e'/b)$ holds provided that $e$ and $e'$ are ether observers and $k_1, \ldots, k_n,$ are inertial observers. By  $EOI^{h_1,\ldots,h_m}_{b}[\psi]$, formula $\psi (e/b)$ holds exactly if $\psi (e'/b)$ holds provided that $e$ and $e'$ are ether observers and $h_1, \ldots, h_m$ are inertial observers. Therefore, $(\varphi\AND\psi)(e/b)$ holds exactly if $(\varphi\AND\psi)(e'/b)$ holds provided that $e$ and $e'$ are ether observers and $k_1, \ldots, k_n$, $h_1, \ldots, h_m$ are inertial observers; and this is what we wanted to prove.

Item \ref{EOI-exists} is true because from $A \leftrightarrow B$ it follows that $(\exists u A) \leftrightarrow (\exists u B)$.

Item \ref{EOI-forall} follows from items \ref{EOI-neg} and \ref{EOI-exists}, since the universal quantifier can be composed of the negation and the existential quantifier: $\forall u(\varphi)$ is equivalent to $\neg\exists u(\neg\varphi)$.
\end{proof}

\begin{lem}\label{lemma-EOI-Rad} Assume \sy{{ClassicalKin}_{Full}}. Let $k$ be a body variable let and $\bar \alpha$ and $\bar \beta$ quantity terms. Then for any body variable $b$ we have:
%\begin{enumerate}
%\item 
$EOI^k_{b}[Rad^{-1}_{\bar{v}_k(b)}(\bar \alpha)=\bar \beta]$ and
%\item 
$EOI^k_{b}[W(k,h,Rad^{-1}_{\bar{v}_k(b)}(\bar \alpha))]$.
%\end{enumerate} 
\end{lem}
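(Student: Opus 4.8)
The plan is to reduce the lemma to Corollary~\ref{cor-ei-v} together with the single observation that the radarization map depends on its velocity parameter alone. First I would unfold the definition of ether-observer-independence: fixing an inertial observer $k$ and two ether observers $e,e'$, and noting that $\bar v_k(e)$ is defined because ether observers move along straight lines relative to $k$ by \AxLine\ (so that invoking the partial function is legitimate under Convention~\ref{partial-conv}), it suffices to show
\[
Rad^{-1}_{\bar v_k(e)}(\bar\alpha)=\bar\beta
\quad\Longleftrightarrow\quad
Rad^{-1}_{\bar v_k(e')}(\bar\alpha)=\bar\beta,
\]
and likewise that $W\big(k,h,Rad^{-1}_{\bar v_k(e)}(\bar\alpha)\big)$ holds iff $W\big(k,h,Rad^{-1}_{\bar v_k(e')}(\bar\alpha)\big)$ holds.

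The key step is then immediate. By Corollary~\ref{cor-ei-v} we have $\bar v_k(e)=\bar v_k(e')$, i.e.\ all ether observers carry the same velocity according to the inertial observer $k$. Since $Rad_{\bar v}$, and hence its inverse, is a unique well-defined bijection determined entirely by the velocity $\bar v$ (Lemma~\ref{thm-lemma}), the maps $Rad^{-1}_{\bar v_k(e)}$ and $Rad^{-1}_{\bar v_k(e')}$ are literally the same transformation. Consequently $Rad^{-1}_{\bar v_k(e)}(\bar\alpha)$ and $Rad^{-1}_{\bar v_k(e')}(\bar\alpha)$ are the same coordinate point, which makes both displayed equivalences trivial and renders the worldview atomic formula true under $e$ exactly when it is true under $e'$. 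This is precisely the content of $EOI^k_b[Rad^{-1}_{\bar v_k(b)}(\bar\alpha)=\bar\beta]$ and $EOI^k_b[W(k,h,Rad^{-1}_{\bar v_k(b)}(\bar\alpha))]$; no manipulation of the explicit matrices defining $Rad_{\bar v}$ is required, the argument being simply that equal velocities yield equal maps.

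The one genuine subtlety, which I would emphasize, is that it is essential that the subscript is $\bar v_k(b)$ (the velocity of the ether observer $b$ according to $k$) and \emph{not} $\bar v_b(k)$ (the velocity of $k$ according to $b$). As the remark following Corollary~\ref{cor-ei-s} points out, $\bar v_b(k)$ is not ether-observer-independent, because distinct ether observers may be spatially rotated relative to one another, so the \emph{direction} of $\bar v_b(k)$ can depend on which ether observer $b$ is evaluated to. The lemma succeeds precisely because the quantity appearing in the index is $\bar v_k(b)$, governed by Corollary~\ref{cor-ei-v}, rather than $\bar v_b(k)$; identifying this as the crux, rather than any computational difficulty, is where the care lies.
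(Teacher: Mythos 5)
Your proof is correct: it rests on exactly the two facts the paper's own proof rests on, namely that $\bar v_k(e)=\bar v_k(e')$ for any two ether observers $e,e'$ (Corollary~\ref{cor-ei-v}) and that $Rad_{\bar v}$ is a single well-defined map determined by the velocity parameter alone, so that substituting $e$ or $e'$ for $b$ yields literally the same transformation. The difference is one of packaging rather than substance. The paper does not argue semantically in one step as you do; instead it first rewrites the defined notation $Rad^{-1}_{\bar v_k(b)}(\bar\alpha)=\bar\beta$ as $\Bforall{\bar v}{Q^3}[\bar v_k(b)=\bar v \to Rad^{-1}_{\bar v}(\bar\alpha)=\bar\beta]$, observes that the subformula $\bar v_k(b)=\bar v$ is ether-observer-independent by Corollary~\ref{cor-ei-v} and that the subformula $Rad^{-1}_{\bar v}(\bar\alpha)=\bar\beta$ is trivially so by Lemma~\ref{lem-EOI-nob} (it contains no free $b$), and then assembles the result with the closure rules of Lemma~\ref{lemma-EOI-formulas}; the second claim is handled the same way by writing the worldview atom as $W(k,h,\bar\beta)\AND Rad^{-1}_{\bar v_k(b)}(\bar\alpha)=\bar\beta$. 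The paper's route buys uniformity with the rest of the appendix, where EOI of complex formulas is always derived compositionally from EOI of atoms; your route is shorter and makes the underlying reason more transparent. Your closing remark about why the index must be $\bar v_k(b)$ rather than $\bar v_b(k)$ is a genuinely useful observation that the paper makes only in passing after Corollary~\ref{cor-ei-s}, and your attention to the definedness of the partial function $\bar v_k(e)$ via Convention~\ref{partial-conv} and \AxLine{} is a point the paper's proof glosses over.
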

\begin{proof}
To prove $EOI^k_{b}[Rad^{-1}_{\bar{v}_k(b)}(\bar \alpha)=\bar \beta]$, we have to prove that \[\Bforall{\bar v}{Q^3}[\bar{v}_{k}(b) = \bar v \to Rad^{-1}_{\bar v}(\bar \alpha) = \bar \beta]\] is ether-observer-independent in variable $b$ provided that $k$ is an inertial observer. $EOI^k_b[\bar{v}_{k}(b) = \bar v]$ holds because of Corollary \ref{cor-ei-v}. $EOI_b[Rad^{-1}_{\bar v}(\bar \alpha) = \bar \beta]$ holds because of Lemma \ref{lem-EOI-nob}. From items \ref{EOI-conn} and \ref{EOI-forall} of Lemma \ref{lemma-EOI-formulas} follows what we want to prove.

To prove $EOI^k_{b}[W(k,h,Rad^{-1}_{\bar{v}_k(b)}(\bar \alpha))]$, we have to prove that $W(k,h,\bar \beta) \AND Rad^{-1}_{\bar{v}_k(b)}(\bar \alpha)=\bar \beta$ is ether-observer-independent in variable $b$ provided that $k$ is an inertial observer. $EOI^k_b[Rad^{-1}_{\bar{v}_k(b)}(\bar \alpha)=\bar \beta]$ holds because $EOI^k_{b}[Rad^{-1}_{\bar{v}_k(b)}(\bar \alpha)=\bar \beta]$ by the first part of this lemma. $EOI_b[W(k,h,\bar \beta)]$ holds because of Lemma \ref{lem-EOI-nob}. From item \ref{EOI-conn} of Lemma \ref{lemma-EOI-formulas} follows what we want to prove.
\end{proof}

\begin{lem}\label{lemma-merge}  Assume \sy{{ClassicalKin}_{Full}} and that $EOI^{k_1, \ldots, k_n}_{e}[\varphi]$ and $EOI^{h_{1}, \ldots, h_m}_{e}[\psi]$ hold. For every logical connective $*$,
\begin{equation*}
\bforall{e}{\Ether}(\varphi) * \bforall{e}{\Ether}(\psi)
\end{equation*}
is equivalent to
\begin{equation*}
\bforall{e}{\Ether}(\varphi * \psi)
\end{equation*}
provided that $k_1, \ldots, k_n, h_{1}, \ldots, h_m$ are inertial observers.
\end{lem}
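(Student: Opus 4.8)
The plan is to collapse each bounded universal quantifier over $\Ether$ to a single evaluation at a fixed ether observer, and then observe that the two sides reduce to literally the same formula. The two ingredients that make this work are that $\Ether$ is nonempty (by \ax{AxEther}) and that ether-observer-independence is preserved by logical connectives (by Lemma \ref{lemma-EOI-formulas}).

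First I would establish a \emph{collapsing principle}: if $EOI^{k_1,\ldots,k_n}_{e}[\chi]$ holds and $k_1,\ldots,k_n$ are evaluated to inertial observers, then in any model of \sy{ClassicalKin_{Full}} the formula $\bforall{e}{\Ether}(\chi)$ is equivalent to $\chi(e_0/e)$ for any one fixed ether observer $e_0$. By \ax{AxEther} such an $e_0$ exists. The forward direction is mere instantiation at $e_0$, since $e_0$ is an ether observer. For the backward direction, suppose $\chi(e_0/e)$ holds; then for every ether observer $e'$ the defining equivalence $\chi(e_0/e)\leftrightarrow\chi(e'/e)$ supplied by $EOI^{k_1,\ldots,k_n}_{e}[\chi]$ gives $\chi(e'/e)$, whence $\bforall{e}{\Ether}(\chi)$.

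Next I would apply this principle three times, working under the proviso that all the $k_i$ and $h_j$ are inertial observers. Applied to $\varphi$ (index $k_1,\ldots,k_n$) and to $\psi$ (index $h_1,\ldots,h_m$), it yields $\bforall{e}{\Ether}(\varphi)\leftrightarrow\varphi(e_0/e)$ and $\bforall{e}{\Ether}(\psi)\leftrightarrow\psi(e_0/e)$; since the truth value of any binary connective depends only on the truth values of its two arguments, the left-hand side $\bforall{e}{\Ether}(\varphi)*\bforall{e}{\Ether}(\psi)$ is therefore equivalent to $\varphi(e_0/e)*\psi(e_0/e)$. For the right-hand side I would first invoke Item \ref{EOI-conn} of Lemma \ref{lemma-EOI-formulas} to obtain $EOI^{k_1,\ldots,k_n,h_1,\ldots,h_m}_{e}[\varphi*\psi]$, and then apply the collapsing principle once more, getting $\bforall{e}{\Ether}(\varphi*\psi)\leftrightarrow(\varphi*\psi)(e_0/e)$. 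Because substitution commutes with the connective, $(\varphi*\psi)(e_0/e)$ is literally $\varphi(e_0/e)*\psi(e_0/e)$, so both sides are equivalent to the same formula and hence to each other.

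The argument is essentially bookkeeping once the collapsing principle is in place, and I do not expect a genuine obstacle. The one point that requires care is the index management: the combined formula $\varphi*\psi$ carries the index $k_1,\ldots,k_n,h_1,\ldots,h_m$, so all of these variables must be assumed inertial — exactly the proviso of the lemma — before Item \ref{EOI-conn} of Lemma \ref{lemma-EOI-formulas} and then the collapsing principle can legitimately be applied to it. Nonemptiness of $\Ether$ via \ax{AxEther} and the connective-closure of $EOI$ do all the remaining work.
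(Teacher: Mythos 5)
Your proof is correct and follows essentially the same route as the paper's: both reduce the claim to the observation that, by Item \ref{EOI-conn} of Lemma \ref{lemma-EOI-formulas}, $\varphi$, $\psi$ and $\varphi*\psi$ are all ether-observer-independent, so each bounded universal quantifier over $\Ether$ collapses to evaluation at a single ether observer. Your version merely spells out this collapsing step in more detail and makes explicit the (genuinely needed, but paper-implicit) appeal to \ax{AxEther} for the nonemptiness of $\Ether$.
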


\begin{proof}
From Lemma \ref{lemma-EOI-formulas} Item \ref{EOI-conn}, we know that $(\varphi * \psi)$ is ether-observer-independent in variable $e$ provided that $k_1, \ldots, k_n, h_{1}, \ldots, h_m$ are inertial observers. Because $\varphi$, $\psi$ and $\varphi * \psi$ are ether-observer-independent in variable $e$, it does not matter which ether observer we fill in there. Therefore, the two formulas are equivalent.
\end{proof}

As an example on how we use this, the mechanographical translation of $\ax{AxSelf^{SR}}$ is
\begin{multline*}
%Tr(\ax{AxSelf^{SR}}) \equiv\\
\forall k  
\Big( IOb(k) \AND \big(\bforall {e} {\Ether} \big( \speed_{e}(k) < \mathfrak{c_e} \big) \\\to
\Bforall {\bar{y}}{\Q^4} \bforall {e} {\Ether}  \big[ W\big((k,k,Rad^{-1}_{\bar{v}_k(e)}(\bar{y})\big) \leftrightarrow y_1=y_2=y_3=0 \big] \Big),
\end{multline*}
which by Lemma \ref{lemma-merge} and the bounded quantifiers notation can be simplified to
\begin{multline*}
%Tr(\ax{AxSelf^{SR}}) \equiv\\
\bforall {k} { IOb}  \bforall {e} {\Ether}  
\Big( \speed_{e}(k) < \mathfrak{c_e} \\\to
\Bforall {\bar{y}}{\Q^4} \big[ W\big((k,k,Rad^{-1}_{\bar{v}_k(e)}(\bar{y})\big) \leftrightarrow y_1=y_2=y_3=0 \big] \Big).
\end{multline*}

Since the observers in the set $E$ of ``primitive ether'' observers as introduced in the section on definitional equivalence are, by definition, only differing from each other by a trivial transformation, they are at rest relative to each other. So we can introduce a concept of ``primitive ether independent observers'' (PEIO) with the same properties as ``ether independent observers'' we have proven above. The proofs of these properties are analogous to the proofs for EIO in this appendix, which we for brevity will not repeat.

%% ACKNOWLEDGEMENTS SHOULD GO IN FINAL SECTION (UNCOMMENT IT IF UNUSED)
\subsection*{Acknowledgements}
We are grateful to Hajnal Andr{\'e}ka, Marcoen Cabbolet, Steffen Ducheyne, P{\'e}ter Fekete, Michele Friend, Istv{\'a}n N{\'e}meti, Sonja Smets, Laszl{\'o} E. Szab{\'o}, Jean Paul Van Bendegem and Bart Van Kerkhove for enjoyable discussions and helpful suggestions while writing this paper, as well as to the two anonymous reviewers for valuable feedback.

%%%%%%%%%%%%%%%%%%%
%% REFERENCES %%%%%
%%%%%%%%%%%%%%%%%%%

%% UNCOMMENT THE FOLLOWING IF REFERENCES ARE HANDLED WITH BIBTEX

\bibliographystyle{agsm}
\bibliography{LogRel12017} %% INSERT CORRECT NAME OF BIB-FILE

%% COMMENT THE FOLLOWING OUT IF REFERENCES ARE NOT INSERTED MANUALLY
%% \begin{thebibliography}{999}
%% 	%% EXAMPLE FOR BOOK AND EDITED BOOK
%% 	\bibitem[Author1a \& Author1b(YYYY)]{BookRef1} ThisBookAuthor, X. and ThatBookAuthor, Y. (YYYY), \textit{ThisBookTitle}, ThisPublisher, AtThisLocation.
%% 	\bibitem[Editor1(YYYY)]{BookRef2} ThisBookEditor, X., ed. (YYYY), \textit{ThatBookTitle}, ThatPublisher, AtThatLocation.
%% 	%% EXAMPLE FOR ARTICLE
%% 	\bibitem[Author2(YYYY)]{ArticleRef} ThisArticleAuthor, X. (YYYY), `ThisArticle', \textit{ThisJournal} \textbf{V}(I), PPP--PPP.
%% 	%% EXAMPLE FOR BOOKCHAPTER
%% 	\bibitem[Author2(YYYY)]{ChapterRef} ThisChapterAuthor, X. (YYYY), \textit{in} ThisEditor, Y., ed. `ThisBook', ThisPublisher, AtThisLocation, pp. PPP--PPP.
%% 	%% EXAMPLE FOR PROCEEDINGS
%% 	\bibitem[Author3(YYYY)]{ProcRef} ThisProcAuthor, X. (YYYY), ThisContributionTitle, \textit{in} `SomeEvent at SomePlace', ThatPublisher, AtThisLocation, pp. PPP--PPP.

%% \end{thebibliography}

%%%%%%%%%%%%%%%%%%%
%% AUTHOR INFO %%%%
%%%%%%%%%%%%%%%%%%%

\begin{flushright}
KOEN LEFEVER\\
Vrije Universiteit Brussel\\
\href{mailto:koen.lefever@vub.ac.be}{koen.lefever@vub.ac.be}\\
\url{http://www.vub.ac.be/CLWF/members/koen/index.shtml}

\vspace{.7cm}

GERGELY SZ{\' E}KELY\\
MTA Alfr{\' e}d R{\' e}nyi Institute for Mathematics\\
\href{mailto:szekely.gergely@renyi.mta.hu}{szekely.gergely@renyi.mta.hu}\\
\url{http://www.renyi.hu/~turms/}
\end{flushright}

\end{document}